\documentclass[aps,prx,twocolumn,10pt,superscriptaddress,floatfix,nofootinbib,longbibliography]{revtex4-2}

\usepackage{graphicx}
\usepackage{xcolor}
\usepackage{hyperref}
\hypersetup{colorlinks=true,linkcolor=blue,citecolor=blue,urlcolor=cyan}
\usepackage{amsmath}
\usepackage{amsfonts}
\usepackage{amsthm}
\usepackage{latexsym}
\usepackage{amssymb}
\usepackage{bm}
\usepackage{enumitem}
\usepackage{setspace}
\usepackage{mathtools}
\usepackage[caption=false]{subfig}
\usepackage{graphicx}
\usepackage{xcolor}
\usepackage{array}
\usepackage{tabularx}
\usepackage{multirow}
\usepackage{tikz}
\usepackage{framed}

\DeclarePairedDelimiter{\ket}{\lvert}{\rangle}
\DeclarePairedDelimiterX\braket[2]{\langle}{\rangle}{#1\,\delimsize\vert\,\mathopen{}#2}
\DeclarePairedDelimiterX\dyad[2]{\lvert}{\rvert}{#1\delimsize\rangle\!\delimsize\langle#2}
\DeclarePairedDelimiterX\projector[1]{\lvert}{\rvert}{#1\delimsize\rangle\!\delimsize\langle#1}
\DeclarePairedDelimiterX\mel[3]{\langle}{\rangle}{#1\,\delimsize\vert\,\mathopen{}#2\,\delimsize\vert\,\mathopen{}#3}
\DeclarePairedDelimiter{\norm}{\lVert}{\rVert}

\DeclareMathOperator\Tr{Tr}
\DeclareMathOperator\supp{supp}

\DeclareMathOperator\wt{wt}

\DeclareMathOperator\conv{conv}
\DeclareMathOperator\id{id}

\DeclareMathOperator\pr{pr}
\DeclareMathOperator\polylog{polylog}
\DeclareMathOperator\poly{poly}

\newtheorem{lemma}{Lemma}
\newtheorem{theorem}{Theorem}
\newtheorem{proposition}{Proposition}
\newtheorem{definition}{Definition}
\newtheorem{corollary}{Corollary}

\begin{document}

\title{Fault Tolerant Quantum Phases of Matter}

\author{Colin V. Coane}
\email{colin.coane@yale.edu}
\affiliation{Department of Physics, Yale University, New Haven, CT, 06511, USA}
\affiliation{Yale Quantum Institute, Yale University, New Haven, CT, 06511, USA}

\author{Shouzhen Gu}
\affiliation{Yale Quantum Institute, Yale University, New Haven, CT, 06511, USA}
\affiliation{Department of Applied Physics, Yale University, New Haven, CT, 06511, USA}

\author{Aleksander Kubica}
\affiliation{Yale Quantum Institute, Yale University, New Haven, CT, 06511, USA}
\affiliation{Department of Applied Physics, Yale University, New Haven, CT, 06511, USA}

\begin{abstract}
    We introduce a classification of quantum phases of mixed states based on the ability to preserve and fault tolerantly transfer information. We formulate phase equivalence between sets of states that encode logical quantum information and are connected by shallow local channel circuits assisted by local measurements and global classical communication of measurement outcomes, where all quantum operations, including measurements, must be robust to noise. We show that the existence of a recovery threshold for local noise is a property of an entire phase. We prove that any two sets of states with nonzero thresholds that are related by finite-depth local channel circuits belong to the same fault tolerant phase. Our approach leads to a classification of subsystem codes without favoring any state of the gauge subsystem, and we demonstrate phase equivalence between 2D and 3D color codes using dimensional jumps within the 3D gauge color code. Lastly, we study how syndrome structure affects fault tolerance in circuits with measurements. We show that qudit stabilizer codes with extended syndromes permit single-shot state preparation, whereas a broad class of shallow circuits measuring point-like syndromes, including state preparation for 2D topological codes and non-Abelian topological orders with solvable anyon theories, is not fault tolerant.
\end{abstract}

\maketitle

\section{Introduction} \label{intro}

Classifying and understanding phases of matter is a central goal of quantum many-body physics~\cite{zeng2019}. Tools and insights from quantum information theory have aided greatly in this task, most notably providing information theoretic diagnostics~\cite{horodecki2009,hamma2005,kitaev2006,levin2006,wolf2008,lee2013} to probe the properties of highly entangled states and a circuit-based definition of gapped quantum phases~\cite{chen2010}. This definition, asserting two gapped ground states are in the same phase if they are related by a shallow local unitary circuit, allows one to rigorously classify pure states based on their entanglement properties~\cite{hastings2005,bravyi2006,bravyi2010,bravyi2011,hastings2013,zeng2015,chiu2016,haah2016}.

Recent works have extended this classification to different settings, notably to phases of mixed states~\cite{degroot2022,coser2019,ma2023,rakovszky2024,sang2024rg,ellison2025,sang2024markov,sang2025reversibility,yang2025,negari2024,ma2025} related by local channel (LC) circuits (denoted LC phases), and to phases of states connected by local projective measurements and global classical communication of measurement outcomes~\cite{piroli2021,tantivasadakarn2024,tantivasadakarn2023,ren2025,bravyi2022,verresen2022,lu2022,lu2023,piroli2024,smith2024,malz2024,buhrman2024}; these settings are motivated by physical systems and experimental demonstrations. Mixed states arise naturally in driven open quantum systems~\cite{diehl2008,verstraete2009,pastawski2011,chirame2025}, at finite temperature~\cite{hastings2011,zhou2025,ma2025temperature,lu2020,chen2024symmetry}, and when pure states are subjected to local decoherence~\cite{bao2023,fan2024,chen2024separability,lee2025spt,lee2025ci,zhang2025stability}. Furthermore, since many quantum devices allow mid-circuit measurement~\cite{iqbal2024,iqbal2024nonabelian,chen2025nishimori}, it is useful to understand which exotic states can be prepared using quantum-local (QL) circuits, which, by definition~\cite{bombin2015singleshot}, are constructed from LC circuits, local measurements, and global classical feedforward of measurement outcomes to condition subsequent LC circuits applied to the system.

However, all operations on real quantum devices suffer from noise. While LC phases can model local decoherence acting on a static quantum state as an LC circuit, neither of the above classifications concern noise acting within a circuit which connects two states. Measurement errors may be particularly damaging as flipped measurement outcomes can be instantaneously transmitted nonlocally via classical communication and may incorrectly condition adaptive operations. In the worst case, a single measurement error may induce a large residual error on the system by conditioning the wrong operation.

\begin{figure*}[ht!]
    \centering
    \includegraphics[width=0.99\textwidth]{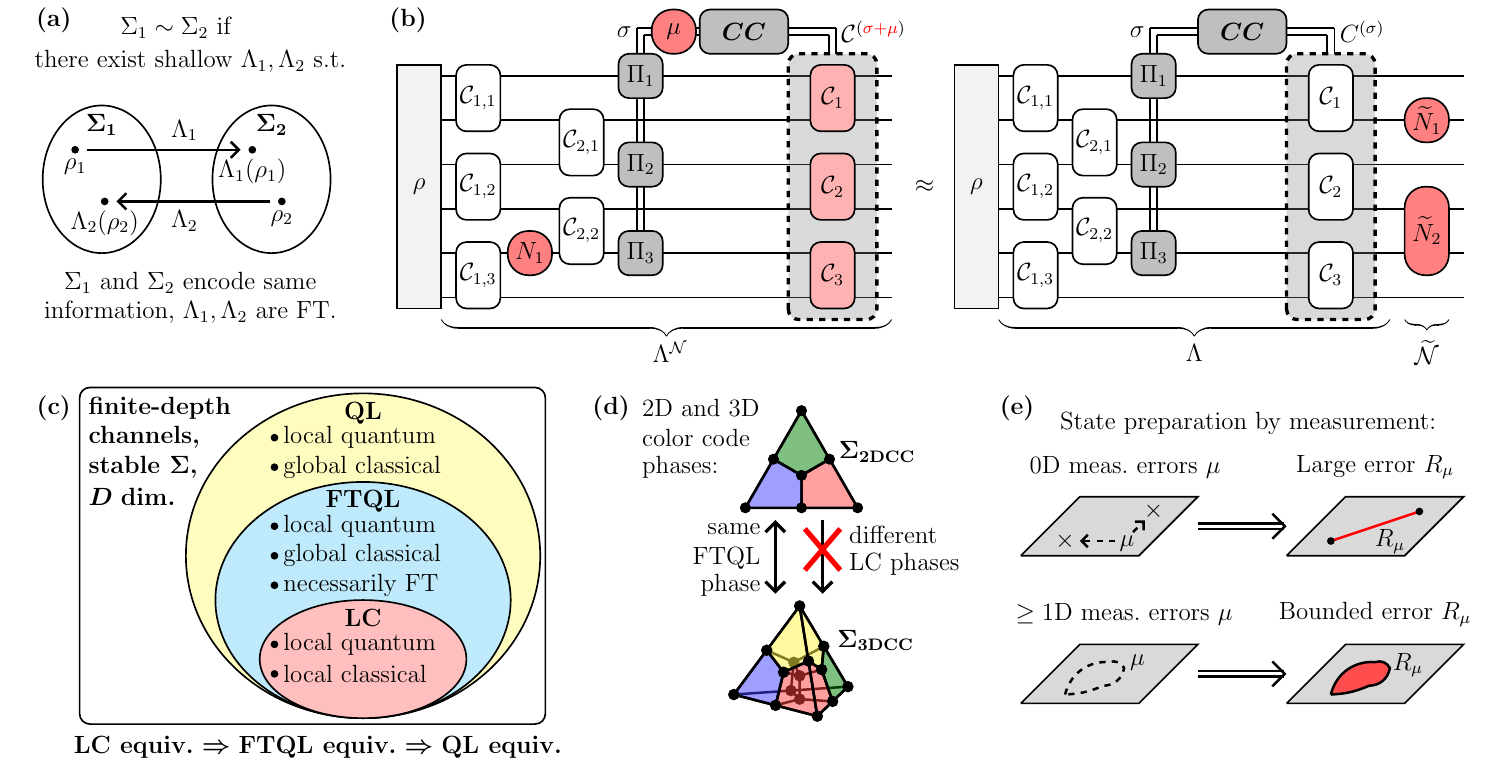}
    \caption{(a) Definition of fault tolerant (FTQL) phase equivalence (Definition~\ref{def:ft-phase-equivalence}). Two convex sets of states $\Sigma_{1}$ and $\Sigma_{2}$ are in the same phase if there exist shallow quantum-local (QL) circuits $\Lambda_{1}$ and $\Lambda_{2}$ which map between states in $\Sigma_{1}$ and $\Sigma_{2}$, $\rho_{i}$ and $\Lambda_{i}(\rho_{i})$ represent the same encoded information, and $\Lambda_{i}$ is fault tolerant.
    (b) A QL circuit $\Lambda$ consists of layers of locally reversible local channel (LC) circuits $\mathcal{C}_{i}$, local measurements $\Pi_{i}$, and locally reversible LC circuits $\mathcal{C}^{(\sigma)}$ conditioned on global processing of measurement outcomes $\sigma$. A noisy circuit $\Lambda^{\mathcal{N}}$ has qudit noise $N_{i}$ and measurement noise $\mu$ inserted between layers. The action of this circuit on a state $\rho$ can be mimicked by applying a noiseless channel $\Lambda$ followed by effective residual noise $\widetilde{\mathcal{N}}$.
    (c) Relationships between constraints on allowed operations in existing local channel (LC) and quantum-local (QL) phase classifications, compared with our proposed definition (FTQL). Here, we restrict to finite-depth circuits acting on convex sets of states which are stable according to Definition~\ref{def:stable-phase}. In this setting, if two stable convex sets are in the same LC phase and related by finite-depth LC circuits, then they are also in the same FTQL phase.
    (d) When defined on compatible triangular and tetrahedral lattices embedded in 3D space, the code spaces of the 2D and 3D color codes, $\Sigma_{\mathrm{2DCC}}$ and $\Sigma_{\mathrm{3DCC}}$, are in the same FTQL phase as one can fault tolerantly code switch between them. However, they are in different LC phases as one cannot map all states in $\Sigma_{\mathrm{2DCC}}$ to states in $\Sigma_{\mathrm{3DCC}}$ with a shallow LC circuit.
    (e) Fault tolerant state preparation of topological codes by measuring stabilizers and applying corrections is not possible in depth $O(\log\log(n))$ when syndromes are pointlike, $0$D objects, but it is possible in finite depth when syndromes form extended, $\geq 1$D objects.}
    \label{fig:overview}
\end{figure*}

In this work, we introduce a classification of quantum phases where the ability to protect and fault tolerantly transfer quantum information using noisy, shallow, QL circuits is the defining principle. Our approach differs from previous phase classifications in its focus on active, fault tolerant information transfer rather than equivalence of states with respect to noiseless shallow circuits or passive preservation under noise.
Concretely, we define phase equivalence with respect to \textit{convex sets} of states which encode logical quantum information, for example code spaces of quantum error correcting (QEC) codes, rather than individual states. We say two sets of states are in the same fault tolerant (FTQL) phase if they are connected by two shallow QL circuits where all quantum operations, including measurements, are robust to noise and information transferred by each circuit is recoverable.
Fault tolerant phases then characterize whether robust, two-way information transfer between convex sets can be implemented with shallow circuits.

We define stability of a convex set as the existence of a nonzero recovery threshold for local noise and we show that stability is both a property of an entire phase and a requirement for fault tolerant information transfer to be possible. We show stability of the code spaces of quantum low-density parity check (LDPC) codes~\cite{breuckmann2021} whose distances scale polynomially with the system size. We also prove that no convex set where information is encoded in a geometrically local region is stable.

When one restricts to finite-depth (rather than shallow) circuits, LC phase equivalence between two stable convex sets implies FTQL phase equivalence. To show this, we bound how noise spreads in LC circuits and prove that all finite-depth, locally reversible LC circuits connecting two stable convex sets are fault tolerant.

Our definition of fault tolerant phases also lets us classify phases of subsystem codes without favoring any state of the gauge subsystem.
This is done by identifying different gauge-fixed states which represent the same encoded information.
We use this approach, along with code switching and dimensional jumps within the 3D gauge color code~\cite{bombin2015singleshot,bombin2016} to prove that code spaces of the 2D and 3D stabilizer color codes are in the same FTQL phase;
similar conclusions can be reached for the toric code family, since the toric and color code families are closely related in any dimension~\cite{kubica2015unfolding}.
We then show that these two codes are in different LC phases, which means stable FTQL phases are strictly coarser than LC phases when one considers only finite-depth circuits.

Lastly, we investigate what properties allow or prohibit fault tolerance in shallow circuits with measurements. Here, we study qudit stabilizer codes whose syndromes form extended objects such as loops and show that single-shot state preparation and QEC are possible, generalizing results in Ref.~\cite{bombin2015singleshot} to qudit codes and non-Pauli noise. This highlights the difference between state preparation and information transfer, as one can fault tolerantly prepare a distinct code state from a product state, yet the code space is not in the same phase as a set of unencoded qudits. We also study lattice models which host pointlike excitations and prove that a wide range of shallow state preparation and QEC circuits, where Hamiltonian terms are measured and unitary corrections are applied conditioned on measurement outcomes, are not fault tolerant. While this does not prohibit fault tolerant state preparation with a shallow circuit, it rules out a range of existing methods for state preparation of 2D topological codes and non-Abelian topological orders with solvable anyon theories, as given in Refs.~\cite{piroli2021,tantivasadakarn2024,tantivasadakarn2023,bravyi2022,ren2025,verresen2022}.

The rest of the paper is structured as follows. In Section~\ref{prev-defs} we review previous phase definitions and motivate how noise causes issues by analyzing noisy state preparation of the 2D toric code. In Section~\ref{ft-phases} we define fault tolerant phase equivalence and the noise models we use. In Section~\ref{ft-phase-properties} we define and discuss stability, argue there is no single ``trivial'' fault tolerant phase, and compare fault tolerant phases with previous phase definitions. In Section~\ref{cc-phases} we study phases of various code spaces of $D$-dimensional gauge color codes. In Sections~\ref{robust-meas-feedback} and~\ref{non-robust-circuits} we study properties of circuits with measurements which either allow or prohibit fault tolerance. Our main results are summarized in Figure~\ref{fig:overview}.

\section{Previous definitions of phase equivalence} \label{prev-defs}

In this paper, we consider systems of qudits with local dimension $d$ arranged on regular lattices embedded in $D$-dimensional Euclidean space. We consider quantum circuits acting on these systems constructed from layers of local channel gates and projective measurements. A range-$r$ local channel gate is a quantum channel supported within a ball of radius $\leq r$ on the system lattice. One may add a finite number of unentangled qudits to the system at each lattice site or trace out arbitrary, possibly entangled qudits as part of each local channel gate. A range-$r$ measurement gate is a projective measurement of an operator supported within a ball of radius $\leq r$ on the system lattice. An $(r,\ell)$ circuit consists of $\ell$ layers of range-$r$ gates with disjoint support in each layer. We define the range of an $(r,\ell)$ circuit to be the maximum range of any circuit element times its depth, i.e., $r\ell$. This quantity may capture the spread of information through the circuit. When classifying quantum phases, we will implicitly consider families of systems of growing size $n$, where $n$ is the number of qudits, so we can make statements about how properties scale with $n$ and discuss the thermodynamic limit $n \rightarrow \infty$. For instance, we say a circuit is \textit{finite-depth} if $r\ell = O(1)$ and \textit{shallow} if $r\ell = O(\polylog(n))$. Additionally, we adopt the convention that $\rho_{1} \approx \rho_{2}$ means the trace distance between the two states asymptotically decays at least polynomially with $n$.
Specifically,
\begin{equation}
    \label{eq:trace-distance}
    \rho_{1} \approx \rho_{2} \Leftrightarrow   \norm{\rho_{1} - \rho_{2}}_{1} = 1/\Omega(\poly(n)).
\end{equation}
The above choices ensure the range of a shallow circuit scales slowly compared to $n$ and the trace distance between ``equal'' states vanishes quickly as $n$ increases.

Mixed-state phases may be defined by asserting that two states are in the same phase if they can be transformed back and forth into each other with a pair of shallow local channel (LC) circuits~\cite{coser2019,ma2023,sang2024rg,rakovszky2024,ellison2025}. However, we will apply a more restrictive definition here where all local channel gates applied must be \textit{locally reversible}~\cite{sang2024markov,sang2025reversibility}. This definition ensures each individual gate in a circuit preserves locality and does not disturb long-range correlations, resolving issues which arise when local reversibility is not required~\cite{sang2025reversibility}. Here, an $(r,\ell)$ LC circuit decomposes into $\ell$ layers, $\mathcal{C} = \mathcal{C}_{\ell} \circ \ldots \circ \mathcal{C}_{1}$, where each layer $\mathcal{C}_{i} = \bigotimes_{\alpha} \mathcal{C}_{i,\alpha}$ consists of range-$r$ local channel gates $\mathcal{C}_{i,\alpha}$ with disjoint support. The circuit $\mathcal{C}$ is locally reversible with respect to the set of states $\Sigma$ if and only if there exists an $(r^\prime,\ell)$ LC circuit $\widetilde{\mathcal{C}} = \widetilde{\mathcal{C}}_{1} \circ \ldots \circ \widetilde{\mathcal{C}}_{\ell}$ with $r^\prime\ell = O(r\ell)$ such that the action of each gate $\mathcal{C}_{i,\alpha}$ in $\mathcal{C}$ is reversed by a corresponding gate $\widetilde{\mathcal{C}}_{i,\alpha}$ in $\widetilde{\mathcal{C}}$ at every layer of $\mathcal{C}$ for any $\rho \in \Sigma$. More precisely, denoting the intermediate state $\rho^{(i)} \coloneqq \mathcal{C}_{i} \circ \ldots \circ \mathcal{C}_{1} (\rho)$ for each $i \in [1,\ell]$ and $\rho^{(0)} \coloneqq \rho$, we say $\mathcal{C}$ is locally reversible if for every $\rho \in \Sigma$,
\begin{equation}
    \label{eq:local-reversibility}
    \sum_{i,\alpha} \norm*{\widetilde{\mathcal{C}}_{i,\alpha} \circ \mathcal{C}_{i,\alpha} \left(\rho^{(i-1)}\right) - \rho^{(i-1)}}_{1} = 1/\Omega(\poly(n)) .
\end{equation}
This bound implies $\widetilde{\mathcal{C}} \circ \mathcal{C} (\rho) \approx \rho$ and both 
\begin{align}
    \widetilde{\mathcal{C}}_{i} \circ \mathcal{C}_{i} \circ \mathcal{C}_{i-1} \circ \ldots \circ \mathcal{C}_{1} (\rho) &\approx \mathcal{C}_{i-1} \circ \ldots \circ \mathcal{C}_{1} (\rho) , \\
    \widetilde{\mathcal{C}}_{i,\alpha} \circ \mathcal{C}_{i,\alpha} \circ \mathcal{C}_{i-1} \circ \ldots \circ \mathcal{C}_{1} (\rho) &\approx \mathcal{C}_{i-1} \circ \ldots \circ \mathcal{C}_{1} (\rho) , \nonumber
\end{align}
for every individual layer $\mathcal{C}_{i}$ and for every individual gate $\mathcal{C}_{i,\alpha}$~\cite{sang2024markov}. We now present the formal definition of LC phases we adopt in this work.
\begin{definition}[LC phases]
\label{def:lc-phase}
    Two states $\rho_1$ and $\rho_2$ are in the same mixed-state phase (LC phase) if there exists an $(r,\ell)$ locally reversible local channel circuit $\mathcal{C}$ with $r\ell = O(\polylog(n))$ such that $\mathcal{C}(\rho_1) \approx \rho_2$ (thus $\widetilde{\mathcal{C}}(\rho_2) \approx \rho_1$ as well by contractivity of the trace distance).
\end{definition}

States within the same LC phase possess the same macroscopic entanglement structure and share invariant properties such as topological entanglement negativity~\cite{lee2013,fan2024}, mixed-state anomaly~\cite{lessa2025anomaly,wang2025anomaly,lessa2025lre}, strong-to-weak spontaneous symmetry breaking~\cite{lessa2025swssb,luo2025}, and topological degeneracy~\cite{sang2025reversibility,yang2025}. Additionally, since noise can often be modeled as an LC circuit~\cite{sang2024rg}, states in an LC phase may preserve quantum information encoded in global degrees of freedom under sufficiently weak noise~\cite{sang2024rg,sang2024markov,fan2024,li2025} or in some cases at sufficiently low, but nonzero, temperature~\cite{hastings2011,lu2020,bergamaschi2025rapidmixing}, contingent on whether the circuit modeling the applied noise is locally reversible.

Various ``measurement-equivalent'' or ``LOCC'' phase definitions have been introduced and studied in previous works to classify states which can be transformed into each other when one allows for nonlocal classical communication~\cite{piroli2021,tantivasadakarn2023,friedman2023feedback}. A simple, yet powerful, class of circuits unifies these definitions, namely \textit{quantum-local} (QL) circuits~\cite{bombin2015singleshot}. QL circuits are constructed by composing locally reversible LC circuits with deterministic measurement-feedback circuits, where one first measures a set of operators then applies a locally reversible LC circuit globally conditioned on measurement outcomes.

More specifically, in an $(r,\ell)$ measurement-feedback circuit, first $\ell_{1}$ layers of range-$r$ projective measurements are made, measuring operators $O_{i,\alpha}$ in each layer $i$. Then, an $(r,\ell_{2})$ locally reversible LC circuit is applied conditioned on the measurement outcomes of $O_{i,\alpha}$. Here, $\ell_{1} + \ell_{2} = \ell$. Denoting the set of all measured operators $\mathcal{I}$, this circuit can be written as
\begin{equation}
    \mathcal{K} (\rho) = \sum_{\sigma} \mathcal{C}^{(\sigma)} \left( \Pi_{\sigma}^{\mathcal{I}} \rho {\Pi_{\sigma}^{\mathcal{I}}}^{\dagger} \right)
\end{equation}
where $\sigma$ is a list of measurement outcomes $\sigma_{i,\alpha}$
labeling eigenvalues of $O_{i,\alpha}$, $\Pi_{\sigma}^{\mathcal{I}} = \prod_{i=1}^{\ell_{1}} \prod_{\alpha} \Pi_{\sigma_{i,\alpha}}^{O_{i,\alpha}}$ is a projector representing the ordered sequence of measurements made which sequentially projects onto the $\sigma_{i,\alpha}$ eigenspace of each $O_{i,\alpha}$, and $\mathcal{C}^{(\sigma)}$ is an $(r,\ell_{2})$ locally reversible LC circuit conditioned on $\sigma$. We say $\mathcal{K}$ is \textit{deterministic} with respect to a set of states $\Sigma$ and a subsystem $\mathcal{A}$ if for all $\rho \in \Sigma$, 
\begin{equation}
    \mathcal{C}^{(\sigma)} \left( \Pi_{\sigma}^{\mathcal{I}} \rho {\Pi_{\sigma}^{\mathcal{I}}}^{\dagger} \right) \approx c_{\sigma} \rho^{\prime}_{\mathcal{A}} \otimes \rho^{\prime}_{\sigma} , \quad \sum_{\sigma} c_{\sigma} = 1 ,
\end{equation}
i.e., each measurement ``shot'' yields the same state $\rho^{\prime}_{\mathcal{A}} \in \mathcal{A}$ of $\mathcal{A}$ irrespective of $\sigma$. This means $\mathcal{K} (\rho) \approx \rho^{\prime}_{\mathcal{A}} \otimes \sum_{\sigma} c_{\sigma}\rho^{\prime}_{\sigma}$. This captures circuits used for QEC in both stabilizer and subsystem codes, where the state of the gauge subsystem differs between shots.

An $(r,\ell)$ QL circuit $\Lambda$ then takes the form
\begin{equation}
    \label{eq:ql-circuit}
    \Lambda = \mathcal{K}^{(b)} \circ \mathcal{C}^{(b)} \circ \ldots \circ \mathcal{K}^{(1)} \circ \mathcal{C}^{(1)} ,
\end{equation}
where $\mathcal{C}^{(a)}$ are $(r,\ell_{a})$ locally reversible LC circuits, $\mathcal{K}^{(a)}$ are deterministic $(r,\ell_{a}^{\prime})$ measurement-feedback circuits, and $\sum_{a=1}^{b} \left(\ell_{a} + \ell_{a}^{\prime}\right) = \ell$. We now present the definition of quantum-local phases we use throughout this work.

\begin{definition}[QL phases]
\label{def:ql-phase}
    Two states $\rho_1$ and $\rho_2$ are in the same quantum-local phase (QL phase) if there exist $(r_{i},\ell_{i})$ quantum-local circuits, $\Lambda_1$ and $\Lambda_2$, with $r_i \ell_i = O(\polylog(n))$, $i=1,2$, such that $\Lambda_1(\rho_1) \approx \rho_2$ and $\Lambda_2(\rho_2) \approx \rho_1$.
\end{definition}
The classical nonlocality present in a QL circuit is powerful and can efficiently rearrange the macroscopic entanglement structure of a system~\cite{friedman2023locality,lee2022,zhu2023}. As such, invariants such as topological entanglement negativity no longer distinguish QL phases. Entanglement growth is still bounded~\cite{piroli2021,lu2022,lu2023}, so distinct phases do exist, but the set of states in each phase is greatly enlarged.

\subsection{Issues with definitions under noise} \label{noise-issues}

While QL phases allow one to classify states with respect to operations physically realizable in quantum devices, they do not characterize the ability for a system to preserve encoded information under noise. For example, 2D toric code ground states are in the same QL phase as $\ket{0}^{\otimes n-2} \otimes \ket{\Psi}$, where $\ket{\Psi}$ is an arbitrary two-qubit state~\cite{piroli2021}. However, if one applies independent and identically distributed (i.i.d.)~bit flip noise with strength $p$ to each qubit of $\ket{0}^{\otimes n-2} \otimes \ket{\Psi}$, $\ket{\Psi}$ will be corrupted with probability $p(2-p)$, while if one applies this noise to each qubit of the encoded toric code state $\ket{\overline{\Psi}}$, if $p < p_{c} \approx 0.11$ is below the threshold $p_{c}$, the state $\ket{\overline{\Psi}}$ can be recovered~\cite{dennis2002,sang2024markov}. This shows that states within the same QL phase may have different levels of robustness to noise.

In addition to this, circuits connecting states within a QL phase may be fragile to noise, especially to measurement errors. Here we will study errors which flip the observed outcome of a measurement, as opposed to the weak measurements or measurements of incorrect operators discussed in Refs.~\cite{zhu2023,lee2022}. Even one or two flipped measurement outcomes in a QL circuit may induce a large residual error on the system, since channels $\mathcal{C}^{(\sigma)}$ are globally conditioned on these outcomes. To demonstrate how i.i.d.~measurement noise may influence macroscopic properties of a post-measurement state, consider preparation of the logical $\ket{\overline{0}\overline{0}}$ state of the 2D toric code from the product state $\ket{0}^{\otimes n}$. The 2D toric code is defined on a square lattice tiling a torus with linear size $L$ and $n = 2L^2$ qubits on its edges. Code states are stabilized by products of $Z_j$ surrounding each vertex $v$, $A_{v} = \prod_{j \ni v} Z_j$, and products of $X_j$ surrounding each plaquette $w$, $B_{w} = \prod_{j \in w} X_j$. Its $Z$-type logical operators are formed by products of $Z_j$ along noncontractible closed loops $C_1$ and $C_2$ on the dual lattice, $\bar{Z}_a = \prod_{j \in C_a} Z_j$. Eigenstates of all $A_{v}$, $B_{w}$, and $\bar{Z}_a$, $\left\{ \ket{\vec{m},\vec{e},\bar{\ell}_1 \bar{\ell}_2} \right\}$, form a basis for the system Hilbert space. In this basis, $B_{w}$ have eigenvalues $(-1)^{e_w}$, $A_{v}$ have eigenvalues $(-1)^{m_v}$, and $\bar{Z}_{a}$ have eigenvalues $(-1)^{\bar{\ell}_a}$. To prepare the encoded state $\ket{\overline{0}\overline{0}} = \ket{\vec{0},\vec{0},\overline{0}\overline{0}}$ from the product state $\ket{0}^{\otimes n}$, one can measure all $B_{w}$ to obtain a configuration of excitations $\vec{e}$, then apply strings $\xi(\vec{e})$ of $Z_j$ on the dual lattice with endpoints at plaquettes $w$ with $e_{w} = 1$ to pair up and annihilate excitations,
\begin{align}
    &\sum_{\vec{e}} R_{\vec{e}} \Pi_{\vec{e}}^{\mathcal{S}_{X}} \dyad{0}{0}^{\otimes n} \Pi_{\vec{e}}^{\mathcal{S}_{X}} R_{\vec{e}}^{\dagger} = \dyad{\overline{0}\overline{0}}{\overline{0}\overline{0}} , \\ &\Pi_{\vec{e}}^{\mathcal{S}_{X}} = \prod_{w} \left( \frac{I + (-1)^{e_w} B_{w}}{2} \right) , \quad R_{\vec{e}} = \prod_{j \in \xi(\vec{e})} Z_{j} . \nonumber
\end{align}
However, if each measurement outcome is flipped with probability $p$, the vector of observed excitations is modified to be $\vec{e} \rightarrow \vec{e} + \vec{\mu}$, where $\mu_{w} \in \{0,1\}$ denotes whether or not a measurement error occurs. In this case, the noisy final state is proportional to
\begin{equation}
    \sum_{\vec{\mu}: \vert \vec{\mu} \vert = 0 \bmod 2}\left(\frac{p}{1-p}\right)^{\vert \mu \vert} \dyad{\vec{0},\vec{\mu},\overline{0} \overline{0}}{\vec{0},\vec{\mu},\overline{0} \overline{0}} .
\end{equation}
This state is qualitatively different from $\ket{\overline{0}\overline{0}}$ and is in a different LC phase for any $p > 0$. We show this in Appendix~\ref{rg-derivation} using the mixed-state renormalization group procedure developed in Ref.~\cite{sang2024rg} and by explicitly constructing a shallow LC circuit which prepares the above state from the product state $\ket{0}^{\otimes n}$. 

These examples demonstrate that QL phases do not capture whether states within a given phase, or the circuits connecting them, are robust against errors. This is important when studying how encoded quantum information can be reliably manipulated and transferred in noisy systems. LC phases suffice when all operations are local, and we will demonstrate in Section~\ref{lc-finite-depth} that all finite-depth locally reversible LC circuits are fault tolerant. However, this is no longer guaranteed in circuits with measurements and careful treatment is needed to assess robustness of states and of circuits to noise. Lastly, while LC phases of a resource state in $D+1$ dimensions can be used to diagnose whether information transferred through a noisy circuit acting on a $D$-dimensional state is recoverable~\cite{negari2024}, this diagnostic does not assess the specific phase of the $D$-dimensional output state.

\section{Fault tolerant quantum phases} \label{ft-phases}

To define quantum phases which preserve encoded information under noisy application of QL circuits, we must specify how information is encoded in a quantum many-body system, how noise acts within the QL circuits applied, and how to quantify whether encoded information is preserved or not. To discuss quantum information encoded within a phase, we will establish phase equivalence between \textit{convex sets} of states, rather than between individual states. For example, one can discuss phase equivalence of the code spaces of two QEC codes as opposed to equivalence between a specific code state of the first code and a specific code state of the second code.

Informally, we say that two convex sets of states are in the same fault tolerant phase if there exist two shallow QL circuits which (1) map each state in one set to a state in the other set, and (2) are both fault tolerant with respect to a parametrized class of noise acting on the two circuits. By fault tolerant, we loosely mean that for all noise distributions below a given strength, errors within a circuit do not propagate too badly to the output state and the information transferred by applying the circuit to a set of states is recoverable. In Section~\ref{encoded-info}, we will precisely discuss how to encode information and how to relate convex sets to each other in the noiseless case. We will then discuss classes of noise and how to insert noise in the QL circuits which map between two convex sets of states in Section~\ref{noise-char}. Lastly, in Section~\ref{ft-phase-defs} we will explicitly state the model of fault tolerance we use for a QL circuit mapping between two convex sets and present a formal definition of fault tolerant phase equivalence.

\subsection{Encoding information} \label{encoded-info}

Quantum information must be encoded in a subset of states, not just in a single state. To allow mixtures of states which represent encoded information, we will define phase equivalence with respect to convex sets of states. This differs from definitions which characterize state equivalence, however convex sets appear in prior classifications when discussing phase invariants. For example, the ground state subspaces of local, gapped Hamiltonians have the same degeneracy if representative states from each subspace are in the same LC phase~\cite{chen2010}. Similarly, the information convex sets of two topologically ordered mixed states are isomorphic if the two states are in the same LC phase~\cite{shi2020,yang2025}. To specify the information encoded within a phase here, we introduce an abstract \textit{logical} Hilbert space $\mathcal{H}_{\mathrm{log}}$, a convex subset of states $\Omega$ in this space which contains the encoded information, and maps between the physical Hilbert space $\mathcal{H}$ and $\Omega$ which specify how information is encoded in $\mathcal{H}$. The encoded information $\Omega$ can be any convex set. For example, it could be the set of all (mixed) states of a qubit, or it could be a classical bit and all convex combinations of $0$ and $1$, as depicted in Figure~\ref{fig:convex-set-examples}. There are multiple inequivalent ways to embed $\Omega$ in $\mathcal{H}$, however each embedding encodes the same information. To establish notation, given a Hilbert space $\mathcal{H}$ we will denote $\mathbb{S}(\mathcal{H})$ as the space of mixed states in $\mathcal{H}$.

\begin{definition}
\label{def:logical-information}
    Given a physical Hilbert space $\mathcal{H}$, consider a convex set of states $\Sigma \subseteq \mathbb{S}(\mathcal{H})$, a logical Hilbert space $\mathcal{H}_{\mathrm{log}}$ with $\dim \mathcal{H}_{\mathrm{log}} \leq \dim \mathcal{H}$, and a CPTP map $\mathcal{D} : \mathbb{S}(\mathcal{H}) \rightarrow \mathbb{S}(\mathcal{H}_{\mathrm{log}})$, called a decoding map for $\Sigma$. We define the encoded information of $\Sigma$, $\Omega \subseteq \mathbb{S}(\mathcal{H}_{\mathrm{log}})$, to be
    \begin{equation}
        \Omega = \mathcal{D} \left[ \Sigma \right] = \left\{ \mathcal{D}(\rho) : \rho \in \Sigma \right\} .
    \end{equation}
    Additionally, we say $\mathcal{E} : \Omega \rightarrow \Sigma$ is an encoding map for $\Sigma$ if $\mathcal{D} \circ \mathcal{E} \vert_{\Omega} = \id \vert_{\Omega}$ and for all $\bar{\varrho}_{1} , \bar{\varrho}_{2} \in \Omega$,
    \begin{equation}
        \norm{\mathcal{E}(\bar{\varrho}_{1}) - \mathcal{E}(\bar{\varrho}_{2})}_{1} = \norm{\bar{\varrho}_{1} - \bar{\varrho}_{2}}_{1} .
    \end{equation}
\end{definition}

\begin{figure}[t!]
    \centering
    \includegraphics[width=0.45\textwidth]{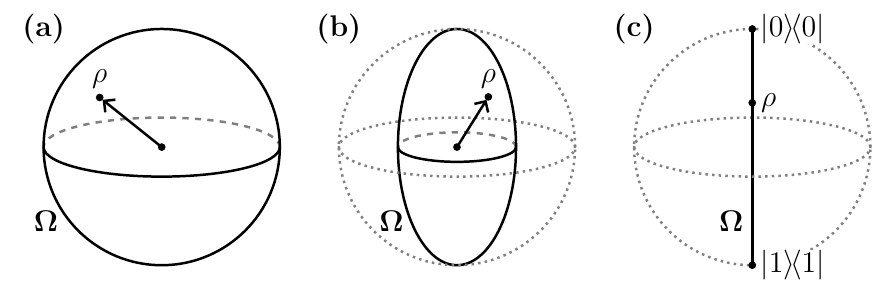}
    \caption{Examples of convex sets $\Omega$ used to encode logical information, embedded in the space of single-qubit states $\mathbb{S}\left(\mathbb{C}^{2}\right)$. (a) $\Omega \cong \mathbb{S}\left(\mathbb{C}^{2}\right)$ is the Bloch ball and encodes one logical qubit. (b) $\Omega$ is a subregion of the Bloch ball. (c) $\Omega$ is a classical probability distribution.}
    \label{fig:convex-set-examples}
\end{figure}

For logical states $\bar{\varrho}_{1},\bar{\varrho}_{2} \in \Omega$, we adopt the convention
\begin{equation}
    \bar{\varrho}_{1} \approx \bar{\varrho}_{2} \Leftrightarrow   \norm{\bar{\varrho}_{1} - \bar{\varrho}_{2}}_{1} = 1/\Omega(\poly(n)) ,
\end{equation}
where $n$ is the number of physical qudits of $\mathcal{H}$, not $\mathcal{H}_{\mathrm{log}}$. In Definition~\ref{def:logical-information}, the decoding map $\mathcal{D}$ specifies the information encoded in $\Sigma$ and establishes a correspondence between each physical state $\rho \in \Sigma$ and a logical state $\mathcal{D}(\rho) = \bar{\varrho} \in \Omega$. Since $\Sigma$ is convex and $\mathcal{D}$ is a linear map, $\Omega$ is a convex set as well. However, $\Sigma$ and $\Omega$ are not necessarily isomorphic, and it is possible for $\mathcal{D}$ to map multiple states $\rho_{i} \in \Sigma$ to the same $\bar{\varrho} \in \Omega$. Consequently, the image of an encoding map $\mathcal{E}$ is a subset of $\Sigma$ isomorphic to $\Omega$, but need not be the entire set $\Sigma$. Different encoding maps $\mathcal{E}^{\alpha}$ may also map $\Omega$ to different subsets of $\Sigma$. Each $\Sigma$ also may represent information encoded in different ways. For example, if two decoding maps have the same image $\Omega$ but map states in $\Sigma$ to different logical states in $\Omega$, these decoding maps may represent different logical bases of $\Sigma$. In this case, a map from $\Sigma$ to itself but with different initial and final decoding maps would represent a logical operation. For clarity, we will always specify either the specific decoding map $\mathcal{D}$ or the encoded information $\Omega$ corresponding to $\Sigma$.

Given a convex set $\Sigma_{1}$ and a quantum circuit $\Lambda_{1}$, we say $\Lambda_{1} : \Sigma_{1} \rightarrow \Sigma_{2}$ if for every state $\rho_{1} \in \Sigma_{1}$, $\Lambda_{1}(\rho_{1}) \approx \rho_{2} \in \Sigma_{2}$. If $\Sigma_{1}$ and $\Sigma_{2}$ contain the same encoded information, $\Omega \cong \mathcal{D}_{1} \left[ \Sigma_{1} \right] \cong \mathcal{D}_{2} \left[ \Sigma_{2} \right]$, then the circuit $\Lambda_{1}$ may be reversible by some other circuit $\Lambda_{2} : \Sigma_{2} \rightarrow \Sigma_{1}$, where for every $\rho_{2} \in \Sigma_{2}$, $\Lambda_{2}(\rho_{2}) \approx \rho_{1} \in \Sigma_{1}$. We say that $\Lambda_{1}$ and $\Lambda_{2}$ reverse each other with respect to $\Omega$ if
\begin{align}
    \label{eq:decoding-equivalence}
    \forall \rho_{1} \in \Sigma_{1} : \mathcal{D}_{2} \circ \Lambda_{1} (\rho_{1}) &\approx \mathcal{D}_{1} (\rho_{1}) , \nonumber \\
    \forall \rho_{2} \in \Sigma_{2} : \mathcal{D}_{1} \circ \Lambda_{2} (\rho_{2}) &\approx \mathcal{D}_{2} (\rho_{2}) .
\end{align}
Eq.~(\ref{eq:decoding-equivalence}), $\Lambda_{1} : \Sigma_{1} \rightarrow \Sigma_{2}$, and $\Lambda_{2} : \Sigma_{2} \rightarrow \Sigma_{1}$ also imply
\begin{align}
    \label{eq:decoding-equivalence-alternate}
    \forall \rho_{1} \in \Sigma_{1} : \mathcal{D}_{1} \circ \Lambda_{2} \circ \Lambda_{1} (\rho_{1}) &\approx \mathcal{D}_{1} (\rho_{1}) , \nonumber \\
    \forall \rho_{2} \in \Sigma_{2} : \mathcal{D}_{2} \circ \Lambda_{1} \circ \Lambda_{2} (\rho_{2}) &\approx \mathcal{D}_{2} (\rho_{2}) ,
\end{align}
as shown in Appendix~\ref{short-proofs}. However, if $\Sigma_{1}$ and $\Sigma_{2}$ contain different encoded information, $\mathcal{D}_{1} \left[ \Sigma_{1} \right] \ncong \mathcal{D}_{2} \left[ \Sigma_{2} \right]$, then it is impossible to construct quantum circuits $\Lambda_{1}$ and $\Lambda_{2}$ which satisfy Eq.~(\ref{eq:decoding-equivalence}). Therefore, a necessary condition for convex sets $\Sigma_{1}$ and $\Sigma_{2}$ equipped with respective decoding maps $\mathcal{D}_{1}$ and $\mathcal{D}_{2}$ to be in the same phase is that they correspond to the same encoded information $\Omega$.

One important case of the above arises when $\mathcal{D}_{1}$ and $\mathcal{D}_{2}$ are isometries on $\Sigma_{1}$ and $\Sigma_{2}$, namely $\norm{\rho - \sigma}_{1} = \norm{\mathcal{D}_{i}(\rho) - \mathcal{D}_{i}(\sigma)}_{1}$ for all $\rho,\sigma \in \Sigma_{i}$, $i=1,2$. Since $\mathcal{D}_{i}$ uniquely maps each $\rho_{i} \in \Sigma_{i}$ to $\mathcal{D}_{i} (\rho_{i}) \in \Omega$ in this case, $\mathcal{D}_{2} \circ \Lambda_{1} (\rho_{1}) \approx \mathcal{D}_{2} (\rho_{2})$ if and only if $\Lambda_{1} (\rho_{1}) \approx \rho_{2}$, and $\mathcal{D}_{1} \circ \Lambda_{2} (\rho_{2}) \approx \mathcal{D}_{1} (\rho_{1})$ if and only if $\Lambda_{2} (\rho_{2}) \approx \rho_{1}$. Also, since $\Sigma_{1}$ and $\Sigma_{2}$ are both isomorphic to $\Omega$ here, we may establish equivalence or inequivalence at the level of states in $\Sigma_{1}$ and $\Sigma_{2}$ and do not need to invoke any decoding maps. However, the general construction in Definition~\ref{def:logical-information} and discussed above is useful as it emphasizes equivalence of encoded information and allows for different decoding maps such as those used to treat subsystem QEC codes, as we will discuss in Section~\ref{cc-phases}.

Lastly, a second special case of the above occurs when $\Sigma = \{\rho\}$ contains a single state, which means $\Omega = \mathcal{D} \left[ \Sigma \right]$ contains a single element. We will discuss equivalence between individual states further in Section~\ref{ft-state-prep}.

\subsection{Adding noise} \label{noise-char}

Before discussing how to model noise acting within a quantum circuit, it is useful to first review how noise acts on a static quantum state, $\rho$. We model environmental noise acting on $\rho$ with a quantum channel
\begin{equation}
    \mathcal{N} (\rho) = \sum_{i} N_{i}\rho N_{i}^{\dagger} , \quad \sum_{i} N_{i}^{\dagger} N_{i} = I ,
\end{equation}
where we sum over a set of Kraus operators $\left\{ N_{i} \right\}$
describing the channel $\mathcal{N}$. Often it is convenient to refer to a channel by its Kraus operators, $\mathcal{N} = \left\{ N_i \right\}$. In this notation, channel composition may be written as $\mathcal{N} \circ \mathcal{M} = \left\{ N_i \right\} \circ \left\{ M_j \right\} = \left\{ N_i M_j \right\}$. In this paper, we will mainly consider noise channels $\mathcal{N}$ which admit a decomposition as a convex sum of channels $\Phi_{\alpha}$, where various constraints may be placed on allowed channels in the set $\{ \Phi_{\alpha} \}$. We call this type of noise \textit{general channel noise}, and here $\mathcal{N}$ takes the form
\begin{equation}
    \label{eq:qudit-noise}
    \mathcal{N} (\rho) = \sum_{\alpha} \pr(\alpha) \Phi_{\alpha} (\rho) , \quad \sum_{\alpha} \pr(\alpha) = 1 .
\end{equation}
We may interpret $\mathcal{N}$ as the environment applying a specific channel $\Phi_{\alpha}$ to the system with probability $\pr(\alpha)$. Examples of constraints one may impose include restricting $\Phi_{\alpha}$ to be unitary operators, $\Phi_{\alpha}(\rho) = U_{\alpha} \rho U_{\alpha}^{\dagger}$, or further restricting to Pauli noise, where each $U_{\alpha} = \prod_{q} X_{q}^{a_{q}} Z_{q}^{b_{q}}$ is some Pauli operator. Pauli noise is often studied as it is easy to analyze, however it does not always capture how a system interacts with a generic environment and does not always remain Pauli when propagated through a generic quantum circuit. Because of this, we will usually consider general channel noise, Eq.~(\ref{eq:qudit-noise}), and specialize to Pauli noise only when explicitly specified.

Since one does not generally have control over the specific noise channels $\mathcal{N}$ which act on a system, we will study parametrized classes of qudit noise channels, $\mathbb{N}_{\tau}$. Generally, $\mathbb{N}_{\tau}$ is a set of quantum channels, $\mathcal{N} \in \mathbb{N}_{\tau}$, where the parameter $\tau \in [0,1)$ loosely quantifies the noise strength for all $\mathcal{N} \in \mathbb{N}_{\tau}$. Before enumerating properties which all classes $\mathbb{N}_{\tau}$ we study must satisfy, it is necessary to first introduce the notion of a $\lambda$-bounded probability distribution and the class of local stochastic noise~\cite{gottesman2014}.
\begin{definition}
    \label{def:tau-bound}
    Consider a probability distribution $\pr(A)$ over subsets $A \subseteq B$ of some set $B$. $\pr(A)$ is $\lambda$-bounded if there exists $\lambda \in [0,1)$ such that for all $A \subseteq B$,
    \begin{equation}
        \sum_{A^\prime : A^\prime \supseteq A} \pr(A^\prime) \leq \lambda^{\vert A \vert} .
    \end{equation}
\end{definition}

The class of local stochastic noise, $\mathbb{L}_{\lambda}$, is the set of all error channels $\mathcal{N}$ which admit a decomposition of the form of Eq.~(\ref{eq:qudit-noise}) such that the probability distribution over the supports of channels $\Phi_{\alpha}$,
\begin{align}
    \label{eq:local-stochastic-pauli}
    \pr(A) = \sum_{\alpha : \supp \Phi_{\alpha} = A} \pr(\alpha) ,
\end{align}
is $\lambda$-bounded. $\mathbb{L}_{\lambda}$ is useful as it contains physically realistic noise channels, where errors may be correlated but the likelihood of a specific error $\Phi_{\alpha}$ decreases with its weight. In some situations it is useful to consider more general classes of noise than $\mathbb{L}_{\lambda}$, such as when local stochastic noise does not strictly remain local when propagated through a QL circuit and when enlarging the class of noise considered makes proofs more tractable. However, all classes we discuss will contain $\mathbb{L}_{\lambda}$ as a subclass to maintain physical relevance, and will satisfy the following properties.

\begin{definition}
    \label{def:noise-class}
    A parametrized class of noise $\mathbb{N}_{\tau}$ is a set of quantum channels $\mathcal{N} \in \mathbb{N}_{\tau}$ for which the following properties hold for $\tau \in (0,1)$.
    \begin{enumerate}
        \item For all $\kappa \leq \tau$, $\mathbb{N}_{\kappa} \subseteq \mathbb{N}_{\tau}$.
        \item There exists $\lambda > 0$ such that $\mathbb{L}_{\lambda} \subseteq \mathbb{N}_{\tau}$.
        \item For $\tau_{1},\tau_{2} > 0$, $\mathbb{N}_{\tau_{1}} \circ \mathbb{N}_{\tau_{2}} \subseteq \mathbb{N}_{\tau_{1} + \tau_{2}}$, i.e., if $\mathcal{N}_{1} \in \mathbb{N}_{\tau_{1}}$, $\mathcal{N}_{2} \in \mathbb{N}_{\tau_{2}}$, then $\mathcal{N}_{1} \circ \mathcal{N}_{2} \in \mathbb{N}_{\tau_{1} + \tau_{2}}$.
    \end{enumerate}
\end{definition}
Here, property 1 ensures $\tau$ sufficiently characterizes the intensity of noise, property 2 ensures classes are compatible with local stochastic noise, and property 3 ensures that channels compose in a controlled way. We verify in Appendix~\ref{short-proofs} that $\mathbb{L}_{\lambda}$ satisfies the above properties. In certain situations, one may want to restrict classes $\mathbb{N}_{\tau}$ to Pauli classes which contain only Pauli noise. When studying Pauli classes, one must also restrict $\mathbb{L}_{\lambda}$ used in Definition~\ref{def:noise-class} to local stochastic Pauli noise, where the probability distribution of the supports of Pauli operators are $\lambda$-bounded. However, Pauli classes must still satisfy the other two properties of Definition~\ref{def:noise-class} as stated.

Now, consider a QL circuit $\Lambda$. To construct a noisy version of $\Lambda$, we insert general channel noise before and after each layer of local channel gates in $\Lambda$ and we insert measurement errors $\mu$, with probability $\pr(\mu)$, immediately before measurement outcomes are classically processed and a channel is conditionally applied. If a measured operator has $k_{\alpha}$ distinct eigenvalues, denote the true measurement outcome $\sigma_{\alpha} \in \mathbb{Z}_{k_{\alpha}}$. An error $\mu_{\alpha}$ then changes the reported noisy measurement outcome to $\sigma_{\alpha} + \mu_{\alpha} \bmod k_{\alpha}$. A representative example of a QL circuit $\Lambda$ is depicted in Figure~\ref{fig:noisy-channel}(a) and a specific pattern of noise inserted in $\Lambda$ is depicted in Figure~\ref{fig:noisy-channel}(b).

Since any QL circuit $\Lambda$ can be written in the form of Eq.~(\ref{eq:ql-circuit}), a noisy realization of $\Lambda$, $\Lambda^{\mathcal{N}}$, may be written as
\begin{equation}
    \Lambda^{\mathcal{N}} = \mathcal{K}^{(b) ; \mathcal{N}} \circ \mathcal{C}^{(b) ; \mathcal{N}} \circ \ldots \circ \mathcal{K}^{(1) ; \mathcal{N}} \circ \mathcal{C}^{(1) ; \mathcal{N}} .
\end{equation}
Each noisy LC circuit $\mathcal{C}^{(a) ; \mathcal{N}}$ decomposes into alternating layers of local channel gates $\mathcal{C}^{(a)}_{i}$ and noise channels $\mathcal{N}_{i}$,
\begin{equation}
    \mathcal{C}^{(a) ; \mathcal{N}} = \mathcal{N}_{\ell+1} \circ \mathcal{C}^{(a)}_{\ell} \circ \mathcal{N}_{\ell} \circ \ldots \circ \mathcal{C}^{(a)}_{1} \circ \mathcal{N}_{1} .
\end{equation}
Each noisy measurement-feedback circuit $\mathcal{K}^{(a) ; \mathcal{N}}$ contains both measurement noise, which affects which specific feedback channel is applied, and qudit noise, acting on the feedback channel itself. Similar to above,
\begin{align}
    \mathcal{K}^{(a) ; \mathcal{N}} (\rho) &= \sum_{\sigma,\mu} \pr(\mu) \mathcal{C}^{(\sigma+\mu);\mathcal{N}}
    \left( \Pi_{\sigma}^{\mathcal{I}} \rho {\Pi_{\sigma}^{\mathcal{I}}}^{\dagger} \right) , \\
    \mathcal{C}^{(\sigma+\mu) ; \mathcal{N}} &= \mathcal{N}_{\ell+1} \circ \mathcal{C}^{(\sigma+\mu)}_{\ell} \circ \mathcal{N}_{\ell} \circ \ldots \circ \mathcal{C}^{(\sigma+\mu)}_{1} \circ \mathcal{N}_{1} . \nonumber
\end{align}
Here we use a noise model where errors are uncorrelated in time. This will ensure transitivity of the phase definition given in the following section and allows one to decompose circuits and analyze individual LC circuits $\mathcal{C}^{(a)}_{i}$ and measurement-feedback circuits $\mathcal{K}^{(a)}$. It is straightforward to introduce time correlations by summing over patterns of noise inserted in $\Lambda$ of the form depicted in Figure~\ref{fig:noisy-channel}(b). However, not all the arguments made in this paper carry over to this setting. Most notably, transitivity of fault tolerance will no longer always hold.

\begin{figure}[t!]
    \centering
    \includegraphics[width=0.45\textwidth]{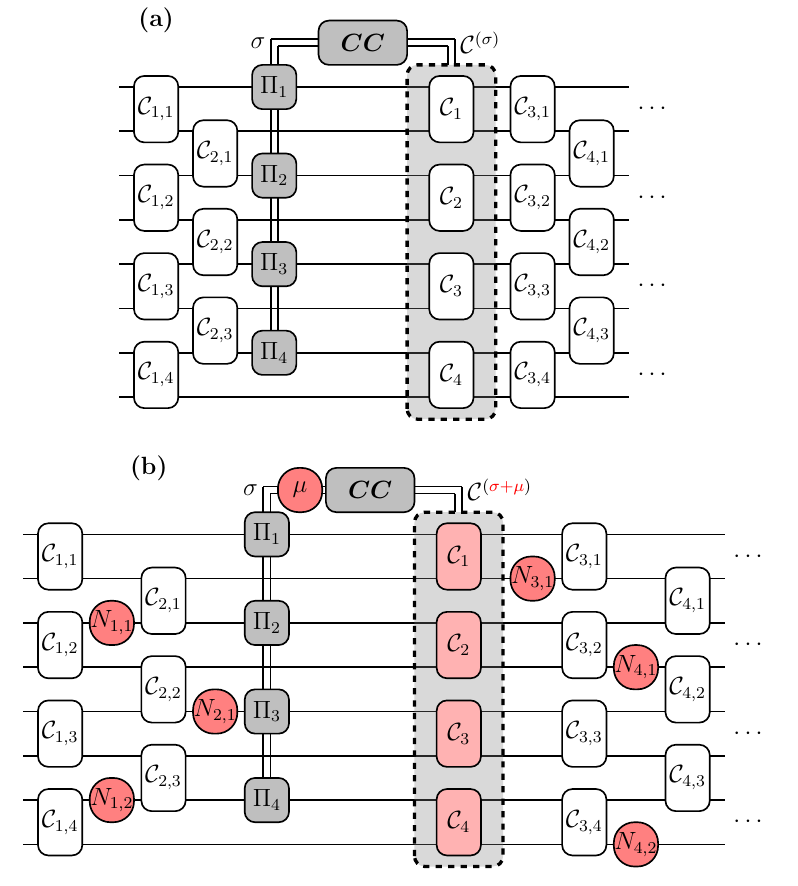}
    \caption{(a) Example of a noiseless quantum-local (QL) circuit. A QL circuit $\Lambda$ consists of layers of locally reversible local channel gates, local measurements, global classical processing of measurement outcomes and feedback applied in the form of local unitary operators. (b) Specific instance of a noisy QL circuit with a specific pattern of qudit ($N_{i}$) and measurement ($\mu$) noise inserted between layers.}
    \label{fig:noisy-channel}
\end{figure}

Given a QL circuit $\Lambda$, we can define parametrized classes of noisy realizations of $\Lambda$, $\Lambda^{(\mathbb{N}_{\tau}, \mathbb{M}_{\zeta})}$. Here, $\mathbb{N}_{\tau}$ is a class of qudit noise, as given in Definition~\ref{def:noise-class}, and $\mathbb{M}_{\zeta}$ is a class of measurement error distributions. A specific noisy circuit $\Lambda^{\mathcal{N}}$ is an element of the class $\Lambda^{(\mathbb{N}_{\tau}, \mathbb{M}_{\zeta})}$, $\Lambda^{\mathcal{N}} \in \Lambda^{(\mathbb{N}_{\tau}, \mathbb{M}_{\zeta})}$, if each layer of qudit noise $\mathcal{N}_{i} \in \mathbb{N}_{\tau}$ and each distribution of measurement errors $\left\{ \mu , \pr(\mu) \right\}_{\mu} \in \mathbb{M}_{\zeta}$. In this work we only consider local stochastic measurement noise, where $\mathbb{M}_{\zeta}$ is the set of measurement error distributions $\left\{ \mu , \pr(\mu) \right\}_{\mu}$ such that for all $\left\{ \mu , \pr(\mu) \right\}_{\mu} \in \mathbb{M}_{\zeta}$, the probability distribution over the supports of $\mu$,
\begin{equation}
    \label{eq:local-stochastic-meas}
    \pr(A) = \sum_{\mu : \supp \mu = A} \pr(\mu) ,
\end{equation}
is $\zeta$-bounded as stated in Definition~\ref{def:tau-bound}.

\subsection{Fault tolerant quantum phases} \label{ft-phase-defs}

Given a convex set of states $\Sigma$ and a noise channel $\mathcal{N}$ acting on states $\rho \in \Sigma$, we wish to quantify whether information encoded in $\Sigma$ is preserved or corrupted by $\mathcal{N}$. To do so, we will introduce a recovery channel $\mathcal{R}$ associated with $\Sigma$, which we will use as an information-theoretic tool to assess whether recovery of information from a corrupted state is possible.
\begin{definition}
\label{def:recovery-channel}
    A quantum channel $\mathcal{R}$ is a recovery channel for the convex set of states $\Sigma$ and decoding map $\mathcal{D}$ if the following conditions hold.
    \begin{enumerate}
        \item For all states $\rho \in \mathbb{S}(\mathcal{H})$, $\mathcal{R}(\rho) \in \Sigma$.
        \item For all states $\rho \in \Sigma$, $\mathcal{D} \circ \mathcal{R} (\rho) \approx \mathcal{D}(\rho)$.
    \end{enumerate}
\end{definition}
We will place no restrictions on whether $\mathcal{R}$ is realizable with a shallow QL circuit or not, and we will always assume it is noiseless. Also, many $\mathcal{R}$ may exist for a given $\Sigma$ and may output different recovered states $\mathcal{R} \circ \mathcal{N} (\rho)$ when one attempts to undo the action of $\mathcal{N}$ on a state $\rho \in \Sigma$. We can quantify the effectiveness of a given recovery channel $\mathcal{R}$ against some noise channel $\mathcal{N}$ using the trace distance $\norm{\mathcal{D}(\rho) - \mathcal{D} \circ \mathcal{R} \circ \mathcal{N}(\rho)}_{1}$. A useful diagnostic is whether $\mathcal{N}$ is $\mathcal{R}$-recoverable, as defined below.
\begin{definition}
\label{def:recoverable}
    A quantum channel $\mathcal{N}$ acting on a convex set of states $\Sigma$ with decoding map $\mathcal{D}$ is $\mathcal{R}$-recoverable if
    \begin{equation}
        \label{eq:recoverable-channel}
        \sup_{\left\{\rho \in \Sigma\right\}} \norm{\mathcal{D}(\rho) - \mathcal{D} \circ \mathcal{R} \circ \mathcal{N}(\rho)}_{1} = 1/\Omega(\poly(n)) .
    \end{equation}
    We say that a class of noise $\mathbb{N}_{\tau}$ is $\mathcal{R}$-recoverable if every $\mathcal{N} \in \mathbb{N}_{\tau}$ is $\mathcal{R}$-recoverable.
\end{definition}
If $\mathcal{N}$ is $\mathcal{R}$-recoverable, then $\mathcal{D} \circ \mathcal{R} \circ \mathcal{N} (\rho) \approx \mathcal{D}(\rho)$ for all $\rho \in \Sigma$.
We now state the definition of fault tolerance which we use throughout this work.
\begin{definition}[Fault tolerant circuit]
\label{def:ft-channel}
    A quantum circuit $\Lambda : \Sigma_{1} \rightarrow \Sigma_{2}$ is fault tolerant, given decoding map $\mathcal{D}_{2}$ for $\Sigma_{2}$, against a class of noise $(\mathbb{N}_{\tau}, \mathbb{M}_{\zeta})$ if there exist critical noise strengths $\tau^{\star},\zeta^{\star} > 0$ independent of the system size $n$ and a continuous function $\eta(\tau,\zeta)$ with $\eta(0,0) = 0$ such that for all $\tau < \tau^{\star}$, $\zeta < \zeta^{\star}$, and $\mathcal{N} \in (\mathbb{N}_{\tau}, \mathbb{M}_{\zeta})$, the following conditions hold.
    \begin{enumerate}
        \item There exists a channel $\widetilde{\mathcal{N}} \in \mathbb{N}_{\eta}$, where $\eta = \eta(\tau,\zeta)$, such that $\Lambda^{\mathcal{N}} (\rho_1) \approx \widetilde{\mathcal{N}} \circ \Lambda (\rho_1)$ for all $\rho_1 \in \Sigma_{1}$.
        \item There exists a recovery channel $\mathcal{R}_{2}$ for $\Sigma_{2}$ such that $\widetilde{\mathcal{N}}$ is $\mathcal{R}_{2}$-recoverable for all $\mathcal{N} \in (\mathbb{N}_{\tau}, \mathbb{M}_{\zeta})$.
    \end{enumerate}
\end{definition}

In property 1 of Definition~\ref{def:ft-channel}, the residual noise channel $\widetilde{\mathcal{N}}$ may depend on $\Lambda$ and $\mathcal{N}$ but is independent of the state $\rho_1 \in \Sigma_{1}$ which $\Lambda^{\mathcal{N}}$ is applied to. The requirement that $\eta(\tau,\zeta)$ is continuous ensures that the residual noise grows in a controllable way when propagated through the circuit $\Lambda$. Property 2 then ensures logical information transferred by $\Lambda$ will be recovered successfully with high probability. For $\rho_{2} \approx \Lambda (\rho_1)$, these properties imply
\begin{align}
    \label{eq:ft-implications}
    \widetilde{\mathcal{N}} \circ \Lambda (\rho_1) \approx \widetilde{\mathcal{N}}(\rho_{2}) \quad &\text{(property 1)} \\
    \mathcal{D}_{2}(\rho_{2}) \approx \mathcal{D}_{2} \circ \mathcal{R}_{2} \circ \Lambda^{\mathcal{N}} (\rho_1) \quad &\text{(property 2)} \nonumber
\end{align}
as shown in Appendix~\ref{short-proofs}.
Also, the composition of two fault tolerant channels is fault tolerant, as stated by the following lemma and proven in Appendix~\ref{short-proofs}.

\begin{lemma}
\label{lemma:ft-composition}
    Suppose $\Lambda_{1} : \Sigma_{1} \rightarrow \Sigma_{2}$ and $\Lambda_{2} : \Sigma_{2} \rightarrow \Sigma_{3}$ are both fault tolerant, given decoding maps $\mathcal{D}_{2}$ for $\Sigma_{2}$ and $\mathcal{D}_{3}$ for $\Sigma_{3}$, against the class $(\mathbb{N}_{\tau}, \mathbb{M}_{\zeta})$, with minimum critical noise strengths $\tau^{\star}$ and $\zeta^{\star}$ for both channels. Then, the channel $\Lambda_{2} \circ \Lambda_{1} : \Sigma_{1} \rightarrow \Sigma_{3}$ is fault tolerant, given $\mathcal{D}_{3}$, against $(\mathbb{N}_{\tau}, \mathbb{M}_{\zeta})$ with modified critical noise strengths inferred from $\zeta < \zeta^{\star}$ and $\tau + \eta_{1} < \tau^{\star}$, where $\eta_{1} = \eta_{1}(\tau,\zeta)$ is the residual noise strength for $\Lambda_{1}$ given in Definition~\ref{def:ft-channel}.
\end{lemma}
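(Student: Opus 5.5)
The plan is to push the residual noise of $\Lambda_{1}$ into the input of $\Lambda_{2}$ and treat it as part of the first noise layer there. Fix $\tau,\zeta$ small enough that $\zeta<\zeta^{\star}$, $\tau<\tau^{\star}$, and $\tau+\eta_{1}(\tau,\zeta)<\tau^{\star}$. Continuity of $\eta_{1}$ and $\eta_{1}(0,0)=0$ guarantee such a nonempty region with thresholds independent of $n$. A noisy realization of $\Lambda_{2}\circ\Lambda_{1}$ in $(\mathbb{N}_{\tau},\mathbb{M}_{\zeta})$ splits as $\Lambda_{2}^{\mathcal{N}_{2}}\circ\Lambda_{1}^{\mathcal{N}_{1}}$, because noise is uncorrelated in time. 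Here $\mathcal{N}_{1},\mathcal{N}_{2}$ are admissible noise patterns for the two circuits; the noise layer at the junction can be attributed to either circuit, or the two junction layers composed.

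First, I apply property 1 of Definition~\ref{def:ft-channel} to $\Lambda_{1}$. This gives $\Lambda_{1}^{\mathcal{N}_{1}}(\rho_{1})\approx\widetilde{\mathcal{N}}_{1}\circ\Lambda_{1}(\rho_{1})\approx\widetilde{\mathcal{N}}_{1}(\rho_{2})$ with $\rho_{2}\in\Sigma_{2}$ and $\widetilde{\mathcal{N}}_{1}\in\mathbb{N}_{\eta_{1}}$, as in Eq.~(\ref{eq:ft-implications}). Since $\Lambda_{2}^{\mathcal{N}_{2}}$ is a channel, contractivity of trace distance preserves $\approx$ after applying it. Next, I absorb $\widetilde{\mathcal{N}}_{1}$ into the first qudit noise layer $\mathcal{N}_{2,1}\in\mathbb{N}_{\tau}$ of $\Lambda_{2}$. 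By property 3 of Definition~\ref{def:noise-class}, $\mathcal{N}_{2,1}\circ\widetilde{\mathcal{N}}_{1}\in\mathbb{N}_{\tau+\eta_{1}}$. The other layers lie in $\mathbb{N}_{\tau}\subseteq\mathbb{N}_{\tau+\eta_{1}}$ by property 1. Hence the modified noisy circuit lies in $\Lambda_{2}^{(\mathbb{N}_{\tau+\eta_{1}},\mathbb{M}_{\zeta})}$ with $\tau+\eta_{1}<\tau^{\star}$, which is below threshold for $\Lambda_{2}$.

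Applying property 1 for $\Lambda_{2}$ to this circuit and the input $\rho_{2}\in\Sigma_{2}$ gives a channel $\widetilde{\mathcal{N}}\in\mathbb{N}_{\eta_{2}(\tau+\eta_{1}(\tau,\zeta),\zeta)}$ with $(\Lambda_{2}\circ\Lambda_{1})^{\mathcal{N}}(\rho_{1})\approx\widetilde{\mathcal{N}}\circ\Lambda_{2}(\rho_{2})$. I then need $\Lambda_{2}(\rho_{2})\approx\Lambda_{2}\circ\Lambda_{1}(\rho_{1})$, which follows from $\Lambda_{1}(\rho_{1})\approx\rho_{2}$ and contractivity. The residual $\widetilde{\mathcal{N}}$ does not depend on $\rho_{1}$, since neither $\widetilde{\mathcal{N}}_{1}$ nor the $\Lambda_{2}$ residual depends on the state. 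The composite strength $\eta(\tau,\zeta)=\eta_{2}(\tau+\eta_{1}(\tau,\zeta),\zeta)$ is continuous and vanishes at $(0,0)$. Property 2 is inherited directly: the recovery channel $\mathcal{R}_{3}$ supplied by $\Lambda_{2}$'s fault tolerance makes every residual from below-threshold noise recoverable, and our $\widetilde{\mathcal{N}}$ is such a residual. Finally, $\Lambda_{2}\circ\Lambda_{1}:\Sigma_{1}\rightarrow\Sigma_{3}$ holds by chaining the approximations.

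The main obstacle is bookkeeping the $\approx$ relations. Each step loses an inverse-polynomial error, and the sum of finitely many such errors remains $1/\Omega(\poly(n))$. I also have to check that the approximation $\Lambda_{1}^{\mathcal{N}_{1}}(\rho_{1})\approx\widetilde{\mathcal{N}}_{1}(\rho_{2})$ survives as an input to $\Lambda_{2}$'s noisy circuit, not just to $\Lambda_{2}$. This holds because any noisy realization is itself a CPTP map, averaged over measurement errors, and so is contractive.
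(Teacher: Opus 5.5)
Your proposal is correct and follows the paper's proof essentially step for step. You split the noise as $\Lambda_{2}^{\mathcal{N}_{2}}\circ\Lambda_{1}^{\mathcal{N}_{1}}$, replace $\Lambda_{1}^{\mathcal{N}_{1}}(\rho_{1})$ by $\widetilde{\mathcal{N}}_{1}(\rho_{2})$, absorb $\widetilde{\mathcal{N}}_{1}$ into the first qudit-noise layer of $\Lambda_{2}$ (landing in $\mathbb{N}_{\tau+\eta_{1}}$ via property 3 of Definition~\ref{def:noise-class}), and then invoke both properties of fault tolerance for $\Lambda_{2}$. You are slightly more explicit than the paper: you check that $\eta_{2}(\tau+\eta_{1}(\tau,\zeta),\zeta)$ is continuous and vanishes at the origin, and you spell out the junction-layer and $\approx$ bookkeeping that the paper leaves implicit.
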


Definition~\ref{def:ft-channel} may be used to establish a partial order on convex sets of states based on the ability to fault tolerantly transfer information from one convex set to another. We say that $\Sigma_{1} \succeq \Sigma_{2}$ with respect to the class of noise $(\mathbb{N}_{\tau}, \mathbb{M}_{\zeta})$ if there exists an $(r,\ell)$ QL circuit $\Lambda$ with $r\ell = O(\polylog(n))$ such that for all $\rho_{1} \in \Sigma_{1}$,
\begin{equation}
    \Lambda(\rho_{1}) \approx \rho_{2} \in \Sigma_{2} , \quad \mathcal{D}_{2} \circ \Lambda (\rho_{1}) \approx \mathcal{D}_{1} (\rho_{1}) ,
\end{equation}
and $\Lambda : \Sigma_{1} \rightarrow \Sigma_{2}$ is fault tolerant against $(\mathbb{N}_{\tau}, \mathbb{M}_{\zeta})$ given $\mathcal{D}_{2}$.
We may then define fault tolerant phase equivalence as the case when this information transfer is reversible.
\begin{definition}[FTQL phases]
\label{def:ft-phase-equivalence}
    Two convex sets of states $\Sigma_{1}$ and $\Sigma_{2}$ are in the same fault tolerant phase (FTQL phase) with respect to decoding maps $\mathcal{D}_{1}$ and $\mathcal{D}_{2}$ and the noise class $(\mathbb{N}_{\tau}, \mathbb{M}_{\zeta})$, $\Sigma_{1} \sim \Sigma_{2}$, if for $i=1,2$ there exists an $(r_{i},\ell_{i})$ quantum-local circuit $\Lambda_{i}$ with $r_i \ell_i = O(\polylog(n))$, such that
    \begin{enumerate}
        \item $\Lambda_{1}$ and $\Lambda_{2}$ reverse each other, meaning $\Lambda_{1} : \Sigma_{1} \rightarrow \Sigma_{2}$, $\Lambda_{2} : \Sigma_{2} \rightarrow \Sigma_{1}$, and Eq.~(\ref{eq:decoding-equivalence}) is satisfied,
        \item $\Lambda_{1}$ and $\Lambda_{2}$ are both fault tolerant for $(\mathbb{N}_{\tau}, \mathbb{M}_{\zeta})$.
    \end{enumerate}
\end{definition}

Definition~\ref{def:ft-phase-equivalence} essentially states that $\Sigma_{1}$ and $\Sigma_{2}$ are in the same phase if one can fault tolerantly transfer information from $\Sigma_{1}$ to $\Sigma_{2}$ and transfer the same information back from $\Sigma_{2}$ to $\Sigma_{1}$, guaranteeing this information is preserved. In terms of the partial order discussed above, $\Sigma_{1} \sim \Sigma_{2}$ if $\Sigma_{1} \succeq \Sigma_{2}$, $\Sigma_{2} \succeq \Sigma_{1}$, and $\Lambda_{1}$ and $\Lambda_{2}$ reverse each other. As discussed in Section~\ref{encoded-info}, a necessary condition for $\Sigma_{1} \sim \Sigma_{2}$ is that they encode the same information $\Omega$. Because of this, we will generally discuss phase equivalence of convex sets $\Sigma$ with respect to a specific set of encoded information $\Omega$, as two convex sets are always in different phases otherwise.

We comment that Definition~\ref{def:ft-phase-equivalence} is a valid equivalence relation as it is reflexive, symmetric, and transitive. Symmetry follows by construction, transitivity is proven in Lemma~\ref{lemma:ft-composition}, and reflexivity follows by defining the identity channel, $\mathbb{I}$, as a QL circuit with zero gates and zero layers. Since there are no locations where noise can be inserted in $\mathbb{I}$, any noisy realization $\mathbb{I}^{\mathcal{N}} = \mathbb{I}$ and it is always fault tolerant by definition. Therefore, any convex set $\Sigma$ is fault tolerantly connected to itself by $\mathbb{I}$. 
Lastly, one may see similarities between Definition~\ref{def:ft-phase-equivalence} and code switching in QEC. We will elaborate on these similarities and other properties in Section~\ref{ft-phase-properties}.

\subsection{Equivalence of individual states} \label{ft-state-prep}

If one chooses $\Sigma = \{\rho\}$ to be a single state, then every $\Omega = \mathcal{D} \left[ \Sigma \right]$ contains a single element. Definition~\ref{def:ft-phase-equivalence} then reduces to an equivalence relation between individual states. The recovery channel $\mathcal{R} (\cdot) = \rho \Tr\left [\cdot \right]$ will return $\rho$ irrespective of the state it is applied to, meaning property 2 of Definition~\ref{def:ft-channel} holds for any circuit $\Lambda$ connecting states $\rho_{1}$ and $\rho_{2}$. Therefore, FTQL state equivalence concerns whether there exist shallow QL circuits connecting $\rho_{1}$ and $\rho_{2}$ where noise does not propagate uncontrollably, specifically whether property 1 of Definition~\ref{def:ft-channel} is satisfied. State equivalence is similar to the noiseless phase classifications given by Definitions~\ref{def:lc-phase} and~\ref{def:ql-phase} and highlights the difference between state preparation, which depends only on equivalence between states, and transfer of encoded information, which depends on equivalence between convex sets of states. Equivalence between two states $\rho_{1} \in \Sigma_{1}$ and $\rho_{2} \in \Sigma_{2}$ does not necessarily imply the two convex sets $\Sigma_{1}$ and $\Sigma_{2}$ are equivalent. Even if one can prepare individual states which span $\Sigma_{2}$ from states in $\Sigma_{1}$, if the preparation methods differ one cannot necessarily prepare an arbitrary state in $\Sigma_{2}$ from some state in $\Sigma_{1}$. For example, in Section~\ref{q-form-tc}, we will show that the $\dyad{\overline{0}}{\overline{0}}^{\otimes k}$ state of the 4D toric code is equivalent to the product state, $\dyad{0}{0}^{\otimes n}$. However, we show in Section~\ref{trivial-phases} that the code space of any topological code, such as the 4D toric code, is not in the same phase as any set of $k$ unencoded qubits. We will study instances and properties of encoded information transfer in Sections~\ref{ft-phase-properties} and~\ref{cc-phases} and of state preparation in Sections~\ref{robust-meas-feedback} and~\ref{non-robust-circuits}.

\section{Stability and properties of FTQL phases} \label{ft-phase-properties}

While Definition~\ref{def:ft-phase-equivalence} applies to all convex sets of states $\Sigma$, not all $\Sigma$ are useful for preserving encoded information. To be useful, $\Sigma$ must at least be stable, which means here that the information encoded in $\Sigma$ is in principle recoverable when subjected to local stochastic noise with strength below some nonzero threshold.
\begin{definition}
\label{def:stable-phase}
    A convex set of states $\Sigma$ with decoding map $\mathcal{D}$ is stable if there exists a recovery channel $\mathcal{R}$ for $\Sigma$
    and critical noise strength $\lambda^{\star} > 0$ such that the noise class $\mathbb{L}_{\lambda^{\star}}$ is $\mathcal{R}$-recoverable.
\end{definition}

Since here every $\mathcal{N} \in \mathbb{L}_{\lambda^{\star}}$ must be $\mathcal{R}$-recoverable for the same $\mathcal{R}$, stability is a stronger requirement than asking whether a recovery channel exists for each $\mathcal{N} \in \mathbb{L}_{\lambda^{\star}}$, as the latter allows one to use different recovery channels for different $\mathcal{N}$. Also recall $\mathbb{L}_{\lambda} \subseteq \mathbb{L}_{\lambda^{\star}}$ for all $\lambda < \lambda^{\star}$ by definition. It turns out that stability here is a property of an entire phase, not just a specific convex set $\Sigma$, i.e., if $\Sigma$ is stable then all convex sets in the same phase as $\Sigma$ are stable as well. In fact, if a phase contains more than one distinct convex set, then each convex set in this phase is stable, as shown by the following lemma.

\begin{lemma}
\label{lemma:stable-consistency}
    Consider a convex set of states $\Sigma_{1}$ with decoding map $\mathcal{D}_{1}$. Suppose there exists $\Sigma_{2}$ such that $\Sigma_{1} \sim \Sigma_{2}$ but $\mathbb{I} : \Sigma_{1} \not\rightarrow \Sigma_{2}$ (i.e., one cannot connect the two sets with the identity). Then $\Sigma_{1}$ is stable.
\end{lemma}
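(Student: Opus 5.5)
The plan is to build a recovery channel for $\Sigma_1$ out of the two circuits that witness $\Sigma_1 \sim \Sigma_2$. Let $\Lambda_1 : \Sigma_1 \to \Sigma_2$ and $\Lambda_2 : \Sigma_2 \to \Sigma_1$ be the fault tolerant, mutually reversing QL circuits from Definition~\ref{def:ft-phase-equivalence}. Fault tolerance of $\Lambda_2$ (property 2 of Definition~\ref{def:ft-channel}) supplies a recovery channel $\mathcal{R}_1$ for $\Sigma_1$ and $\mathcal{D}_1$. I will show that $\mathbb{L}_{\lambda^\star}$ is $\mathcal{R}_1$-recoverable for some $\lambda^\star>0$.

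The key input is the hypothesis $\mathbb{I} : \Sigma_1 \not\rightarrow \Sigma_2$. This forces $\Lambda_2$ to be a nontrivial circuit, with at least one layer and hence at least one location where noise is inserted. The point is that otherwise $\Lambda_2 = \mathbb{I}$ would connect $\Sigma_2$ to $\Sigma_1$, and the reversing relation would then give $\mathbb{I} : \Sigma_1 \to \Sigma_2$. If that implication is not immediate from the definitions, I will argue directly: the two sets are not identical, so a nontrivial layer exists in $\Lambda_1$ or $\Lambda_2$. By symmetry, I then work with whichever circuit has such a layer. If it is $\Lambda_1$, I run the argument below on $\Sigma_2$ and transfer stability back to $\Sigma_1$ in the final step.

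Suppose $\Lambda_2$ has its first noise layer $\mathcal{N}_1$ acting on the input. Take $\mathcal{N} \in \mathbb{L}_{\lambda}$ with $\lambda$ small. By property 2 of Definition~\ref{def:noise-class}, $\mathbb{L}_\lambda \subseteq \mathbb{N}_\tau$ for suitable $\tau<\tau^\star$, possibly after shrinking $\lambda$ and using monotonicity (property 1). Insert $\mathcal{N}$ as the first noise layer and take all other noise trivial. The trivial channel lies in every class, and zero measurement noise is admissible. Fault tolerance of $\Lambda_2$ then gives $\mathcal{D}_1 \circ \mathcal{R}_1 \circ \Lambda_2 \circ \mathcal{N}(\rho_2) \approx \mathcal{D}_1\circ\Lambda_2(\rho_2)$, by Eq.~(\ref{eq:ft-implications}).

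Now set $\rho_2 = \Lambda_1(\rho_1)$. Equivalently, start from $\rho_1 \in \Sigma_1$ and apply the composite $\Lambda_2\circ\Lambda_1$. By Lemma~\ref{lemma:ft-composition} this composite is fault tolerant, and it has noise locations, so noise $\mathcal{N}\in\mathbb{L}_\lambda$ can be placed directly on $\rho_1$ before the first layer of $\Lambda_1$. This yields
\[
\mathcal{D}_1\circ\mathcal{R}_1\circ\Lambda_2\circ\Lambda_1\circ\mathcal{N}(\rho_1)\approx \mathcal{D}_1(\rho_1),
\]
where Eq.~(\ref{eq:decoding-equivalence-alternate}) handles the noiseless part. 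The recovery channel for $\Sigma_1$ is therefore $\mathcal{R} := \mathcal{R}_1\circ\Lambda_2\circ\Lambda_1$. It maps into $\Sigma_1$ because $\mathcal{R}_1$ does, and on noiseless inputs it satisfies $\mathcal{D}_1\circ\mathcal{R}(\rho)\approx\mathcal{D}_1(\rho)$. So it is a valid recovery channel in the sense of Definition~\ref{def:recovery-channel}, and it recovers every $\mathcal{N}\in\mathbb{L}_{\lambda^\star}$. This establishes stability of $\Sigma_1$ via Definition~\ref{def:stable-phase}.

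The main obstacle is the role of $\mathbb{I}:\Sigma_1\not\to\Sigma_2$. I need to check that the composite circuit genuinely has an input noise location before its first gate. The definition inserts $\mathcal{N}_1$ before the first LC layer, which works provided at least one circuit is non-empty. The hypothesis exists exactly to rule out the case where both circuits are empty, in which no noise can ever be inserted.
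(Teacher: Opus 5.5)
\textbf{Gap: you rely on the wrong circuit being nonempty.} The hypothesis $\mathbb{I}:\Sigma_1\not\rightarrow\Sigma_2$ says directly that $\Lambda_1\neq\mathbb{I}$, since otherwise $\Lambda_1$ itself would witness $\mathbb{I}:\Sigma_1\rightarrow\Sigma_2$. It says nothing about $\Lambda_2$. Your claim that $\Lambda_2=\mathbb{I}$ would force $\mathbb{I}:\Sigma_1\rightarrow\Sigma_2$ is false. $\Lambda_2=\mathbb{I}$ only means that $\Sigma_2$ sits (approximately) inside $\Sigma_1$ with $\mathcal{D}_1\approx\mathcal{D}_2$ there. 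An example is a gauge-fixed set $\Sigma_g\subsetneq\Sigma_{\mathcal{G}}$, which the paper notes can be mapped into $\Sigma_{\mathcal{G}}$ by doing nothing.

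Your argument, however, draws all its robustness from $\Lambda_2$. The recovery channel $\mathcal{R}_1$ is the one supplied by fault tolerance of $\Lambda_2$. The proof of Lemma~\ref{lemma:ft-composition} works by absorbing the residual noise $\widetilde{\mathcal{N}}_1$ of $\Lambda_1$ into the first noise layer of $\Lambda_2$. When $\Lambda_2=\mathbb{I}$ there is no such layer, and fault tolerance of $\mathbb{I}$ is vacuous, because its residual noise is the identity. Then $\mathcal{R}_1$ can be any recovery channel for $\Sigma_1$, for instance $\rho\mapsto\Pi\rho\Pi+\Tr[(I-\Pi)\rho]\,\omega$ with $\Pi$ the code-space projector and $\omega$ a fixed state of $\Sigma_1$. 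That channel wipes the logical information after a single detectable error, so $\mathcal{D}_1\circ\mathcal{R}_1\circ\Lambda_1\circ\mathcal{N}(\rho_1)\approx\mathcal{D}_1(\rho_1)$ fails. Invoking Lemma~\ref{lemma:ft-composition} in this case would presuppose exactly the robustness of $\Sigma_1$ you are trying to prove. Your fallback (``run the argument on $\Sigma_2$ and transfer stability back'') is precisely the missing step, left unargued. Relatedly, the hypothesis does not merely exclude ``both circuits empty''; it guarantees that $\Lambda_1$ is nonempty, and that is what a correct proof should use.

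\textbf{Repair.} Put the recovery on the $\Sigma_2$ side, where a genuine noise location is guaranteed.
\begin{itemize}
\item Since $\Lambda_1\neq\mathbb{I}$, $\Lambda_1\circ\mathcal{N}$ is an admissible noisy realization for every $\mathcal{N}\in\mathbb{L}_\lambda$.
\item Fault tolerance of $\Lambda_1$ then gives $\mathcal{D}_2\circ\mathcal{R}_2\circ\Lambda_1\circ\mathcal{N}(\rho_1)\approx\mathcal{D}_2\circ\Lambda_1(\rho_1)\approx\mathcal{D}_1(\rho_1)$.
\item Because $\mathcal{R}_2\circ\Lambda_1\circ\mathcal{N}(\rho_1)\in\Sigma_2$, Eq.~(\ref{eq:decoding-equivalence}) for $\Lambda_2$ turns the left side into $\mathcal{D}_1\circ\Lambda_2\circ\mathcal{R}_2\circ\Lambda_1\circ\mathcal{N}(\rho_1)$.
\end{itemize}
So $\Lambda_2\circ\mathcal{R}_2\circ\Lambda_1$ recovers $\mathbb{L}_{\lambda^\star}$ on $\Sigma_1$. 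This is the paper's proof. It uses only fault tolerance of $\Lambda_1$ and the reversing property of $\Lambda_2$, and needs neither fault tolerance of $\Lambda_2$ nor Lemma~\ref{lemma:ft-composition}.

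When $\Lambda_2$ is nonempty, your composite argument does go through. It even has the small advantage that $\mathcal{R}_1\circ\Lambda_2\circ\Lambda_1$ lands exactly in $\Sigma_1$, whereas $\Lambda_2\circ\mathcal{R}_2\circ\Lambda_1$ lands there only approximately. Your route is therefore complete once you handle $\Lambda_2=\mathbb{I}$ separately, where $\mathcal{R}_2\circ\Lambda_1$ works.
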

\begin{proof}
    Suppose $\Sigma_{1} \sim \Sigma_{2}$. Suppose $\mathcal{D}_{2}$ is a decoding map for $\Sigma_{2}$. Suppose the channels $\Lambda_{1} : \Sigma_{1} \rightarrow \Sigma_{2}$ and $\Lambda_{2} : \Sigma_{2} \rightarrow \Sigma_{1}$ satisfy the conditions of Definition~\ref{def:ft-phase-equivalence} for some class $(\mathbb{N}_{\tau}, \mathbb{M}_{\zeta})$. As mentioned above, $\Lambda_{1} \neq \mathbb{I}$ cannot be the identity channel, therefore noise can be inserted in $\Lambda_{1}$. Let $\mathcal{R}_{2}$ be a recovery channel for $\Sigma_{2}$, satisfying the conditions of Definition~\ref{def:ft-channel} for $(\mathbb{N}_{\tau}, \mathbb{M}_{\zeta})$. 
    Since $\mathbb{L}_{\lambda} \subseteq \mathbb{N}_{\tau}$ for some $\lambda$, the noisy channel $\Lambda_{1} \circ \mathcal{N} \in \Lambda_{1}^{(\mathbb{N}_{\tau}, \mathbb{M}_{\zeta})}$ for all $\mathcal{N} \in \mathbb{L}_{\lambda}$. Since $\Lambda_{1}$ is fault tolerant against $(\mathbb{N}_{\tau}, \mathbb{M}_{\zeta})$, there exists $\lambda^{\star} > 0$ such that for all $\mathcal{N} \in \mathbb{L}_{\lambda^{\star}}$,
    \begin{equation}
        \label{eq:stochastic-recovery}
        \mathcal{D}_{2} \circ \mathcal{R}_{2} \circ \Lambda_{1} \circ \mathcal{N} (\rho_{1}) \approx \mathcal{D}_{2} \circ \Lambda_{1}(\rho_{1})
    \end{equation}
    for all $\rho_{1} \in \Sigma_{1}$. However, since $\mathcal{R}_{2} \circ \Lambda_{1} \circ \mathcal{N} (\rho_{1}) \in \Sigma_{2}$, applying Eq.~(\ref{eq:decoding-equivalence}) to both sides of Eq.~(\ref{eq:stochastic-recovery}) implies
    \begin{equation}
        \mathcal{D}_{1} \circ \Lambda_{2} \circ \mathcal{R}_{2} \circ \Lambda_{1} \circ \mathcal{N} (\rho_{1}) \approx \mathcal{D}_{1}(\rho_{1})
    \end{equation}
    for all $\rho_{1} \in \Sigma_{1}$. This means the noise class $\mathbb{L}_{\lambda^{\star}}$ acting on $\Sigma_{1}$ is $\mathcal{R}_{1}$-recoverable with $\mathcal{R}_{1} = \Lambda_{2} \circ \mathcal{R}_{2} \circ \Lambda_{1}$. Therefore, $\Sigma_{1}$ is stable according to Definition~\ref{def:stable-phase}.
\end{proof}

The contrapositive of Lemma~\ref{lemma:stable-consistency} is also important. Specifically, if $\Sigma_{1}$ is not stable, then no $\Sigma_{2}$ exists such that $\Sigma_{1} \sim \Sigma_{2}$ and $\mathbb{I} : \Sigma_{1} \not\rightarrow \Sigma_{2}$. Stability is most directly interpreted as a necessary condition for $\Sigma$ to be a reliable quantum memory. Stable phases $\Sigma$ will preserve encoded information when idle even in the presence of noise, which is not the case when $\Sigma$ is not stable. As we will discuss in the following section, many QEC codes are stable. Lastly, we note that stability does not matter for state preparation. Recall from Section~\ref{ft-state-prep} that if $\Sigma = \{\rho\}$, one can choose a recovery channel $\mathcal{R} (\cdot) = \rho \Tr\left [\cdot \right]$, which returns $\rho$ irrespective of the state it is applied to. In this case, any arbitrary noise channel is $\mathcal{R}$-recoverable.

\subsection{Stabilizer codes represent stable phases} \label{qec-relation}

One useful class of examples occurs when the convex set $\Sigma$ corresponds to the set of logical states of some quantum error correcting (QEC) code. A QEC code is defined by a subspace $\mathcal{V} \subseteq \mathcal{H}$, where states $\ket{\psi} \in \mathcal{V}$ are called logical pure states. The set of logical mixed states $\Sigma_{\mathcal{V}}$ is the convex hull of pure states in $\mathcal{V}$, 
\begin{equation}
    \label{eq:conv-set-stabilizer}
    \Sigma_{\mathcal{V}} = \conv \left( \left\{ \dyad{\psi}{\psi} : \ket{\psi} \in \mathcal{V} \right\} \right) .
\end{equation}
Let the decoding map $\mathcal{D}_{\mathcal{V}}$ for $\Sigma_{\mathcal{V}}$ be an isometry which maps each state $\rho \in \Sigma_{\mathcal{V}}$ to a distinct logical state $\mathcal{D}_{\mathcal{V}} (\rho) \in \Omega = \mathcal{D}_{\mathcal{V}} \left[ \Sigma_{\mathcal{V}} \right]$, hence $\Sigma_{\mathcal{V}} \cong \Omega$. If system sites are $d$-dimensional qudits and $\dim \mathcal{V} = d^{k}$, then $\Sigma_{\mathcal{V}}$ encodes $k$ qudits worth of information. One widely studied class of QEC codes is that of qudit stabilizer codes~\cite{gottesman1997}, where the code space $\mathcal{V}$ is defined as the mutual $+1$ eigenspace of an Abelian group of Pauli operators $\mathcal{S}$, called the stabilizer group. Given that single-qudit Pauli operators satisfy $Z_{q}X_{q} = \omega X_{q}Z_{q}$ where $\omega = \exp(2 \pi i / d)$, $\mathcal{S} \subset \mathcal{P}$ is an Abelian subgroup of the $n$-qudit Pauli group,
\begin{equation}
    \mathcal{P} = \left\langle \omega^{a} \bigotimes_{q=1}^{n} P_{q} : P_{q} \in \left\langle X_{q},Z_{q} \right\rangle , a \in \mathbb{Z}_{d} \right\rangle ,
\end{equation}
where $\omega^{a} I \notin \mathcal{S}$ for all $a \neq 0 \bmod d$. $\mathcal{S} = \left\langle \mathcal{S}_{0} \right\rangle$ can be constructed by a generating set $\mathcal{S}_{0}$. If $\vert \mathcal{S} \vert = d^{n-k}$, then $\dim \mathcal{V} = d^{k}$. Logical operators, those which change the system state within $\Sigma_{\mathcal{V}}$, can be formed from linear combinations of Pauli operators in the centralizer of $\mathcal{S}$ in $\mathcal{P}$, $\mathcal{Z}(\mathcal{S}) \subset \mathcal{P}$. All $S_{i} \in \mathcal{S}$ act as the logical identity on $\Sigma_{\mathcal{V}}$.

Properties of stabilizer codes are well-understood in the setting of LC phases and there exist physical mechanisms which realize these phases. Specifically, given a code with stabilizer group $\mathcal{S}$, code space $\mathcal{V}$, and a generating set $\mathcal{S}_{0}$ for $\mathcal{S}$, one can define a gapped Hamiltonian
\begin{equation}
    \label{eq:stabilizer-hamiltonian}
    H = - \frac{1}{2} \sum_{S \in \mathcal{S}_{0}} \left(S + S^{\dagger}\right)
\end{equation}
whose ground state subspace is $\mathcal{V}$ and whose excited states correspond to configurations of Pauli errors acting on $\mathcal{V}$. If $H$ is topologically ordered, where loosely speaking $\mathcal{S}_{0}$ is translationally invariant and generators in $\mathcal{S}_{0}$ are geometrically local, then its ground state degeneracy depends on the topology of the system lattice and remains constant under local perturbations~\cite{bravyi2010,bravyi2011}. Moreover, this degeneracy generalizes to the topological degeneracy of the convex set $\Sigma_{\mathcal{V}}$, which is also invariant under shallow locally reversible LC circuits~\cite{sang2025reversibility,yang2025}.

Many stabilizer codes are stable according to Definition~\ref{def:stable-phase}. Specifically, the code spaces $\Sigma_{\mathcal{V}}$ of all quantum low-density parity check (LDPC) codes with distance polynomial in the system size, $d_{\mathcal{V}} = \Omega (n^{a})$ for some $a > 0$, are stable. By LDPC, we mean codes whose stabilizer admits a generating set $\mathcal{S}_{0}$ where every generator is supported on a finite number of qudits and every qudit is in the support of a finite number of generators.
\begin{lemma}
    \label{lemma:ldpc-stability}
    Suppose $\Sigma_{\mathcal{V}}$ corresponds to the code space of an LDPC qudit stabilizer code with distance $d_{\mathcal{V}} = \Omega(n^{a})$ for some $a > 0$ and isometric decoding map $\mathcal{D}_{\mathcal{V}}$, as described above. Then $\Sigma_{\mathcal{V}}$ is stable.
\end{lemma}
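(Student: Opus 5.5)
We construct an explicit recovery channel $\mathcal{R}$ for $\Sigma_{\mathcal{V}}$ and show that $\mathbb{L}_{\lambda^{\star}}$ is $\mathcal{R}$-recoverable for some constant $\lambda^{\star}>0$. For $\mathcal{R}$ we take a minimum-weight (or, more conveniently, a ``cluster'') decoder built from general channel noise. First we reduce general channel noise to Pauli errors. Take $\mathcal{N}=\sum_\alpha \pr(\alpha)\Phi_\alpha$ with $\lambda$-bounded supports. Each $\Phi_\alpha$ supported on $A$ has Kraus operators that expand in Pauli operators supported on $A$. The recovery $\mathcal{R}$ measures all generators in $\mathcal{S}_0$, obtains a syndrome $s$, and applies a Pauli correction $C(s)$. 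It then (to satisfy condition 1 of Definition~\ref{def:recovery-channel} for all inputs) projects the result into $\mathcal{V}$, replacing any leftover component with a fixed code state. Condition 2 holds trivially, since $\mathcal{R}$ acts as the identity on $\Sigma_{\mathcal{V}}$: the syndrome is trivial there and $C(0)=I$.

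The key is the following standard fact. If every connected component of the error support, in a suitably defined ``syndrome adjacency graph,'' is small, the decoder returns the state correctly. Define the graph $G$ on qudits, with edges between qudits sharing a generator; by LDPC it has bounded degree $\Delta$. The decoder is chosen to act independently on clusters of the syndrome: for each connected component of flagged checks in the graph of checks linked via shared qudits (or a $c$-neighborhood thereof), find a minimum-weight Pauli within its neighborhood reproducing that syndrome. I will argue that if $A=\supp\Phi_\alpha$ has all connected components in $G$ of size $< d_{\mathcal{V}}/c'$, the following holds. For every Pauli $E$ supported on $A$, $C(s(E))E$ is supported on small, well-separated clusters and has trivial syndrome on each. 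Each such piece has weight below $d_{\mathcal{V}}$, hence lies in $\mathcal{S}$ up to phase. The same correction therefore acts on every Pauli term in the Kraus expansion, since the syndrome measurement dephases terms with different syndromes, and terms with equal syndrome get the same correction. Thus $\mathcal{R}\circ\Phi_\alpha(\rho)=\rho$ exactly for $\rho\in\Sigma_{\mathcal{V}}$. The main subtlety is that the correction for distinct components must not merge into a nontrivial logical. I expect this locality-of-decoder argument to be the main obstacle. I will handle it by clustering with a distance-$O(1)$ neighborhood in $G$, so that components of $A$ are separated and each correction lives within the $O(1)$-neighborhood of its component. A union of pieces, each in $\mathcal{S}$, is then in $\mathcal{S}$.

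It remains to bound the probability of a bad support. The failure event requires $A$ to contain a connected set (in $G$, or in its $O(1)$-power, still of bounded degree $\Delta'$) of size $m=\lceil d_{\mathcal{V}}/c'\rceil$. The number of connected subsets of size $m$ containing a given vertex is at most $(e\Delta')^m$. Using $\lambda$-boundedness and a union bound, the failure probability is at most $n(e\Delta'\lambda)^m$. Choosing $\lambda^{\star}<1/(e\Delta')$ gives a bound of $n\exp(-\Omega(n^{a}))=1/\Omega(\poly(n))$. Since the trace distance error is at most twice this probability, uniformly in $\rho$, and $\mathcal{D}_{\mathcal{V}}$ is an isometry (contractive), Eq.~(\ref{eq:recoverable-channel}) follows. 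This holds for every $\mathcal{N}\in\mathbb{L}_{\lambda^{\star}}$ with the same $\mathcal{R}$, proving stability.
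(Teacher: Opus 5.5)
\textbf{There is a genuine gap: your ``key standard fact'' is false for general LDPC codes.} Your reduction from general channel noise to its Pauli expansion is fine and matches the paper's effective-Kraus-operator argument: syndrome measurement removes cross terms with different syndromes, and equal-syndrome terms receive the same correction, so $\mathcal{R}\circ\Phi_\alpha(\rho)=\rho$ whenever every Pauli in the expansion is corrected up to a stabilizer. The problem is the decoder you build on top of it.

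A decoder that clusters flagged checks at a fixed $O(1)$ scale and corrects each cluster separately assumes that every connected component of a syndrome is itself a valid syndrome. That is exactly property~1 of Appendix~\ref{self-corr-conditions}, which the paper assumes only for self-correcting codes. It fails for pointlike syndromes (Section~\ref{non-robust-circuits}).
\begin{itemize}
\item In the 2D toric code, a string error of constant length $\ell>c$ flags two checks at distance $\ell$. Each is its own cluster, and no Pauli anywhere has a single-anyon syndrome, so your channel falls back to the fixed code state and destroys the logical information. Since $\ell$ is a constant, under i.i.d.\ noise such an error occurs somewhere with probability tending to one, even though its support is a tiny connected set.
\item Enlarging the clustering radius does not fix this at any fixed scale. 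Lay pieces of length $\ell_1$ separated by gaps $\ell_2$ along a noncontractible loop, with $k<\ell_2\le c<\ell_1$ up to lattice constants. Every component of $A$ in $G^{k}$ then has constant size. Yet the clusters pair endpoints across the gaps, and error times correction is a logical operator.
\end{itemize}
So the event ``all components of $A$ have size $<d_{\mathcal{V}}/c'$'' does not imply successful decoding. Your union bound controls the wrong event.

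\textbf{How the paper avoids this.} It uses the global minimum-weight decoder and the argument of Theorem~3 of Ref.~\cite{gottesman2014}.
\begin{itemize}
\item Decompose the residual into connected components on $G$; stated carefully, use the components of $\supp E\cup\supp R_{\partial E}$. No generator touches two components, so each restricted piece $R_cE_c$ has trivial syndrome.
\item If the residual is not a stabilizer, some piece is a nontrivial logical of weight $w\ge d_{\mathcal{V}}$.
\item Minimality of $R_{\partial E}$ forces $\lvert\supp E_c\rvert\ge\lvert\supp R_c\rvert$; otherwise you could swap $R_c$ for $E_c^{\dagger}$ and lower the weight. Hence $\lvert A\cap W\rvert\ge w/2$ for some connected $W$ with $\lvert W\rvert=w$.
\end{itemize}
The correct bad event is therefore that $A$ covers at least half of some connected set of size $w\ge d_{\mathcal{V}}$, not that $A$ contains a large connected set. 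Counting subsets of $W$ costs an extra factor $2^{w}$, giving $\mathrm{fail}\le\sum_{w\ge d_{\mathcal{V}}}n(ez)^{w-1}2^{w}\lambda^{w/2}$. This yields a threshold $\lambda^{\star}=(2ez)^{-2}$ rather than your $1/(e\Delta')$.
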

Lemma~\ref{lemma:ldpc-stability} follows directly from a simple generalization of Theorem 3 of Ref.~\cite{gottesman2014}, which proves a threshold exists for QEC of any Pauli noise channel in $\mathbb{L}_{\lambda}$ for qubit LDPC codes. For completeness we provide this argument in Appendix~\ref{gen-stochastic-stability}. When $\mathcal{V}$ is a CSS code and $\mathcal{S}_{0} = \mathcal{S}^{X}_{0} \oplus \mathcal{S}^{Z}_{0}$ decomposes into $X$-type generators $\mathcal{S}^{X}_{0}$ and $Z$-type generators $\mathcal{S}^{Z}_{0}$, we conjecture stability according to Definition~\ref{def:stable-phase} implies perturbative stability of the Hamiltonian, Eq.~(\ref{eq:stabilizer-hamiltonian}), to local terms such as $-h_{X} \sum_{q} X_{q}$ and $-h_{Z} \sum_{q} Z_{q}$. This is because Ref.~\cite{li2025perturbative} provides evidence connecting perturbative stability of the above Hamiltonian to recoverability of i.i.d.~Pauli noise, and i.i.d.~Pauli noise of strength $\leq \lambda$ is in the noise class $\mathbb{L}_{\lambda}$.

When convex sets of code states of two QEC codes are in the same phase, $\Sigma_{\mathcal{V}} \sim \Sigma_{\mathcal{V}^{\prime}}$, one can view phase equivalence as the ability to perform fault tolerant code switching with a shallow circuit~\cite{paetznick2013,anderson2014,bombin2015gauge,kubica2015transversal}, where one changes the physical system state while retaining encoded logical information. In Section~\ref{cc-phases} we will demonstrate how one can use code switching to realize phase equivalence between different topological stabilizer codes through the lens of gauge fixing within a topological subsystem code.

Lastly, while stabilizer codes are particularly useful and easy to study as examples of phases of matter, the construction in Eq.~(\ref{eq:conv-set-stabilizer}) is not specific to these codes. Generally one can construct a convex set of mixed states $\Sigma_{\mathcal{V}}$ as the convex hull of all pure states in an arbitrary subspace $\mathcal{V} \subseteq \mathcal{H}$ and similarly define a decoding map $\mathcal{D}_{\mathcal{V}}$ such that the encoded information is isomorphic to this convex hull, $\Omega = \mathcal{D}_{\mathcal{V}} \left[ \Sigma_{\mathcal{V}} \right] \cong \Sigma_{\mathcal{V}}$. Interesting examples include those when $\mathcal{V}$ corresponds to the ground state subspace of a local Hamiltonian similar to Eq.~(\ref{eq:stabilizer-hamiltonian}), such as quantum double models~\cite{kitaev2003}, twisted quantum double models~\cite{hu2013}, and string-net models~\cite{levin2005}.

\subsection{No distinct FTQL trivial phase exists} \label{trivial-phases}

In the classification of LC and QL phases, one important phase is the trivial phase, which is often defined as the phase which includes product states~\cite{sang2024rg}. However, since LC and QL phases are defined with respect to individual states rather than convex sets of states, they may exhibit different behavior than FTQL phases. One subtlety when dealing with convex sets of states is that, while any single product state may be transformed into any other product state with a range-$1$ LC circuit, one cannot transform a convex set of product states into any other arbitrary convex set of product states. For example, if one wants to transform the state of a single physical qudit on lattice site $i$ to the state of a qudit on lattice site $j$, where sites $i$ and $j$ are a distance $t$ apart, one needs an LC circuit with range $r\ell \geq t$. This cannot be done with a shallow LC circuit if $t$ is comparable with the system size, for example for $t = \Omega(n^{a})$ with $a>0$. This is not an issue with QL circuits as one can teleport a state an arbitrary distance in finite range, however.

The above discussion motivates a different approach to triviality when studying FTQL phases, specifically one where a trivial convex set encodes information in a localized region or multiple localized regions on the system lattice. Often a convex set of this form can be obtained from a convex set of product states by applying a finite-depth LC circuit $\mathcal{C}$.

\begin{definition}
\label{def:trivial-subspace}
    A convex set of states $\Sigma$ is FTQL trivial if there exists a partition of the system lattice into two disjoint regions $A$ and $B$, where $\vert A \vert = O(1)$ is independent of $n$ and there exists a fixed state $\sigma_{B}$ supported on $B$, $\supp \sigma_{B} = B$, such that for all $\rho \in \Sigma$, $\rho \approx \rho_{A} \otimes \sigma_{B}$ where $\supp \rho_{A} = A$.
\end{definition}

As discussed above, while individual states in a FTQL trivial convex set may be in the trivial LC phase, different trivial convex sets which encode the same information may not be connected to each other by shallow LC circuits. Moreover, there is no single ``trivial'' FTQL phase. This is because trivial convex sets cannot protect any encoded information, as this information can be modified by a channel acting locally on region $A$. To be precise, no trivial convex set $\Sigma$ which contains more than one distinct state is stable, as discussed in the following lemma. Since this means $\Sigma$ is essentially in a phase by itself, the notion of the trivial LC or QL phase does not carry over to the FTQL classification. Instead, one should just regard trivial convex sets as examples of convex sets which cannot protect encoded information.

\begin{lemma}
\label{lemma:trivial-no-ft}
Consider a FTQL trivial convex set $\Sigma$ with decoding map $\mathcal{D}$ which distinguishes between at least two distinct states $\rho_{1},\rho_{2} \in \Sigma$ with $\mathcal{D} (\rho_{1}) \not\approx \mathcal{D} (\rho_{2})$. Then $\Sigma$ is not stable.
\end{lemma}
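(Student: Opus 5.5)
We argue by contradiction. Suppose $\Sigma$ is stable, so there is a recovery channel $\mathcal{R}$ and a threshold $\lambda^{\star}>0$ such that every $\mathcal{N}\in\mathbb{L}_{\lambda^{\star}}$ is $\mathcal{R}$-recoverable. The contradiction comes from exhibiting a single channel in $\mathbb{L}_{\lambda^{\star}}$ that erases the distinction between $\rho_1$ and $\rho_2$. Triviality (Definition~\ref{def:trivial-subspace}) gives a region $A$ with $\vert A\vert=O(1)$ and a fixed $\sigma_B$ such that $\rho_i\approx\rho_{i,A}\otimes\sigma_B$ for $i=1,2$. All information distinguishing the two states therefore sits on $O(1)$ qudits.

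We take the noise to be a localized depolarization of $A$. Fix some state $\tau_A$ on $A$, for instance the maximally mixed state. Define
\begin{equation}
\mathcal{N}(\rho)=(1-p)\,\rho+p\,\tau_A\otimes\Tr_A\rho,
\end{equation}
with $p$ a small constant. This is a convex combination of the identity channel, which has empty support, and a replacement channel supported on $A$. For the support distribution, $\sum_{A'\supseteq S}\pr(A')$ equals $1$ when $S=\emptyset$, equals $p$ when $\emptyset\neq S\subseteq A$, and equals $0$ otherwise. Since $\vert S\vert\le\vert A\vert$, taking $p\le(\lambda^{\star})^{\vert A\vert}$ makes this distribution $\lambda^{\star}$-bounded, so $\mathcal{N}\in\mathbb{L}_{\lambda^{\star}}$. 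Because $\vert A\vert=O(1)$, $p$ is a constant independent of $n$. This step is where triviality is used essentially: if $\vert A\vert$ grew with $n$, $p$ would have to vanish.

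Next we compute $\mathcal{N}(\rho_i)$. Up to $1/\Omega(\poly(n))$ errors,
\begin{equation}
\mathcal{N}(\rho_i)\approx(1-p)\rho_i+p\,\tau_A\otimes\sigma_B.
\end{equation}
The second term $\omega:=\tau_A\otimes\sigma_B$ does not depend on $i$. Set $\Delta=\mathcal{D}(\rho_1)-\mathcal{D}(\rho_2)$, and note $\norm{\Delta}_1\not\to0$ at polynomial rate by hypothesis. Recoverability applied to both states, together with linearity of $\mathcal{D}\circ\mathcal{R}$ and contractivity of the trace distance, gives
\begin{align}
\mathcal{D}(\rho_i)&\approx(1-p)\,\mathcal{D}\mathcal{R}(\rho_i)+p\,\mathcal{D}\mathcal{R}(\omega), \nonumber\\
\mathcal{D}(\rho_i)&\approx\mathcal{D}\mathcal{R}(\rho_i).
\end{align}
The second line is property 2 of Definition~\ref{def:recovery-channel}. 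Subtracting the $i=1$ and $i=2$ instances of the first line gives $\Delta\approx(1-p)\,\mathcal{D}\mathcal{R}(\rho_1-\rho_2)\approx(1-p)\Delta$. Hence $p\norm{\Delta}_1=1/\Omega(\poly(n))$. Since $p$ is a fixed positive constant, this forces $\Delta\approx0$, contradicting $\mathcal{D}(\rho_1)\not\approx\mathcal{D}(\rho_2)$.

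The main obstacle is making the approximate equalities and the meaning of $\not\approx$ consistent across the family in $n$. I would read the hypothesis as saying $\norm{\Delta}_1$ is not $1/\Omega(\poly(n))$, for example that it stays bounded below along a subsequence. All error terms are then $1/\Omega(\poly(n))$ up to constant factors, because channels are contractive and the partial trace preserves the product approximation. The contradiction therefore holds along that subsequence.
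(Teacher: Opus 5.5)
Your proof is correct and follows the paper's strategy: a constant-probability replacement channel on the $O(1)$ region $A$ lies in $\mathbb{L}_{\lambda}$ with $p=\lambda^{\vert A\vert}$ independent of $n$, and it shifts the decoded information by $p\norm{\mathcal{D}(\rho_1)-\mathcal{D}(\rho_2)}_1$. The only difference is in how you conclude. The paper replaces $\rho_A$ by $\kappa_A$, so that $\mathcal{N}(\rho_1)\approx(1-p)\rho_1+p\rho_2\in\Sigma$, and then applies property 2 of the recovery channel to that mixture. You instead use a fixed $\tau_A$, apply recoverability to both $\rho_1$ and $\rho_2$, and subtract to cancel the common term $p\,\mathcal{D}\circ\mathcal{R}(\omega)$; this never requires the noisy state to lie in $\Sigma$.
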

\begin{proof}
    First, express $\rho_{1},\rho_{2} \in \Sigma$ as
    \begin{equation}
        \rho_{1} \approx \rho_{A} \otimes \sigma_{B} , \quad
        \rho_{2} \approx \kappa_{A} \otimes \sigma_{B} .
    \end{equation}
    Suppose $\mathcal{D} (\rho_{1}) \not\approx \mathcal{D} (\rho_{2})$,  therefore $\norm{\mathcal{D} (\rho_{1}) - \mathcal{D} (\rho_{2})}_{1}$ decays slower than $1/O(\poly(n))$. Now, consider a noise channel of the form
    \begin{align}
        \mathcal{N}(\rho) &= (1-p)\rho + p \Phi_{A}(\rho) ,
    \end{align}
    where $\supp \Phi_{A} \subseteq A$ and $ \Phi_{A}(\rho_{A}) = \kappa_{A}$. Here, $\mathcal{N} \in \mathbb{L}_{\tau}$ for $\tau = p^{1/\vert A \vert}$. Since $\Phi_{A}(\rho_{1}) \approx \rho_{2}$, applying $\mathcal{N}$ to $\rho_{1}$ yields
    \begin{equation}
        \mathcal{N}(\rho_{1}) \approx (1-p)\rho_{1} + p\rho_{2} \in \Sigma ,
    \end{equation}
    therefore acting with an arbitrary recovery operator $\mathcal{R}$ for $\Sigma$ and with the decoding map $\mathcal{D}$ on $\mathcal{N}(\rho_{1})$ yields the trace distance
    \begin{equation}
        \norm{\mathcal{D} \circ \mathcal{R} \circ\mathcal{N}(\rho_{1}) - \left[(1-p) \mathcal{D}(\rho_{1}) + p \mathcal{D}(\rho_{2})\right]}_{1} = \varepsilon
    \end{equation}
    where $\varepsilon = 1/\Omega(\poly(n))$. Applying the reverse triangle inequality to the above yields
    \begin{equation}
        \norm{\mathcal{D} \circ \mathcal{R} \circ \mathcal{N} ( \rho_{1}) - \mathcal{D} (\rho_{1})}_{1} \geq p\norm{\mathcal{D} (\rho_{1}) - \mathcal{D} (\rho_{2})}_{1} - \varepsilon .
    \end{equation}
    Since $p$ is independent of $n$ and $\norm{\mathcal{D} (\rho_{1}) - \mathcal{D} (\rho_{2})}_{1}$ decays slower than $1/O(\poly(n))$, $\norm{\mathcal{D} \circ \mathcal{R} \circ \mathcal{N} ( \rho_{1}) - \mathcal{D} (\rho_{1})}_{1}$ decays slower than $1/O(\poly(n))$ as well. Since the above channel is not $\mathcal{R}$-recoverable for any $\tau > 0$ and any $\mathcal{R}$, the class $\mathbb{L}_{\tau}$ acting on $\Sigma$ is not $\mathcal{R}$-recoverable and $\Sigma$ is not stable.
\end{proof}

Since Lemmas~\ref{lemma:stable-consistency} and~\ref{lemma:trivial-no-ft} imply no stable convex set is in the same FTQL phase as a trivial convex set, an immediate corollary to Lemmas~\ref{lemma:stable-consistency},~\ref{lemma:ldpc-stability}, and~\ref{lemma:trivial-no-ft} is that no LDPC stabilizer code $\Sigma_{\mathcal{V}}$ with distance polynomial in $n$ is in the same phase as a trivial convex set. For example, $\Sigma_{\mathcal{V}}$ is not in the same phase as any set of unencoded qudits $\Sigma$, where $\rho = \rho_{k} \otimes \dyad{0}{0}^{\otimes n - k}$ for all $\rho \in \Sigma$.

\begin{corollary}
    \label{cor:ldpc-code-nontrivial}
    Suppose $\Sigma_{\mathcal{V}}$ is the code space of an LDPC qudit stabilizer code with distance $d_{\mathcal{V}} = \Omega(n^{a})$ for some $a > 0$ and isometric decoding map $\mathcal{D}_{\mathcal{V}}$. Suppose $\Sigma$ is FTQL trivial and has decoding map $\mathcal{D}$ which distinguishes between at least two distinct states in $\Sigma$. Then $\Sigma_{\mathcal{V}}$ and $\Sigma$ are not in the same FTQL phase.
\end{corollary}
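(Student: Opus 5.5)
The plan is a proof by contradiction that chains the three preceding results. Suppose, contrary to the claim, that $\Sigma_{\mathcal{V}} \sim \Sigma$ in some noise class $(\mathbb{N}_{\tau},\mathbb{M}_{\zeta})$, with decoding maps $\mathcal{D}_{\mathcal{V}}$ and $\mathcal{D}$. By Lemma~\ref{lemma:trivial-no-ft}, $\Sigma$ is not stable, because it is FTQL trivial and $\mathcal{D}$ distinguishes two of its states. By Lemma~\ref{lemma:ldpc-stability}, $\Sigma_{\mathcal{V}}$ is stable.

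Next we apply Lemma~\ref{lemma:stable-consistency} with the roles chosen so that the unstable set is the one being tested. Take $\Sigma_{1} = \Sigma$ and $\Sigma_{2} = \Sigma_{\mathcal{V}}$. The phase relation is symmetric, so $\Sigma \sim \Sigma_{\mathcal{V}}$. If also $\mathbb{I} : \Sigma \not\rightarrow \Sigma_{\mathcal{V}}$, the lemma says $\Sigma$ is stable, which contradicts Lemma~\ref{lemma:trivial-no-ft}. The whole proof therefore reduces to one condition: the identity does not map $\Sigma$ into $\Sigma_{\mathcal{V}}$. That is, there is some $\rho \in \Sigma$ with $\rho \not\approx \rho'$ for every $\rho' \in \Sigma_{\mathcal{V}}$.

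This condition is the only real work, and I will prove it by contradiction. Suppose every $\rho \in \Sigma$ satisfies $\rho \approx \rho'$ for some $\rho' \in \Sigma_{\mathcal{V}}$.

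\begin{itemize}
\item[1.] Take $\rho_{1},\rho_{2} \in \Sigma$ with $\mathcal{D}(\rho_{1}) \not\approx \mathcal{D}(\rho_{2})$, and write $\rho_{i} \approx \rho_{i,A} \otimes \sigma_{B}$ with $|A| = O(1)$.
\item[2.] The approximating code states $\rho_{1}',\rho_{2}'$ lie in $\Sigma_{\mathcal{V}}$. Their reduced states on $B$ agree up to $1/\Omega(\poly(n))$, since both are close to $\sigma_{B}$.
\item[3.] The code has distance $d_{\mathcal{V}} = \Omega(n^{a}) > |A|$, so $A$ is correctable. By the Knill--Laflamme condition, the logical information is fully recoverable from $B$.
\item[4.] Hence two code states whose reduced states on $B$ agree must be close as whole states. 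Because $\mathcal{D}_{\mathcal{V}}$ is an isometry, this gives $\rho_{1}' \approx \rho_{2}'$, and therefore $\rho_{1} \approx \rho_{2}$.
\item[5.] By contractivity of $\mathcal{D}$ under trace distance, $\mathcal{D}(\rho_{1}) \approx \mathcal{D}(\rho_{2})$, which is a contradiction.
\end{itemize}

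Quantitatively, step 4 uses that a recovery channel acting only on $B$ reconstructs the full code state, so trace distances on $B$ control trace distances of the full code states, with only polynomially small errors.

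An alternative for this step avoids recoverability from $B$. If $\mathbb{I}$ did map $\Sigma$ into $\Sigma_{\mathcal{V}}$, the local noise $\Phi_{A}$ from the proof of Lemma~\ref{lemma:trivial-no-ft} would act on states approximately in $\Sigma_{\mathcal{V}}$. It would send $\rho_{1}'$ close to $\rho_{2}'$ using a channel of weight $O(1)$, which cannot change a code state with distance larger than $|A|$ logically. Either route closes the argument.

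I expect this last step to be the main obstacle: verifying that no state of the trivial set is already, up to polynomially small error, a code state, with care for the $\approx$ error bookkeeping. Once it is settled, the corollary follows directly from Lemmas~\ref{lemma:stable-consistency}, \ref{lemma:ldpc-stability}, and \ref{lemma:trivial-no-ft}.
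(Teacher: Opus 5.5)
Your proof is correct and has the same skeleton as the paper's, which calls the corollary immediate from Lemmas~\ref{lemma:stable-consistency}, \ref{lemma:ldpc-stability} and~\ref{lemma:trivial-no-ft}: applying Lemma~\ref{lemma:stable-consistency} with $\Sigma_{1}=\Sigma$ would make the trivial set stable, contradicting Lemma~\ref{lemma:trivial-no-ft}.

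The difference is that you explicitly discharge the hypothesis $\mathbb{I} : \Sigma \not\rightarrow \Sigma_{\mathcal{V}}$, which the paper leaves implicit. The paper's extra citation of Lemma~\ref{lemma:ldpc-stability} suggests that case is meant to be excluded through stability of $\Sigma_{\mathcal{V}}$. If each $\rho_{i}\approx\rho_{i}'\in\Sigma_{\mathcal{V}}$, the noise $(1-p)\,\mathrm{id}+p\,\Phi_{A}$ from Lemma~\ref{lemma:trivial-no-ft} sends $\rho_{1}'$ to approximately $(1-p)\rho_{1}'+p\rho_{2}'\in\Sigma_{\mathcal{V}}$. A threshold recovery channel then forces $\mathcal{D}_{\mathcal{V}}(\rho_{1}')\approx\mathcal{D}_{\mathcal{V}}(\rho_{2}')$, and the isometry of $\mathcal{D}_{\mathcal{V}}$ gives $\rho_{1}'\approx\rho_{2}'$.

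You instead use erasure-correctability of the $O(1)$-size region $A$. This gives $\norm{\rho_{1}'-\rho_{2}'}_{1}\le\norm{\Tr_{A}\rho_{1}'-\Tr_{A}\rho_{2}'}_{1}$ directly, so neither stability nor the isometry of $\mathcal{D}_{\mathcal{V}}$ is needed; your appeal to the isometry in step 4 is redundant.

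One consequence is worth stating. In your chain Lemma~\ref{lemma:ldpc-stability} is cited but never used, so you actually prove the statement for any stabilizer code with $d_{\mathcal{V}}\to\infty$, LDPC or not. The stability route, by contrast, uses nothing about the stabilizer structure of $\Sigma_{\mathcal{V}}$, which matches the paper's broader remark that no stable set shares a phase with a trivial one.

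Finally, your ``alternative'' is not an independent argument. The claim that a channel on fewer than $d_{\mathcal{V}}$ qudits cannot move a code state to a logically different code state is the same Knill--Laflamme erasure fact.
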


Lastly, to further emphasize that ``triviality'' is a property of a specific convex set and there is no single ``trivial FTQL phase'', we highlight that a convex set of product states can be nontrivial according to Definition~\ref{def:trivial-subspace}. One example of a convex set of product states which protects encoded information is
\begin{equation}
    \Sigma_{\mathrm{rep}} = \conv \left( \dyad{0}{0}^{\otimes n} , \dyad{1}{1}^{\otimes n} \right) ,
\end{equation}
which encodes a classical bit and all convex combinations of the form $(1-x)\dyad{0}{0}^{\otimes n} + x\dyad{1}{1}^{\otimes n}$. Here $\dyad{0}{0}^{\otimes n}$ and $\dyad{1}{1}^{\otimes n}$ are in trivial LC and QL phases and $(1-x)\dyad{0}{0}^{\otimes n} + x\dyad{1}{1}^{\otimes n}$ are classical states which exhibit no quantum correlations~\cite{hastings2011,sang2024rg}. However, $\Sigma_{\mathrm{rep}}$ is stable and information encoded in $\Sigma_{\mathrm{rep}}$ is protected as a channel $\Phi$ must be supported on all $n$ sites to transform one state in $\Sigma_{\mathrm{rep}}$ to a different state also in $\Sigma_{\mathrm{rep}}$.

\subsection{Connections to previous phase definitions} \label{relation-prev-defs}

We now explicitly discuss how LC and QL phases relate to FTQL phases. Definitions~\ref{def:lc-phase} and~\ref{def:ql-phase} concern individual states, so to compare LC and QL phases to FTQL phases we will say two convex sets $\Sigma_{1}$ and $\Sigma_{2}$ are in the same QL phase if property 1 of Definition~\ref{def:ft-phase-equivalence} is satisfied, and we will say $\Sigma_{1}$ and $\Sigma_{2}$ are in the same LC phase if property 1 of Definition~\ref{def:ft-phase-equivalence} is satisfied specifically by a pair of locally reversible LC circuits. To set notation, when given two phase classifications $A$ and $B$, $A \subseteq B$ will indicate phase equivalence in classification $A$ implies phase equivalence in classification $B$. Essentially, $B$ is a coarser phase classification than $A$. Strict inclusion, $A \subsetneq B$, indicates that there exist equivalent pairs of convex sets in classification $B$ which are not equivalent in classification $A$.

\begin{figure}[t!]
    \centering
    \includegraphics[width=0.45\textwidth]{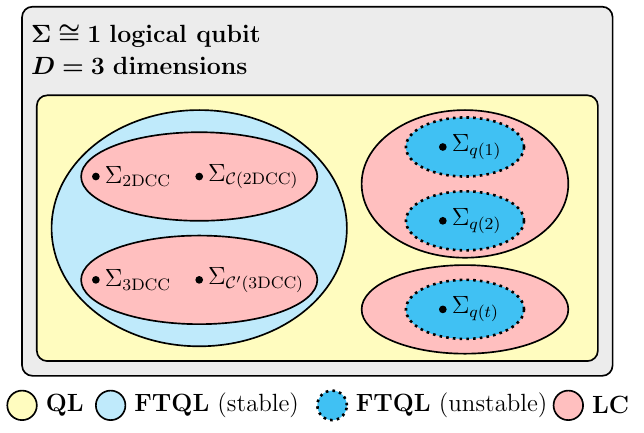}
    \caption{Representative phase diagram comparing equivalence and inequivalence of convex sets of states which encode a single logical qubit according to fault tolerant (FTQL, our definition), local channel (LC), and quantum-local (QL) phase classifications. Points label distinct convex sets and colored regions label all convex sets within the same phase. Stable FTQL phases have solid borders and lighter shading, while unstable FTQL phases have dotted borders and darker shading. We denote code spaces of the 2D and 3D color codes $\Sigma_{\mathrm{2DCC}}$ and $\Sigma_{\mathrm{3DCC}}$ respectively, 2D and 3D color codes deformed by finite-depth locally reversible LC circuits $\Sigma_{\mathrm{\mathcal{C}(2DCC)}}$ and $\Sigma_{\mathrm{\mathcal{C}^{\prime}(3DCC)}}$, and the sets of unencoded single-qubit states on qubit $i$, $\Sigma_{q(i)}$. We show two neighboring sites $i=1,2$ and a site $i=t$ which is a distance $\Omega(n^{a})$  away from site $i=1$ for some $a>0$. This diagram is constructed by applying Lemmas~\ref{lemma:stable-consistency},~\ref{lemma:ldpc-stability},~\ref{lemma:trivial-no-ft} and~\ref{lemma:3dgcc-phases} and Theorems~\ref{theorem:ft-equivalence-lc} and~\ref{theorem:colorcode-lc-phases}.}
    \label{fig:phase-diagram}
\end{figure}

First, LC phase equivalence implies QL phase equivalence as all LC circuits are QL circuits. However, states or sets of states in the same QL phase are not necessarily in the same LC phase. For example, topological CSS code states, such as 2D toric code or color code states, are in the same QL phase as product states since they can be encoded and decoded in finite depth~\cite{lodyga2015}, but are not in the same LC phase as product states because preparing a code state from a product state with an LC circuit requires $r \ell = \Omega(\poly(n))$~\cite{konig2014}. This means $LC \subsetneq QL$.

\begin{figure*}[t!]
    \centering
    \includegraphics[width=0.99\textwidth]{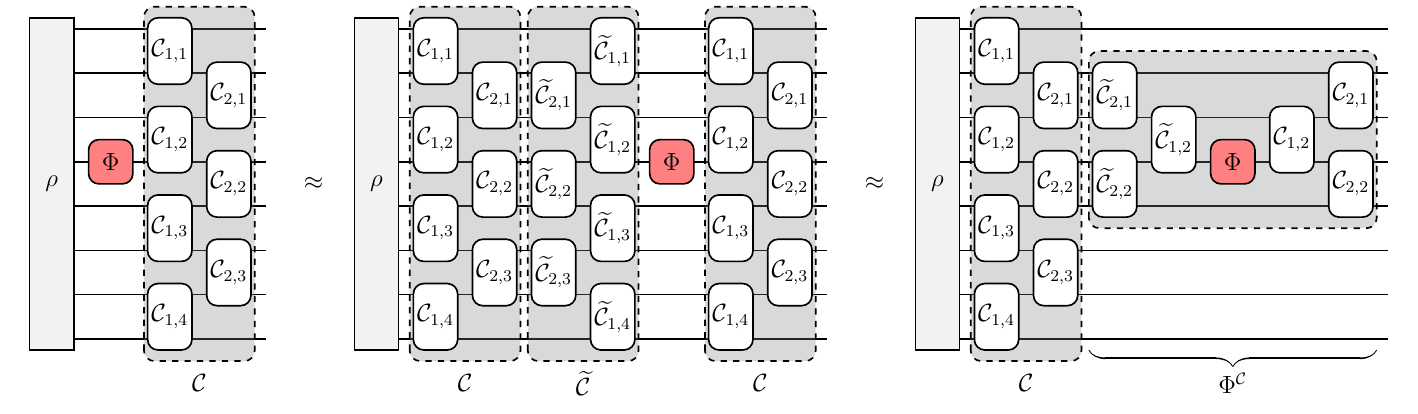}
    \caption{Growth of a channel $\Phi$ after application of a locally reversible local channel circuit $\mathcal{C}$. Local reversibility allows one to propagate $\Phi$ to the end of the circuit, however the support of the effective channel $\Phi^{\mathcal{C}}$ will be enlarged.}
    \label{fig:loc-rev-growth}
\end{figure*}

For FTQL phases, $FTQL \subseteq QL$ as the channels used in FTQL phase equivalence are necessarily QL circuits. Moreover, Lemma~\ref{lemma:trivial-no-ft} implies $FTQL \subsetneq QL$ as, for example, the code spaces of topological CSS codes are no longer in the same phase as any set of unencoded qubits. A full classification of how FTQL phases relate to LC phases is more complicated, but we can discuss two limiting cases here. First, if we restrict our attention to stable convex sets and allow only finite-depth circuits with $r\ell = O(1)$ as opposed to shallow circuits with $r\ell = O(\polylog(n))$, we find
\begin{equation}
    \label{eq:stable-phase-inclusions}
    LC \subsetneq FTQL \subseteq QL \quad \left(\text{stable, } r\ell = O(1)\right) .
\end{equation}
To show this, we prove in Section~\ref{lc-finite-depth} that two stable sets of states $\Sigma_{1}$ and $\Sigma_{2}$ related by finite-depth, locally reversible LC circuits are in the same FTQL phase, therefore $LC \subseteq FTQL$. Then, in Section~\ref{cc-phases}, we show that code spaces of the 2D and 3D color codes, $\Sigma_{\mathrm{2DCC}}$ and $\Sigma_{\mathrm{3DCC}}$, are in the same FTQL phase but in different LC phases, demonstrating strict inclusion, $LC \subsetneq FTQL$. We expect $FTQL \subseteq QL$ to be strict as well.

Second, if we restrict our attention to trivial convex sets, then Lemma~\ref{lemma:trivial-no-ft} implies
\begin{equation}
    \label{eq:trivial-phase-inclusions}
    FTQL \subsetneq LC \subsetneq QL \quad \left(\text{trivial}\right) .
\end{equation}
A representative phase diagram for 1 encoded qubit ($\Omega = \conv \left( \left\{ \dyad{\psi}{\psi} : \ket{\psi} \in \mathbb{C}^{2} \right\} \right)$) in $D=3$ spatial dimensions is shown in Figure~\ref{fig:phase-diagram} and contains examples discussed previously and in the following sections.

\subsection{Finite-depth LC equivalence implies FTQL equivalence} \label{lc-finite-depth}

One important property of locally reversible LC circuits is that the growth of the support of an operator applied before a circuit is constrained by circuit range and the system lattice. Each local channel gate acts on a geometrically local region of sites within a ball of radius $r$, which limits the spread of errors. Given a channel $\Phi$ and a locally reversible LC circuit $\mathcal{C}$, we may construct a dressed channel $\Phi^{\mathcal{C}}$ which mimics propagating $\Phi$ to the end of the circuit $\mathcal{C}$. Explicitly, $\Phi^{\mathcal{C}}$ may be defined as~\cite{sang2025reversibility}
\begin{equation}
    \label{eq:lc-conjugation}
    \Phi^{\mathcal{C}} = \mathcal{C}_{\mathrm{red}} \circ \Phi \circ \widetilde{\mathcal{C}}_{\mathrm{red}} ,
\end{equation}
where $\mathcal{C}_{\mathrm{red}}$ is a reduced channel which can be constructed recursively,
\begin{align}
    \label{eq:lc-reduced}
    \mathcal{C}_{\mathrm{red}} &= \mathcal{C}_{\mathrm{red},\ell} \circ \ldots \circ \mathcal{C}_{\mathrm{red},1} , \\ \mathcal{C}_{\mathrm{red},1} &= \bigotimes_{\alpha : \supp \mathcal{C}_{1,\alpha} \cap \supp \Phi \neq \emptyset} \mathcal{C}_{1,\alpha} \nonumber \\
    \mathcal{C}_{\mathrm{red},i} &= \bigotimes_{\alpha : \supp \mathcal{C}_{i,\alpha} \cap \supp \mathcal{C}_{\mathrm{red},i-1} \neq \emptyset} \mathcal{C}_{i,\alpha} \quad \forall i \in [2,\ell] , \nonumber
\end{align}
and $\widetilde{\mathcal{C}}_{\mathrm{red}}$ is the reversal channel for $\mathcal{C}_{\mathrm{red}}$, constructed accordingly from gates in $\widetilde{\mathcal{C}}$. This means for all states $\rho$ which $\mathcal{C}$ is locally reversible with respect to~\cite{sang2025reversibility},
\begin{equation}
    \mathcal{C} \circ \Phi (\rho) \approx \mathcal{C} \circ \Phi \circ \widetilde{\mathcal{C}} \circ \mathcal{C} (\rho) \approx \Phi^{\mathcal{C}} \circ \mathcal{C} (\rho) .
\end{equation}
However, the support of $\Phi^{\mathcal{C}}$ is bounded since all gates $\mathcal{C}_{i,\alpha}$ and $\widetilde{\mathcal{C}}_{i,\alpha}$ are geometrically local. Denoting the weight of an operator $\wt (\cdot) = \vert \supp (\cdot) \vert$, we find
\begin{equation}
    \label{eq:lc-bound}
    \wt \Phi^{\mathcal{C}} \leq \gamma (r \ell)^{D} \wt \Phi .
\end{equation}
This is schematically depicted in Figure~\ref{fig:loc-rev-growth}. This bound on channel growth is sufficient to show that if two convex sets of states $\Sigma_{1}$ and $\Sigma_{2}$ are both stable and are connected by locally reversible LC circuits with $r\ell = O(1)$, then $\Sigma_{1}$ and $\Sigma_{2}$ are in the same FTQL phase. Before proving phase equivalence, we will first bound the strength of the residual noise after applying a locally reversible LC circuit subjected to local stochastic noise. This bound is given by the following lemma.

\begin{lemma}
\label{lemma:lc-localnoise}
    Consider a system on a $D$-dimensional lattice. Suppose the channel $\Lambda : \Sigma_{1} \rightarrow \Sigma_{2}$ is an $(r_{1},\ell)$ locally reversible LC circuit and its reversal circuit $\widetilde{\Lambda} : \Sigma_{2} \rightarrow \Sigma_{1}$ is an $(r_{2},\ell)$ LC circuit, where $\max (r_{1},r_{2}) = r$. Suppose noise $\mathcal{N} \in \mathbb{L}_{\tau}$ acts between each layer of $\Lambda$, where $\tau \leq \left(2ez\right)^{-\gamma (r\ell)^{D}} / \ell$ and $z$ is the maximum number of nearest neighbors for any system site. Then
    \begin{equation}
        \Lambda^{\mathcal{N}} (\rho_{1}) \approx \widetilde{\mathcal{N}}^{\Lambda}
        \circ \Lambda (\rho_{1}) \quad \forall \rho_{1} \in \Sigma_{1} ,
    \end{equation}
    where $\widetilde{\mathcal{N}}^{\Lambda} \in \mathbb{L}_{\eta}$ is $\eta$-bounded with
    \begin{equation}
        \label{eq:lc-noise-bound}
        \eta \leq \frac{(\ell\tau)^{1 / \gamma (r\ell)^{D}}}{\left(2ez\right)^{-1} - (\ell\tau)^{1 / \gamma (r\ell)^{D}}} .
    \end{equation}
\end{lemma}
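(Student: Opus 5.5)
We propagate every noise layer to the end of the circuit using the dressed-channel construction of Eq.~(\ref{eq:lc-conjugation}), expand the product of noise layers into a sum over support patterns, and bound the resulting support distribution with a cluster-counting argument. Each noise layer $\mathcal{N}_{i} \in \mathbb{L}_{\tau}$ has a convex decomposition $\mathcal{N}_{i} = \sum_{\alpha_{i}} \pr(\alpha_{i}) \Phi_{\alpha_{i}}$ whose support distribution is $\tau$-bounded. Writing $\Lambda^{(>i)} = \Lambda_{\ell}\circ\cdots\circ\Lambda_{i}$ for the tail of the circuit after noise layer $i$, local reversibility with respect to the intermediate states gives
\begin{equation*}
\Lambda^{(>i)}\circ\Phi_{\alpha_{i}}(\rho^{(i-1)}) \approx \Phi_{\alpha_{i}}^{\Lambda^{(>i)}}\circ\Lambda^{(>i)}(\rho^{(i-1)}).
\end{equation*}
Iterating from the last noise layer to the first, and using contractivity of the trace distance together with convexity, we get
\begin{equation*}
\Lambda^{\mathcal{N}}(\rho_{1}) \approx \sum_{\vec{\alpha}} \prod_{i}\pr(\alpha_{i})\, \Phi^{\Lambda^{(>1)}}_{\alpha_{1}} \circ\cdots\circ \Phi^{\Lambda^{(>\ell)}}_{\alpha_{\ell}} \circ \Lambda(\rho_{1}) \eqqcolon \widetilde{\mathcal{N}}^{\Lambda}\circ\Lambda(\rho_{1}).
\end{equation*}
The step to watch here is that the errors are polynomially small at each of the $\ell$ stages and are summed uniformly over the random $\vec{\alpha}$. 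Since each gate-level error in Eq.~(\ref{eq:local-reversibility}) is $1/\Omega(\poly(n))$ and $\ell = O(1)$, the accumulated error stays $1/\Omega(\poly(n))$. The construction also shows that $\widetilde{\mathcal{N}}^{\Lambda}$ is independent of $\rho_{1}$.

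Next we control supports. By Eq.~(\ref{eq:lc-bound}), the dressed channel $\Phi^{\Lambda^{(>i)}}_{\alpha_{i}}$ is supported on the $O(r\ell)$-neighbourhood of $\supp\Phi_{\alpha_{i}}$. More precisely, each site $q\in\supp\Phi_{\alpha_{i}}$ contributes a ball $B(q)$ with $|B(q)| \le \gamma(r\ell)^{D}$. The total support of a term labelled by $\vec{\alpha}$ is therefore contained in $\bigcup_{i}\bigcup_{q\in A_{i}} B(q)$, where $A_{i} = \supp\Phi_{\alpha_{i}}$.

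To check that the support distribution is $\eta$-bounded, fix a set $S$ and bound $\Pr[\text{total support}\supseteq S]$. If $S$ is covered, every $s\in S$ lies in some $B(q)$ with $q$ a faulty site in some layer. Hence there is a set $F$ of faulty (layer, site) pairs with $|F| \ge |S|/(\gamma(r\ell)^{D})$, and we may choose $F$ inside the $r\ell$-neighbourhood of $S$. We union-bound over such minimal covering sets $F$. By $\tau$-boundedness within each layer and independence across layers, $\Pr[F\text{ all faulty}] \le \tau^{|F|}$.

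The main obstacle is counting the covering sets $F$ so that the sum converges to $\eta^{|S|}$. Enumerating subsets of the $O(|S|\gamma(r\ell)^{D}\ell)$ candidate (layer, site) pairs gives at most about $\ell^{m}(2ez)^{\gamma(r\ell)^{D} m}$ choices for $|F| = m$. The $\ell^{m}$ factor accounts for the layer choices; the $(2ez)^{\gamma (r\ell)^D}$ factor comes from the standard lattice-animal/connected-cluster bound, with the factor $e$ coming from $\binom{N}{m}\le(eN/m)^{m}$. Summing the resulting geometric series over $m \ge |S|/(\gamma(r\ell)^{D})$, and writing $x = (\ell\tau)^{1/\gamma(r\ell)^{D}}$, we should obtain a per-site bound of the form $\Pr \le \bigl(x/((2ez)^{-1}-x)\bigr)^{|S|}$. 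This is exactly Eq.~(\ref{eq:lc-noise-bound}). The hypothesis $\tau \le (2ez)^{-\gamma(r\ell)^{D}}/\ell$ is precisely the condition $x \le (2ez)^{-1}$ needed for convergence. Getting the exponents to line up exactly as stated may require adjusting how the $\ell$ layers are grouped, for instance by merging all noise layers into one $\ell\tau$-bounded pattern first via property~3 of Definition~\ref{def:noise-class}. I would try that merging first, since it reduces the argument to a single-layer spreading bound.
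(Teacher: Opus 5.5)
Two steps in your proposal have genuine gaps. The first is the propagation step. Eq.~(\ref{eq:local-reversibility}) only controls the gates on the \emph{noiseless} intermediate states $\rho^{(i-1)}$, so your identity $\Lambda^{(>i)}\circ\Phi_{\alpha_i}(\rho^{(i-1)})\approx\Phi^{\Lambda^{(>i)}}_{\alpha_i}\circ\Lambda^{(>i)}(\rho^{(i-1)})$ is available only on that input. In the noisy circuit, noise layer $i\geq 2$ acts on a state already corrupted by $\Phi_{\alpha_1},\ldots,\Phi_{\alpha_{i-1}}$, so you cannot iterate backwards from the last layer. Iterating forwards instead would require two things the hypotheses do not give: commuting the dressed layer-1 noise past later noise layers whose supports it can overlap, and dressing it through later gates on noisy inputs. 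So the product of individually dressed single-layer channels is not what you obtain. Even in the unitary case your ordering is reversed, since the dressed layer-$\ell$ noise should act last. The paper instead dresses the \emph{accumulated} noise: $\mathcal{J}^{(1)}_{\vec\alpha}=\Phi^{\mathcal{C}_1}_{\alpha_1}$ and $\mathcal{J}^{(i+1)}_{\vec\alpha}=(\Phi_{\alpha_{i+1}}\circ\mathcal{J}^{(i)}_{\vec\alpha})^{\mathcal{C}_{i+1}}$, so every dressing step acts on the noiseless $\rho^{(i)}$. This still yields $\supp\mathcal{J}^{(\ell)}_{\vec\alpha}\subseteq\bigcup_{v\in F_{\vec\alpha}}W_{\ell,v}$, where $F_{\vec\alpha}=\bigcup_i\supp\Phi_{\alpha_i}$ and $W_{\ell,v}$ is the radius-$r\ell$ ball about $v$. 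That containment is all your later argument needs.

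The second gap is the counting step, which you flag as the main obstacle. It is asserted rather than proved, and the count you write is not what your scheme produces. A union bound over faulty sets inside the $r\ell$-neighbourhood of $S$ never enumerates connected clusters, so no lattice-animal factor $(2ez)^{\gamma(r\ell)^D}$ arises; the match with Eq.~(\ref{eq:lc-noise-bound}) is reverse-engineered.

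In the paper, $F_{\vec\alpha}$ is first shown to be $\ell\tau$-bounded by applying Proposition~\ref{proposition:bounded-composition} inductively across the independent layers. Lemma~\ref{lemma:tau-bounded-proof} is then applied to $f(A)=\bigcup_{v\in A}W_{\ell,v}$ with $k=1/\gamma(r\ell)^D$. The key input is that each connected component $B_c$ of $f(A)$ is a union of connected balls centred in $A\cap B_c$, so $|A\cap B_c|\geq k|B_c|$. The factor $(2ez)^s$ counts connected covers of $B$ of size $s$ together with a marked subset, and $\sum_{s\geq|B|}(2ez)^s(\ell\tau)^{ks}$ gives exactly the stated $\eta$.

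Alternatively, your own covering idea closes without any cluster counting. Set $K=\gamma(r\ell)^D$. Covering $S$ forces $m_0=\lceil|S|/K\rceil$ faulty sites among at most $K|S|$ candidates, so the probability is at most $\binom{K|S|}{m_0}(\ell\tau)^{m_0}\leq(eK^2\ell\tau)^{|S|/K}$. This uses $eK^2\ell\tau\leq 1$, which the hypothesis implies. The result is $\eta\leq(eK^2)^{1/K}(\ell\tau)^{1/K}<2ez(\ell\tau)^{1/K}$, which lies below the stated bound. Either way, this computation has to actually be carried out.
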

We prove Lemma~\ref{lemma:lc-localnoise} in Appendix~\ref{lc-proof}. The central idea of the proof is to propagate noise in each layer to the end of the circuit $\Lambda$ to obtain $\widetilde{\mathcal{N}}^{\Lambda}$, bound the growth of the support of each noise channel similar to Eq.~(\ref{eq:lc-bound}), then bound the probability distribution over supports of these enlarged channels in $\widetilde{\mathcal{N}}^{\Lambda}$ using a combinatorial argument. Using Lemma~\ref{lemma:lc-localnoise}, we can prove FTQL phase equivalence between $\Sigma_{1}$ and $\Sigma_{2}$ if they are both stable and connected by finite-depth locally reversible LC circuits.

\begin{theorem}
\label{theorem:ft-equivalence-lc}
    Consider a system on a $D$-dimensional lattice and two convex sets $\Sigma_{i}$ equipped with decoding maps $\mathcal{D}_{i}$, $i=1,2$, such that $\mathcal{D}_{1}\left[\Sigma_{1}\right] = \mathcal{D}_{2}\left[\Sigma_{2}\right] = \Omega$. Suppose $\Sigma_{i}$ are both stable and have recovery channels $\mathcal{R}_{i}$ with critical noise strengths $\eta_{i}^{\star}$. Suppose $\Lambda : \Sigma_{1} \rightarrow \Sigma_{2}$ is an $(r_{1},\ell)$ locally reversible LC circuit and its reversal circuit $\widetilde{\Lambda} : \Sigma_{2} \rightarrow \Sigma_{1}$ is an $(r_{2},\ell)$ LC circuit, where $\max (r_{1},r_{2}) = r$, $r\ell = O(1)$, and $\Lambda$ and $\widetilde{\Lambda}$ reverse each other with respect to $\Omega$, satisfying Eq.~(\ref{eq:decoding-equivalence}). Then, $\Sigma_{1}$ and $\Sigma_{2}$ are in the same FTQL phase, $\Sigma_{1} \sim \Sigma_{2}$, with respect to $\mathbb{L}_{\tau}$, $\mathcal{D}_{1}$, and $\mathcal{D}_{2}$.
\end{theorem}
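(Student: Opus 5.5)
We take $\Lambda_{1} = \Lambda$ and $\Lambda_{2} = \widetilde{\Lambda}$ and check the two properties of Definition~\ref{def:ft-phase-equivalence}. Property 1 is essentially given. Both circuits are LC circuits, hence QL circuits with no measurement layers. Their ranges satisfy $r\ell = O(1) \subseteq O(\polylog(n))$. By hypothesis $\Lambda : \Sigma_{1} \rightarrow \Sigma_{2}$, $\widetilde{\Lambda} : \Sigma_{2} \rightarrow \Sigma_{1}$, and Eq.~(\ref{eq:decoding-equivalence}) holds. Since there are no measurements, the measurement noise class $\mathbb{M}_{\zeta}$ is irrelevant, and we may take $\zeta^{\star}$ arbitrary. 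It therefore remains to show that each circuit is fault tolerant in the sense of Definition~\ref{def:ft-channel} against $\mathbb{L}_{\tau}$. We treat $\Lambda$; the argument for $\widetilde{\Lambda}$ is symmetric.

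One point needs care for $\widetilde{\Lambda}$. Lemma~\ref{lemma:lc-localnoise} assumes its circuit is locally reversible with a reversal circuit. For $\widetilde{\Lambda}$ acting on $\Sigma_{2}$, the reversal circuit is $\Lambda$ itself. This follows from local reversibility of $\Lambda$ on $\Sigma_{1}$, Eq.~(\ref{eq:local-reversibility}), together with $\widetilde{\Lambda}(\rho_{2}) \approx \rho_{1}$: each gate $\widetilde{\Lambda}_{i,\alpha}$ is undone by $\Lambda_{i,\alpha}$ on the corresponding intermediate states, which are (approximately) intermediate states of $\Lambda$ run backwards. If this step fails to go through cleanly, I would instead read the hypothesis as supplying local reversibility in both directions, as is standard for LC phase equivalence.

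\textbf{Property 1 of Definition~\ref{def:ft-channel}.} Fix $\tau$ below $\tau^{\star} := (2ez)^{-\gamma(r\ell)^{D}}/\ell$, which is an $n$-independent constant because $r\ell = O(1)$. Lemma~\ref{lemma:lc-localnoise} then gives $\Lambda^{\mathcal{N}}(\rho_{1}) \approx \widetilde{\mathcal{N}}^{\Lambda} \circ \Lambda(\rho_{1})$ with $\widetilde{\mathcal{N}}^{\Lambda} \in \mathbb{L}_{\eta}$. Here $\eta(\tau)$ is the right-hand side of Eq.~(\ref{eq:lc-noise-bound}). This function is continuous on $[0,\tau^{\star})$ and vanishes at $\tau = 0$, as required, and it is independent of $\rho_{1}$. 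The lemma also covers the extra noise layer $\mathcal{N}_{\ell+1}$ at the end of the circuit. If its bookkeeping omits that layer, we compose $\mathcal{N}_{\ell+1}$ with the propagated noise and use property 3 of Definition~\ref{def:noise-class} for $\mathbb{L}$, which yields strength $\eta + \tau$ and remains continuous.

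\textbf{Property 2 of Definition~\ref{def:ft-channel}.} By stability, $\mathbb{L}_{\eta_{2}^{\star}}$ is $\mathcal{R}_{2}$-recoverable on $\Sigma_{2}$. We shrink $\tau^{\star}$ so that $\eta(\tau) < \eta_{2}^{\star}$ for all $\tau < \tau^{\star}$; this is possible since $\eta$ is continuous and monotone with $\eta(0)=0$. Property 1 of Definition~\ref{def:noise-class} then gives $\widetilde{\mathcal{N}}^{\Lambda} \in \mathbb{L}_{\eta} \subseteq \mathbb{L}_{\eta_{2}^{\star}}$, so the residual noise is $\mathcal{R}_{2}$-recoverable on $\Sigma_{2}$. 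One further detail must be checked. Recoverability is defined as a supremum over $\rho \in \Sigma_{2}$, while we apply the residual noise to $\Lambda(\rho_{1}) \approx \rho_{2} \in \Sigma_{2}$. Contractivity of trace distance under $\mathcal{D}_{2}\circ\mathcal{R}_{2}\circ\widetilde{\mathcal{N}}^{\Lambda}$ absorbs this $1/\Omega(\poly(n))$ discrepancy.

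The same argument applied to $\widetilde{\Lambda}$, using $\mathcal{R}_{1}$ and $\eta_{1}^{\star}$, gives the second critical strength, and we take the minimum of the two. I expect the main obstacle to be the local reversibility of $\widetilde{\Lambda}$ discussed above, not the noise bound, which Lemma~\ref{lemma:lc-localnoise} handles entirely.
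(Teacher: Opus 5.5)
Your proposal follows the paper's proof: property 1 of Definition~\ref{def:ft-channel} comes from Lemma~\ref{lemma:lc-localnoise}, and property 2 comes from shrinking $\tau$ until $\eta<\eta_{i}^{\star}$ (the paper inverts Eq.~(\ref{eq:lc-noise-bound}) explicitly to get $\tau_{i}^{\star}$); your patch for the trailing noise layer $\mathcal{N}_{\ell+1}$ via property 3 of Definition~\ref{def:noise-class} is genuinely needed and correct, since the lemma's proof omits that layer. The paper applies Lemma~\ref{lemma:lc-localnoise} to $\widetilde{\Lambda}$ without comment, which tacitly adopts your fallback that $\widetilde{\Lambda}$ is locally reversible on $\Sigma_{2}$ (this is required anyway for it to be a QL circuit); your first derivation would not suffice in general, because it needs $\Lambda\circ\widetilde{\Lambda}(\rho_{2})\approx\rho_{2}$ for all $\rho_{2}\in\Sigma_{2}$, and Eq.~(\ref{eq:decoding-equivalence}) yields this only when $\mathcal{D}_{2}$ is isometric on $\Sigma_{2}$, not, e.g., for decoders that trace out a gauge subsystem.
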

\begin{proof}
    From Lemma~\ref{lemma:lc-localnoise}, both $\Lambda , \widetilde{\Lambda}$ satisfy condition 1 of Definition~\ref{def:ft-channel} with residual noise $\widetilde{\mathcal{N}}^{\Lambda} , \widetilde{\mathcal{N}}^{\widetilde{\Lambda}} \in \mathbb{L}_{\eta}$. Since $\Sigma_{i}$ are both stable with critical noise strengths $\eta_{i}^{\star}$, given residual channels $\widetilde{\mathcal{N}}^{\Lambda} , \widetilde{\mathcal{N}}^{\widetilde{\Lambda}} \in \mathbb{L}_{\eta}$, we may rewrite Eq.~(\ref{eq:lc-noise-bound}) to find that $\eta < \eta_{i}^{\star}$ when
    \begin{equation}
        \label{eq:lc-tau-thresh}
        \tau < \tau_{i}^{\star} = \frac{1}{\ell} \left[ \eta_{i}^{\star} \left(2ez\right)^{-1} / \left(\eta_{i}^{\star} + 1\right) \right]^{\gamma (r\ell)^{D}} .
    \end{equation}
    Since $r\ell$, $\eta_{i}^{\star}$, $\gamma$, and $z$ are independent of $n$, so is $\tau_{i}^{\star}$. Furthermore, since each $\Sigma_{i}$ is stable and $\mathbb{L}_{\eta}$ is $\mathcal{R}_{i}$-recoverable for $\eta < \eta_{i}^{\star}$, both $\widetilde{\mathcal{N}}^{\Lambda}$ and $\widetilde{\mathcal{N}}^{\widetilde{\Lambda}}$ satisfy condition 2 of Definition~\ref{def:ft-channel}. Therefore, $\Lambda$ and $\widetilde{\Lambda}$ are both fault tolerant and $\Sigma_{1} \sim \Sigma_{2}$ with respect to $\mathbb{L}_{\tau}$, $\mathcal{D}_{1}$, and $\mathcal{D}_{2}$.
\end{proof}

Theorem~\ref{theorem:ft-equivalence-lc} implies that if two stable sets of states $\Sigma_{1}$ and $\Sigma_{2}$ are in the same LC phase and connected by finite-depth locally reversible LC circuits, they are also in the same FTQL phase. This demonstrates the inclusion $LC \subseteq FTQL$ given in Eq.~(\ref{eq:stable-phase-inclusions}). While the above proof relies on $r\ell = O(1)$ to ensure the critical noise strength $\tau_{i}^{\star}$ remains independent of $n$, Eq.~(\ref{eq:lc-noise-bound}) is a conservative upper bound on $\eta$. It may be possible to obtain better bounds on $\eta$ with different techniques than those used in Lemma~\ref{lemma:lc-localnoise} or to apply different circuits, so if $\Sigma_{1}$ and $\Sigma_{2}$ are connected by LC circuits whose range grows with $n$, they may still be in the same FTQL phase. Additionally, one only needs to confirm $\Sigma_{1}$ is stable for Theorem~\ref{theorem:ft-equivalence-lc} to hold. This is because $\mathbb{L}_{\eta}$ acting on $\Sigma_{1}$ is $\mathcal{R}_{1}$-recoverable for all $\eta < \eta_{1}^{\star}$, and one can construct a recovery channel $\mathcal{R}_{2} = \Lambda \circ \mathcal{R}_{1} \circ \widetilde{\Lambda}$ similar to that in Lemma~\ref{lemma:stable-consistency} to show $\mathbb{L}_{\tau}$ acting on $\Sigma_{2}$ is $\mathcal{R}_{2}$-recoverable for $\tau < \tau_{1}^{\star}$ satisfying Eq.~(\ref{eq:lc-tau-thresh}). However, a recovery channel with a higher threshold may often exist in practice.

\section{Phases of topological subsystem codes} \label{cc-phases}

\subsection{Challenges with specifying phases} \label{subsystem-phase-challenges}

Phases corresponding to topological stabilizer codes are well-understood, however those corresponding to topological subsystem codes are far less understood and cannot be approached in the same way as stabilizer codes. Subsystem codes~\cite{kribs2005,poulin2005,kribs2006} can be viewed as a generalization of stabilizer codes and are defined by a subgroup $\mathcal{G} \subset \mathcal{P}$ of the Pauli group, called the gauge group. The center of $\mathcal{G}$, up to phases, forms a stabilizer group, $\mathcal{S} = \left(\mathcal{Z}(\mathcal{G}) \cap \mathcal{G} \right) / \langle e^{i \theta} \rangle \subseteq \mathcal{G}$. As with stabilizer codes, one defines the code space $\mathcal{V} \subseteq \mathcal{H}$ as the mutual $+1$ eigenspace of $\mathcal{S}$, however the gauge group $\mathcal{G}$ induces a tensor product structure on $\mathcal{V}$. Specifically, $\mathcal{G}$ partitions $\mathcal{V} = \mathcal{A} \otimes \mathcal{B}$ into a logical subsystem $\mathcal{A}$ and a gauge subsystem $\mathcal{B}$, where gauge operators $G \in \mathcal{G}$ take the form $G = I_{\mathcal{A}} \otimes G_{\mathcal{B}}$ and act trivially on $\mathcal{A}$ but possibly nontrivially on $\mathcal{B}$. This decomposition of $\mathcal{V}$ does not correspond to physical regions of the system lattice. Here, the encoded logical state does not depend on the state of the gauge subsystem. Logical Pauli operators may be either bare logical operators which act nontrivially only on $\mathcal{A}$, $L_{\mathcal{A}} \otimes I_{\mathcal{B}} \in \mathcal{Z}(\mathcal{G})$, or dressed logical operators which may act nontrivially on $\mathcal{B}$ as well, $L_{\mathcal{A}} \otimes G_{\mathcal{B}} \in \mathcal{Z}(\mathcal{S})$. Observe that stabilizer codes are special cases of subsystem codes where $\mathcal{S} = \mathcal{G}$, hence $\mathcal{B}$ is trivial.

Some issues arise when naively constructing convex sets corresponding to the encoded information in a subsystem code. A logical pure state of a subsystem code splits into a logical state $\ket{\psi} \in \mathcal{A}$ and an arbitrary mixed state of the gauge subsystem $g \in \mathbb{S}(\mathcal{B})$, namely $\rho_{\psi} = \dyad{\psi}{\psi} \otimes g$. The convex hull of all logical pure states corresponding to a specific gauge state $g$,
\begin{equation}
    \label{eq:conv-set-gauge}
    \Sigma_{g} = \conv \left( \left\{ \dyad{\psi}{\psi} \otimes g : \ket{\psi} \in \mathcal{A} \right\} \right) ,
\end{equation}
contains a distinct representation of each logical state in $\mathbb{S}(\mathcal{A})$. One way of viewing each $\Sigma_{g}$ is similar to the code space of a stabilizer code in Section~\ref{qec-relation}, defining a decoding map $\mathcal{D}_{g}$ for each $g$ as an isometry such that $\Omega = \mathcal{D}_{g} \left[ \Sigma_{g} \right] \cong \Sigma_{g}$. In this way, $\Omega$ is the same for each $\Sigma_{g}$. However, this approach treats the subsystem code as a collection of many different gauge-fixed convex sets $\Sigma_{g}$ which all correspond to the same encoded information $\Omega$, rather than a single object. Also, the union $\bigcup_{g} \Sigma_{g}$ is not necessarily convex so one cannot just construct a larger convex set starting from all individual $\Sigma_{g}$.

Furthermore, there is no distinct way to construct a Hamiltonian for a subsystem code. First, it is unclear what terms should be included in the Hamiltonian, since the code has an Abelian stabilizer group but non-Abelian gauge group. One possible Hamiltonian takes the form
\begin{equation}
    H = - \frac{1}{2} \sum_{S \in \mathcal{S}_{0}} J_{S} \left(S + S^{\dagger}\right) - \frac{1}{2} \sum_{G \in \mathcal{G}_{0}} J_{G} \left(G + G^{\dagger}\right) ,
\end{equation}
where $\mathcal{S}_{0}$ and $\mathcal{G}_{0}$ are geometrically local generating sets for $\mathcal{S}$ and $\mathcal{G}$, and $J_{S}$, $J_{G}$ are coupling strengths~\cite{bombin2015gauge,kubica2018ungauging,ellison2023,li2024phase}. If one sets all $J_{G} = 0$, $J_{S} = 1$, then the Hamiltonian is just the stabilizer code Hamiltonian for $\mathcal{S}_{0}$, Eq.~(\ref{eq:stabilizer-hamiltonian}). However, this Hamiltonian treats gauge operators $G \in \mathcal{G}$ as logical operators and the ground state degeneracy might be much larger than the number of encoded logical qudits. A more natural choice, setting $J_{G} = J_{S} = 1$, yields a non-commuting Hamiltonian which is generally not solvable. Properties of such Hamiltonians, such as their ground state degeneracy, structure of excitations, or even whether they are gapped, are generally not known except in special cases~\cite{li2024phase,bridgeman2024}.
Other choices of $J_{S}$ and $J_{G}$ may yield solvable Hamiltonians, but this also may favor some specific state of the gauge qudits and it is not a priori clear what a good choice of coupling strengths is.

\subsection{Phase equivalence for subsystem codes} \label{subsystem-phases}

The issues discussed above suggest that phases of subsystem codes are not sufficiently captured by either constructing a distinct local Hamiltonian for the code or by only studying gauge-fixed convex sets $\Sigma_{g}$. Instead, one way to treat a subsystem code as a single object and study its phase is to consider all code states equally by using a suitably chosen decoding map. First, define the set $\Sigma_{\mathcal{G}}$ to be the convex hull of all pure states in $\mathcal{V} = \mathcal{A} \otimes \mathcal{B}$,
\begin{equation}
    \label{eq:subsystem-codespace}
    \Sigma_{\mathcal{G}} = \conv \left( \left\{ \dyad{\Psi}{\Psi} : \ket{\Psi} \in \mathcal{A} \otimes \mathcal{B} \right\} \right) .
\end{equation}
Then, define a decoding map
$\mathcal{D}_{\mathcal{G}}(\cdot) = (\mathcal{D}_{\mathcal{A}} \otimes \Tr_{\mathcal{B}}) (\cdot)$ which traces out the gauge subsystem. Let $\mathcal{H}_\mathcal{A}$ be a logical Hilbert space where $\mathcal{H}_\mathcal{A} \cong \mathcal{A}$, let $\mathcal{D}_{\mathcal{A}}$ be an isometry $\mathcal{D}_{\mathcal{A}} : \mathbb{S}(\mathcal{A}) \rightarrow \mathbb{S}(\mathcal{H}_\mathcal{A})$ which acts only on the logical subsystem, and let $\Tr_{\mathcal{B}}$ trace out subsystem $\mathcal{B}$. Therefore $\mathcal{D}_{\mathcal{G}} \left[ \Sigma_{\mathcal{G}} \right] = \mathbb{S}(\mathcal{H}_\mathcal{A}) = \Omega$. Note that $\mathcal{D}_{\mathcal{G}}$ may also be used as a decoder for each convex set $\Sigma_{g}$ defined previously, namely one can choose $\mathcal{D}_{g} = \mathcal{D}_{\mathcal{G}}$ restricted to $\Sigma_{g}$ for any $g$ and obtain the same encoded information $\Omega$. Therefore $\mathcal{D}_{\mathcal{G}} \left[ \Sigma_{\mathcal{G}} \right] \cong \mathcal{D}_{g} \left[ \Sigma_{g} \right] \cong \Omega$ for all $g$. However, $\Sigma_{\mathcal{G}} \supseteq \bigcup_{g} \Sigma_{g}$ contains all states of all $\Sigma_{g}$, but also linear combinations of states in disjoint sets $\Sigma_{g}$. For example, states where the logical and gauge subsystems are entangled, such as
\begin{equation}
    \rho_{\Psi} = \dyad{\Psi}{\Psi} , \quad \ket{\Psi} = \frac{1}{\sqrt{\vert \mathcal{A} \vert}}\sum_{i} \ket{\overline{i}}_{\mathcal{A}} \otimes \ket{i}_{\mathcal{B}} ,
\end{equation}
are in the set $\Sigma_{\mathcal{G}}$ but are not in any single $\Sigma_{g}$, since each $\Sigma_{g}$ contains only states which are separable with respect to the decomposition $\mathcal{V} = \mathcal{A} \otimes \mathcal{B}$. While states such as $\rho_{\Psi}$ are not technically part of the subsystem code, since one often assumes code states take the form $\rho_{\mathcal{A}} \otimes \rho_{\mathcal{B}}$~\cite{poulin2005}, applying $\mathcal{D}_{\mathcal{G}}$ to $\rho_{\Psi}$ still yields a state $\mathcal{D}_{\mathcal{G}} \left( \rho_{\Psi} \right) \in \Omega$, albeit a logical mixed state, $\mathcal{D}_{\mathcal{G}} \left( \rho_{\Psi} \right) \propto I_{\Omega}$. Furthermore, all logical states of the subsystem code are treated on the same footing by studying the set $\Sigma_{\mathcal{G}}$ and decoding map $\mathcal{D}_{\mathcal{G}}$, and no specific gauge state $g$ is favored.

This approach is also useful for studying the relationship between phases of an entire subsystem code $\Sigma_{\mathcal{G}}$ and phases of gauge-fixed convex sets $\Sigma_{g}$, which may correspond to code spaces of stabilizer codes. $\Sigma_{\mathcal{G}}$ and $\Sigma_{g}$ correspond to the same encoded information $\Omega$, and transforming between $\Sigma_{g}$ and $\Sigma_{\mathcal{G}}$ or between two $\Sigma_{g}$ with different gauge states $g$ amounts to performing code switching, as mentioned in Section~\ref{qec-relation}. One subtlety here is that although every $\Sigma_{g} \subseteq \Sigma_{\mathcal{G}}$, this does not immediately imply $\Sigma_{g}$ and $\Sigma_{\mathcal{G}}$ are in the same FTQL phase or that $\Sigma_{g_{1}}$ and $\Sigma_{g_{2}}$ corresponding to different gauge states $g_{1} \neq g_{2}$ are in the same FTQL phase. While one can always transform from any $\Sigma_{g}$ to $\Sigma_{\mathcal{G}}$ by doing nothing, it is possible for states in some $\Sigma_{g_{1}}$ to only be obtainable by a deep circuit from states in $\Sigma_{g_{2}}$, for example if states in $\Sigma_{g_{1}}$ are volume-law entangled but states in $\Sigma_{g_{2}}$ are only area-law entangled. Since an $(r,\ell)$ QL circuit can only increase the entanglement between a region $R$ of the lattice and its complement $R^{c}$ by an amount proportional to its boundary, $O\left( r\ell \vert \partial R \vert \right)$~\cite{piroli2021,lu2022,lu2023}, no shallow QL circuit can convert an area-law entangled state into a volume-law entangled state, therefore $\Sigma_{g_{1}} \nsim \Sigma_{g_{2}}$ and $\Sigma_{g_{1}} \nsim \Sigma_{\mathcal{G}}$. This may occur in the contrived example where gauge qubits are physical qubits on the system lattice. For example, consider a periodic, cubic lattice in $D=3$ with a 2D toric code placed on a 2D slice of the lattice and gauge qubits placed on all links not contained within this slice. In this case, there are $O(n)$ gauge qubits which can host both area-law entangled states or volume-law entangled states. Furthermore, even if one can change the state of the gauge subsystem with a shallow circuit, this circuit may not be fault tolerant. While most examples we consider in this work will not have these issues, it is important to recognize that $\Sigma_{g}$ are not automatically in the same phase as each other or as $\Sigma_{\mathcal{G}}$.

In the rest of this section, we will study phase properties of the gauge color code~\cite{bombin2015gauge,kubica2015transversal,Kubicathesis} as a representative example of the structure of phases related to subsystem codes $\Sigma_{\mathcal{G}}$ and gauge-fixed convex sets $\Sigma_{g}$. While we focus on the color code here, results in the following subsections are generally applicable to other topological subsystem codes such as the subsystem toric code~\cite{kubica2022,bridgeman2024}. Specifically, we demonstrate that often properties of the underlying subsystem code allow FTQL phase equivalence between $\Sigma_{\mathcal{G}}$ and a wide variety of $\Sigma_{g}$. However, circuits which transform between $\Sigma_{g}$ may sometimes require measurements and global classical communication to be implemented in shallow depth, meaning there exist $\Sigma_{g}$ in different LC phases. This provides further evidence that the properties of a topological subsystem code cannot be accurately described by a single Hamiltonian.

\subsection{Gauge color codes and dimensional jumps} \label{gauge-color-codes}

Here we study gauge color codes in $D$ spatial dimensions which each encode one logical qubit. We will consider color codes in $D$ dimensions defined on families of lattices $\Gamma_{D}$, where each lattice tiles a $D$-simplex and has linear dimension $L$. We will suppress $L$ unless necessary here. We will only consider lattices which can support color codes and are compatible, namely lattices such that in $D$ spatial dimensions, for all $i \in [2,D-1]$, $\Gamma_{i} \subseteq \Gamma_{i+1}$ is an $i$-dimensional facet of $\Gamma_{i+1}$. This compatibility will allow us to code switch between color codes on $\Gamma_{i}$ and $\Gamma_{i+1}$. Here, $\Gamma_{i} \subseteq \Gamma_{D}$ are all sublattices of $\Gamma_{D}$. In Figure~\ref{fig:cc-dim-jump}(a-b), we give examples of compatible lattices and color codes with $L=3$ in $D=2$ and $D=3$.

Given such a lattice $\Gamma_{D}$, one can place a $D$-dimensional $(x,z)$ color code on it where qubits are placed on vertices of $\Gamma_{D}$ and integers $x,z \in [0,D-2]$ satisfying $x+z \leq D-2$ specify the dimension of cells which stabilizer and gauge generators are placed on. Explicitly, $X$ and $Z$ type stabilizer generators are placed on vertices of $(D-x)$-cells and $(D-z)$-cells, respectively, and $X$ and $Z$ type gauge generators are placed on vertices of $(z+2)$-cells and $(x+2)$-cells, respectively. This code is a stabilizer code for $x+z = D-2$ and a subsystem code for $x+z < D-2$. Denote the convex set corresponding to the full code space, Eq.~(\ref{eq:subsystem-codespace}), of this code $\Sigma_{\Gamma_{D}}^{(x,z)}$. For a detailed introduction to color codes, see Refs.~\cite{kubica2015transversal,bombin2006,bombin2007cc3d,bombin2007,bombin2015gauge}.

As described in Ref.~\cite{kubica2015transversal}, one can code switch between $(x,z)$ and $(x^\prime,z^\prime)$ color codes on the same lattice $\Gamma_{D}$ by measuring a subset of gauge generators and applying Pauli corrections if both $x \geq x^\prime$ and $z \geq z^\prime$, since in this case $\Sigma_{\Gamma_{D}}^{(x,z)} \subseteq \Sigma_{\Gamma_{D}}^{(x^\prime,z^\prime)}$ corresponds to partially fixing the gauge state of $\Sigma_{\Gamma_{D}}^{(x^\prime,z^\prime)}$. In Ref.~\cite{bombin2015singleshot}, it is proven that one can perform this code switching procedure fault tolerantly in finite depth in $D=3$ between the $(0,0)$ gauge color code and $(1,0)$ and $(0,1)$ stabilizer color codes.

\begin{figure*}[ht!]
    \centering
    \includegraphics[width=0.9\textwidth]{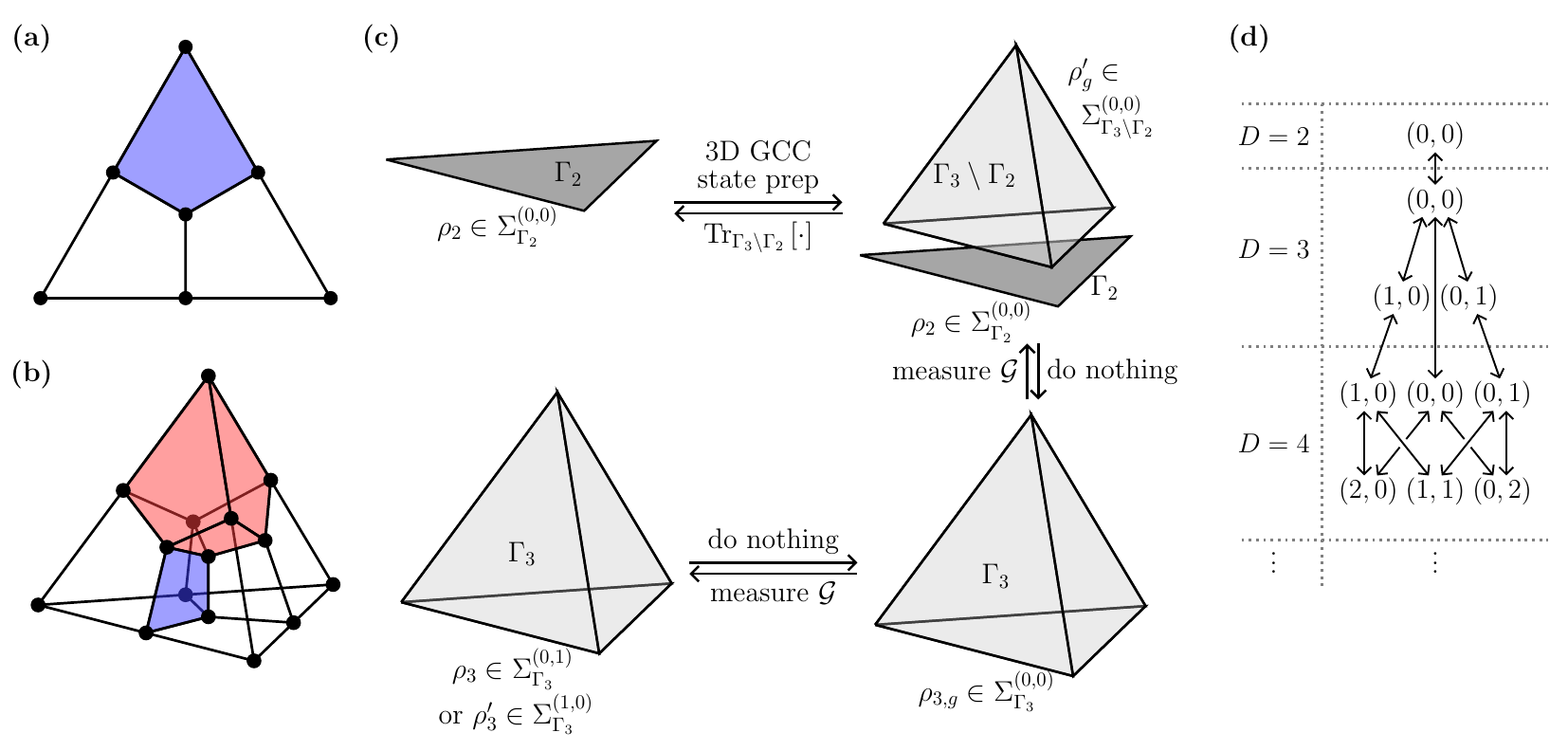}
    \caption{(a) Lattice supporting the 2D color code with linear size $L=3$. Qubits are placed on vertices and stabilizers are placed on plaquettes (blue). (b) Lattice supporting the 3D stabilizer and gauge color codes with $L=3$. Qubits are placed on vertices, stabilizers may be placed on plaquettes (blue) or cells (red), and gauge generators are placed on plaquettes. (c) Fault tolerant dimensional jump for code switching between the 2D and 3D color codes using the 3D gauge color code (GCC). (d) Dimensional ladder of color codes placed on $D$-simplices. In $D$ spatial dimensions on a lattice $\Gamma_{D}$, all color codes placed on $\Gamma_{D}$ or its sublattices $\Gamma_{i}$, $i \leq D$, are in the same FTQL phase, as one can code switch multiple times to reach any code from any other code on the ladder. Here, arrows indicate pairs of codes for which fault tolerance of code switching follows directly from applying results in Refs.~\cite{bombin2015singleshot,bombin2016}.}
    \label{fig:cc-dim-jump}
\end{figure*}

Additionally, as described in Ref.~\cite{bombin2016}, one can code switch between $(x,z)$ color codes on $\Gamma_{D}$ and one of its facets, $\Gamma_{D-1}$ in a ``dimensional jump'' procedure. This involves constructing a $(x,z)$ color code on $\Gamma_{D} \setminus \Gamma_{D-1}$ which encodes no logical qubits, $\Sigma_{\Gamma_{D} \setminus \Gamma_{D-1}}^{(x,z)}$, and is doable since the two codes on $\Gamma_{D-1}$ and $\Gamma_{D} \setminus \Gamma_{D-1}$ correspond to partially fixing the gauge state of $\Sigma_{\Gamma_{D}}^{(x,z)}$,
\begin{equation}
    \Sigma_{\Gamma_{D-1}}^{(x,z)} \otimes \Sigma_{\Gamma_{D} \setminus \Gamma_{D-1}}^{(x,z)} \subseteq \Sigma_{\Gamma_{D}}^{(x,z)} .
\end{equation}
Similar to code switching above, it is proven in Ref.~\cite{bombin2016} that one can perform this dimensional jump fault tolerantly in finite depth for $D=3$ and $(x,z) = (0,0)$. Also, one can fault tolerantly prepare the color code with no logical qubits, $\Sigma_{\Gamma_{3} \setminus \Gamma_{2}}^{(0,0)}$, from a product state $\dyad{0}{0}^{\otimes \vert \Gamma_{3} \setminus \Gamma_{2} \vert}$ on $\Gamma_{3} \setminus \Gamma_{2}$ in finite depth by measuring gauge generators and applying Pauli corrections, as shown in Ref.~\cite{bombin2015singleshot}.

While proven explicitly for $D=3$, the arguments in Refs.~\cite{bombin2015singleshot,bombin2016} generalize to any $D \geq 3$, $x \leq D-3$, and $z \leq D-3$ in a straightforward manner. This means one can fault tolerantly code switch between the $(x,z)$ gauge color code and the $(x,D-2-x)$ and $(D-2-z,z)$ stabilizer color codes in finite depth, perform the dimensional jump procedure fault tolerantly in finite depth, and fault tolerantly prepare $\Sigma_{\Gamma_{D} \setminus \Gamma_{D-1}}^{(x,z)}$ from a product state in finite depth, all for any $D \geq 3$, $x \leq D-3$, and $z \leq D-3$.

\subsection{FTQL phase equivalence of color codes} \label{cc-equivalence}

We now outline how results from Refs.~\cite{bombin2015singleshot,bombin2016}, which demonstrate single-shot code switching, also imply that code spaces of all color codes defined on $i$-simplices embedded in $D$ spatial dimensions, $i \leq D$, are in the same FTQL phase. We show this by using the procedures described above to construct a ``dimensional ladder'' which captures equivalences between codes in various dimensions. When we say an operation is \textit{single-shot} we mean it is implementable fault tolerantly with a finite-depth QL circuit. Here we demonstrate FTQL phase equivalence with respect to a class of Pauli noise, $\mathbb{N}^{\mathrm{loc}}_{\tau,\epsilon}$, used in proofs of single-shot QEC and code switching in Refs.~\cite{bombin2015singleshot,bombin2016}. Loosely, each channel $\mathcal{N} \in \mathbb{N}^{\mathrm{loc}}_{\tau,\epsilon}$ has the same distribution of stabilizer syndromes as some Pauli channel in $\mathbb{L}_{\tau}$ and has logical error rate upper bounded by $\epsilon$ when QEC is performed using a minimum weight decoder. See Ref.~\cite{bombin2015singleshot} for more details about $\mathbb{N}^{\mathrm{loc}}_{\tau,\epsilon}$, including the proof that it satisfies the properties of Definition~\ref{def:noise-class}. Notably, $\mathbb{L}_{\lambda} \subseteq \mathbb{N}^{\mathrm{loc}}_{\lambda,\epsilon(\lambda)}$ where $\epsilon(\lambda)$ is specified in Ref.~\cite{bombin2015singleshot}.

Since the dimensional jump can be performed fault tolerantly between $\Sigma_{\Gamma_{D}}^{(x,z)}$ and the pair $\Sigma_{\Gamma_{D-1}}^{(x,z)}$ and $\Sigma_{\Gamma_{D} \setminus \Gamma_{D-1}}^{(x,z)}$ for any $D \geq 3$, $x \leq D-3$, and $z \leq D-3$ using finite-depth measurement-feedback circuits~\cite{bombin2016}, and since this procedure does not change the code's logical state, $\Sigma_{\Gamma_{D}}^{(x,z)}$ and $\Sigma_{\Gamma_{D-1}}^{(x,z)} \otimes \Sigma_{\Gamma_{D} \setminus \Gamma_{D-1}}^{(x,z)}$ are in the same FTQL phase,
\begin{equation}
    \label{eq:gcc-dim-jump}
    \Sigma_{\Gamma_{D}}^{(x,z)} \sim \Sigma_{\Gamma_{D-1}}^{(x,z)} \otimes \Sigma_{\Gamma_{D} \setminus \Gamma_{D-1}}^{(x,z)} .
\end{equation}
Additionally, one can fault tolerantly prepare a state in $\Sigma_{\Gamma_{D} \setminus \Gamma_{D-1}}^{(x,z)}$ from the product state $\dyad{0}{0}^{\otimes \vert \Gamma_{D} \setminus \Gamma_{D-1} \vert}$ in finite depth, and one can fault tolerantly transform from any state in $\Sigma_{\Gamma_{D} \setminus \Gamma_{D-1}}^{(x,z)}$ to $\dyad{0}{0}^{\otimes \vert \Gamma_{D} \setminus \Gamma_{D-1} \vert}$ by measuring all single-qubit $Z_{i}$ and conditionally applying $X_{i}$. Since $\Sigma_{\Gamma_{D} \setminus \Gamma_{D-1}}^{(x,z)}$ encodes no information, it is in the same FTQL phase as the single state $\dyad{0}{0}^{\otimes \vert \Gamma_{D} \setminus \Gamma_{D-1} \vert}$ and
\begin{equation}
    \label{eq:gcc-state-prep}
    \Sigma_{\Gamma_{D}}^{(x,z)} \sim \Sigma_{\Gamma_{D-1}}^{(x,z)} \otimes \dyad{0}{0}^{\otimes \vert \Gamma_{D} \setminus \Gamma_{D-1} \vert}  \sim \Sigma_{\Gamma_{D-1}}^{(x,z)} ,
\end{equation}
where tracing out $\Gamma_{D} \setminus \Gamma_{D-1}$ yields the last equivalence. Lastly, since one can fault tolerantly code switch between the $(x,z)$ gauge color code and the $(x,D-2-x)$ and $(D-2-z,z)$ stabilizer color codes in finite depth, these three codes are all in the same FTQL phase,
\begin{equation}
    \label{eq:gcc-gauge-fix}
    \Sigma_{\Gamma_{D}}^{(x,D-2-x)} \sim \Sigma_{\Gamma_{D}}^{(x,z)} \sim \Sigma_{\Gamma_{D}}^{(D-2-z,z)} .
\end{equation}
Operations performed in the discussed code switching procedures are illustrated in Figure~\ref{fig:cc-dim-jump}(c) for $D=3$.

These relations induce a ladder-like structure of code switching protocols one can use to fault tolerantly transfer information between color codes on $D$-simplices and their facets. Specifically, in $D$ spatial dimensions, one can fault tolerantly switch between a $D$-dimensional color code on a $D$-simplex $\Gamma_{D}$ and a $(D-1)$-dimensional color code on one of its $(D-1)$-simplex facets, $\Gamma_{D-1}$, between this $(D-1)$-dimensional color code and a $(D-2)$-dimensional color code on one of its $(D-2)$-simplex facets, $\Gamma_{D-2}$, and so on until one reaches the 2D $(0,0)$ color code on a triangular sublattice of the original $D$-dimensional lattice, $\Gamma_{2}$. The code spaces of all color codes attainable in this way are then in the same FTQL phase, along with stabilizer color codes which can be obtained by code switching on the same lattice. This dimensional ladder and equivalences among different codes are illustrated in Figure~\ref{fig:cc-dim-jump}(d). Additionally, by successively applying the above results, one finds that in $D$ spatial dimensions, all stabilizer and gauge color codes defined on the sublattices $\Gamma_{i}$ of $\Gamma_{D}$, $i \in [2,D]$, are in the same FTQL phase. This is summarized in the following lemma.

\begin{lemma}
    \label{lemma:3dgcc-phases}
    Consider the lattice $\Gamma_{D}$ in $D$ spatial dimensions and its sublattices $\Gamma_{i}$, $i \in [2,D]$, defined above. All color codes $\Sigma_{\Gamma_{i}}^{(x,z)}$ are in the same FTQL phase with respect to the noise class $(\mathbb{N}^{\mathrm{loc}}_{\tau,\epsilon}, \mathbb{M}_{\zeta})$.
\end{lemma}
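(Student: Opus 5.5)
The plan is to assemble the lemma from the three equivalences in Eqs.~(\ref{eq:gcc-dim-jump})--(\ref{eq:gcc-gauge-fix}), using that FTQL equivalence is an equivalence relation (transitivity from Lemma~\ref{lemma:ft-composition}, symmetry by construction). The proof is an induction on the dimension $i$ of the simplex, climbing the dimensional ladder of Figure~\ref{fig:cc-dim-jump}(d). All sets involved share the same encoded information $\Omega\cong\mathbb{S}(\mathbb{C}^2)$, and each uses the gauge-tracing decoder $\mathcal{D}_{\mathcal{G}}$. Hence Eq.~(\ref{eq:decoding-equivalence}) reduces to checking that each code switching step preserves the logical state, which is part of what Refs.~\cite{bombin2015singleshot,bombin2016} establish.

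\textbf{Fixed lattice.} For a fixed $\Gamma_i$ with $i\ge 3$, I would first show that all codes $\Sigma^{(x,z)}_{\Gamma_i}$ with $x+z\le i-2$ are equivalent. Every stabilizer code $(x,i-2-x)$ is linked, via Eq.~(\ref{eq:gcc-gauge-fix}), to the gauge code $(x,0)$ whenever $x\le i-3$. The gauge code $(x,0)$ is in turn linked to $(i-2,0)$ and to $(0,0)$, since $(0,0)\sim(x,i-2-x)$ for every admissible $x$. Chaining these gives a connected graph containing all $(x,z)$. The only boundary cases are $x=i-2$ or $z=i-2$, which are themselves stabilizer codes reachable from $(0,0)$.

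\textbf{Induction across dimensions.} For the step from $i$ to $i-1$, I would use Eq.~(\ref{eq:gcc-state-prep}) with $(x,z)=(0,0)$, which gives $\Sigma^{(0,0)}_{\Gamma_i}\sim\Sigma^{(0,0)}_{\Gamma_{i-1}}$. This identification treats $\Gamma_{i-1}$ codes as living on $\Gamma_D$ with the ancilla qubits fixed in $\dyad{0}{0}$; the tracing/appending of ancillas is itself a noiseless local operation. When $i-1=2$, only the $(0,0)$ 2D color code exists, so the chain terminates there. Combining the fixed-lattice equivalences at every level with these links shows that every $\Sigma^{(x,z)}_{\Gamma_i}$ is equivalent to the 2D color code on $\Gamma_2$, and hence, by transitivity, all of them lie in one phase.

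\textbf{Main obstacle.} The hard part is justifying that each individual link really is an FTQL equivalence against $(\mathbb{N}^{\mathrm{loc}}_{\tau,\epsilon},\mathbb{M}_\zeta)$ in the sense of Definition~\ref{def:ft-channel}, with thresholds independent of $n$. Concretely, I must translate the single-shot results of Refs.~\cite{bombin2015singleshot,bombin2016} into two statements:
\begin{enumerate}
\item residual noise lies in $\mathbb{N}^{\mathrm{loc}}_{\eta,\epsilon'}$ with $\eta$ continuous and $\eta(0,0)=0$;
\item a recovery channel exists on each target set, taken to be minimum-weight decoding followed by projection back onto the code.
\end{enumerate}
I also need these guarantees to extend from $D=3$ to general $D$ as claimed, and to check that the ancilla-preparation circuit and the $Z$-measurement disentangling circuit compose correctly.

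Because the ladder has only $O(D)$ rungs, independent of $n$, repeated use of Lemma~\ref{lemma:ft-composition} shrinks the thresholds only a bounded number of times. The composite thresholds therefore remain positive, which completes the argument.
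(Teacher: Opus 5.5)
Your proposal is correct and takes essentially the paper's route. The lemma follows by chaining the equivalences in Eqs.~(\ref{eq:gcc-dim-jump})--(\ref{eq:gcc-gauge-fix}) along the dimensional ladder via transitivity (Lemma~\ref{lemma:ft-composition}), with the fault tolerance of each link taken from Refs.~\cite{bombin2015singleshot,bombin2016}; routing every change of dimension through the $(0,0)$ code, as you do, is simply one path on that ladder. One small point: appending or tracing out ancillas is not a noiseless step here, since the paper treats it as part of a local channel gate layer and so noise can act on it; it is clearly harmless, and the paper handles it just as informally.
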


We emphasize that Lemma~\ref{lemma:3dgcc-phases} is proven explicitly for $D=3$ by directly applying results from Refs.~\cite{bombin2015singleshot,bombin2016}, but these arguments generalize to any $D \geq 3$, $x \leq D-3$, and $z \leq D-3$ in a straightforward manner.
Lastly, while we only discuss color codes here, Eq.~(\ref{eq:gcc-gauge-fix}) and the arguments in Ref.~\cite{bombin2015singleshot} also apply to gauge-fixed convex sets for $\Sigma_{\Gamma_{D}}^{(0,0)}$ where some gauge generators are fixed in $-1$ eigenstates, rather than fixing all $Z$-type or $X$-type generators to $+1$ as is done when code switching here.

\subsection{LC phase inequivalence of color codes} \label{cc-mixed-states}

Measurements and feedback are necessary for the above arguments establishing FTQL phase equivalence between color codes to hold. In fact, stabilizer color codes on compatible lattices $\Gamma_{D-1}$ and $\Gamma_{D}$ cannot be transformed into each other with any shallow, locally reversible LC circuit, as we show with the following theorem. To prove this, one can bound the growth of a logical operator when propagated through a locally reversible LC circuit, similar to the discussion in Section~\ref{lc-finite-depth}. This is schematically depicted in Figure~\ref{fig:cc-mixed-phases} for color codes on $\Gamma_{2}$ and $\Gamma_{3}$.

\begin{theorem}
\label{theorem:colorcode-lc-phases}
    Suppose $\Sigma_{\Gamma_{D}}^{(x,z)}$ and $\Sigma_{\Gamma_{D-1}}^{(x^\prime,z^\prime)}$ are stabilizer color codes, $\Gamma_{D}$ tiles a $D$-simplex and has linear dimension $L$, and $\Gamma_{D-1} \subseteq \Gamma_{D}$ is a $(D-1)$-dimensional facet of $\Gamma_{D}$.
    Then $\Sigma_{\Gamma_{D}}^{(x,z)}$ and $\Sigma_{\Gamma_{D-1}}^{(x^\prime,z^\prime)}$ are in different LC phases.
\end{theorem}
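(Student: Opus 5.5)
We argue by contradiction, using the same operator-spreading idea as in Section~\ref{lc-finite-depth}, but applied to logical operators instead of noise. Suppose there were a shallow, locally reversible $(r,\ell)$ LC circuit $\Lambda : \Sigma_{\Gamma_{D-1}}^{(x^\prime,z^\prime)} \rightarrow \Sigma_{\Gamma_{D}}^{(x,z)}$ with $r\ell = O(\polylog(n))$. Let $\widetilde{\Lambda}$ be its reversal circuit, and suppose the two circuits reverse each other in the sense of Eq.~(\ref{eq:decoding-equivalence}). Both decoding maps are isometric on stabilizer code spaces, so $\Lambda$ implements an isomorphism of the encoded qubit. The plan is to obtain a contradiction from the mismatch in the dimensions of minimal logical operators. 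In the $D$-dimensional stabilizer color code of type $(x,z)$, a minimal-weight logical of one Pauli type is supported on a $(D-x-1)$- or $(D-z-1)$-dimensional submanifold, and both can be membranes. In the $(D-1)$-dimensional code on the facet, the corresponding logicals have dimension reduced by one.

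\textbf{Step 1 (dressing a logical operation).} Take a logical channel on $\Sigma_{\Gamma_{D}}^{(x,z)}$, for instance conjugation by a logical Pauli $\bar{P}$ chosen to have minimal support. Pull it back through the circuit as in Eq.~(\ref{eq:lc-conjugation}):
\begin{equation*}
\Phi^{\widetilde{\Lambda}} = \widetilde{\Lambda}_{\mathrm{red}} \circ \bar{P} \circ \Lambda_{\mathrm{red}} .
\end{equation*}
Local reversibility gives $\widetilde{\Lambda}\circ \bar{P} (\rho_{2}) \approx \Phi^{\widetilde{\Lambda}}\circ \widetilde{\Lambda}(\rho_{2})$. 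Hence $\Phi^{\widetilde{\Lambda}}$ acts as a nontrivial logical channel on $\Sigma_{\Gamma_{D-1}}^{(x^\prime,z^\prime)}$. Its support lies in the $r\ell$-neighborhood of $\supp \bar{P}$.

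\textbf{Step 2 (choosing the operator so the pulled-back channel misses a logical).} Place the minimal logical $\bar{P}$ of the 3D (or $D$-dimensional) code on a membrane or string lying in the bulk at distance $\gg r\ell$ from the facet $\Gamma_{D-1}$. The $r\ell$-neighborhood of $\supp\bar P$ then avoids a large region of $\Gamma_{D-1}$. I would use the topological fact that every nontrivial logical operator of the $(D-1)$-dimensional code must intersect any slab that cuts the facet appropriately. The cleaner version of this fact is that a logical of each type must touch a specific boundary region of the simplex.

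I then need to check that a channel supported away from such a region cannot act nontrivially on the logical qubit. The tool is the cleaning lemma: the region $\Gamma_{D-1}\cap(\supp\Phi^{\widetilde\Lambda})$ is correctable, so no channel supported on it can act nontrivially. Two cases arise.
\begin{itemize}[noitemsep]
\item If $\supp\bar P$ misses $\Gamma_{D-1}$ entirely, then the pulled-back support on the facet qubits is confined to a region of diameter $O(r\ell)$. Any region of sublinear size is correctable, so the pulled-back channel acts trivially, which is a contradiction.
\item If every minimal logical of the $D$-dimensional code must touch $\Gamma_{D-1}$, I would instead argue by counting. Take logicals $\bar X$ and $\bar Z$ that anticommute, push both through the circuit, and require that their images still overlap in an anticommuting way on the facet. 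The images live only in thin neighborhoods of their original supports, and the intersection of those neighborhoods with $\Gamma_{D-1}$ can be made to have size $O(\poly(r\ell))$. That region is correctable in the facet code, so the pulled-back operators cannot anticommute there, which again is a contradiction.
\end{itemize}

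\textbf{Step 3 (choosing between the cases).} For color codes on a simplex, logicals can be supported on any facet or on lower-dimensional edges, and both types can be placed along a common edge of the simplex. So I would take the second route. Choose $\bar X$ and $\bar Z$ of the $D$-dimensional code supported along two different facets, chosen so that they meet $\Gamma_{D-1}$ only in an $O(1)$-dimensional corner region. After dressing, their overlap inside $\Gamma_{D-1}$ is confined to a ball of radius $O(r\ell)=O(\polylog n)$. This ball is correctable for the facet code, whose distance is $\Omega(L)$. Yet the two dressed channels must still implement anticommuting logicals, which cannot happen on a correctable region.

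\textbf{Main obstacle.} The hard part is making the correctability and cleaning argument rigorous for dressed \emph{channels} rather than Pauli operators, and for approximate equalities with $1/\poly(n)$ errors. I would try first to convert anticommutation into a statement about distinguishing logical states, and then to use the fact that reduced density matrices on correctable regions are logical-independent, so that contractivity of trace distance yields the contradiction.
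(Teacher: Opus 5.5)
Your decisive step (the second bullet of Step~2 and Step~3) does not work. It rests on the principle that two dressed operations implementing anticommuting logicals cannot have their supports overlap, inside $\Gamma_{D-1}$, only on a correctable region. That is false even for Pauli operators. In the triangular 2D color code, $\bar X$ along one side and $\bar Z$ along an adjacent side overlap only at the shared corner, an $O(1)$ and hence correctable set, yet they anticommute.

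At the level you actually work at, the principle has no content at all. The dressing of Eq.~(\ref{eq:lc-conjugation}) produces channels, and the conjugation channels $\bar X(\cdot)\bar X$ and $\bar Z(\cdot)\bar Z$ commute. There is no anticommutation to transport through an LC circuit, so this is not a technicality to be cleaned up later.

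The geometry you want is also unavailable. Every logical of the $D$-dimensional code must cross the complementary logicals that can be placed inside the facet, so none avoids $\Gamma_{D-1}$; your first case is empty. The logical of dimension $\max(x,z)+1\ge 2$ meets $\Gamma_{D-1}$ in an $\Omega(L)$-sized set, never just a corner. In particular, a logical placed on another facet meets $\Gamma_{D-1}$ in a whole $(D-2)$-face. Tellingly, the decisive step never uses $x+z=x'+z'+1$, and any correct argument must.

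What survives is your Step~1 plus the cleaning principle, applied to a \emph{single} logical and combined with the dimension mismatch.
\begin{itemize}
\item Without loss of generality $z>z'$, hence $x\le x'$.
\item Place the $D$-dimensional $Z$-logical, of dimension $x+1$, on an $(x+1)$-face of the simplex containing the vertex opposite $\Gamma_{D-1}$. It meets $\Gamma_{D-1}$ only in an $x$-face.
\item Its pulled-back channel therefore acts on the facet code on only $O((r\ell)^{D}L^{x})=o(L^{x'+1})$ sites. That is too few to support any logical anticommuting with $\bar X_{D-1}$.
\item By cleaning, $\bar X_{D-1}$ can be supported on the rest of the facet. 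The channel leaves the reduced state there unchanged, since the ancillas sit in a fixed state.
\item So the channel cannot take $\ket{\overline{+}}$ to $\ket{\overline{-}}$, a contradiction.
\end{itemize}

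This is the mirror image of the paper's proof, which is simpler still. The paper pushes the facet code's $\bar X_{D-1}$, of weight $\Theta(L^{z'+1})=O(L^{z})$, \emph{forward} through $\mathcal{C}$. Eq.~(\ref{eq:lc-bound}) bounds the dressed channel's weight by $O((r\ell)^{D}L^{z})=o(L^{z+1})$, which is below the $X$-distance of the $D$-dimensional code. Hence the dressed channel cannot map $\ket{\overline{0}}$ to $\ket{\overline{1}}$. That argument is pure weight counting, with no need to choose where logicals sit relative to the facet.
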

\begin{proof}
    First, since $x + z = D - 2$ and $x^\prime + z^\prime = D - 3$, we find $x + z = x^\prime + z^\prime + 1$. Thus, without loss of generality, we can assume $z > z^\prime$. Let $\overline{X}_{D}$ and $\overline{X}_{D-1}$ be minimum weight $X$-type logical operators for $\Sigma_{\Gamma_{D}}^{(x,z)}$ and $\Sigma_{\Gamma_{D-1}}^{(x^\prime,z^\prime)}$ respectively, meaning they each have the smallest weight of all equivalent logical operators. Their weights depend on their dimensionality on the lattice~\cite{bombin2015gauge}, specifically,
    \begin{equation}
        \label{eq:xop-weights}
        \wt \overline{X}_{D} = \Theta(L^{z+1}) , \quad 
        \wt \overline{X}_{D-1} = \Theta(L^{z^\prime+1}) .
    \end{equation}
    Since $z \geq z^\prime + 1$, $\wt \overline{X}_{D-1} = O(L^{z})$.

    \begin{figure}[t!]
        \centering
        \includegraphics[width=0.45\textwidth]{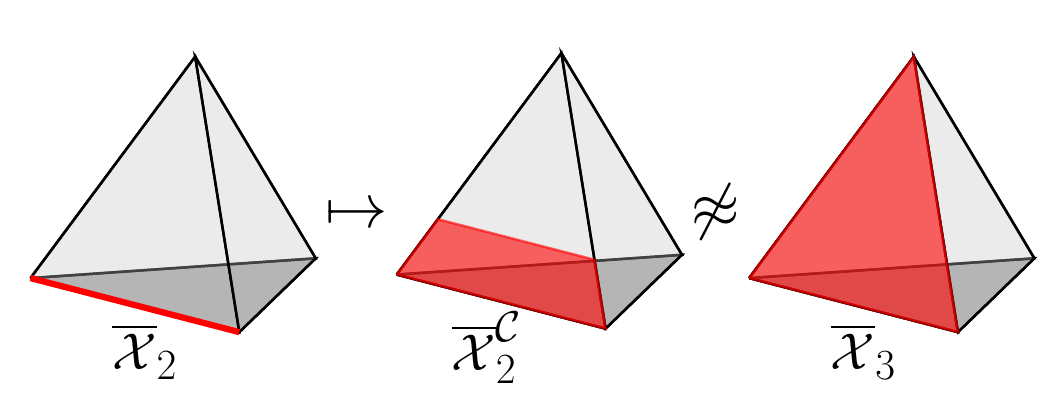}
        \caption{Shallow locally reversible LC circuits can only grow logical operators by a bounded amount in the 2D color code. This means no shallow LC circuit can enlarge all logical operators to those of the 3D color code, therefore the 2D and 3D color codes must be in different LC phases.}
        \label{fig:cc-mixed-phases}
    \end{figure}

    Consider encoding and decoding maps $\mathcal{E}_{D-1}$, $\mathcal{D}_{D-1}$ for $\Sigma_{\Gamma_{D-1}}^{(x^\prime,z^\prime)}$ and $\mathcal{E}_{D}$, $\mathcal{D}_{D}$ for $\Sigma_{\Gamma_{D}}^{(x,z)}$, where
    \begin{equation}
        \mathcal{D}_{D-1} \left[ \Sigma_{\Gamma_{D-1}}^{(x^\prime,z^\prime)} \right] = \mathcal{D}_{D} \left[ \Sigma_{\Gamma_{D}}^{(x,z)} \right] = \Omega .
    \end{equation}
    Suppose an $(r,\ell)$ locally reversible LC circuit $\mathcal{C}$ with range $r\ell = O(\polylog(L))$ exists such that for some fixed state $\rho_{Q}$ of qubits in $\Gamma_{D} \setminus \Gamma_{D-1}$ and for all $\rho \in \Omega$, $\mathcal{C}\left(\mathcal{E}_{D-1} \left( \rho \right) \otimes \rho_{Q}\right) \approx \mathcal{E}_{D} \left( \rho \right)$. Denote channels $\overline{\mathcal{X}}_{D-1} (\cdot) \coloneqq \overline{X}_{D-1} (\cdot) \overline{X}_{D-1}$ and $\overline{\mathcal{X}}_{D} (\cdot) \coloneqq \overline{X}_{D} (\cdot) \overline{X}_{D}$ for clarity. Since
    \begin{equation}
        \overline{\mathcal{X}}_{i} \circ \mathcal{E}_{i} \left( \rho \right) = \mathcal{E}_{i} \left( \overline{X} \rho \overline{X} \right) , \quad i = D-1,D , 
    \end{equation}
    where $\overline{X}$ acts on $\Omega$, we find that for all $\rho \in \Omega$,
    \begin{equation}
        \label{eq:cc-approx-eq}
        \mathcal{C} \circ \overline{\mathcal{X}}_{D-1}\left(\mathcal{E}_{D-1} \left( \rho \right) \otimes \rho_{Q}\right) \approx \overline{\mathcal{X}}_{D} \circ \mathcal{E}_{D} \left( \rho \right) .
    \end{equation}
    Since $\mathcal{C}$ is locally reversible, we also find
    \begin{align}
        &\mathcal{C} \circ \overline{\mathcal{X}}_{D-1}\left(\mathcal{E}_{D-1} \left( \rho \right) \otimes \rho_{Q}\right) \\
        &\approx \overline{\mathcal{X}}_{D-1}^{\mathcal{C}} \circ \mathcal{C} \left(\mathcal{E}_{D-1} \left( \rho \right) \otimes \rho_{Q}\right) \approx \overline{\mathcal{X}}_{D-1}^{\mathcal{C}} \circ \mathcal{E}_{D} \left( \rho \right) , \nonumber
    \end{align}
    where the dressed channel $\overline{\mathcal{X}}_{D-1}^{\mathcal{C}}$ is defined in Eq.~(\ref{eq:lc-conjugation}) and has weight upper bounded by Eq.~(\ref{eq:lc-bound}),
    \begin{equation}
        \wt \overline{\mathcal{X}}_{D-1}^{\mathcal{C}} \leq \gamma (r \ell)^{D} \wt \overline{\mathcal{X}}_{D-1}
    \end{equation}
    for some constant $\gamma$. Since $r\ell = O(\polylog(L))$, $\overline{\mathcal{X}}_{D-1}^{\mathcal{C}}$ cannot act as an $X$-type logical operator of $\Sigma_{\Gamma_{D}}^{(x,z)}$. This is because to implement the lowest weight logical operator $\overline{\mathcal{X}}_{D}$, one would need $\wt \overline{\mathcal{X}}_{D} \leq \wt \overline{\mathcal{X}}_{D-1}^{\mathcal{C}}$, which from Eq.~(\ref{eq:xop-weights}) would require $r\ell = \Omega(L^{1/D})$.
    
    Consider the specific code state $\mathcal{E}_{D} \left( \rho \right) = \dyad{\overline{0}}{\overline{0}}$, therefore $\overline{\mathcal{X}}_{D} \circ \mathcal{E}_{D} \left( \rho \right) = \dyad{\overline{1}}{\overline{1}}$. Since $\overline{\mathcal{X}}_{D-1}^{\mathcal{C}}$ cannot implement an $X$-type logical operator, we can write the state
    \begin{equation}
        \overline{\mathcal{X}}_{D-1}^{\mathcal{C}} \left( \dyad{\overline{0}}{\overline{0}} \right) = (1-q) \dyad{\overline{0}}{\overline{0}} + q \rho^{\prime}
    \end{equation}
     where $\rho^{\prime}$ is not in the code space and is orthogonal to both $\dyad{\overline{0}}{\overline{0}}$ and $\overline{\mathcal{X}}_{D} \left( \dyad{\overline{0}}{\overline{0}} \right) = \dyad{\overline{1}}{\overline{1}}$. This means
     \begin{equation}
         \norm*{\overline{\mathcal{X}}_{D-1}^{\mathcal{C}} \left( \dyad{\overline{0}}{\overline{0}} \right) - \overline{\mathcal{X}}_{D} \left( \dyad{\overline{0}}{\overline{0}} \right)}_{1} = 2, 
     \end{equation}
     therefore Eq.~(\ref{eq:cc-approx-eq}) cannot hold. This means the circuit $\mathcal{C}$ described above, with range $r\ell = O(\polylog(L))$, does not exist. Therefore, $\Sigma_{\Gamma_{D-1}}^{(x^\prime,z^\prime)}$ is in a different LC phase than $\Sigma_{\Gamma_{D}}^{(x,z)}$ according to Definition~\ref{def:lc-phase}. The case $x > x^\prime$ is handled identically by swapping $X \leftrightarrow Z$ in the above arguments.
\end{proof}

Together, Lemma~\ref{lemma:3dgcc-phases} and Theorem~\ref{theorem:colorcode-lc-phases} demonstrate the strict inclusion $LC \subsetneq FTQL$ given in Eq.~(\ref{eq:stable-phase-inclusions}). This is because from Lemma~\ref{lemma:3dgcc-phases}, the code spaces of the 2D and 3D stabilizer color codes are in the same FTQL phase, but from Theorem~\ref{theorem:colorcode-lc-phases}, they are in different LC phases.

\section{Robust state preparation and QEC} \label{robust-meas-feedback}

In this section, we study single-shot error correction and state preparation of qudit stabilizer codes to establish FTQL state equivalence between code states and product states. These examples will also highlight general properties which can make measurement-feedback circuits fault tolerant. The codes we consider here are those which exhibit self-correcting properties and allow passive stability at finite temperature~\cite{bombin2013}. In Ref.~\cite{bombin2015singleshot}, it was shown that qubit codes exhibiting these properties permit single-shot QEC for Pauli noise. Here, we show that these codes permit single-shot state preparation as well. We also generalize results to qudit codes and general channel noise, Eq.~(\ref{eq:qudit-noise}). Since results and techniques used to study these codes are very similar to those in Ref.~\cite{bombin2015singleshot}, we highlight key properties here and discuss technical details and subtleties which arise when treating qudits and general channel noise in Appendices~\ref{qudit-qec} and~\ref{self-corr-all}.

\subsection{Error correction for qudit stabilizer codes} \label{qudit-codes}

Let $\mathcal{S} \subset \mathcal{P}$ be a stabilizer group with code space $\mathcal{V}$. Given a generating set $\mathcal{S}_{0} = \{S_{i}\}$ of $\mathcal{S}$, we say the Pauli operator $P \in \mathcal{P}$ has syndrome $\sigma \in \mathbb{Z}_{d}^{\vert \mathcal{S}_{0} \vert}$ if $S_{i} P = \omega^{\sigma_{i}} P S_{i}$ for all $S_{i} \in \mathcal{S}_{0}$. If $P$ has syndrome $\sigma$, it can be decomposed as $P = R_{\sigma}^{\dagger} S_{P} L_{P}$, where $R_{\sigma}^{\dagger}$ has syndrome $\sigma$, $S_{P} \in \mathcal{S}$, and $L_{P} \in \mathcal{L}_{\mathcal{S}}$, where $\mathcal{L}_{\mathcal{S}} \subseteq \mathcal{Z}(\mathcal{S})$ is a representative set of logical operators. Here, $\{ R_{\sigma} \} \subset \mathcal{P}$ is a set of chosen Pauli correction operators. Syndromes are additive, so if $P_{1}$ has syndrome $\sigma_{1}$ and $P_{2}$ has syndrome $\sigma_{2}$, then $P_{1}P_{2}$ has syndrome $\sigma_{1} + \sigma_{2} \bmod d$.

If $P = R_{\sigma}^{\dagger} S_{P} L_{P}$ acts on a code state $\ket{\psi} \in \mathcal{V}$, one can perform QEC by first measuring all $S_{i} \in \mathcal{S}_{0}$, to obtain the syndrome $\sigma$, then applying the correction operator $R_{\sigma}$. After correction, the system will be in the state $S_{P} L_{P} \ket{\psi} \in \mathcal{V}$. Denoting $\Pi^{S_{i}}_{\sigma_{i}}$ as the projector onto the $\sigma_{i}$-eigenspace of $S_{i}$ and $\Pi^{\mathcal{S}}_{\sigma} = \prod_{i} \Pi^{S_{i}}_{\sigma_{i}}$, QEC may be implemented by the recovery channel
\begin{equation}
    \mathcal{R} = \left\{ R_{\sigma} \Pi_{\sigma}^{\mathcal{S}} \right\}_{\sigma} .
\end{equation}
Note that $\Pi^{\mathcal{S}}_{0}$ projects onto the code space $\mathcal{V}$. For a more detailed discussion of QEC, see Appendix~\ref{qudit-qec}.

If $\mathcal{S}_{0}$ is overcomplete, relations exist between eigenvalues of $S_{i}$ and not all arbitrary $\sigma \in \mathbb{Z}_{d}^{\vert \mathcal{S}_{0} \vert}$ are \textit{valid} syndromes. A syndrome $\sigma$ is \textit{valid} if and only if $\Pi_{\sigma}^{\mathcal{S}} \neq 0$. If $\sigma$ and $\sigma^\prime$ are valid, so is $\sigma + \sigma^\prime$. We will sometimes call valid syndromes \textit{excitations}, as they are associated with excited states of the Hamiltonian Eq.~(\ref{eq:stabilizer-hamiltonian}). Ideal stabilizer measurements will always yield valid syndromes, however measurement errors $\mu$ may lead to invalid observed syndromes $\sigma + \mu$.  In this situation, the measurement error $\mu$ decomposes into components $\mu = \mu_{v} + \mu_{e}$, where $\mu_{v}$ is itself valid and $\mu_{e}$ is invalid. One way to choose a correction operator based on an invalid syndrome $\sigma + \mu$ is to isolate the invalid component, $\mu_{e}$, and add a syndrome correction $\hat{\mu}(\mu_{e})$ which depends only on $\mu_{e}$~\cite{bombin2015singleshot,campbell2019}. This guarantees the corrected syndrome $\sigma + \mu + \hat{\mu}$, is valid and one may apply the correction operator $R_{\sigma + \mu + \hat{\mu}}$.

\subsection{Sufficient properties for single-shot state preparation} \label{self-correction-setup}

Consider a qudit stabilizer code $\mathcal{V}$ which has a geometrically local, LDPC generating set $\mathcal{S}_{0}$ of its stabilizer group, meaning every generator is supported within a ball of finite radius and every qudit is in the support of a finite number of generators. A set of sufficient conditions for such a code to permit single-shot QEC are loosely that (1) syndromes can be written as a sum of connected components, $\sigma = \sum_{c} \sigma_{c}$, where each component $\sigma_{c}$ is itself a valid syndrome, (2) correction operators $\{ R_{\sigma} \}$ may be chosen to correct each component $\sigma_{c}$ separately, and (3) the code has distance which scales with the system size, $d_{\mathcal{V}} = \Omega (n^{a})$ for some $a>0$. Precise conditions follow directly from those given in Ref.~\cite{bombin2015singleshot} for self-correcting codes, and we discuss them in detail in Appendix~\ref{self-corr-conditions}.

Given such a code, we can perform QEC using an $(r,\ell)$ QL circuit, $\Lambda_{\mathrm{EC}}$, which has $r\ell = O(1)$. We will choose $\Lambda_{\mathrm{EC}}$ which first measure each $S_{i} \in \mathcal{S}_{0}$ once, then apply correction operators given the observed noisy syndrome $\sigma + \mu$. The correction operator is determined by choosing a syndrome correction $\hat{\mu}(\mu)$, as discussed in Section~\ref{qudit-codes}, which has minimal weight among all possible corrections which yield a valid syndrome $\sigma + \mu + \hat{\mu}$, then applying $R_{\sigma + \mu + \hat{\mu}} \in \{ R_{\sigma} \}$ to the system. In the noiseless case, $\Lambda_{\mathrm{EC}}$ satisfies the properties of Definition~\ref{def:recovery-channel} and may serve as a recovery channel for $\Sigma_{\mathcal{V}}$.

We consider here a generalization of local stochastic noise, $\mathbb{N}_{\tau,\epsilon}^{\mathrm{exc}}$, which implicitly depends on the code $\mathcal{V}$ and recovery channel $\mathcal{R}$ used. Here, we choose $\mathcal{R} = \Lambda_{\mathrm{EC}}$. Loosely, $\mathbb{N}_{\tau,\epsilon}^{\mathrm{exc}}$ is defined such that for any channel $\mathcal{N} \in \mathbb{N}_{\tau,\epsilon}^{\mathrm{exc}}$, the distribution of syndromes is $\tau$-bounded and the logical error rate is bounded by $\epsilon$, however we explicitly define and discuss properties of $\mathbb{N}_{\tau,\epsilon}^{\mathrm{exc}}$ in Appendix~\ref{exc-noise-properties} and we remark it generalizes Definition~11 of Ref.~\cite{bombin2015singleshot} to general channel noise. Importantly, $\mathbb{N}_{\tau,\epsilon}^{\mathrm{exc}}$ is $\Lambda_{\mathrm{EC}}$-recoverable for $\epsilon = 1/\Omega(\poly(n))$, and together $\mathcal{V}$ and $\Lambda_{\mathrm{EC}}$ support single-shot QEC against $\mathbb{N}_{\tau,\epsilon}^{\mathrm{exc}}$.
\begin{theorem}
    \label{theorem:self-corr-qec}
    Consider a code $\mathcal{V}$ and a circuit $\Lambda_{\mathrm{EC}}$ satisfying the above properties. Then, $\Lambda_{\mathrm{EC}} : \Sigma_{\mathcal{V}} \rightarrow \Sigma_{\mathcal{V}}$ is fault tolerant against the noise class $(\mathbb{N}_{\tau,\epsilon}^{\mathrm{exc}},\mathbb{M}_{\zeta})$ for $\epsilon = 1/\Omega(\poly(n))$.
\end{theorem}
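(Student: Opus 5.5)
Our plan is to verify the two conditions of Definition~\ref{def:ft-channel} for $\Lambda_{\mathrm{EC}}$, following the single-shot argument of Ref.~\cite{bombin2015singleshot} and adapting it to qudits and general channel noise. First, we reduce general channel noise to the Pauli-like setting. Each noise layer $\mathcal{N}_{i} \in \mathbb{N}^{\mathrm{exc}}_{\tau,\epsilon}$ acts before the stabilizer measurements. Because $\Lambda_{\mathrm{EC}}$ measures all $S_{i} \in \mathcal{S}_{0}$ at the start, any Kraus operator $N$ of $\mathcal{N}$ can be expanded in the Pauli basis. After the projection $\Pi^{\mathcal{S}}_{\sigma}$ on a code state, only the components with syndrome $\sigma$ survive. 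Within each syndrome sector, they act as $R_{\sigma}^{\dagger}$ times a logical-class operator. Hence the pre-measurement noise is effectively a mixture over (syndrome, logical class) pairs, whose syndrome distribution is $\tau$-bounded and whose logical part is controlled by $\epsilon$, by the definition of $\mathbb{N}^{\mathrm{exc}}_{\tau,\epsilon}$. Noise acting after the correction layer is left in place as part of the residual channel $\widetilde{\mathcal{N}}$. Noise in the middle of the finite-depth feedback circuit is propagated to the end using Lemma~\ref{lemma:lc-localnoise} with $r\ell=O(1)$.

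The core step concerns the residual syndrome left by measurement errors. With true syndrome $\sigma$ and measurement error $\mu$ ($\zeta$-bounded), the circuit applies $R_{\sigma+\mu+\hat{\mu}}$. The residual error is therefore $R_{\sigma}^{\dagger} R_{\sigma+\mu+\hat{\mu}}$ (times a logical term), which carries the valid syndrome $\nu = \mu + \hat{\mu}$. Using the connected-component property (1), we split $\nu$ into components. Minimality of $\hat{\mu}$ ensures that each component of $\mu+\hat{\mu}$ lies within a cluster of measurement errors, with size comparable to that cluster. A Peierls-type counting of connected clusters on the bounded-degree syndrome graph then shows that the distribution of residual syndromes is $\eta$-bounded, with $\eta = O(\zeta^{c})$ for some $c>0$, provided $\zeta < \zeta^{\star}$. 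By property (2), each component is corrected separately, so the residual Pauli is a product of local pieces. Using property (3) and the same cluster counting, the probability that any piece, or the correction applied to the true syndrome $\sigma$, combines into a nontrivial logical is at most $\epsilon + e^{-\Omega(n^{a'})}$, which is still $1/\Omega(\poly(n))$. Altogether, $\Lambda^{\mathcal{N}}_{\mathrm{EC}}(\rho) \approx \widetilde{\mathcal{N}} \circ \Lambda_{\mathrm{EC}}(\rho)$ with $\widetilde{\mathcal{N}} \in \mathbb{N}^{\mathrm{exc}}_{\eta,\epsilon'}$. Here $\eta(\tau,\zeta)$ is continuous with $\eta(0,0)=0$, which we obtain by summing the contributions from measurement noise and output-layer noise using property 3 of Definition~\ref{def:noise-class}. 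Since $\widetilde{\mathcal{N}}$ does not depend on $\rho$, condition 1 holds.

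Condition 2 then follows from the stated fact that $\mathbb{N}^{\mathrm{exc}}_{\eta,\epsilon'}$ is $\Lambda_{\mathrm{EC}}$-recoverable for $\epsilon' = 1/\Omega(\poly(n))$. We therefore take $\mathcal{R}_{2} = \Lambda_{\mathrm{EC}}$ (noiseless), which is a valid recovery channel for $\Sigma_{\mathcal{V}}$.

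The main obstacle is the combinatorial step controlling the residual syndrome. We need to show that minimal-weight syndrome repair $\hat{\mu}$ cannot create residual excitations much larger than the measurement-error clusters, and that the bound takes the $\lambda$-bounded form required for every subset of syndrome locations. We would first try to adapt the cluster-expansion lemma of Ref.~\cite{bombin2015singleshot}, replacing binary syndromes with $\mathbb{Z}_{d}$-valued ones, since additivity modulo $d$ preserves the component decomposition. A secondary subtlety is the coherent cross terms between different Pauli components of a general Kraus operator. We expect the syndrome measurement to kill these cross terms only across different syndrome sectors, so the terms within a sector must be absorbed into the logical-error budget $\epsilon$.
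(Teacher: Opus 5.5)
Most of your outline follows the paper's proof. The shared steps are: expanding Kraus operators in the Pauli basis, so that the syndrome projection removes cross terms only between different syndrome sectors; identifying the post-correction residual syndrome as the valid syndrome $\omega=\mu+\hat\mu$; and bounding its distribution by a cluster count. As you anticipate at the end, the in-sector coherence is charged wholesale to $\mathrm{fail}(\mathcal{N})\le\epsilon$, so the noise is not literally a mixture over logical classes.

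The count needs a precise density condition, not ``$\omega_c$ lies inside a cluster of $\mu$''. The condition is $|\supp\mu\cap\supp\omega_c|\ge\frac12|\supp\omega_c|$ for every connected component $\omega_c$ of $\omega$. It holds because $\omega-\omega_c$ is valid by property 1. Hence $\hat\mu-\omega_c$ is a competing repair, and minimality gives $|\supp\hat\mu\cap\supp\omega_c|\le|\supp\mu\cap\supp\omega_c|$. Lemma~\ref{lemma:tau-bounded-proof} with $k=1/2$ then yields $\eta=(\zeta/\zeta_0)^{1/2}/\bigl(1-(\zeta/\zeta_0)^{1/2}\bigr)$, with $\zeta_0=(2ez_\Gamma)^{-2}$.

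\textbf{Gap: the logical-failure bound.} After noisy correction and the ideal final recovery, a Pauli component $P$ of the noise with syndrome $\sigma$ ends up as $R_\omega^\dagger R_{\sigma+\omega}R_\sigma^\dagger\,S_PL_P$. The factor $L_P$ is covered by $\epsilon$. The prefactor, equal up to a phase to $R_{\sigma+\omega}R_\sigma^\dagger R_\omega^\dagger$, is a statement purely about the chosen corrections on $\supp\sigma\cup\supp\omega$.

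You control it with the code distance and ``local pieces''. That does not work, for three reasons:
\begin{enumerate}
\item Property 2 does not make the pieces local.
\item Distance constrains the code, not the choice of $\{R_\sigma\}$; nothing prevents corrections for nearby small components from differing by a logical once the components merge.
\item $\mathbb{N}^{\mathrm{exc}}_{\tau,\epsilon}$ controls only syndrome supports, never qudit supports, so distance cannot be applied to the errors themselves.
\end{enumerate}

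The missing hypothesis is property 4 of Appendix~\ref{self-corr-conditions}: if every connected component of $\supp\sigma\cup\supp\kappa$ has at most $cn^t$ elements, then $R_{\sigma+\kappa}R_\sigma^\dagger R_\kappa^\dagger\in\mathcal{S}$. Proposition~\ref{proposition:syndrome-supports} derives this from geometrically local corrections. The paper then proceeds as follows:
\begin{itemize}
\item It writes $\mathrm{fail}(\mathcal{F})=\mathrm{fail}(\overline{\mathcal{F}}^{*}\circ\mathcal{N})$.
\item Lemma~\ref{lemma:pauli-composition} splits this into $\mathrm{fail}(\mathcal{N})\le\epsilon$, $\mathrm{fail}(\overline{\mathcal{F}}^{*})=0$, and a term that property 4 confines to $\supp\sigma\cup\supp\omega$ containing a connected set of size $cn^t$ on $\Gamma$.
\item The joint syndrome support is $(\tau+\eta)$-bounded by Proposition~\ref{proposition:bounded-composition}, so that term is at most $\delta=\frac{|\mathcal{S}_0|}{ez_\Gamma}\bigl(ez_\Gamma(\tau+\eta)\bigr)^{cn^t}$.
\end{itemize}
This is also where the threshold $\tau+\eta<(ez_\Gamma)^{-1}$ on qudit noise enters, which your sketch leaves unspecified.

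\textbf{Minor point.} Lemma~\ref{lemma:lc-localnoise} cannot push noise through the feedback. It assumes $\mathbb{L}_\tau$ noise, and $\mathbb{N}^{\mathrm{exc}}_{\tau,\epsilon}$ is not contained in any $\mathbb{L}_\lambda$: applying a fixed large-support stabilizer with probability one lies in $\mathbb{N}^{\mathrm{exc}}_{\tau,0}$. The lemma is also unnecessary. The feedback is a single layer of Pauli corrections, and conjugation by a Pauli only rephases Pauli components, preserving syndrome supports and logical content.
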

We prove Theorem~\ref{theorem:self-corr-qec} in Appendix~\ref{self-corr-proof}, which directly generalizes Theorem 14 of Ref.~\cite{bombin2015singleshot} from qubits to qudits and from Pauli noise to general channel noise.

If $\mathcal{V}$ is a CSS code and $\mathcal{S}_{0} = \mathcal{S}^{X}_{0} \oplus \mathcal{S}^{Z}_{0}$ decomposes into $X$-type generators $\mathcal{S}^{X}_{0}$ and $Z$-type generators $\mathcal{S}^{Z}_{0}$, then Theorem~\ref{theorem:self-corr-qec} may hold for correction of $Z$-type Pauli channels only, $\mathcal{N}^{Z} \in \mathbb{N}_{\tau,\epsilon}^{\mathrm{exc},Z} \subset \mathbb{N}_{\tau,\epsilon}^{\mathrm{exc}}$ or of $X$-type Pauli channels only, $\mathcal{N}^{X} \in \mathbb{N}_{\tau,\epsilon}^{\mathrm{exc},X} \subset \mathbb{N}_{\tau,\epsilon}^{\mathrm{exc}}$. For the codes discussed here, this allows one to fault tolerantly prepare the encoded state $\dyad{\overline{0}}{\overline{0}}^{\otimes k}$ from $\dyad{0}{0}^{\otimes n}$ or $\dyad{\overline{+}}{\overline{+}}^{\otimes k}$ from $\dyad{+}{+}^{\otimes n}$ in finite depth. Note that the ability to perform single-shot QEC does not always allow one to perform single-shot state preparation. This is because some codes rely on errors having sufficiently low weight or for high weight errors to occur with sufficiently low probability to perform single-shot QEC~\cite{quintavalle2021}, and this assumption does not apply to state preparation. Other codes, such as those discussed here, only depend on properties of syndromes and measurement errors~\cite{campbell2019,quintavalle2021}.

Here, the ability to perform single-shot state preparation from a product state is synonymous with FTQL state equivalence between $\dyad{\overline{0}}{\overline{0}}^{\otimes k}$ and $\dyad{0}{0}^{\otimes n}$ (or between $\dyad{\overline{+}}{\overline{+}}^{\otimes k}$ and $\dyad{+}{+}^{\otimes n}$). This is because one can fault tolerantly prepare $\dyad{0}{0}^{\otimes n}$ ($\dyad{+}{+}^{\otimes n}$) from any arbitrary state by measuring $Z_{q}$ ($X_{q}$) at each site and applying $X_{q}$ ($Z_{q}$) dependent on the measurement outcome.

\begin{theorem}
    \label{theorem:self-corr-state-prep}
    Suppose $\mathcal{V}$ is a CSS code which satisfies the above properties and supports single-shot QEC according to Theorem~\ref{theorem:self-corr-qec}, for $Z$-type Pauli channels with QEC circuit $\Lambda_{\mathrm{EC}}^{X}$. Then the states $\dyad{\overline{0}}{\overline{0}}^{\otimes k}$ and $\dyad{0}{0}^{\otimes n}$ are in the same FTQL phase with respect to $(\mathbb{N}_{\tau,\epsilon}^{\mathrm{exc}},\mathbb{M}_{\zeta})$. Similarly, if $\mathcal{V}$ supports single-shot QEC for $X$-type Pauli channels with QEC circuit $\Lambda_{\mathrm{EC}}^{Z}$, then $\dyad{\overline{+}}{\overline{+}}^{\otimes k}$ and $\dyad{+}{+}^{\otimes n}$ are in the same FTQL phase with respect to $(\mathbb{N}_{\tau,\epsilon}^{\mathrm{exc}},\mathbb{M}_{\zeta})$.
\end{theorem}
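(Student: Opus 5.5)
We prove the first claim; the second follows by swapping $X \leftrightarrow Z$, $\ket{0} \leftrightarrow \ket{+}$ and $\Lambda_{\mathrm{EC}}^{X} \leftrightarrow \Lambda_{\mathrm{EC}}^{Z}$. Since both sets are singletons, $\Sigma_{1} = \{\dyad{0}{0}^{\otimes n}\}$ and $\Sigma_{2} = \{\dyad{\overline{0}}{\overline{0}}^{\otimes k}\}$, Section~\ref{ft-state-prep} shows that the trivial recovery channels $\mathcal{R}_{i}(\cdot) = \rho_{i} \Tr[\cdot]$ make property 2 of Definition~\ref{def:ft-channel} automatic. Likewise, the reversal condition Eq.~(\ref{eq:decoding-equivalence}) is trivial because $\Omega$ has a single element. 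It therefore suffices to exhibit two finite-depth QL circuits $\Lambda_{1} : \Sigma_{1} \rightarrow \Sigma_{2}$ and $\Lambda_{2} : \Sigma_{2} \rightarrow \Sigma_{1}$ satisfying property 1, namely $\Lambda_{i}^{\mathcal{N}}(\rho_{i}) \approx \widetilde{\mathcal{N}} \circ \Lambda_{i}(\rho_{i})$ for some $\widetilde{\mathcal{N}} \in \mathbb{N}^{\mathrm{exc}}_{\eta,\epsilon}$ with $\eta(\tau,\zeta)$ continuous and $\eta(0,0)=0$.

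For $\Lambda_{2}$, take the single-qudit circuit that measures $Z_{q}$ on every site and applies $X_{q}^{-s_{q}}$ conditioned on the reported outcome $s_{q}$. Qudit noise before the measurement layer is irrelevant, since the ideal circuit outputs $\dyad{0}{0}^{\otimes n}$ for any input. A measurement error on site $q$ produces the residual $X_{q}^{-\mu_{q}}$, and qudit noise after the feedback layer is already residual noise. The residual channel is therefore a composition of a $\zeta$-bounded local stochastic $X$-type channel with a channel in $\mathbb{N}_{\tau}$. By properties 2 and 3 of Definition~\ref{def:noise-class}, after absorbing the constant from property 2, this lies in $\mathbb{N}^{\mathrm{exc}}_{\eta}$ with $\eta = O(\tau + \zeta)$. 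The circuit has depth $2$.

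For $\Lambda_{1}$, take $\Lambda_{\mathrm{EC}}^{X}$ itself, which measures all $X$-type generators $\mathcal{S}^{X}_{0}$ once and applies the $Z$-type correction $R_{\sigma+\mu+\hat{\mu}}$. The key observation is that $\dyad{0}{0}^{\otimes n}$ lies in the $+1$ eigenspace of every $Z$-type stabilizer and of every $\bar{Z}$ logical. Decomposing $\dyad{0}{0}^{\otimes n}$ in the joint eigenbasis of $\mathcal{S}^{X}$ writes it as a uniform mixture, after dephasing by the $X$-stabilizer measurements, of states $R_{\sigma}^{\dagger}\dyad{\overline{0}}{\overline{0}}^{\otimes k}R_{\sigma}$ with $R_{\sigma}$ a $Z$-type Pauli. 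Equivalently, the post-measurement state is $\dyad{\overline{0}}{\overline{0}}^{\otimes k}$ acted on by a $Z$-type Pauli channel of possibly huge weight, but with a definite syndrome distribution. This $Z$-type channel commutes with the $\bar{Z}$ logicals, so it acts trivially on the logical $\ket{\overline{0}}^{\otimes k}$ regardless of which coset correction is chosen. Consequently the logical-error condition in $\mathbb{N}^{\mathrm{exc},Z}$ is vacuous, and only the syndrome structure matters.

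We would then rerun the proof of Theorem~\ref{theorem:self-corr-qec} (Appendix~\ref{self-corr-proof}) with the noiseless ideal input replaced by this mixture. Measurement errors $\mu$ are cleaned by the minimal-weight $\hat{\mu}(\mu_{e})$. By the connected-component properties (1)--(2), the effective residual error is $R_{\sigma+\mu+\hat{\mu}}R_{\sigma}^{\dagger}$, which depends only on $\mu_{v}+\hat{\mu}$ and whose syndrome distribution is $\eta$-bounded in terms of $\zeta$. The $\mathbb{N}_{\tau}$ noise before and after the circuit is propagated through the finite-depth measurement and feedback layers as in Lemma~\ref{lemma:lc-localnoise}.

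The main obstacle is checking that the residual channel lies in $\mathbb{N}^{\mathrm{exc}}_{\eta,\epsilon}$ with $\epsilon = 1/\Omega(\poly(n))$. Its syndrome part follows from the single-shot analysis, and the $Z$-type component is logically harmless on $\ket{\overline{0}}^{\otimes k}$. However, the $X$-type components of the input qudit noise, such as non-Pauli general channel noise, must also be shown to remain of bounded syndrome weight; we would handle these by the same twirling decomposition used in Appendix~\ref{exc-noise-properties}. Once this is established, Definition~\ref{def:ft-phase-equivalence} is satisfied, with depth $O(1)$, critical strengths inherited from Theorem~\ref{theorem:self-corr-qec}, and the assumption $\epsilon = 1/\Omega(\poly(n))$.
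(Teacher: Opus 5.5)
Your framing and choice of circuits agree with the paper's. For singleton sets, property~2 of Definition~\ref{def:ft-channel} and Eq.~(\ref{eq:decoding-equivalence}) are automatic. The reverse direction is the measure-$Z_q$-and-reset circuit. The forward direction is $\Lambda_{\mathrm{EC}}^{X}$ applied to $\dyad{0}{0}^{\otimes n}$, using $\Pi^{X}_{\sigma}\ket{0}^{\otimes n}\propto R^{Z\dagger}_{\sigma}\ket{\overline{0}}^{\otimes k}$ and the fact that $R^{Z}_{\sigma+\omega}R^{Z\dagger}_{\sigma}$ acts on $\ket{\overline{0}}^{\otimes k}$ as $R^{Z}_{\omega}$, with $\omega=\mu+\hat{\mu}$ being $\eta$-bounded by the single-shot analysis.

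However, the step you defer as the main obstacle is exactly what this theorem adds to Theorem~\ref{theorem:self-corr-qec}. You must show that general channel noise $\mathcal{N}\in\mathbb{N}^{\mathrm{exc}}_{\tau,\epsilon}$, acting on the non-code state $\dyad{0}{0}^{\otimes n}$ \emph{before} the $X$-stabilizer measurement, yields a residual channel on $\dyad{\overline{0}}{\overline{0}}^{\otimes k}$ lying in $\mathbb{N}^{\mathrm{exc}}_{\tau+\eta,\epsilon}$. The tools you name for this do not apply. Lemma~\ref{lemma:lc-localnoise} concerns locally reversible LC circuits and the class $\mathbb{L}_{\tau}$. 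Here the feedback is a globally conditioned Pauli of possibly unbounded weight, and $\mathbb{N}^{\mathrm{exc}}$ is defined through syndrome supports. Also, Appendix~\ref{exc-noise-properties} contains no twirling step, and none is needed. Rerunning the proof of Theorem~\ref{theorem:self-corr-qec} on the dephased mixture $d^{-r_X}\sum_{\sigma}R_{\sigma}^{\dagger}\dyad{\overline{0}}{\overline{0}}^{\otimes k}R_{\sigma}$ is not justified either. That proof uses $\Pi^{\mathcal{S}}_{\sigma}\rho=\delta_{\sigma,0}\rho$ for inputs in $\Sigma_{\mathcal{V}}$. Physically, the noise acts on the coherent state $\ket{0}^{\otimes n}$ before the dephasing, and for Kraus operators mixing $X$- and $Z$-parts the two orders can give different outputs.

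The missing idea is short. Write $K_{\alpha}=\sum_{P}c^{K_{\alpha}}_{P}X_{P}Z_{P}$.
\begin{enumerate}
\item Since $Z_{P}\ket{0}^{\otimes n}=\ket{0}^{\otimes n}$, only the $X$-parts act on the input.
\item The $X$-parts commute with every $\Pi^{X}_{\sigma}$ and pass through the $Z$-type correction $R^{Z}_{\sigma+\mu}=R^{Z}_{\omega}L^{Z}S^{Z}R^{Z}_{\sigma}$, picking up only phases. The factor $L^{Z}S^{Z}$ is absorbed into $\ket{\overline{0}}^{\otimes k}$.
\item Hence the noisy output equals $\widetilde{\mathcal{N}}(\dyad{\overline{0}}{\overline{0}}^{\otimes k})$, where $\widetilde{\mathcal{N}}$ has Kraus operators $d^{-r_X/2}R^{Z}_{\omega}\widetilde{K}\,\Pi^{\mathcal{S}}_{0}$, completed on $I-\Pi^{\mathcal{S}}_{0}$. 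Here $\widetilde{K}$ is $K_{\alpha}$ with its $Z$-parts deleted and rephased.
\item The syndrome support of each $(\alpha,\mu)$ component lies in $\supp\omega\cup\supp\partial\Phi_{\alpha}$, so Proposition~\ref{proposition:bounded-composition} gives $(\tau+\eta)$-boundedness.
\item Deleting $Z$-parts cannot create logical content, so $\mathrm{fail}(\widetilde{\mathcal{N}})\le\mathrm{fail}(\mathcal{N})\le\epsilon$.
\end{enumerate}

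This also corrects your claim that the logical-error condition is ``vacuous''. $\widetilde{\mathcal{N}}$ must still have small failure rate on the whole code space. That holds only because the $Z$-logicals are absorbed into the state when $\widetilde{\mathcal{N}}$ is chosen.

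A minor point on the reverse direction: $\mathbb{L}_{\zeta}\subseteq\mathbb{N}^{\mathrm{exc}}_{v_{q}\zeta^{1/v_{s}},f(\zeta;n)}$, so $\eta$ is not $O(\tau+\zeta)$. It does remain continuous with $\eta(0,0)=0$.
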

We prove Theorem~\ref{theorem:self-corr-state-prep} in Appendix~\ref{self-corr-prep-proof}. Here it is important to distinguish between state equivalence and convex set equivalence. This is because $\Sigma_{\mathcal{V}}$ is stable and in a nontrivial FTQL phase, however individual states $\ket{\overline{0}}^{\otimes k}$ or $\ket{\overline{+}}^{\otimes k}$ may be equivalent to product states as no encoded information is transferred in state preparation.

\subsection{Robust state preparation in \textit{D}-dimensional qudit toric codes} \label{q-form-tc}

One family of codes which permit single-shot QEC and state preparation according to Theorems~\ref{theorem:self-corr-qec} and~\ref{theorem:self-corr-state-prep} are $D$-dimensional qudit toric codes. We denote these codes $(D,q)$ $\mathbb{Z}_{d}$ toric codes, where $D$ is the spatial dimension, $d$ is the local qudit dimension, and $q$ specifies the structure of excitations in the system~\cite{bombin2013,schulz2012,hastings2014percolation,anwar2014,watanabe2023}. Consider a $D$-dimensional lattice $\mathcal{J}$. Assign an orientation to each $p$-cell in $\Delta_p$, $p \in [0,D]$. Let $\Omega_p$ be a vector space with basis $\Delta_p$. Define an inner product on $\Omega_p$ where $\langle \varphi , \varphi^\prime \rangle_p = \delta_{\varphi,\varphi^\prime}$ for $\varphi, \varphi^\prime \in \Delta_p$, a boundary map $\partial_{p} : \Omega_{p} \rightarrow \Omega_{p-1}$, where $\partial_{p} \vartheta$ takes a $p$-cell $\vartheta$ to a linear combination of $(p-1)$ cells in its boundary with coefficients given by their orientation relative to $\vartheta$, and its adjoint with respect to this inner product $\partial^{\dagger}_{p} : \Omega_{p-1} \rightarrow \Omega_{p}$, where $\partial^{\dagger}_{p}\varphi$ picks out the $p$-cells whose boundary the $(p-1)$-cell $\varphi$ appears in with coefficient given by their orientation relative to $\varphi$. Here, $\partial_{p-1}\partial_{p} \vartheta = 0$ always.

To construct a $(D,q)$ $\mathbb{Z}_{d}$ toric code $\mathcal{V}$, place a $d$-dimensional qudit on each $q$-cell. Associate $Z$-type stabilizer generators $A_{s}$ with each $(q-1)$-cell $s \in \Delta_{q-1}$ and $X$-type stabilizer generators $B_{m}$ with each $(q+1)$-cell $m \in \Delta_{q+1}$, defined as
\begin{equation}
    A_{s} = \prod_{k \varphi \in \partial^{\dagger}_{q} s} Z_{\varphi}^{k} \quad , \quad B_{m} = \prod_{k \varphi \in \partial_{q+1} m} X_{\varphi}^{k} .
\end{equation}
All $A_{s}$ and $B_{m}$ commute with each other, so $\mathcal{V}$ has stabilizer group $\mathcal{S} = \left\langle A_{s} , B_{m} \right\rangle$, and code states in $\Sigma_{\mathcal{V}}$ are mutual $+1$ eigenstates of all $A_{s}$ and $B_{m}$.

For $D \geq 4$ and $q \in [2,D-2]$, $(D,q)$ $\mathbb{Z}_{d}$ toric codes permit both single-shot error correction and single-shot state preparation, and for $D \geq 2$ and either $q \geq 2$ or $q \leq D-2$, they permit single-shot error correction of either $X$-type noise or $Z$-type noise, respectively, and single-shot state preparation of either $\dyad{\overline{0}}{\overline{0}}^{\otimes k}$ or $\dyad{\overline{+}}{\overline{+}}^{\otimes k}$. We summarize these results with the following corollaries, which we prove in Appendix~\ref{tc-single-shot-proof}.

\begin{corollary}
    \label{cor:tc-single-shot}
    Suppose $\mathcal{J}$ is a regular lattice which tiles a manifold in $D$ spatial dimensions and has minimum linear size $L = \Omega(n^{1/D})$. Consider a $(D,q)$ $\mathbb{Z}_{d}$ toric code $\mathcal{V}$ on $\mathcal{J}$, where $D \geq 4$. If $q \in [2,D-2]$, then Theorems~\ref{theorem:self-corr-qec} and~\ref{theorem:self-corr-state-prep} both apply to $\mathcal{V}$, meaning it exhibits both single-shot QEC and single-shot state preparation.
\end{corollary}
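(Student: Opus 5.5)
The plan is to verify that the $(D,q)$ $\mathbb{Z}_{d}$ toric code with $D \geq 4$ and $q \in [2,D-2]$ satisfies the three sufficient conditions of Section~\ref{self-correction-setup}, separately for the $Z$-type and $X$-type sectors. Once both sectors are checked, Theorem~\ref{theorem:self-corr-qec} yields single-shot QEC, since a general Pauli channel factorizes into the two sectors. Theorem~\ref{theorem:self-corr-state-prep} then yields single-shot preparation of both $\dyad{\overline{0}}{\overline{0}}^{\otimes k}$ and $\dyad{\overline{+}}{\overline{+}}^{\otimes k}$.

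The two sectors are related by lattice duality. Passing to the dual lattice exchanges $p$-cells with $(D-p)$-cells, maps the $(D,q)$ code to the $(D,D-q)$ code, and swaps $X \leftrightarrow Z$. Since $q \in [2,D-2]$ if and only if $D-q \in [2,D-2]$, it suffices to handle one sector; I will take $X$-type errors detected by the $A_{s}$.

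For $X$-type errors, an error $X^{c}$ with $c \in \Omega_{q}$ has syndrome $\partial_{q} c \in \Omega_{q-1}$. Valid syndromes are exactly the $\mathbb{Z}_{d}$-boundaries, and since $\partial_{q-1}\partial_{q}=0$ they are in particular $(q-1)$-cycles. Because $q \geq 2$, these cycles are extended objects of dimension at least $1$. The redundancy in $\mathcal{S}_{0}$ is given by the local checks $\partial_{q-1}$ acting on the syndrome, one for each $(q-2)$-cell. A measurement error $\mu$ therefore produces an invalid component detected by the local checks $\partial_{q-1}\mu$. The syndrome correction $\hat{\mu}$ is a minimal filling with $\partial_{q-1}\hat{\mu} = -\partial_{q-1}\mu$.

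I then check the three conditions in turn.
\begin{enumerate}
\item Components: the connected components (in the adjacency graph of $(q-1)$-cells) of a valid cycle are themselves cycles. On a manifold they are boundaries as long as their size is below the injectivity scale of $\mathcal{J}$, which is $\Theta(L)$.
\item Local correction: $R_{\sigma}$ is chosen componentwise as a minimal filling chain $c$ with $\partial_{q} c = \sigma_{c}$, supported near the component.
\item Distance: the $X$-type logicals are nontrivial $q$-cycles, of weight $\Omega(L^{q})$, and the $Z$-type logicals are nontrivial cocycles, of weight $\Omega(L^{D-q})$. With $L=\Omega(n^{1/D})$ this gives $d_{\mathcal{V}} = \Omega(n^{a})$.
\end{enumerate}
The precise conditions of Appendix~\ref{self-corr-conditions}, following Ref.~\cite{bombin2015singleshot}, are stronger. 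They require linear confinement: the weight of a minimal correction is bounded linearly by the size of the syndrome component. They also require the analogous linear bound for syndrome corrections, $\vert\hat{\mu}\vert = O(\vert\mu\vert)$ componentwise.

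The main obstacle is this pair of linear isoperimetric bounds, uniformly in $n$ and over $\mathbb{Z}_{d}$ coefficients. The first bound needs $q-1 \geq 1$; in that case a $(q-1)$-boundary $\sigma$ of size $m$ admits a filling of size $O(m)$. It fails for $q=1$, where the syndrome is pointlike and two far-apart points have a filling whose size grows with their separation rather than with $m$. The second bound needs $q \leq D-1$ on the syndrome-check level, and its dual version needs $q \geq 1$, which together explain the range $[2,D-2]$.

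My first approach would be a local cone construction: subdivide the cycle into pieces of bounded diameter and cone each piece off within a ball. Homology of a ball vanishes in positive degree, so this gives fillings of size $O(m)$ with constants depending only on $D$ and the lattice geometry. The $\mathbb{Z}_{d}$ coefficients cause no trouble because coning is defined integrally and reduces mod $d$. I would use Ref.~\cite{bombin2013} for the lattice-level version of this estimate and confirm that nothing in it depends on $d=2$.
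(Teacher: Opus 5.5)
There is a genuine gap, and it is at the step you call the main obstacle. The linear filling bound is false for $q \geq 2$. In the $(4,2)$ code, a planar square loop syndrome of perimeter $m$ on the edges bounds only $2$-chains with $\Theta(m^{2})$ faces. A linking argument, valid also mod $d$, shows every filling must meet a translate of each face of the flat square. More generally, a round $(q-1)$-cycle of size $m$ needs fillings of size of order $m^{q/(q-1)}$.

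Your cone construction cannot beat this. Bounded-diameter pieces of a cycle are not cycles, so they cannot be filled independently inside balls. Coning a whole component from a single apex costs $O(m\cdot\mathrm{diam})$.

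Fortunately, neither linear bound is a hypothesis of Theorems~\ref{theorem:self-corr-qec} and~\ref{theorem:self-corr-state-prep}. The control on $\hat{\mu}$ comes from minimality alone, inside the proof of Theorem~\ref{theorem:self-corr-qec}. Each connected component $\omega_c$ of $\omega=\mu+\hat{\mu}$ is valid by property 1, so $\hat{\mu}-\omega_c$ is also an admissible syndrome correction. Minimality then forces $\vert\supp\mu\cap(\supp\omega)_c\vert\geq\frac{1}{2}\vert(\supp\omega)_c\vert$, which is all Lemma~\ref{lemma:tau-bounded-proof} needs. No isoperimetric inequality is used anywhere.

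What you actually must verify are properties 1--4 of Appendix~\ref{self-corr-conditions}, and your checklist misses two of them.

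\emph{Property 3.} It requires that a connected error $P$ with $\wt P\leq bn^{s}$ satisfies $R_{\partial P}P\in\mathcal{S}$. This constrains the chosen corrections and is not implied by distance alone. The paper takes each $R_{\sigma_c}$ as geometrically local as possible. Then, for $P$ extending at most $L/2$ in every direction, $R_{\partial P}P$ is a closed object in a contractible region, hence a stabilizer.

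\emph{Property 4.} It requires $R_{\sigma+\kappa}R_{\sigma}^{\dagger}R_{\kappa}^{\dagger}\in\mathcal{S}$ whenever every connected component of $\supp\sigma\cup\supp\kappa$ has at most $cn^{t}$ elements. It is absent from your plan, yet it is what controls $\delta$ in Eq.~(\ref{eq:fail-composition-bound}). It follows from Proposition~\ref{proposition:syndrome-supports}, once you know the following. A closed $(q-1)$-object on the primal lattice (or $(D-q-1)$-object on the dual lattice) inside a contractible hypercube $\Upsilon$ of linear size $O(n^{1/D})$ is the boundary of a chain supported in $\Upsilon$. Your coning idea proves exactly this localization statement when applied to a whole component rather than to pieces. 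The weight of the resulting filling is irrelevant.

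\emph{Where the range comes from.} The range $q\in[2,D-2]$ enters through property 1, not through filling estimates. For $q\geq2$, validity (up to global topology) is the local condition $\partial_{q-1}\sigma=0$, which every connected component inherits; dually the same holds for $q\leq D-2$. For $q=1$, validity is global charge neutrality, and an isolated point is not a valid syndrome.
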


\begin{corollary}
    \label{cor:ghz-single-shot}
    Suppose $\mathcal{J}$ is a regular lattice which tiles a manifold in $D$ spatial dimensions and has minimum linear size $L = \Omega(n^{1/D})$. Consider a $(D,q)$ $\mathbb{Z}_{d}$ toric code $\mathcal{V}$ on $\mathcal{J}$, where $D \geq 2$. If either $q \geq 2$ and only Pauli $X$ errors occur, or $q \leq D-2$ and only Pauli $Z$ errors occur, then Theorems~\ref{theorem:self-corr-qec} and~\ref{theorem:self-corr-state-prep} both apply to $\mathcal{V}$.
\end{corollary}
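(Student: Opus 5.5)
The plan is to reduce Corollary~\ref{cor:ghz-single-shot} to Theorems~\ref{theorem:self-corr-qec} and~\ref{theorem:self-corr-state-prep}. For the restricted noise type, this means checking the three sufficient conditions of Section~\ref{self-correction-setup} (made precise in Appendix~\ref{self-corr-conditions}): a geometrically local LDPC generating set, a syndrome decomposition into valid connected components that can be corrected independently, and polynomial distance against that noise type. The two cases are exchanged by the duality $q \leftrightarrow D-q$ on the dual lattice, which swaps $X \leftrightarrow Z$. So I will treat Pauli $X$ errors with $q \geq 2$ in detail and obtain the $Z$ case with $q \leq D-2$ by duality.

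\textbf{Syndrome structure.} A $\mathbb{Z}_d$ $X$-type error is a $q$-chain $c \in \Omega_q \otimes \mathbb{Z}_d$. Its syndrome, measured by the $A_s$ on $(q-1)$-cells, is $\partial_q c$. I will use an overcomplete generating set, namely all $A_s$, so that the syndrome satisfies the relation $\partial_{q-1}(\partial_q c) = 0$. A syndrome is then valid exactly when it is a $(q-1)$-cycle that is a boundary. On a manifold, local validity (being a cycle) is what single-shot decoding needs, with global homology handled by the distance condition.

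\textbf{Decomposition and correction.} For $q \geq 2$ the syndromes are $(q-1)$-dimensional cycles of dimension $\geq 1$, that is, extended objects. Measurement errors $\mu$ break these cycles, and $\partial_{q-1}\mu$ flags the invalid component. Decompose the cycle into connected components $\sigma_c$. Each component is itself a cycle, and it is a boundary whenever its diameter is below $L$, since a small cycle is contractible. Choose $R_{\sigma_c}$ to be a minimal-filling $q$-chain supported near $\sigma_c$. The step I expect to need real care is the \emph{confinement} property required by Ref.~\cite{bombin2015singleshot}, namely that the filling has weight and diameter polynomially bounded by the size of the component. For $\mathbb{Z}_d$ chains on a regular lattice I would invoke an isoperimetric (cone) inequality: the weight of the filling is $O(|\sigma_c|^{q/(q-1)})$, and its diameter is $O(\mathrm{diam}\,\sigma_c)$, obtained by coning the cycle from a point. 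This is the main obstacle, because it must hold for every $D \geq 2$ and must match the exact quantitative form in Appendix~\ref{self-corr-conditions}. In the edge case $q = D$, the syndrome lives on $(D-1)$-cells and the argument is the same.

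\textbf{Distance, and why $X$ only.} The minimum-weight $X$ logical is a nontrivial $q$-cycle of weight $\Theta(L^{q}) = \Omega(n^{q/D})$, which is polynomial in $n$. Stability therefore holds for $X$ noise, and Theorem~\ref{theorem:self-corr-qec} applies to $\mathbb{N}^{\mathrm{exc},X}_{\tau,\epsilon}$ with $\epsilon = 1/\Omega(\poly(n))$. The restriction to $X$ errors is essential here: the $Z$ syndromes live on $(q+1)$-cells, and they are pointlike when $q = D-1$.

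\textbf{State preparation.} Theorem~\ref{theorem:self-corr-state-prep} then gives FTQL equivalence of $\dyad{\overline{+}}{\overline{+}}^{\otimes k}$ with $\dyad{+}{+}^{\otimes n}$. The reason is that $\ket{+}^{\otimes n}$ is already in the $+1$ eigenspace of every $B_m$ and of the $X$ logicals, so measuring the $A_s$ and correcting with $X$-type operators is exactly $X$-type single-shot QEC.

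\textbf{Dual case.} Applying the duality gives the $Z$ case with $q \leq D-2$ and $\dyad{\overline{0}}{\overline{0}}^{\otimes k}$ from $\dyad{0}{0}^{\otimes n}$.

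This labeling is the reverse of the one the corollary's preamble states: there, $X$ noise is paired with $\dyad{\overline{0}}{\overline{0}}^{\otimes k}$. When writing out the proof, I will verify which logical state the $X$-type QEC actually stabilizes and align the labels with Theorem~\ref{theorem:self-corr-state-prep}.
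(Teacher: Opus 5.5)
Your proposal is correct and follows essentially the paper's argument. The paper's one-line proof reruns the verification of the Appendix~\ref{self-corr-conditions} properties from Corollary~\ref{cor:tc-single-shot}, but only on the relevant half of the CSS code: the $A_s$ graph $\Gamma^{Z}$ for $X$ errors when $q\geq 2$, and the $B_m$ graph $\Gamma^{X}$ for $Z$ errors when $q\leq D-2$. Your duality shortcut and cone filling make explicit what Proposition~\ref{proposition:syndrome-supports} needs, namely that a cycle inside a contractible hypercube bounds a chain inside it. For the paper's geometrically local choice of corrections, only this containment is required, and coning actually gives weight $O(|\sigma_c|\,\mathrm{diam}\,\sigma_c)$ rather than the isoperimetric exponent you attribute to it. Your pairing of $X$ noise with $\ket{\overline{+}}^{\otimes k}$ is the right one: it agrees with Theorem~\ref{theorem:self-corr-state-prep}, the GHZ remark, and Table~\ref{table:state-prep-models}, so the ``respectively'' in the preamble is simply loose.
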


Corollary~\ref{cor:tc-single-shot} applies to the $(4,2)$ $\mathbb{Z}_{d}$ toric code, which is the well-known 4D toric code with looplike syndromes~\cite{dennis2002}, as well as various other toric codes in $D>4$. In $D=3$, Corollary~\ref{cor:ghz-single-shot} applies to error correction of $Z$ errors and state preparation of $\dyad{\overline{0}}{\overline{0}}^{\otimes k}$, or to error correction of $X$ errors and state preparation of $\dyad{\overline{+}}{\overline{+}}^{\otimes k}$, but not both. In $D \geq 2$, the $(D,D)$ $\mathbb{Z}_{2}$ toric code is a classical repetition code and the $\dyad{\overline{+}}{\overline{+}}^{\otimes k}$ state of this code is the $D$ dimensional GHZ state. Therefore Corollary~\ref{cor:ghz-single-shot} implies that the GHZ state in $D \geq 2$ can be prepared fault tolerantly with respect to $X$-type noise in finite depth.

Together, Corollaries~\ref{cor:ldpc-code-nontrivial},~\ref{cor:tc-single-shot}, and~\ref{cor:ghz-single-shot} exemplify the difference between fault tolerant state preparation and fault tolerant information transfer. Since $(D,q)$ $\mathbb{Z}_{d}$ toric codes, and more generally the code families discussed in Section~\ref{self-correction-setup}, are LDPC and therefore stable, Corollary~\ref{cor:ldpc-code-nontrivial} implies their code spaces $\Sigma_{\mathcal{V}}$ are in different phases than convex sets of unencoded qudits. However, Corollaries~\ref{cor:tc-single-shot} and~\ref{cor:ghz-single-shot} demonstrate one can fault tolerantly prepare specific code states from product states. This is because there is no transfer of encoded information when preparing a specific state and therefore no information can be corrupted, as discussed in Section~\ref{ft-state-prep}.

\section{Non-robust quantum-local circuits} \label{non-robust-circuits}

In this section, we show that syndrome structure, specifically having pointlike syndromes or slight generalizations, can prohibit fault tolerance in certain classes of measurement-feedback circuits which have range at most $O(\log\log(n))$. We will study both stabilizer measurements in qudit codes and the more general case of measuring commuting terms of a frustration-free Hamiltonian with Abelian excitations, such as the Hamiltonian of a quantum double model~\cite{kitaev2003} or string-net model~\cite{levin2005}. Qualitatively, the reason pointlike syndromes can prohibit fault tolerance is because low weight measurement errors can mimic the syndrome of an arbitrarily high weight operator acting on the system. This is illustrated in Figure~\ref{fig:no-ft-strings} for the case of a stabilizer code, where the syndrome of a stringlike Pauli error $P$ has weight independent of $\wt P$. When performing QEC or state preparation, a measurement error which looks like this syndrome may cause one to falsely attempt to correct $P$ and instead apply a high weight error to the system.

\subsection{Properties of non-robust circuits} \label{non-robust-properties}

\begin{figure}[t!]
    \centering
    \includegraphics[width=0.45\textwidth]{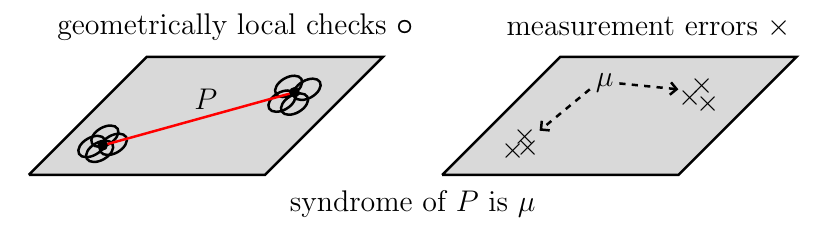}
    \caption{Illustrative example of when measurement-feedback circuits cannot be fault tolerant. If stabilizer measurement outcomes yield pointlike syndromes, then a low weight measurement error $\mu$ can reproduce the syndrome of a high weight Pauli error $P$. A particular channel of local stochastic measurement noise may implement this measurement error $\mu$ with finite probability, independent of $\wt P$.}
    \label{fig:no-ft-strings}
\end{figure}

Here, we enumerate properties which prohibit fault tolerance and are shared by a range of shallow circuits implementing QEC, state preparation, and sequential gauging of topological orders. To make statements precise, we use the following definition to discuss geometrically local, LDPC sets or multisets of operators to be measured.
\begin{definition}
    \label{def:geom-local}
    A set, or multiset, of operators $\mathcal{I}$ is $(v_{q},v_{s})$ geometrically local if each $O_{i} \in \mathcal{I}$ is supported within a ball of volume $v_{q}$ on the system lattice, and each qudit is in the support of at most $v_{s}$ operators $O_{i} \in \mathcal{I}$.
\end{definition}
All operators in $\mathcal{I}$ can be measured by a circuit with range $r\ell = \Theta( v_{q}^{1/D} v_{s} )$. Also, the multiplicity of an operator in a multiset $\mathcal{I}$ is upper bounded by $v_{s}$.

To study both stabilizer codes and local Hamiltonian models, such as quantum double models and string-net models, within the same framework, we will consider circuits which measure terms or products of terms in some frustration-free Hamiltonian. Consider a system of $\mathbb{Z}_{d}$ qudits on a lattice and a Hamiltonian of the form
\begin{equation}
    \label{eq:gapped-hamiltonian}
    H = - J_{\mathcal{S}} \sum_{S \in \mathcal{S}_{0}} (S + S^{\dagger}) - J_{\mathcal{T}} \sum_{T \in \mathcal{T}_{0}} (T + T^{\dagger}) .
\end{equation}
$\mathcal{S}_{0}$ and $\mathcal{T}_{0}$ are sets of unitary operators which satisfy
\begin{enumerate}[itemsep=0em,start=1,label=(\alph*)]
    \item For all $S \in \mathcal{S}_{0}$ and all $T \in \mathcal{T}_{0}$, $[S,T] = 0$.
    \item $\Pi_{+1}^{\mathcal{T}_{0}} = \prod_{i} \Pi_{+1}^{T_{i}} \neq 0$, namely $\mathcal{T}_{0}$ has a $+1$ eigenspace.
    \item For all $S_{i},S_{j} \in \mathcal{S}_{0}$, $\left[S_{i},S_{j}\right] \Pi_{+1}^{\mathcal{T}_{0}} = 0$.
\end{enumerate}
Here, $\mathcal{T}_{0}$ captures properties of the initial state of the system and $\mathcal{S}_{0}$ captures properties of the final state of the system and of the measurements made. Let $\mathcal{V}$ be the ground state subspace of $H$, namely the mutual $+1$ eigenspace of all $S \in \mathcal{S}_{0}$ and $T \in \mathcal{T}_{0}$, and call the convex set of all mixed ground states $\Sigma_{\mathcal{V}}$ as in Eq.~(\ref{eq:conv-set-stabilizer}).

To account for measurements of products of operators in $\mathcal{S}_{0}$ and for repeated measurements of the same operator, we let $\mathcal{S}_{\mathrm{meas}}$ be a multiset of unitary operators which satisfies $\left\langle \mathcal{S}_{\mathrm{meas}} \right\rangle = \left\langle \mathcal{S}_{0} \right\rangle$.
Suppose each $S_{j} \in \mathcal{S}_{\mathrm{meas}}$ has $d_{j}$ distinct eigenvalues $s_{j}(a) = e^{2 \pi i a / d_{j}}$, where $0 \leq a < d_{j}$. Consider a set of unitary operators $\mathcal{B}_{\mathcal{S}}$, where each $R_{\sigma}^{\dagger} \in \mathcal{B}_{\mathcal{S}}$ is labeled by a syndrome $\sigma$ with elements $\sigma_{j}$ corresponding to each $S_{j} \in \mathcal{S}_{\mathrm{meas}}$. Here, $\mathcal{B}_{\mathcal{S}}$ generalizes the set of correction operators used in stabilizer QEC codes. Suppose each $R_{\sigma}^{\dagger}$ satisfies
\begin{equation}
    \label{eq:hamiltonian-correction}
    S_{j} R_{\sigma}^{\dagger} \Pi_{+1}^{\mathcal{T}_{0}} = s_{j}(\sigma_{j}) R_{\sigma}^{\dagger} S_{j} \Pi_{+1}^{\mathcal{T}_{0}} \quad \forall S_{j} \in \mathcal{S}_{\mathrm{meas}} .
\end{equation}
Not all $\sigma_{j}$ may be independent, and there may exist constraints on which lists $\sigma$ are valid syndromes, analogous to the stabilizer case as discussed in Section~\ref{qudit-codes}. Additionally, let $\mathcal{L}_{\mathcal{S}}$ form a basis of unitary logical operators, where each $L \in \mathcal{L}_{\mathcal{S}}$ commutes with all operators in both $\mathcal{S}_{0}$ and $\mathcal{T}_{0}$ but may map between ground states of $H$. Defining $\Pi_{\sigma}^{\mathcal{S}}$ as the projector onto the mutual $\sigma$ eigenspace of all operators in $\mathcal{S}_{\mathrm{meas}}$, one obtains the relations on projectors and correction operators
\begin{align}
    \label{eq:correction-relations}
     &R_{\sigma}^{\dagger} \Pi_{\kappa}^{\mathcal{S}} \Pi_{+1}^{\mathcal{T}_{0}} = \Pi_{\kappa+\sigma}^{\mathcal{S}} R_{\sigma}^{\dagger} \Pi_{+1}^{\mathcal{T}_{0}} , \quad
     \Pi_{\sigma}^{\mathcal{S}} \Pi_{\kappa}^{\mathcal{S}} \Pi_{+1}^{\mathcal{T}_{0}} = \delta_{\sigma,\kappa} \Pi_{\sigma}^{\mathcal{S}} \Pi_{+1}^{\mathcal{T}_{0}} , \nonumber \\
     &R_{\sigma} R_{\kappa} R_{\sigma+\kappa}^{\dagger} \Pi_{+1}^{\mathcal{T}_{0}} = L_{\sigma,\kappa} S_{\sigma,\kappa} \Pi_{+1}^{\mathcal{T}_{0}} ,
\end{align}
where $L_{\sigma,\kappa} \in \left\langle\mathcal{L}_{\mathcal{S}}\right\rangle$ and $S_{\sigma,\kappa} \in \left\langle \mathcal{S}_{0} \right\rangle$.

Suppose that
\begin{enumerate}[itemsep=0em,start=1]
    \item $\mathcal{S}_{0}$ is $(v_{q},v_{s})$ geometrically local and $\mathcal{S}_{\mathrm{meas}}$ is $(w_{q},w_{s})$ geometrically local.
\end{enumerate}
Given $\mathcal{S}_{0}$, define a graph $\Gamma$, where each vertex corresponds to a qudit and two vertices are connected by an edge if some $S_{j} \in \mathcal{S}_{0}$ acts nontrivially on both qudits. $\Gamma$ has maximum vertex degree $z \leq (v_{q} - 1) v_{s}$. Suppose
\begin{enumerate}[itemsep=0em,start=2]
    \item There exist syndromes $\sigma^{\star}$ with bounded weight $\vert \sigma^{\star} \vert = O(w_{q} w_{s})$ and a constant $\nu > 0$ such that if any operator $K_{\sigma^{\star}}$ satisfies $\Pi_{\sigma^{\star}}^{\mathcal{S}} K_{\sigma^{\star}} \rho K_{\sigma^{\star}}^{\dagger} \Pi_{\sigma^{\star}}^{\mathcal{S}} \neq 0$ for some $\rho \in \Sigma_{\mathcal{V}}$, then $\supp K_{\sigma^{\star}}$ has at least one connected component on $\Gamma$ with weight $g_{\sigma^{\star}} = \Omega(n^\nu)$.
\end{enumerate}
Denote such a syndrome $\sigma^{\star}$ a terminal syndrome and an operator $K_{\sigma^{\star}}$ a terminal operator. Terminal syndromes often correspond to pointlike excitations of the Hamiltonian Eq.~(\ref{eq:gapped-hamiltonian}). For example, in the 2D toric code, anyons created at the endpoints of long, stringlike Pauli operators correspond to a terminal syndrome.

Consider a deterministic measurement-feedback circuit $\Lambda$ which acts as $\Lambda (\rho_{1}) = \rho_{2}$ for some initial state $\rho_{1}$ in the $+1$ eigenspace of $\mathcal{T}_{0}$, $\Pi_{+1}^{\mathcal{T}_{0}} \rho_{1} \Pi_{+1}^{\mathcal{T}_{0}} = \rho_{1}$, and $\rho_{2} \in \Sigma_{\mathcal{V}}$. Suppose $\Lambda$ involves measuring all operators in $\mathcal{S}_{\mathrm{meas}}$ and applying operators in $\mathcal{B}_{\mathcal{S}}$ contingent on the measurement outcomes $\sigma$. $\Lambda$ can be written explicitly as
\begin{equation}
    \Lambda = \left\{R_{\sigma} \Pi_{\sigma}^{\mathcal{S}}\right\}_{\sigma}, \quad R_{\sigma} \Pi_{\sigma}^{\mathcal{S}} \rho_{1} \Pi_{\sigma}^{\mathcal{S}} R_{\sigma}^{\dagger} = c_{\sigma} \rho_{2} ,
\end{equation}
where $\sum_{\sigma} c_{\sigma} = 1$. In the presence of measurement errors $\mu$, where $\sigma + \mu$ is observed instead of $\sigma$, we place no restrictions on the operators in $\mathcal{B}_{\mathcal{S}}$ applied except
\begin{enumerate}[itemsep=0em,start=3]
    \item if $\sigma + \mu = \omega$ is a valid syndrome, then $R_{\omega}$ is applied.
\end{enumerate}

Such a circuit $\Lambda$ captures QEC performed by measuring stabilizers and applying corrections, state preparation of stabilizer codes and topologically ordered systems by measurement and feedback, and sequential gauging for state preparation of non-Abelian topological order, where multiple such circuits are applied one after another.

\subsection{No-go theorem for non-robust circuits} \label{non-robust-proof}

We now prove that no circuit $\Lambda$ which satisfies all the properties of Section~\ref{non-robust-properties} and has range $O(\log\log(n))$ is fault tolerant against local stochastic noise, captured by the following theorem. The basic idea of this proof is to construct a distribution of local stochastic measurement errors which, with bounded probability, mimics a terminal syndrome $\sigma^{\star}$. Therefore, the correction operator $R_{\sigma^{\star}}$ applied will have unbounded weight. While we use this specific distribution for simplicity and to highlight how the existence of terminal syndromes prohibits fault tolerance, one could reach a similar conclusion by applying i.i.d.~noise to measurements clustered in local regions which are well-separated on the system lattice. Also, range $O(\log\log(n))$ circuits are used
rather than range $O(\polylog(n))$ circuits
to ensure the noiseless and noisy final states are distinct, $\Lambda(\rho_{1}) \not\approx \Lambda^{\mathcal{N}} (\rho_{1})$, and to lower bound the residual noise strength.

\begin{theorem}
    \label{theorem:state-prep-no-ft}
    Consider a deterministic measurement-feedback circuit $\Lambda$, acting as $\Lambda (\rho_{1}) = \rho_{2}$, where $\rho_{2}$ is a ground state of a Hamiltonian of the form Eq.~(\ref{eq:gapped-hamiltonian}). Suppose $\Lambda$, $\mathcal{S}_{\mathrm{meas}}$, $\mathcal{S}_{0}$, $\mathcal{T}_{0}$, $\mathcal{B}_{\mathcal{S}}$, and $\rho_{1}$ satisfy the properties of Section~\ref{non-robust-properties}, $v_{q},v_{s} = O(1)$, and $w_{q},w_{s} = O(\log\log(n))$. Then, $\Lambda$ is not fault tolerant against local stochastic noise, $(\mathbb{L}_{\lambda}, \mathbb{M}_{\zeta})$. 
\end{theorem}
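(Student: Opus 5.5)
We argue by contradiction against property 1 of Definition~\ref{def:ft-channel}. The key point is that a constant-weight measurement error can make the circuit apply a correction of polynomially large support. Fix noise strengths $\tau,\zeta>0$. We take the qudit noise to be trivial, i.e.\ the identity channel, which lies in $\mathbb{L}_{\lambda}$ for every $\lambda$. We then construct a measurement-error distribution in $\mathbb{M}_{\zeta}$ and show two things: (i) any residual channel $\widetilde{\mathcal{N}}$ with $\Lambda^{\mathcal{N}}(\rho_{1})\approx\widetilde{\mathcal{N}}\circ\Lambda(\rho_{1})$ must, with probability bounded below by a constant, act by an operator having a connected component of weight $\Omega(n^{\nu})$ on $\Gamma$; (ii) no channel in $\mathbb{L}_{\eta}$, or in any class whose strength is governed by a continuous $\eta(\tau,\zeta)$ with $\eta(0,0)=0$, can do this for small $\eta$.

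\textbf{Construction of the error distribution.} Take a terminal syndrome $\sigma^{\star}$ from property 2, which has $|\sigma^{\star}|=O(w_{q}w_{s})=O(\log\log n)$. Set
\begin{equation}
\mu^{\star}=\sigma^{\star}-\sigma_{0},
\end{equation}
where $\sigma_{0}$ is a typical true outcome. More simply, we choose $\mu$ relative to whatever true outcome $\sigma$ occurs, setting $\mu=\sigma^{\star}+\sigma'-\sigma$ with $\sigma'$ restricted to the finitely many measurements in the support of $\sigma^{\star}$, so that the observed syndrome $\sigma+\mu$ is a valid syndrome which differs from the true one by $\sigma^{\star}$ on a region of size $|\sigma^{\star}|$. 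The distribution places probability $p=\zeta^{|\sigma^{\star}|}$ on this error and $1-p$ on no error, which makes it $\zeta$-bounded. Because $|\sigma^{\star}|=O(\log\log n)$, we have $p=\zeta^{O(\log\log n)}=1/\polylog(n)$. This is not $1/\poly(n)$, so the effect of the error is not negligible under the convention of Eq.~(\ref{eq:trace-distance}). This is the reason the range is restricted to $O(\log\log n)$.

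\textbf{Tracking the noisy output.} By property 3, on the error branch the circuit applies $R_{\sigma+\mu}$. Using the relations in Eq.~(\ref{eq:correction-relations}), we write $R_{\sigma+\mu}=L\,S\,K\,R_{\sigma}$ on $\Pi^{\mathcal{T}_{0}}_{+1}$, where $K$ carries syndrome $\sigma^{\star}$ relative to the ground space, $L\in\langle\mathcal{L}_{\mathcal{S}}\rangle$, and $S\in\langle\mathcal{S}_{0}\rangle$. The noisy output is therefore
\begin{equation}
\Lambda^{\mathcal{N}}(\rho_{1})=(1-p)\rho_{2}+p\,\rho_{\mathrm{bad}},
\end{equation}
where $\rho_{\mathrm{bad}}$ is supported in the $\sigma^{\star}$ eigenspace of $\mathcal{S}_{\mathrm{meas}}$ and is orthogonal to $\Sigma_{\mathcal{V}}$. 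Now suppose $\widetilde{\mathcal{N}}=\sum_{\alpha}\pr(\alpha)\Phi_{\alpha}$ reproduces this output. Projecting onto $\Pi^{\mathcal{S}}_{\sigma^{\star}}$ shows that the total weight of those $\Phi_{\alpha}$ whose Kraus operators map $\rho_{2}$ into the $\sigma^{\star}$ sector is at least $p-1/\poly(n)$. By property 2, each such Kraus operator is a terminal operator, so its support contains a connected component of weight $g_{\sigma^{\star}}=\Omega(n^{\nu})$.

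\textbf{Bounding the residual noise strength.} For $\widetilde{\mathcal{N}}\in\mathbb{L}_{\eta}$, we bound the probability that the support contains a connected set of size $g$ by a union over connected lattice animals containing $g$ vertices. The number of such animals is at most $n(ez)^{g}$, so the probability is at most $n(ez\eta)^{g}$. This bound is superpolynomially small when $\eta<1/(ez)$ and $g=\Omega(n^{\nu})$, which contradicts the lower bound $p\ge 1/\polylog(n)$. Hence $\eta(\tau,\zeta)\ge 1/(ez)$ for every $\zeta>0$, and this violates continuity together with $\eta(0,0)=0$. We conclude that no $\tau^{\star},\zeta^{\star}>0$ exist.

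The main obstacle is the middle step. It requires showing rigorously that any decomposition of $\widetilde{\mathcal{N}}$ must place weight of order $p$ on terminal operators. Two things need care here: $\widetilde{\mathcal{N}}$ is only required to match the output approximately, and it acts on all states of $\Sigma_{1}$. To handle this, we would apply $\Pi^{\mathcal{S}}_{\sigma^{\star}}$ to both sides, use the trace-distance slack, and invoke property 2 one Kraus operator at a time, so that the argument does not depend on the particular decomposition chosen.
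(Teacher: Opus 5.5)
Your argument is essentially the paper's proof: the same two-point measurement-error distribution putting weight $\zeta^{|\sigma^{\star}|}=1/\polylog(n)$ on a terminal syndrome, the same projection onto $\Pi^{\mathcal{S}}_{\sigma^{\star}}$ with trace-distance slack (forcing weight about $\zeta^{|\sigma^{\star}|}$ onto terminal Kraus operators of $\widetilde{\mathcal{N}}$), and the same lattice-animal count giving $\eta\geq(ez)^{-1}$, which contradicts continuity with $\eta(0,0)=0$. The one thing to fix is the detour defining $\mu$ relative to the true outcome ($\mu=\sigma^{\star}+\sigma'-\sigma$). In the noise model, measurement errors are drawn independently of $\sigma$, and such a $\mu$ would generally be heavy, so take the fixed error $\mu=\sigma^{\star}$ as the paper does. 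Then $\sigma+\sigma^{\star}$ is automatically valid, since valid syndromes are closed under addition, and property 3 forces $R_{\sigma+\sigma^{\star}}$ to be applied.
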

\begin{proof}
    Consider a terminal syndrome $\sigma^{\star}$ satisfying property 2 in Section~\ref{non-robust-properties}. Consider a noisy implementation of the channel $\Lambda$, $\Lambda^{\mathcal{N}}$, where correlated measurement noise acts as
    \begin{align}
        \pr(\mu) = \begin{cases}
            (1-p^{\vert \sigma^{\star} \vert}) & \mu = 0 \\
            p^{\vert \sigma^{\star} \vert} & \mu =\sigma^{\star} \\
            0 & \mathrm{else}
        \end{cases} .
    \end{align}
    Note that $\left\{ \mu , \pr(\mu) \right\}_{\mu} \in \mathbb{M}_{p}$. $\Lambda^{\mathcal{N}}$ may be written as
    \begin{align}
        \Lambda^{\mathcal{N}} = \left\{ \sqrt{\pr(\mu)} R_{\sigma + \mu} \Pi_{\sigma}^{\mathcal{S}} \right\}_{\mu,\sigma} .
    \end{align}
    Using $R_{\sigma + \mu} = R_{\sigma + \mu}R_{\sigma}^{\dagger} R_{\sigma}$ and $R_{\sigma} \Pi_{\sigma}^{\mathcal{S}} \rho_{1} \Pi_{\sigma}^{\mathcal{S}} R_{\sigma}^{\dagger} = c_{\sigma} \rho_{2}$,
    \begin{equation}
        \Lambda^{\mathcal{N}} (\rho_{1}) = \sum_{\mu \in \{0, \sigma^{\star}\}} \sum_{\sigma} \pr(\mu) c_{\sigma} R_{\sigma + \mu} R_{\sigma}^{\dagger} \rho_{2} R_{\sigma} R_{\sigma + \mu}^{\dagger} .
    \end{equation}
    $\sigma^{\star}$ is a valid syndrome, thus property 3 and Eq.~(\ref{eq:correction-relations}) in Section~\ref{non-robust-properties} imply $R_{\sigma + \sigma^{\star}} R_{\sigma}^{\dagger} \Pi_{+1}^{\mathcal{T}_{0}} = R_{\sigma^{\star}} L_{\sigma,\sigma^{\star}} S_{\sigma,\sigma^{\star}} \Pi_{+1}^{\mathcal{T}_{0}}$ for all valid $\sigma$. Using this and $\rho_{2} = \Pi_{+1}^{\mathcal{T}_{0}} \rho_{2} \Pi_{+1}^{\mathcal{T}_{0}}$, we find
    \begin{align}
        \Lambda^{\mathcal{N}} (\rho_{1})
        &= (1-p^{\vert \sigma^{\star} \vert}) \sum_{\sigma} c_{\sigma} R_{\sigma} R_{\sigma}^{\dagger} \rho_{2} R_{\sigma} R_{\sigma}^{\dagger} \\
        &+ p^{\vert \sigma^{\star} \vert} \sum_{\sigma} c_{\sigma} R_{\sigma^{\star}} L_{\sigma,\sigma^{\star}} S_{\sigma,\sigma^{\star}} \rho_{2} S_{\sigma,\sigma^{\star}}^{\dagger} L_{\sigma,\sigma^{\star}}^{\dagger} R_{\sigma^{\star}}^{\dagger} . \nonumber 
    \end{align}
    Since $S_{\sigma,\sigma^{\star}} \rho_{2} S_{\sigma,\sigma^{\star}}^{\dagger} = \rho_{2}$ for all $S_{\sigma,\sigma^{\star}}$ and $\sum_{\sigma} c_{\sigma} = 1$,
    \begin{align}
        \Lambda^{\mathcal{N}} (\rho_{1}) &= (1-p^{\vert \sigma^{\star} \vert}) \rho_{2} + p^{\vert \sigma^{\star} \vert} \mathcal{W}_{\Lambda} (\rho_{2}) , \\
        \mathcal{W}_{\Lambda} (\rho_{2}) &= R_{\sigma^{\star}} \left(\sum_{\sigma} c_{\sigma} L_{\sigma,\sigma^{\star}} \rho_{2} L_{\sigma,\sigma^{\star}}^{\dagger} \right) R_{\sigma^{\star}}^{\dagger} . \nonumber
    \end{align}
    Projectors $\Pi_{0}^{\mathcal{S}}$ and $\Pi_{\sigma^{\star}}^{\mathcal{S}}$ act on $\Lambda^{\mathcal{N}} (\rho_{1})$ as
    \begin{align}
        \Pi_{0}^{\mathcal{S}} \Lambda^{\mathcal{N}} (\rho_{1}) \Pi_{0}^{\mathcal{S}} &= (1-p^{\vert \sigma^{\star} \vert}) \rho_{2} , \\
        \Pi_{\sigma^{\star}}^{\mathcal{S}} \Lambda^{\mathcal{N}} (\rho_{1}) \Pi_{\sigma^{\star}}^{\mathcal{S}} &= p^{\vert \sigma^{\star} \vert} \mathcal{W}_{\Lambda} (\rho_{2}). \nonumber
    \end{align}
    These operators are orthogonal since $\Pi_{0}^{\mathcal{S}} \Pi_{\sigma^{\star}}^{\mathcal{S}} = 0$. Also, $\Lambda(\rho_{1}) \not\approx \Lambda^{\mathcal{N}} (\rho_{1})$ as
    \begin{align}
        \norm{\Lambda(\rho_{1}) - \Lambda^{\mathcal{N}} (\rho_{1})}_{1} &= p^{\vert \sigma^{\star} \vert} \norm{\rho_{2} - \mathcal{W}_{\Lambda} (\rho_{2})}_{1} \\
        &= 2p^{\vert \sigma^{\star} \vert} = 1/O(\log(n)^{a}) , \nonumber
    \end{align}
    for some constant $a>0$ which depends only on $p$.
    
    Consider a residual noise channel $\widetilde{\mathcal{N}}$, which acts on $\rho_{2}$ as $\widetilde{\mathcal{N}} (\rho_{2}) \approx \Lambda^{\mathcal{N}} (\rho_{1})$. Suppose $\Lambda$ is fault tolerant against $(\mathbb{L}_{\lambda}, \mathbb{M}_{\zeta})$ with critical measurement noise strength $\zeta^{\star} > 0$. Since $\left\{ \mu , \pr(\mu) \right\}_{\mu} \in \mathbb{M}_{p}$, this means for all $p < \zeta^{\star}$, $\widetilde{\mathcal{N}} \in \mathbb{L}_{\eta(p)}$ where $\eta(p)$ is a continuous function with $\eta(0) = 0$. Therefore $\widetilde{\mathcal{N}}$ admits a decomposition $\widetilde{\mathcal{N}} (\cdot) = \sum_{\alpha} \pr(\alpha) \Phi_{\alpha} (\cdot)$ where for all $A \subseteq \mathcal{Q}$,
    \begin{equation}
        \sum_{\alpha : \supp \Phi_{\alpha} \supseteq A} \pr(\alpha) \leq \eta^{\vert A \vert} .
    \end{equation}
    Since $\vert \Tr K \vert \leq \norm{K}_{1}$ for any operator $K$ and since the trace distance is contractive under application of projectors, Theorem 6.3 of Ref.~\cite{khatri2024},
    \begin{align}
        &\left\vert \Tr\left[\Pi_{\sigma^{\star}}^{\mathcal{S}} \widetilde{\mathcal{N}} (\rho_{2}) \Pi_{\sigma^{\star}}^{\mathcal{S}} \right]  - \Tr\left[\Pi_{\sigma^{\star}}^{\mathcal{S}} \Lambda^{\mathcal{N}} (\rho_{1}) \Pi_{\sigma^{\star}}^{\mathcal{S}} \right] \right\vert \nonumber \\
        &\leq \norm{\Pi_{\sigma^{\star}}^{\mathcal{S}} \widetilde{\mathcal{N}} (\rho_{2}) \Pi_{\sigma^{\star}}^{\mathcal{S}} - \Pi_{\sigma^{\star}}^{\mathcal{S}} \Lambda^{\mathcal{N}} (\rho_{1}) \Pi_{\sigma^{\star}}^{\mathcal{S}}}_{1} \nonumber \\
        &\leq \norm{ \widetilde{\mathcal{N}} (\rho_{2}) - \Lambda^{\mathcal{N}} (\rho_{1}) }_{1} . 
    \end{align}
    Since $\Tr\left[\Pi_{\sigma^{\star}}^{\mathcal{S}} \Lambda^{\mathcal{N}} (\rho_{1}) \Pi_{\sigma^{\star}}^{\mathcal{S}} \right] = p^{\vert \sigma^{\star} \vert}$, this means
    \begin{align}
        &\Tr\left[\Pi_{\sigma^{\star}}^{\mathcal{S}} \widetilde{\mathcal{N}} (\rho_{2}) \Pi_{\sigma^{\star}}^{\mathcal{S}} \right] = \sum_{\alpha} \pr(\alpha) \Tr\left[\Pi_{\sigma^{\star}}^{\mathcal{S}} \Phi_{\alpha} (\rho_{2}) \Pi_{\sigma^{\star}}^{\mathcal{S}}\right] \nonumber \\
        &= p^{\vert \sigma^{\star} \vert} + \varepsilon ,
    \end{align}
    where $\vert \varepsilon \vert \leq \norm{ \widetilde{\mathcal{N}} (\rho_{2}) - \Lambda^{\mathcal{N}} (\rho_{1}) }_{1} = 1/\Omega(\poly(n))$.
    
    If $\Pi_{\sigma^{\star}}^{\mathcal{S}} \Phi_{\alpha} (\rho_{2}) \Pi_{\sigma^{\star}}^{\mathcal{S}} \neq 0$, at least one Kraus operator of $\Phi_{\alpha}$ must be a terminal operator, meaning $\supp \Phi_{\alpha}$ has at least one connected component $A \subseteq \supp \Phi_{\alpha}$ on $\Gamma$ with weight $\vert A \vert = g_{\sigma^{\star}}$. Defining $C(g_{\sigma^{\star}})$ as the collection of all connected subsets of $\Gamma$ of weight $g_{\sigma^{\star}}$, $A \in C(g_{\sigma^{\star}})$. The cardinality of $C(g_{\sigma^{\star}})$ is upper bounded by~\cite{gottesman2014}
    \begin{equation}
        \vert C(g_{\sigma^{\star}}) \vert \leq \vert \Gamma \vert \left( ez \right)^{g_{\sigma^{\star}}-1} ,
    \end{equation}
    where $z \leq (v_{q} - 1) v_{s}$ from property 1 in Section~\ref{non-robust-properties}. We also derive the above bound in Appendix~\ref{combinatorics}. Since $\Tr\left[\Pi_{\sigma^{\star}}^{\mathcal{S}} \Phi_{\alpha} (\rho_{2}) \Pi_{\sigma^{\star}}^{\mathcal{S}}\right] \leq 1$, we may bound $p^{\vert \sigma^{\star} \vert} + \varepsilon$ by
    \begin{align}
        &p^{\vert \sigma^{\star} \vert} + \varepsilon \leq \sum_{\substack{\alpha : \exists A \in C(g_{\sigma^{\star}}) : \\ \supp \Phi_{\alpha} \supseteq A}} \pr(\alpha) \Tr\left[\Pi_{\sigma^{\star}}^{\mathcal{S}} \Phi_{\alpha} (\rho_{2}) \Pi_{\sigma^{\star}}^{\mathcal{S}}\right] \nonumber \\
        &\leq \sum_{A \in C(g_{\sigma^{\star}})} \sum_{\alpha : \supp \Phi_{\alpha} \supseteq A} \pr(\alpha) \leq \sum_{A \in C(g_{\sigma^{\star}})} \eta^{\vert A \vert} 
        \nonumber \\ &\leq \vert C(g_{\sigma^{\star}}) \vert \eta^{g_{\sigma^{\star}}} \leq n \left( ez \right)^{g_{\sigma^{\star}}-1} \eta^{g_{\sigma^{\star}}} .
    \end{align}
    Rearranging, we can instead lower bound $\eta(p)$ by
    \begin{equation}
        \eta(p) \geq \frac{1}{ez} \left[\frac{ez}{n} \left( p^{\vert \sigma^{\star} \vert} + \varepsilon \right)\right]^{1/g_{\sigma^{\star}}} .
    \end{equation}
    For any constant $p>0$, since $g_{\sigma^{\star}} = \Omega(n^\nu)$, $\vert \sigma^{\star} \vert = O(w_{q} w_{s}) = O(\log\log(n))$, $z \leq (v_{q} - 1) v_{s} = O(1)$, and $\vert \varepsilon \vert = 1/\Omega(\poly(n))$, we find    
    \begin{equation}
        \lim_{n \rightarrow \infty} \eta(p) \geq \lim_{n \rightarrow \infty} \left[\frac{ez}{n} \left( p^{\vert \sigma^{\star} \vert} + \varepsilon \right)\right]^{1/g_{\sigma^{\star}}} = \left( ez \right)^{-1} .
    \end{equation}
    Since $\eta(0) = 0$, $\eta(p)$ is not continuous. Thus by contradiction, $\Lambda$ is not fault tolerant against $(\mathbb{L}_{\lambda}, \mathbb{M}_{\zeta})$.
\end{proof}

\subsection{Examples: QEC and state preparation} \label{no-go-pointlike}

Theorem~\ref{theorem:state-prep-no-ft} directly applies to QEC and state preparation for stabilizer codes. If $\Lambda$ has range $O(\log\log(n))$ and consists of measuring stabilizer operators and applying corrections based on measurement outcomes where valid syndromes are always corrected, and if $\rho_{1} = \rho_{2}$, Theorem~\ref{theorem:state-prep-no-ft} states that $\Lambda$ is not fault tolerant. This does not prohibit QEC in shallow depth entirely, but rules out many circuits conventionally used. If $\Lambda$ is a state preparation circuit, $\Lambda (\rho_{1}) = \rho_{2}$, which consists of measuring stabilizer operators and applying corrections where valid syndromes are always corrected, then Theorem~\ref{theorem:state-prep-no-ft} also implies $\Lambda$ is not fault tolerant. This does not explicitly prove inequivalence, $\rho_{1} \nsim \rho_{2}$ or $\Sigma_{1} \nsim \Sigma_{2}$ for $\rho_{1} \in \Sigma_{1}$, but it places constraints on how $\rho_{2}$ can be prepared, as one must either apply a deeper circuit, such as repeating measurements $\Omega (\log(n))$ times, or apply multiple layers of measurement-feedback circuits and LC circuits. Note that the state preparation method used here differs from those presented in Refs.~\cite{bravyi2020,bergamaschi2025}, which use a resource state in $D+1$ dimensions to fault tolerantly prepare a code state in $D$ dimensions and use this extra spatial dimension to do so with a finite-depth circuit.

Examples of stabilizer codes which satisfy the properties of Section~\ref{non-robust-properties} include those whose syndromes can be viewed as pointlike excitations. These include the $\mathbb{Z}_{d}$ toric code in $D=2$, the 2D color code, and 1D GHZ states. For these examples, excitations are explicitly the endpoints of stringlike operators or string-net operators, thus Theorem~\ref{theorem:state-prep-no-ft} directly applies to QEC and to state preparation of $\dyad{\overline{0}}{\overline{0}}^{\otimes k}$ or $\dyad{\overline{+}}{\overline{+}}^{\otimes k}$ from product states $\dyad{0}{0}^{\otimes n}$ or $\dyad{+}{+}^{\otimes n}$ by measuring stabilizers. Additionally, Theorem~\ref{theorem:state-prep-no-ft} applies to correction of $X$ errors and preparation of $\dyad{\overline{+}}{\overline{+}}^{\otimes k}$ for $(D,1)$ $\mathbb{Z}_{d}$ toric codes in $D \geq 3$, or for correction of $Z$ errors and preparation of $\dyad{\overline{0}}{\overline{0}}^{\otimes k}$ for $(D,D-1)$ $\mathbb{Z}_{d}$ toric codes in $D \geq 3$.

\begin{table}[t!] 
\centering
\begin{ruledtabular}
\begin{tabular}{c c c}
Model & QEC not FT & SP not FT \\
\hline
\begin{tabular}{c}
$(D \geq 2,q)$ $\mathbb{Z}_{d}$ toric code, \\
$q=1$ ($q=D-1$)
\end{tabular}
& $X$ ($Z$) errors & $\ket{\overline{+}}^{\otimes k}$ ($\ket{\overline{0}}^{\otimes k}$) \\
2D color code & $Z,X$ errors & $\ket{\overline{0}}^{\otimes k} , \ket{\overline{+}}^{\otimes k}$ \\
X-cube model & $Z,X$ errors & $\ket{\overline{0}}^{\otimes k} , \ket{\overline{+}}^{\otimes k}$ \\
1D repetition code & $X$ errors & $\ket{\mathrm{GHZ}} = \ket{\overline{+}}$ \\
Double semion model & N/A & $\ket{\mathrm{DS}}$ \\
\end{tabular}
\end{ruledtabular}
\caption{
Summary of models for which QEC and state preparation (SP) are not fault tolerant (FT) when implemented by a circuit with range $O(\log\log(n))$ which measures stabilizers or Hamiltonian terms and applies corrections. The GHZ state is the $\ket{\overline{+}}$ state of the repetition code, and the double semion model state $\ket{\mathrm{DS}}$ is the one prepared from $\ket{+}^{\otimes n}$.}
\label{table:state-prep-models}
\end{table}

Another example where Theorem~\ref{theorem:state-prep-no-ft} applies is the X-cube model, described in Ref.~\cite{vijay2016}, which is defined on a 3D cubic lattice and has pointlike excitations with restricted mobility. Explicitly, $X$-type stabilizers detect excitations at the four corners of a rectangular, sheetlike $Z$ operator, and $Z$-type stabilizers detect excitations at each corner and endpoint of a stringlike $X$ operator. Despite restricted mobility, i.e., excitations cannot always be moved independently on the system lattice without creating additional excitations, arbitrarily large sheetlike $Z$ operators only create $4$ excitations, and arbitrarily long linelike $X$ operators only create $2$ excitations. Therefore, Theorem~\ref{theorem:state-prep-no-ft} applies to error correction of all errors and to state preparation of both $\dyad{\overline{0}}{\overline{0}}^{\otimes k}$ and $\dyad{\overline{+}}{\overline{+}}^{\otimes k}$.

Theorem~\ref{theorem:state-prep-no-ft} applies more broadly to state preparation of frustration-free lattice models with pointlike excitations, such as quantum double models and string-net models which host Abelian anyons. Similar to the stabilizer code case, if $\Lambda$ is a circuit with range $O(\log\log(n))$ which consists of measuring Hamiltonian terms of a model and pairing up excitations to annihilate them, Theorem~\ref{theorem:state-prep-no-ft} implies $\Lambda$ is not fault tolerant. One notable non-stabilizer model for which this applies is the double semion model~\cite{levin2005}, which hosts pointlike semionic excitations. Details about this model can be found in Ref.~\cite{levin2005}, however since it has pointlike excitations, state preparation of a ground state $\ket{\mathrm{DS}}$ from the product state $\ket{+}^{\otimes n}$, where one applies a shallow circuit which measures Hamiltonian terms or products of them, is not fault tolerant. We summarize the models discussed in this section in Table~\ref{table:state-prep-models}.

\subsection{Examples: Sequential gauging for non-Abelian topological orders} \label{sequential-gauging}

Another application of Theorem~\ref{theorem:state-prep-no-ft} is to state preparation of non-Abelian topological orders by sequential gauging. For models such as non-Abelian quantum-double models or string-net models, one cannot naively measure Hamiltonian terms and apply corrections, as operators which deterministically pair up and annihilate non-Abelian anyons have range scaling with the distance between anyons and are generally not shallow. However, for quantum-double and string-net models corresponding to solvable anyon theories, one can circumvent this issue by applying a finite number of alternating rounds of measurement and feedback, where in each round one first measures a set of local operators then applies corrections~\cite{tantivasadakarn2024,tantivasadakarn2023,bravyi2022,ren2025}. Doing this, one can measure and pair up Abelian anyons in each round, and the overall state preparation circuit takes the form $\Lambda = \Lambda_{b} \circ \ldots \circ \Lambda_{1}$, where $b$ is finite and each $\Lambda_{i}$ is a finite-depth measurement-feedback circuit. These procedures generally involve sequentially gauging quotients $N_{i}/N_{i+1}$ of subgroups of a solvable group $G$ in a derived series, $N_{i} \triangleleft G$, or a normal series, $N_{i+1} \triangleleft N_{i}$, where $N_{i}/N_{i+1} \cong \mathbb{Z}_{d_{i}}$ is Abelian. Each step of gauging is implemented by the circuit $\Lambda_{i}$ which measures local operators and pairs up Abelian anyons based on measurement outcomes. While useful for noiseless state preparation and for classifying QL phases, these procedures face the same drawbacks in the fault tolerant setting as state preparation of 2D models with Abelian anyons, since they essentially compose together multiple measurement-feedback circuits which pair up Abelian anyons like those discussed in Section~\ref{no-go-pointlike}. Just as before, Theorem~\ref{theorem:state-prep-no-ft} implies no circuit $\Lambda_{i}$ with range $O(\log\log(n))$ which measures these operators or products of them is fault tolerant, thus the overall circuit $\Lambda$ is not fault tolerant either. To be precise, if one considers measurement noise acting on the final layer $\Lambda_{b}$, then Theorem~\ref{theorem:state-prep-no-ft} implies that $\Lambda$ is not fault tolerant.

We do not describe any sequential gauging procedure in detail here and instead refer readers to Refs.~\cite{tantivasadakarn2024,tantivasadakarn2023,bravyi2022,ren2025}. However, the class of models which can be prepared this way in the noiseless case, but which do not admit fault tolerant state preparation with a shallow circuit $\Lambda$ of the form described above, includes quantum-double, twisted quantum double, and string-net models for the groups $S_3$, $S_4$, $A_4$, $Q_8$, $\mathbb{Z}_{k}$, and $D_k$ for all $k$~\cite{bravyi2022,tantivasadakarn2023,ren2025}.

\section{Conclusions} \label{conclusions}

We have proposed a classification of fault tolerant quantum phases based on the ability to robustly transfer encoded information between convex sets of states using shallow quantum-local circuits. Our work makes explicit the connection between quantum fault tolerance and quantum phases of matter which are robust to noise. While it has been known that topological orders can encode logical information and serve as quantum memories, we extend the notion of preserving information encoded within a phase to a set of active operations which can be implemented on quantum devices, notably including circuits with measurements and feedback.

In our classification, stability, i.e., the existence of a nonzero recovery threshold for local noise, is a property of an entire phase. Moreover, stable FTQL phases are coarser than LC phases when one restricts to finite depth. Fault tolerant phases also allow one to classify subsystem codes, and we show that code spaces of stabilizer and gauge color codes defined on $\leq D$-dimensional lattices are all in the same fault tolerant phase when embedded in $D$-dimensional space. Lastly, we studied fault tolerance in state preparation circuits with measurements and feedback and showed that stabilizer codes with connected excitations permit single-shot state preparation, while codes and other topologically ordered systems with pointlike excitations do not.

We focused here on quantum systems embedded in $D$ spatial dimensions, however our framework may be extended to classify phases of systems which live on an arbitrary graph or which cannot be embedded locally in Euclidean space. This includes families of quantum LDPC codes such as quantum expander codes, which have recently been studied as representative phases of matter with nonlocal interactions~\cite{rakovszky2023ldpc,rakovszky2024ldpc,placke2024spinglass}.

An open question is whether information-theoretic diagnostics exist which distinguish FTQL phases, especially phases which encode the same information. While entanglement measures used to study LC phases may not all carry over to the FTQL setting, one entropy measure which may still be useful is the coherent information~\cite{schumacher1996,barnum1998}, which can quantify how much information is recoverable after application of a noisy circuit. Coherent information depends on the specific noisy circuit applied, however it would be useful to investigate whether general statements about different classes of circuits can be made, or whether any other information-theoretic quantities are useful for analyzing FTQL phases.

An interesting example to study through the lens of fault tolerance and FTQL phases is that of intrinsic mixed-state topological orders~\cite{wang2025,sohal2025}, LC phases which host no pure states. For example, some of these phases, such as that of the $ZX$-decohered toric code, can only encode classical information but are long-range entangled and in different LC phases than classical topological orders~\cite{wang2025}. Whether the FTQL phase of an intrinsic mixed-state order remains intrinsically mixed and distinct from that of any classical topological order is not currently known, but would illuminate how classical memories fit into the classification of FTQL phases.

Since fault tolerant phases are useful for studying subsystem codes, one may also be able to classify phases of Floquet codes in a similar manner~\cite{hastings2021}. Floquet codes and other dynamical codes can often be viewed as subsystem codes and rely on sequences of measurements to encode quantum information, therefore one might be able to view a Floquet code as a particular path traversing through a phase and associate the fault tolerance properties of this path with those of the Floquet code.

Lastly, since robust information transfer is the defining principle of our framework, FTQL phases may be relevant for identifying shallow circuits which implement logical operations or for viewing fault tolerant quantum computation as a transformation within a phase.

\begin{acknowledgments}
We thank Sergey Blinov, Katie Chang, Meng Cheng, Zhongling Lu, Shruti Puri, Daniel Qenani, and Nathan Wiebe for helpful discussions. We acknowledge support from the U.S. Department of Energy, Office of Science, National Quantum Information Science Research Centers, Co-design Center for Quantum Advantage (Contract No. DE-SC0012704), IARPA and the U.S. Army Research Office (ELQ Program, Cooperative Agreement No. W911NF-23-2-0219).
We thank the Kavli Institute for Theoretical Physics (supported in part by grant NSF PHY-2309135) for hosting us during the Noise-robust Phases of Quantum Matter Program.
\end{acknowledgments}

\appendix

\section{Noisy 2D toric code state preparation and mixed-state renormalization} \label{rg-derivation}

In this section, we study the LC phase of the mixed state obtained after noisy preparation of a 2D toric code ground state from a product state. The prepared state closely resembles a Gibbs state, and in fact we use the mixed-state renormalization procedure developed in Ref.~\cite{sang2024rg} for the toric code thermal state to show that this state is not in the same LC phase as the toric code.

The 2D qubit toric code can be defined on a 2D square lattice tiling a torus, with qubits on its edges. If the lattice has linear size $L$, it has $n=2L^2$ total qubits. The toric code is stabilized by products of $Z_j$ surrounding each vertex $v \in \Delta_{0}$, $A_{v} = \prod_{j \ni v} Z_j$, and products of $X_j$ surrounding each plaquette $w \in \Delta_{2}$, $B_{w} = \prod_{j \in w} X_j$,
\begin{equation}
    A_{v} = 
    \vcenter{\hbox{
    \begin{tikzpicture}[scale=0.7,thick]
    \draw[color=black] (-1,0) -- (0,0);
    \draw[color=black] (0,0) -- (0,1);
    \draw[color=black] (0,0) -- (0,-1);
    \draw[color=black] (0,0) -- (1,0);
    \node[fill=white,inner sep=1pt] at (-0.5,0) {$Z$};
    \node[fill=white,inner sep=1pt] at (0.5,0) {$Z$};
    \node[fill=white,inner sep=1pt] at (0,0.5) {$Z$};
    \node[fill=white,inner sep=1pt] at (0,-0.5) {$Z$};
    \end{tikzpicture}
    }}
    ,
    \quad
    B_{w} = 
    \vcenter{\hbox{
    \begin{tikzpicture}[scale=0.7,thick]
    \draw[color=black] (-0.5,0.5) -- (0.5,0.5);
    \draw[color=black] (-0.5,-0.5) -- (0.5,-0.5);
    \draw[color=black] (-0.5,0.5) -- (-0.5,-0.5);
    \draw[color=black] (0.5,0.5) -- (0.5,-0.5);
    \node[fill=white,inner sep=1pt] at (-0.5,0) {$X$};
    \node[fill=white,inner sep=1pt] at (0.5,0) {$X$};
    \node[fill=white,inner sep=1pt] at (0,0.5) {$X$};
    \node[fill=white,inner sep=1pt] at (0,-0.5) {$X$};
    \end{tikzpicture}
    }}
    .
\end{equation}
There are $L^2$ distinct $A_{v}$ and $L^2$ distinct $B_{w}$, however only $L^2-1$ of each are independent, as
\begin{equation}
    \label{eq:tc-global-constraint}
    \prod_{v \in \Delta_{0}} A_{v} = \prod_{w \in \Delta_{2}} B_{w} = I .
\end{equation}
$Z$-type logical operators are products of $Z_j$ along noncontractible horizontal $C_1$ and vertical $C_2$ closed loops around the torus on the dual lattice, $\bar{Z}_a = \prod_{j \in C_a} Z_j$. A complete basis for the system Hilbert space is given by eigenstates $\left\{ \ket{\vec{m},\vec{e},\bar{\ell}_1 \bar{\ell}_2} \right\}$ of all $A_{v}$, $B_{w}$, and $\bar{Z}_a$, where 
\begin{align}
    B_{w} \ket{\vec{m},\vec{e},\bar{\ell}_1 \bar{\ell}_2} &= (-1)^{e_w} \ket{\vec{m},\vec{e},\bar{\ell}_1 \bar{\ell}_2} \nonumber \\
    A_{v} \ket{\vec{m},\vec{e},\bar{\ell}_1 \bar{\ell}_2} &= (-1)^{m_v} \ket{\vec{m},\vec{e},\bar{\ell}_1 \bar{\ell}_2} \\
    \bar{Z}_{a} \ket{\vec{m},\vec{e},\bar{\ell}_1 \bar{\ell}_2} &= (-1)^{\bar{\ell}_a} \ket{\vec{m},\vec{e},\bar{\ell}_1 \bar{\ell}_2} . \nonumber
\end{align}
The global constraint Eq.~(\ref{eq:tc-global-constraint}) ensures $\vert \vec{e} \vert$ and $\vert \vec{m} \vert$ are both even. To prepare the encoded state $\ket{\overline{0}\overline{0}} = \ket{\vec{0},\vec{0},\overline{0}\overline{0}}$ from a product state $\ket{0}^{\otimes n}$, one can measure all $B_{w}$ to obtain a configuration of excitations $\vec{e}$, then apply strings $\xi(\vec{e})$ of $Z_j$ on the dual lattice with endpoints at plaquettes $w$ corresponding to $e_{w} = 1$ to pair up and annihilate $e$-type excitations,
\begin{align}
    &\sum_{\vec{e}} R_{\vec{e}} \Pi_{\vec{e}}^{\mathcal{S}_{X}} \dyad{0}{0}^{\otimes n} \Pi_{\vec{e}}^{\mathcal{S}_{X}} R_{\vec{e}}^{\dagger} = \dyad{\overline{0}\overline{0}}{\overline{0}\overline{0}} , \\ &\Pi_{\vec{e}}^{\mathcal{S}_{X}} = \prod_{w \in \Delta_{2}} \left( \frac{I + (-1)^{e_w} B_{w}}{2} \right) , \quad R_{\vec{e}} = \prod_{j \in \xi(\vec{e})} Z_{j} . \nonumber
\end{align}
Here, recovery operators $R_{\vec{e}}$ act as
\begin{equation}
    R_{\vec{s}} \ket{\vec{m},\vec{e},\bar{\ell}_1 \bar{\ell}_2} \propto \ket{\vec{m},\vec{e}+\vec{s},\bar{\ell}_1 \bar{\ell}_2} ,
\end{equation}
where $\vec{e}+\vec{s}$ is modulo $2$.
Also, one finds
\begin{equation}
    \Pi_{\vec{e}}^{\mathcal{S}_{X}} \ket{0}^{\otimes n} = \frac{1}{\sqrt{2^{L^2-1}}} \ket{\vec{0},\vec{e},\bar{0} \bar{0}} .
\end{equation}
Since $\sum_{\vec{e}} \Pi_{\vec{e}}^{\mathcal{S}_{X}} = I$, 
$\ket{0}^{\otimes n}$ can be rewritten as
\begin{equation}
    \ket{0}^{\otimes n} = \frac{1}{\sqrt{2^{L^2-1}}} \sum_{\substack{\vec{e} : \vert \vec{e} \vert = 0 \bmod 2}} \ket{\vec{0},\vec{e},\bar{0} \bar{0}} ,
\end{equation}
so measuring $B_{w}$ will yield $e_{w} = 0$ or $e_{w} = 1$ with equal probability, aside from Eq.~(\ref{eq:tc-global-constraint}) ensuring an even number of plaquettes have $e_{w} = 1$.

Suppose measurement noise modifies the observed measurement outcomes, acting as $\vec{e} \rightarrow \vec{e} + \vec{\mu}$ with $\mu_{w} \in \{0,1\}$. Suppose an even number of measurement errors always occur, but errors occur independently with probability $p$ otherwise. This is qualitatively similar to i.i.d.~noise where one remeasures all $B_{w}$ if an odd number of $e_{w} = 1$ are observed. Then, $\vec{\mu}$ has even parity $\pi(\vec{\mu})=0$ and the normalized probability of error $\vec{\mu}$ occurring is
\begin{align}
    \label{eq:tc-noise-distr}
    &\pr\left(\vec{\mu}\right) = N_{p} \delta\left(\pi(\vec{\mu})=0\right) \prod_{w \in \Delta_{2}} p^{\mu_{w}} (1-p)^{1 - \mu_{w}} \\
    &= \left[ \frac{2}{1+(1-2p)^{L^2}} \right] \delta\left(\pi(\vec{\mu})=0\right) \left[ (1-p)^{L^2 - \vert \vec{\mu} \vert} p^{\vert \vec{\mu} \vert} \right] , \nonumber
\end{align}
where $N_{p} = 2 / \left( 1+(1-2p)^{L^2} \right)$ is a normalization constant. Noisy state preparation then yields
\begin{equation}
    \rho_{p} = \sum_{\vec{\mu}} \pr\left(\vec{\mu}\right) \sum_{\vec{e}} R_{\vec{e} + \vec{\mu}} \Pi_{\vec{e}}^{\mathcal{S}_{X}} \dyad{0}{0}^{\otimes n} \Pi_{\vec{e}}^{\mathcal{S}_{X}} R_{\vec{e} + \vec{\mu}}^{\dagger} .
\end{equation}
$\vec{\mu}$ is always a valid syndrome here, so $R_{\vec{e} + \vec{\mu}}$ can be decomposed as a product of $Z$ operators. Since $R_{\vec{e} + \vec{\mu}}$ and $R_{\vec{\mu}} R_{\vec{e}}$ have the same syndrome, we will write $R_{\vec{e} + \vec{\mu}}$ as
\begin{equation}
    R_{\vec{e} + \vec{\mu}} = R_{\vec{\mu}} \left( \prod_{v : \vec{e},\vec{\mu}} A_{v} \right) \left( \prod_{a : \vec{e},\vec{\mu}} \bar{Z}_a \right) R_{\vec{e}} .
\end{equation}
Since $R_{\vec{e}} \Pi_{\vec{e}}^{\mathcal{S}_{X}} \ket{0}^{\otimes n} \propto \ket{\overline{0}\overline{0}}$ and $A_{v} \ket{\overline{0}\overline{0}} = \bar{Z}_a \ket{\overline{0}\overline{0}} = \ket{\overline{0}\overline{0}}$, the final state becomes
\begin{align}
    \label{eq:tc-noisy-state}
    &\rho_{p} = \sum_{\vec{\mu}} \pr\left(\vec{\mu}\right) R_{\vec{\mu}}\dyad{\vec{0},\vec{0},\overline{0}\overline{0}}{\vec{0},\vec{0},\overline{0}\overline{0}} R_{\vec{\mu}}^{\dagger} \\
    &= N_{p} \sum_{\substack{\vec{\mu}: \\ \vert \vec{\mu} \vert = 0 \bmod 2}} (1-p)^{L^2}\left(\frac{p}{1-p}\right)^{\vert \vec{\mu} \vert} \dyad{\vec{0},\vec{\mu},\overline{0} \overline{0}}{\vec{0},\vec{\mu},\overline{0} \overline{0}} . \nonumber
\end{align}

To classify the LC phase of $\rho_{p}$, we can apply the mixed-state renormalization procedure developed in Ref.~\cite{sang2024rg} for the toric code thermal state to $\rho_{p}$. We will provide a full derivation here for completeness. However, the channels used are exactly the same as those constructed in Ref.~\cite{sang2024rg}. Assuming the thermodynamic limit $L \rightarrow \infty$, we coarse grain the lattice into $2 \times 2$ blocks $\overline{w}$ containing 4 plaquettes $w_{i}$ and 12 qubits,
\begin{equation}
    \overline{w} = 
    \vcenter{\hbox{
    \begin{tikzpicture}[scale=1,thick]
    
    \foreach \x in {0, ..., 2}{
    \draw[color=black] (-0.5,0.5-\x) -- (1.5,0.5-\x);
    \draw[color=black] (-0.5+\x,0.5) -- (-0.5+\x,-1.5);
    }

    \foreach \x in {0,1,2}{
    \foreach \y in {0,1}{
    \filldraw[color=black, fill=black] (0+\y,0.5-\x) circle (0.05cm);
    \filldraw[color=black, fill=black] (-0.5+\x,0-\y) circle (0.05cm);
    }
    }
    
    \node at (0,0) {$w_{1}$};
    \node at (1,0) {$w_{2}$};
    \node at (0,-1) {$w_{3}$};
    \node at (1,-1) {$w_{4}$};
    \end{tikzpicture}
    }}
    .
\end{equation}
Each $\overline{w}$ contains distinct plaquettes $w_{i}$ but may share qubits with other blocks. Consider a local channel gate acting on a single $\overline{w}$ which first measures $B_{w_{i}}$ for each $w_{i}$, then performs the unitary $U_{\vec{e}_{\overline{w}}}$ dependent on the excitation configuration $\vec{e}_{\overline{w}}$ in $\overline{w}$, moving all $e_{i}$ to the top left plaquette $w_{1}$,
\begin{align}
    &\mathcal{C}_{\overline{w}} (\cdot) = \sum_{\vec{e}_{\overline{w}}} U_{\vec{e}_{\overline{w}}} \Pi_{\vec{e}_{\overline{w}}} (\cdot) \Pi_{\vec{e}_{\overline{w}}} U_{\vec{e}_{\overline{w}}}^{\dagger} , \nonumber \\
    &\Pi_{\vec{e}_{\overline{w}}} \ket{\vec{m},\vec{e},\bar{\ell}_1 \bar{\ell}_2} = \delta(\vec{e}_{\overline{w}}\in e)\ket{\vec{m},\vec{e},\bar{\ell}_1 \bar{\ell}_2} , \nonumber \\
    &U_{\vec{e}_{\overline{w}}} \ket{(e_{w_{1}}, e_{w_{2}}, e_{w_{3}}, e_{w_{4}})} = \ket{(\pi(\vec{e}_{\overline{w}}), 0, 0, 0)},
\end{align}
where $\pi(\vec{e}_{\overline{w}}) \in \{0,1\}$ is the parity of $\vec{e}_{\overline{w}}$. Importantly, $\pi(\vec{e}_{\overline{w}})$ is conserved within each $\overline{w}$, thus $\pi(\vec{\mu})$ is conserved overall, and $\mathcal{C}_{\overline{w}}$ is locally reversible when applied to any mixture of toric code eigenstates $\dyad{\vec{m},\vec{e},\bar{\ell}_1 \bar{\ell}_2}{\vec{m},\vec{e},\bar{\ell}_1 \bar{\ell}_2}$. This is because one can construct a reversal channel
\begin{align}
    &\widetilde{\mathcal{C}}_{\overline{w}} (\cdot) = \nonumber \\
    &\sum_{\vec{e}_{\overline{w}}} \pr\left( \vec{e}_{\overline{w}} \vert \pi(\vec{e}_{\overline{w}}) \right) U_{\vec{e}_{\overline{w}}}^{\dagger} \Pi_{\pi(\vec{e}_{\overline{w}})}^{1} (\cdot) \Pi_{\pi(\vec{e}_{\overline{w}})}^{1} U_{\vec{e}_{\overline{w}}} 
\end{align}
where $\Pi_{\pi(\vec{e}_{\overline{w}})}^{1}$ projects onto the $e_{w_{1}} = \pi(\vec{e}_{\overline{w}})$ subspace and $\pr\left( \vec{e}_{\overline{w}} \vert \pi(\vec{e}_{\overline{w}}) \right)$ can be computed from Eq.~(\ref{eq:tc-noise-distr}). 
After applying $\mathcal{C}_{\overline{w}}$ to all $\overline{w}$, one can apply a series of CNOT gates $U_{CX}$, outlined in Ref.~\cite{sang2024rg}, to decouple qubits from the system and obtain a renormalized plaquette
\begin{equation}
    \overline{w} = 
    \vcenter{\hbox{
    \begin{tikzpicture}[scale=1,thick]
    
    \foreach \x in {0, ..., 2}{
    \draw[color=black] (-0.5,0.5-\x) -- (1.5,0.5-\x);
    \draw[color=black] (-0.5+\x,0.5) -- (-0.5+\x,-1.5);
    }

    \foreach \x in {0,1,2}{
    \foreach \y in {0,1}{
    \filldraw[color=black, fill=black] (0+\y,0.5-\x) circle (0.05cm);
    \filldraw[color=black, fill=black] (-0.5+\x,0-\y) circle (0.05cm);
    }
    }
    
    \node at (0,0) {$\pi(\vec{e}_{\overline{w}})$};
    \node at (1,0) {$0$};
    \node at (0,-1) {$0$};
    \node at (1,-1) {$0$};
    \end{tikzpicture}
    }}
    \Rightarrow
    \vcenter{\hbox{
    \begin{tikzpicture}[scale=1,thick]
    
    \foreach \x in {0,2}{
    \draw[color=black] (-0.5,0.5-\x) -- (1.5,0.5-\x);
    \draw[color=black] (-0.5+\x,0.5) -- (-0.5+\x,-1.5);
    }

    \filldraw[color=black, fill=black] (1,0.5) circle (0.05cm);
    \filldraw[color=black, fill=black] (1,-1.5) circle (0.05cm);
    \filldraw[color=black, fill=black] (-0.5,-1) circle (0.05cm);
    \filldraw[color=black, fill=black] (1.5,-1) circle (0.05cm);

    \node at (0.5,-0.5) {$\pi(\vec{e}_{\overline{w}})$};
    \end{tikzpicture}
    }}
    .
\end{equation}
Each renormalized plaquette $\overline{w}$ now has $\pi(\vec{e}_{\overline{w}})$ excitations on it and the system is in a renormalized state with $L^2 / 4$ sites. Crucially, since the probability distribution over excitations Eq.~(\ref{eq:tc-noise-distr}) factorizes into a distribution of independent probabilities $\pr\left(\mu_{w}\right) = p^{\mu_{w}} (1-p)^{1 - \mu_{w}}$ on each plaquette, and since each $\overline{w}$ contains an excitation if and only if $\pi(\vec{e}_{\overline{w}}) = 1$, the state
\begin{equation}
    \label{eq:tc-rg-channel}
    \rho_{p^{\prime}} = U_{CX} \left( \bigotimes_{\overline{w}} \mathcal{C}_{\overline{w}} (\rho_{p}) \right) U_{CX}^{\dagger}
\end{equation}
has the same form as $\rho_{p}$ in Eq.~(\ref{eq:tc-noisy-state}) but with renormalized probability
\begin{align}
    \label{eq:renorm-prob}
    p^{\prime} &= \sum_{\substack{\vec{e}_{\overline{w}} : \pi(\vec{e}_{\overline{w}}) = 1}} \prod_{w \in \overline{w}} p^{e_{w}} (1-p)^{1 - e_{w}} \nonumber \\
    &= 4p(1-p)^3 + 4p^3(1-p) .
\end{align}
For $p=0$, $p^\prime = p = 0$, and for $p=1/2$, $p^\prime = 1/2$, so this renormalization procedure has fixed points at $p=0,1/2$. However, $p^{\prime} \geq p$ for $p\in (0 ,1/2]$, thus any state in this range flows to $p=1/2$ after sufficiently many renormalization steps. As mentioned, since $U_{CX}$ is a unitary operator and $\mathcal{C}_{\overline{w}}$ are locally reversible, one can reverse each renormalization step with
\begin{equation}
    \rho_{p} = \bigotimes_{\overline{w}} \widetilde{\mathcal{C}}_{\overline{w}} \left( U_{CX}^{\dagger}\rho_{p^{\prime}}U_{CX} \right) .
\end{equation}
Therefore, by successively applying the locally reversible circuit Eq.~(\ref{eq:tc-rg-channel}) to each renormalized system, one obtains a locally reversible LC circuit $\mathcal{C}$ which maps
\begin{equation}
    \mathcal{C}\left(\rho_{p}\right) \approx \lim_{q \rightarrow 1/2^{-}} \rho_{q} \quad \Leftrightarrow \quad \widetilde{\mathcal{C}}\left( \lim_{q \rightarrow 1/2^{-}} \rho_{q} \right) \approx \rho_{p}.
\end{equation}
Furthermore, one only needs to apply $O(\log \log (L^2 / \varepsilon ))$ renormalization steps to obtain a renormalized state $\rho_{q}$ which has trace distance
\begin{equation}
    \label{eq:tc-renorm-dist}
    \frac{1}{2} \norm{\rho_{q} - \rho_{1/2}}_{1} < \sqrt{2\varepsilon}
\end{equation}
with $\rho_{1/2}$, which we will show shortly. To obtain $\varepsilon = 1 / \Omega(\poly(L))$, one can apply an $(r,\ell)$ locally reversible LC circuit with $r\ell = O(\polylog(L))$, thus $\rho_{q}$ and $\rho_{1/2}$ are in the same LC phase. However, the fixed point state
\begin{equation}
    \label{eq:tc-fixed-point}
    \rho_{1/2} = \sum_{\substack{\vec{\mu}:  \vert \vec{\mu} \vert = 0 \bmod 2}} \frac{1}{2^{L^2-1}} \dyad{\vec{0},\vec{\mu},\bar{0} \bar{0}}{\vec{0},\vec{\mu},\bar{0} \bar{0}} 
\end{equation}
is not in the same LC phase as $\dyad{\overline{0}\overline{0}}{\overline{0}\overline{0}}$. This can be seen as $\rho_{1/2} = \mathcal{W} (\dyad{0}{0}^{\otimes n})$, where $\mathcal{W}$ is the LC circuit
\begin{align}
    &\mathcal{W} (\cdot) = \bigotimes_{w \in \Delta_{2}} \mathcal{W}_{w} (\cdot) , \; \mathcal{W}_{w} (\rho) =  \frac{1}{2}\left( \rho + B_{w} \rho B_{w} \right) ,
\end{align}
and one can obtain $\dyad{0}{0}^{\otimes n}$ from $\rho_{1/2}$ by applying the channel $\Tr_{i} (\cdot) \dyad{0}{0}_{i}$ to each individual qubit $i$. $\mathcal{W}$ is not necessarily locally reversible, however it still helps show that $\rho_{1/2}$ is not in the same LC phase as $\dyad{\overline{0}\overline{0}}{\overline{0}\overline{0}}$. This is because no shallow LC circuit can prepare any toric code state, such as $\dyad{\overline{0}\overline{0}}{\overline{0}\overline{0}}$, from a product state~\cite{konig2014}. Since one can convert between $\dyad{0}{0}^{\otimes n}$ and $\rho_{1/2}$ using the two finite-depth LC circuits above, it is impossible to prepare $\dyad{\overline{0}\overline{0}}{\overline{0}\overline{0}}$ from $\rho_{1/2}$ using any shallow LC circuit, including any shallow locally reversible LC circuit. Therefore, $\dyad{\overline{0}\overline{0}}{\overline{0}\overline{0}}$ and $\rho_{1/2}$ are in different LC phases.

Lastly, it remains to verify Eq.~(\ref{eq:tc-renorm-dist}) holds for $\varepsilon = 1 / \Omega(\poly(L))$ and a renormalization circuit with $r\ell = O(\polylog(L))$. First, we will compute the fidelity,
\begin{equation}
    F(\rho_{p} , \rho_{1/2}) = \Tr \sqrt{\sqrt{\rho_{1/2}} \rho_{p} \sqrt{\rho_{1/2}}} .
\end{equation}
Since $\rho_{p}$ and $\rho_{1/2}$ are diagonal in the $\ket{\vec{m},\vec{e},\bar{\ell}_1 \bar{\ell}_2}$ basis,
\begin{equation}
    \sqrt{\rho_{1/2}} = \sum_{\substack{\vec{\mu} : \vert \vec{\mu} \vert = 0 \bmod 2}} \frac{1}{\sqrt{2^{L^2-1}}} \dyad{\vec{0},\vec{\mu},\bar{0} \bar{0}}{\vec{0},\vec{\mu},\bar{0} \bar{0}} ,
\end{equation}
and one obtains, after simplifying and taking the trace,
\begin{align}
    &F(\rho_{p} , \rho_{1/2}) = \sqrt{\frac{N_{p} (1-p)^{L^2}}{2^{L^2-1}}} \sum_{\substack{\vec{\mu}: \\ \vert \vec{\mu} \vert = 0 \bmod 2}} \left(\frac{p}{1-p}\right)^{\vert \vec{\mu} \vert/2}  \nonumber \\
    &= 2 \frac{2^{-L^2 / 2} \left(1-p\right)^{L^2/2}}{\sqrt{1+(1-2p)^{L^2}}} \sum_{\substack{\vec{\mu}: \\ \vert \vec{\mu} \vert = 0 \bmod 2}} \left( \frac{p}{1-p} \right)^{\vert \vec{\mu} \vert/2} .
\end{align}
Define a parameter $\beta$ similar to inverse temperature via
\begin{equation}
    e^{-2 \beta} = \frac{p}{1-p},
\end{equation}
where $p \in [0,1/2]$ and $\beta \in [0,\infty]$. This gives
\begin{align}
    &\sum_{\substack{\vec{\mu}: \vert \vec{\mu} \vert = 0 \bmod 2}} \left( \frac{p}{1-p} \right)^{\vert \vec{\mu} \vert/2} = \sum_{\substack{m:  m = 0 \bmod 2}} \binom{L^2}{m} e^{-\beta m} \nonumber \\
    &= \frac{1}{2} \left( (1+e^{-\beta})^{L^2} + (1-e^{-\beta})^{L^2} \right), 
\end{align}
and yields the fidelity in terms of $\beta$,
\begin{align}
    F(\rho_{p} , \rho_{1/2}) =& \frac{2^{-L^2 / 2}}{\sqrt{1+(\tanh \beta)^{L^2}}} \left( \frac{e^{\beta}}{e^{\beta} + e^{-\beta}} \right)^{L^2 / 2} \nonumber \\
    &\times \left( (1+e^{-\beta})^{L^2} + (1-e^{-\beta})^{L^2} \right) .
\end{align}
One finds for any $p \in [0,1/2]$,
\begin{equation}
    p^{L^2} \left(1+(1-2p)^{L^2}\right) \leq 2^{-L^2} ,
\end{equation}
which can be rewritten in terms of $\beta$ to yield the bound
\begin{equation}
    \frac{2^{-L^2 / 2}}{\sqrt{1+(\tanh \beta)^{L^2}}} \geq \left( \frac{e^{-\beta}}{e^{\beta} + e^{-\beta}} \right)^{L^2 / 2} .
\end{equation}
Noting $(1-e^{-\beta})^{L^2} \geq 0$, we can lower bound the fidelity,
\begin{align}
    &F(\rho_{p} , \rho_{1/2}) \geq \nonumber \\
    &\left( \frac{e^{-\beta}}{e^{\beta} + e^{-\beta}} \right)^{L^2 / 2} \left( \frac{e^{\beta}}{e^{\beta} + e^{-\beta}} \right)^{L^2 / 2} \left( (1+e^{-\beta})^{L^2} \right) \nonumber \\
    &= \left( \frac{e^{\beta}+1}{e^{2\beta} + 1} \right)^{L^2} \geq \left( \frac{e^{\beta}}{e^{2\beta}} \right)^{L^2} = e^{-\beta L^2} ,
\end{align}
where the last inequality holds since $e^{2\beta} \geq e^{\beta}$. Moreover, the renormalized probability, Eq.~(\ref{eq:renorm-prob}), can be rewritten in terms of $\beta$ as $\tanh \beta^\prime = \tanh^4 \beta$, or rather $\beta^\prime = \tanh^{-1}\left( \tanh^4 \beta\right)$. As $p \rightarrow 1/2$, $\beta \rightarrow 0$. Also,
\begin{equation}
    \label{eq:rg-beta-bound}
    \beta^\prime = \tanh^{-1}\left( \tanh^4 \beta\right) \leq \beta^{4} 
\end{equation}
for all $\beta \geq 0$. Denoting $\beta^{(k)}$ as $\beta$ after $k$ renormalization steps and the renormalized system size $L^{(k)} = L/2^{k}$, one eventually obtains $\beta^{(k_{0})} < 1$ for some constant $k_{0}$. Also, from Eq.~(\ref{eq:rg-beta-bound}), $\beta^{(k+h)} \leq \left(\beta^{(k)}\right)^{4^{h}}$ is always true. Suppose at step $k > k_{0}$, one has
\begin{equation}
    \beta^{(k)} \leq \left(\beta^{(k_{0})}\right)^{4^{k-k_{0}}} < \varepsilon / L^2
\end{equation}
for some $\varepsilon$. Rearranging, we see that
\begin{equation}
    k > k_{0} + \log_{4} \log_{{\beta^{(k_{0})}}^{-1}} \left(L^2/\varepsilon \right)
\end{equation}
and the fidelity is lower bounded by
\begin{equation}
    e^{-\beta^{(k)} \left(L^{(k)}\right)^2} \geq e^{-(\varepsilon / L^2) (L^2 / 4^{k})} \geq \left( 1 - \varepsilon \right)^{1/4^{k}}.
\end{equation}
Since $\left( 1 - \varepsilon \right)^{1/4^{k}} \geq 1 - \varepsilon$ for $k \geq 1$, we find that after $k = O(\log \log (L^2 / \varepsilon ))$ renormalization steps, $F(\rho_{q} , \rho_{1/2}) \geq 1 - \varepsilon$. We can then bound the trace distance as~\cite{fuchs1999}
\begin{equation}
    \frac{1}{2} \norm{\rho_{q} - \rho_{1/2}}_{1} \leq \sqrt{1 - F(\rho_{q} , \rho_{1/2})^2} \leq \sqrt{2\varepsilon} .
\end{equation}
Each renormalization step requires operations with range at most $O\left(2^{\log \log (L^2 / \varepsilon )}\right) = O(\log (L^2 / \varepsilon ))$, therefore to achieve $\varepsilon = 1 / \Omega(\poly(L))$, the channel needs at most $r\ell = \polylog(L)$. This verifies Eq.~(\ref{eq:tc-renorm-dist}), therefore $\rho_{p}$ and $\rho_{1/2}$ are in the same LC phase for $0 < p \leq 1/2$.

To close this section, we point out that the fixed point state, $\rho_{1/2}$, is locally indistinguishable from the classical loop state discussed in Ref.~\cite{sang2025reversibility}. As shown in Ref.~\cite{sang2025reversibility}, this state is neither in the same phase as the 2D toric code nor in the trivial phase, rather it exhibits long-range classical correlations but is not long-range entangled.

\section{Short derivations and proofs} \label{short-proofs}

First, we present derivations of Eq.~(\ref{eq:decoding-equivalence-alternate}) and Eq.~(\ref{eq:ft-implications}) in Sections~\ref{encoded-info} and~\ref{ft-phase-defs}.

\begin{proof}[Derivation of Eq.~(\ref{eq:decoding-equivalence-alternate})]
    Using the triangle inequality,
    \begin{align}
        &\norm{\mathcal{D}_{1} \circ \Lambda_{2} \circ \Lambda_{1} (\rho_{1}) - \mathcal{D}_{1} (\rho_{1})}_{1} \\
        &\leq \norm{\mathcal{D}_{1} \circ \Lambda_{2} \circ \Lambda_{1} (\rho_{1}) - \mathcal{D}_{1} \circ \Lambda_{2} (\rho_{2})}_{1} \nonumber \\
        &+ \norm{\mathcal{D}_{1} \circ \Lambda_{2} (\rho_{2}) - \mathcal{D}_{2} (\rho_{2})}_{1} + \norm{\mathcal{D}_{2} (\rho_{2}) - \mathcal{D}_{2} \circ \Lambda_{1} (\rho_{1})}_{1} \nonumber \\
        &+ \norm{\mathcal{D}_{2} \circ \Lambda_{1} (\rho_{1}) - \mathcal{D}_{1} (\rho_{1})}_{1} \nonumber 
    \end{align}
    for some $\rho_{2}$. Contractivity of the trace distance yields
    \begin{align}
        &\norm{\mathcal{D}_{1} \circ \Lambda_{2} \circ \Lambda_{1} (\rho_{1}) - \mathcal{D}_{1} (\rho_{1})}_{1} \leq 2 \norm{\Lambda_{1} (\rho_{1}) - \rho_{2}}_{1}  \\
        &+ \norm{\mathcal{D}_{1} \circ \Lambda_{2} (\rho_{2}) - \mathcal{D}_{2} (\rho_{2})}_{1} + \norm{\mathcal{D}_{2} \circ \Lambda_{1} (\rho_{1}) - \mathcal{D}_{1} (\rho_{1})}_{1} . \nonumber
    \end{align}
    Choosing $\rho_{2} \in \Sigma_{2}$ such that $\Lambda_{1} (\rho_{1}) \approx \rho_{2}$, all terms are $= 1/\Omega(\poly(n))$, therefore
    \begin{equation}
        \norm{\mathcal{D}_{1} \circ \Lambda_{2} \circ \Lambda_{1} (\rho_{1}) - \mathcal{D}_{1} (\rho_{1})}_{1} = 1/\Omega(\poly(n)) .
    \end{equation}
    A similar argument yields
    \begin{equation}
        \norm{\mathcal{D}_{2} \circ \Lambda_{1} \circ \Lambda_{2} (\rho_{2}) - \mathcal{D}_{2} (\rho_{2})}_{1} = 1/\Omega(\poly(n)) .
    \end{equation}
\end{proof}

\begin{proof}[Derivation of Eq.~(\ref{eq:ft-implications})]
    Using contractivity of the trace distance,
    \begin{equation}
        \norm{\widetilde{\mathcal{N}} \circ \Lambda (\rho_1) - \widetilde{\mathcal{N}}(\rho_{2})}_{1} \leq \norm{\Lambda (\rho_{1}) - \rho_{2}}_{1} ,
    \end{equation}
    therefore $\widetilde{\mathcal{N}} \circ \Lambda (\rho_1) \approx \widetilde{\mathcal{N}}(\rho_{2})$ if $\rho_{2} \approx \Lambda (\rho_1)$.
    Using the triangle inequality and contractivity of the trace distance,
    \begin{align}
        &\norm{\mathcal{D}_{2}(\rho_{2}) - \mathcal{D}_{2} \circ \mathcal{R}_{2} \circ \Lambda^{\mathcal{N}} (\rho_1)}_{1} \\
        &\leq \norm{\mathcal{D}_{2}(\rho_{2}) - \mathcal{D}_{2} \circ \mathcal{R}_{2} \circ \widetilde{\mathcal{N}}(\rho_{2})}_{1} \nonumber \\
        &+ \norm{\mathcal{D}_{2} \circ \mathcal{R}_{2} \circ \widetilde{\mathcal{N}}(\rho_{2}) - \mathcal{D}_{2} \circ \mathcal{R}_{2} \circ \Lambda^{\mathcal{N}} (\rho_1)}_{1} \nonumber \\
        &\leq \norm{\mathcal{D}_{2}(\rho_{2}) - \mathcal{D}_{2} \circ \mathcal{R}_{2} \circ \widetilde{\mathcal{N}}(\rho_{2})}_{1} + \norm{\widetilde{\mathcal{N}}(\rho_{2}) - \Lambda^{\mathcal{N}} (\rho_1)}_{1} . \nonumber
    \end{align}
    Since $\widetilde{\mathcal{N}} \circ \Lambda (\rho_1) \approx \widetilde{\mathcal{N}}(\rho_{2})$ and $\Lambda^{\mathcal{N}} (\rho_1) \approx \widetilde{\mathcal{N}} \circ \Lambda (\rho_1)$ if $\rho_{2} \approx \Lambda (\rho_1)$, we obtain $\mathcal{D}_{2}(\rho_{2}) \approx \mathcal{D}_{2} \circ \mathcal{R}_{2} \circ \Lambda^{\mathcal{N}} (\rho_1)$.
\end{proof}

Next, we prove Lemma~\ref{lemma:ft-composition} in Section~\ref{ft-phase-defs}.

\begin{proof}[Proof of Lemma~\ref{lemma:ft-composition}]
    Consider a noisy channel $(\Lambda_{2} \circ \Lambda_{1})^{\mathcal{N}}$, where $\mathcal{N} \in (\mathbb{N}_{\tau}, \mathbb{M}_{\zeta})$. Suppose $\zeta < \zeta^{\star}$ and $\tau \leq \tau + \eta_{1} < \tau^{\star}$ for $\eta_{1} = \eta_{1}(\tau,\zeta)$ above. $\mathcal{N}$ can be split into $\mathcal{N}_1 \in (\mathbb{N}_{\tau}, \mathbb{M}_{\zeta})$ acting on $\Lambda_{1}$ and $\mathcal{N}_2 \in (\mathbb{N}_{\tau}, \mathbb{M}_{\zeta})$ acting on $\Lambda_{2}$, $(\Lambda_{2} \circ \Lambda_{1})^{\mathcal{N}} = \Lambda_{2}^{\mathcal{N}_2} \circ \Lambda_{1}^{\mathcal{N}_1}$. Since $\Lambda_{1}$ is fault tolerant, property 1 of Definition~\ref{def:ft-channel} implies
    \begin{align}
        &(\Lambda_{2} \circ \Lambda_{1})^{\mathcal{N}} (\rho_1) = \Lambda_{2}^{\mathcal{N}_2} \circ \Lambda_{1}^{\mathcal{N}_1} (\rho_1) \\
        &\approx \Lambda_{2}^{\mathcal{N}_2} \circ \widetilde{\mathcal{N}}_1 (\rho_2)
        = \Lambda_{2}^{\mathcal{N}_2^{\prime}} (\rho_2) \quad \forall \rho_1 \in \Sigma_{1}, \nonumber
    \end{align}
    where $\widetilde{\mathcal{N}}_1 \in \mathbb{N}_{\eta_{1}}$, $\Lambda_{2}^{\mathcal{N}_2^{\prime}} = \Lambda_{2}^{\mathcal{N}_2} \circ \widetilde{\mathcal{N}}_1$ has $\widetilde{\mathcal{N}}_1$ absorbed into its first layer of qudit noise, and $\rho_2 \approx \Lambda_{1}(\rho_{1})$, $\rho_{2} \in \Sigma_{2}$. Since each layer of qudit noise in $\Lambda_{2}^{\mathcal{N}_2}$ is in $\mathbb{N}_{\tau}$ and $\widetilde{\mathcal{N}}_1 \in \mathbb{N}_{\eta_{1}}$, the first qudit noise layer of $\Lambda_{2}^{\mathcal{N}_2^{\prime}}$ must be in $\mathbb{N}_{\tau + \eta_{1}}$. Therefore, $\mathcal{N}_2^{\prime} \in (\mathbb{N}_{\tau + \eta_{1}}, \mathbb{M}_{\zeta})$ for $\Lambda_{2}^{\mathcal{N}_2^{\prime}}$. Since $\Lambda_{2}$ is fault tolerant and $\tau + \eta_{1} < \tau^{\star}$, Definition~\ref{def:ft-channel} similarly implies
    \begin{equation}
        \Lambda_{2}^{\mathcal{N}_2^{\prime}} (\rho_2) \approx \widetilde{\mathcal{N}}_2^{\prime} (\rho_3) \quad \forall \rho_2 \in \Sigma_{2}
    \end{equation}
    where $\widetilde{\mathcal{N}}_2^{\prime} \in \mathbb{N}_{\eta_{2}}$ for $\eta_{2} = \eta_{2}(\tau + \eta_{1},\zeta)$ and $\rho_3 \approx \Lambda_{2}(\rho_{2})$, $\rho_3 \in \Sigma_{3}$. This means
    \begin{equation}
        (\Lambda_{2} \circ \Lambda_{1})^{\mathcal{N}} (\rho_1) \approx  \widetilde{\mathcal{N}}_2^{\prime} \circ \Lambda_{2} \circ \Lambda_{1} (\rho_1) \quad \forall \rho_1 \in \Sigma_{1} .
    \end{equation}
    Also, property 2 of Definition~\ref{def:ft-channel} implies there exists a recovery channel $\mathcal{R}_{3}$ for $\Sigma_{3}$ such that $\widetilde{\mathcal{N}}_2^{\prime}$ is $\mathcal{R}_{3}$-recoverable. Therefore, $\Lambda_{2} \circ \Lambda_{1}$ is fault tolerant.
\end{proof}

Next, we present a result used repeatedly in later appendices. This result has been previously used to bound Pauli error probability distributions in Refs.~\cite{bombin2015singleshot,kubica2022}, but we provide a full derivation for completeness.

\begin{proposition}
    \label{proposition:bounded-composition}
    Consider two probability distributions $\pr_{1}(\alpha)$ and $\pr_{2}(\beta)$ and functions $g$ and $h$ which map $\alpha$ and $\beta$ to some set $V$. Suppose probability distributions over $g(\alpha) \subseteq V$ and $h(\beta) \subseteq V$ are $\tau_{1}$ and $\tau_{2}$-bounded, namely for all $A \subseteq V$,
    \begin{equation}
        \sum_{\alpha : g(\alpha) \supseteq A} \pr_{1}(\alpha) \leq \tau_{1}^{\vert A \vert} , \quad \sum_{\beta : h(\beta) \supseteq A} \pr_{2}(\beta) \leq \tau_{2}^{\vert A \vert} .
    \end{equation}
    Then, the composite distribution over $g(\alpha) \cup h(\beta) \subseteq V$ is $(\tau_{1}+\tau_{2})$-bounded, for all $A \subseteq V$,
    \begin{equation}
        \sum_{\alpha , \beta : g(\alpha) \cup h(\beta) \supseteq A} \pr_{1}(\alpha) \pr_{2}(\beta) \leq \left(\tau_1 + \tau_2\right)^{\vert A \vert} .
    \end{equation}
\end{proposition}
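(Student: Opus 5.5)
The plan is a union bound over how a fixed set $A$ can be covered by the two independent supports. Fix $A \subseteq V$. The event $g(\alpha) \cup h(\beta) \supseteq A$ holds exactly when there is a partition $A = B \sqcup (A \setminus B)$ with $g(\alpha) \supseteq B$ and $h(\beta) \supseteq A \setminus B$. One valid choice is $B = A \cap g(\alpha)$: then every element of $A$ outside $g(\alpha)$ must lie in $h(\beta)$. I will therefore bound the indicator of the event by a sum over subsets rather than try to identify the event exactly,
\begin{equation}
    \mathbf{1}\left[g(\alpha) \cup h(\beta) \supseteq A\right] \leq \sum_{B \subseteq A} \mathbf{1}\left[g(\alpha) \supseteq B\right] \mathbf{1}\left[h(\beta) \supseteq A \setminus B\right] .
\end{equation}
This holds because whenever the left side equals $1$, the term $B = A \cap g(\alpha)$ on the right equals $1$, and every term on the right is nonnegative.

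Next I multiply by $\pr_{1}(\alpha)\pr_{2}(\beta)$ and sum over $\alpha$ and $\beta$. Because the joint weight is a product, each term in the sum over $B$ factorizes into
\begin{equation}
    \left(\sum_{\alpha : g(\alpha) \supseteq B} \pr_{1}(\alpha)\right)\left(\sum_{\beta : h(\beta) \supseteq A\setminus B} \pr_{2}(\beta)\right) .
\end{equation}
Applying the $\tau_{1}$- and $\tau_{2}$-boundedness hypotheses to the two factors bounds this term by $\tau_{1}^{\vert B \vert}\tau_{2}^{\vert A \vert - \vert B \vert}$. This step also covers $B=\emptyset$ and $B=A$, since the hypothesis gives the trivial bound $1$ for the empty set.

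Finally, I group the subsets $B$ by size and apply the binomial theorem:
\begin{equation}
    \sum_{B \subseteq A} \tau_{1}^{\vert B \vert}\tau_{2}^{\vert A \vert - \vert B \vert} = \sum_{k=0}^{\vert A \vert}\binom{\vert A \vert}{k}\tau_{1}^{k}\tau_{2}^{\vert A \vert - k} = (\tau_{1}+\tau_{2})^{\vert A \vert} ,
\end{equation}
which is the claimed bound.

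The only step requiring care is the first one, the indicator inequality. The right-hand side overcounts whenever $g(\alpha)$ and $h(\beta)$ overlap on $A$, but an overcount is acceptable here because only an upper bound is needed. The argument also uses that the two distributions are independent, which the statement assumes through the product form $\pr_{1}(\alpha)\pr_{2}(\beta)$. Apart from these points the proof is routine.
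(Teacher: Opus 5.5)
Your proof is correct and follows the paper's route: decompose the covering event over subsets $B \subseteq A$, factorize using the product weights, apply the two boundedness hypotheses, and sum with the binomial theorem. The only cosmetic difference is that the paper first writes an exact decomposition using the unique split $E = A \setminus h(\beta)$ (imposing $h(\beta) \cap E = \emptyset$) and then drops that constraint. You instead go straight to the overcounting indicator bound via $B = A \cap g(\alpha)$.
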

\begin{proof}
    \begin{align}
        &\sum_{\alpha , \beta : g(\alpha) \cup h(\beta) \supseteq A} \pr_{1}(\alpha) \pr_{2}(\beta)
        \nonumber \\
        &= \sum_{E : E \subseteq A} \sum_{\substack{\alpha : g(\alpha) \supseteq E}} \sum_{\substack{\beta : h(\beta) \supseteq (A \setminus E) : \\ h(\beta) \cap E = \emptyset}} \pr_{1}(\alpha) \pr_{2}(\beta)
        \nonumber \\
        &\leq \sum_{E : E \subseteq A}
        \left(\sum_{\alpha : g(\alpha) \supseteq E} \pr_{1}(\alpha)\right) \left(\sum_{\beta : h(\beta) \supseteq (A \setminus E)} \pr_{2}(\beta)\right) \nonumber \\
        &\leq \sum_{E : E \subseteq A} \tau_{1}^{\vert E \vert} \tau_{2}^{\vert A \setminus E \vert}
        = \sum_{\vert E \vert = 0}^{\vert A \vert} \binom{\vert A \vert}{\vert E \vert} \tau_{1}^{\vert E \vert} \tau_{2}^{\vert A \vert - \vert E \vert} \nonumber \\ &
        = \left(\tau_1 + \tau_2\right)^{\vert A \vert} ,
    \end{align}
    where $A$ decomposes into $E \subseteq g(\alpha) \setminus \left( g(\alpha) \cap h(\beta) \right)$ and $(A \setminus E) \subseteq h(\beta)$ in the second line.
\end{proof}

We can now demonstrate that local stochastic noise $\mathbb{L}_{\lambda}$, described in Section~\ref{noise-char}, satisfies Definition~\ref{def:noise-class}. This result is given by the following proposition.

\begin{proposition}
    \label{proposition:stochastic-consistent}
    The class of local stochastic noise, $\mathbb{L}_{\lambda}$, satisfies Definition~\ref{def:noise-class}.
\end{proposition}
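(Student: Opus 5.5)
We verify the three properties of Definition~\ref{def:noise-class} for $\mathbb{L}_{\lambda}$, taking $\tau=\lambda\in(0,1)$ as the parameter. Throughout, a channel $\mathcal{N}\in\mathbb{L}_{\lambda}$ is a convex combination $\sum_{\alpha}\pr(\alpha)\Phi_{\alpha}$ whose induced support distribution $\pr(A)=\sum_{\alpha:\supp\Phi_\alpha=A}\pr(\alpha)$ is $\lambda$-bounded in the sense of Definition~\ref{def:tau-bound}.

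\textbf{Property 1 (monotonicity).} Let $\kappa\le\tau$ and $\mathcal{N}\in\mathbb{L}_{\kappa}$, with its witnessing decomposition. For every subset $A$ we have
\[
\sum_{A'\supseteq A}\pr(A')\le\kappa^{|A|}\le\tau^{|A|},
\]
because $\kappa\le\tau$ and $|A|\ge0$. The same decomposition therefore witnesses $\mathcal{N}\in\mathbb{L}_{\tau}$.

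\textbf{Property 2 (contains local stochastic noise).} Choose the $\lambda$ in the definition to be $\tau$ itself. Then $\mathbb{L}_{\tau}\subseteq\mathbb{L}_{\tau}$ holds trivially.

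\textbf{Property 3 (composition).} Take $\mathcal{N}_1=\sum_\alpha\pr_1(\alpha)\Phi_\alpha\in\mathbb{L}_{\tau_1}$ and $\mathcal{N}_2=\sum_\beta\pr_2(\beta)\Psi_\beta\in\mathbb{L}_{\tau_2}$. Their composition is
\[
\mathcal{N}_1\circ\mathcal{N}_2=\sum_{\alpha,\beta}\pr_1(\alpha)\pr_2(\beta)\,\Phi_\alpha\circ\Psi_\beta .
\]
This is again a convex combination, with product weights, so it has the required form of Eq.~(\ref{eq:qudit-noise}). Each $\Phi_\alpha\circ\Psi_\beta$ acts as the identity outside $\supp\Phi_\alpha\cup\supp\Psi_\beta$, so $\supp(\Phi_\alpha\circ\Psi_\beta)\subseteq\supp\Phi_\alpha\cup\supp\Psi_\beta$.

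The subtlety is that the composite support may be strictly smaller than this union. We handle it by monotonicity of the bounding sum, which only enlarges when actual supports are replaced by the union:
\[
\sum_{\alpha,\beta:\ \supp(\Phi_\alpha\circ\Psi_\beta)\supseteq A}\pr_1(\alpha)\pr_2(\beta)\;\le\;\sum_{\alpha,\beta:\ \supp\Phi_\alpha\cup\supp\Psi_\beta\supseteq A}\pr_1(\alpha)\pr_2(\beta).
\]
We now apply Proposition~\ref{proposition:bounded-composition} with $g(\alpha)=\supp\Phi_\alpha$ and $h(\beta)=\supp\Psi_\beta$. Its hypotheses are exactly the $\tau_1$- and $\tau_2$-boundedness of $\mathcal{N}_1$ and $\mathcal{N}_2$. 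It bounds the right-hand side by $(\tau_1+\tau_2)^{|A|}$.

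Hence $\mathcal{N}_1\circ\mathcal{N}_2\in\mathbb{L}_{\tau_1+\tau_2}$, whenever $\tau_1+\tau_2<1$ so that this class is defined.

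Property 3 is the only step with real content, and its main obstacle is the support subtlety, which the displayed inequality resolves. The Pauli variant goes through identically: a product of Pauli operators is again Pauli, so Pauli decompositions are preserved under composition.
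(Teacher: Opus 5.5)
Your proof is correct and follows the paper's argument essentially step for step: property~1 from $\kappa^{|A|}\le\tau^{|A|}$, property~2 by taking $\lambda=\tau$, and property~3 via the product decomposition, the inclusion $\supp(\Phi_\alpha\circ\Psi_\beta)\subseteq\supp\Phi_\alpha\cup\supp\Psi_\beta$, and Proposition~\ref{proposition:bounded-composition}. Your caveat that property~3 is only meaningful when $\tau_1+\tau_2<1$ is a small point the paper leaves implicit.
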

\begin{proof}
    Property 1 of Definition~\ref{def:noise-class} follows trivially from Definition~\ref{def:tau-bound} and Eq.~(\ref{eq:local-stochastic-pauli}), as $\kappa^{\vert A \vert} \leq \tau^{\vert A \vert}$ for all $0\leq \kappa \leq \tau$ therefore $\mathbb{L}_{\kappa} \subseteq \mathbb{L}_{\tau}$ for all $\kappa \leq \tau$. Property 2 is also satisfied as $\mathbb{L}_{\lambda} \subseteq \mathbb{L}_{\lambda}$ is a subset of itself. To demonstrate property 3, suppose $\mathcal{N} \in \mathbb{L}_{\tau_1}$ and $\mathcal{M} \in \mathbb{L}_{\tau_2}$, where
    \begin{align}
        &\mathcal{N} (\cdot) = \sum_{\alpha} \pr_{\mathcal{N}}(\alpha) \Phi_{\alpha} (\cdot) , \; \mathcal{M} (\cdot) = \sum_{\beta} \pr_{\mathcal{M}}(\beta) \Phi_{\beta} (\cdot) , \nonumber \\
        &\mathcal{N} \circ \mathcal{M}= \sum_{\nu} \pr_{\mathcal{N} \circ \mathcal{M}}(\nu) \Phi_{\nu} (\cdot) , \\
        &\pr_{\mathcal{N} \circ \mathcal{M}}(\nu) = \sum_{\alpha,\beta : \Phi_{\alpha} \circ \Phi_{\beta} = \Phi_{\nu}} \pr_{\mathcal{N}}(\alpha) \pr_{\mathcal{M}}(\beta) . \nonumber
    \end{align}
    Then, we can show $\pr_{\mathcal{N} \circ \mathcal{M}}(Q)$ is $(\tau_1 + \tau_2)$-bounded,
    \begin{align}
        &\sum_{\nu: \supp \Phi_{\nu} \supseteq A} \pr_{\mathcal{N} \circ \mathcal{M}}(\nu) \nonumber \\
        &= \sum_{\substack{\alpha,\beta : \supp \left( \Phi_{\alpha} \circ \Phi_{\beta} \right) \supseteq A}} \pr_{\mathcal{N}}(\alpha) \pr_{\mathcal{M}}(\beta)  \\
        &\leq \sum_{\substack{\alpha,\beta : \supp \Phi_{\alpha} \cup \supp \Phi_{\beta} \supseteq A}} \pr_{\mathcal{N}}(\alpha) \pr_{\mathcal{M}}(\beta) \leq \left(\tau_1 + \tau_2\right)^{\vert A \vert} . \nonumber
    \end{align}
    where the last inequality follows from Proposition~\ref{proposition:bounded-composition}, setting $g(\alpha) = \supp \Phi_{\alpha}$, and $h(\beta) = \supp \Phi_{\beta}$.
    Therefore, $\mathcal{N} \circ \mathcal{M} \in \mathbb{L}_{\tau_1 + \tau_2}$, and $\mathbb{L}_{\tau_1} \circ \mathbb{L}_{\tau_2} \subseteq \mathbb{L}_{\tau_1 + \tau_2}$.
\end{proof}

Lastly, we prove the following bound on all possible decompositions of Pauli channels. This result is useful for analyzing whether any decomposition of a channel is $\tau$-bounded. Specifically, if the channel $\mathcal{N}$ admits a Pauli representation and this representation is not $\tau$-bounded, then no decomposition of $\mathcal{N}$ is $\tau$-bounded.

\begin{lemma}
    \label{lemma:unitary-channel-bound}
    Consider a channel which admits a Pauli representation, $\mathcal{N} = \{\sqrt{\pr(P)} P\}$ with $P \in \mathcal{P}$. Then, any decomposition $\mathcal{N} (\cdot) = \sum_{\alpha} \pr_{\mathcal{N}}(\alpha) \Phi_{\alpha} (\cdot)$ satisfies for all $A \subseteq \mathcal{Q}$,
    \begin{equation}
        \label{eq:unitary-bound}
        \sum_{P : \supp P \supseteq A} \pr(P) \leq
        \sum_{\alpha : \supp \Phi_{\alpha} \supseteq A} \pr_{\mathcal{N}}(\alpha) .
    \end{equation}
\end{lemma}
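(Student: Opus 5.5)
The plan is to exploit the uniqueness of the Pauli (chi-matrix) representation of a channel and extract the Pauli weights through a local averaging operation. Fix $A \subseteq \mathcal{Q}$. For each Pauli $P$, $\supp P \supseteq A$ holds exactly when $P$ acts nontrivially on every qudit of $A$. I would use the single-qudit Pauli twirl (or its complement) to detect which sites a channel acts nontrivially on. Concretely, for a site $q$, define the projector superoperator $\mathcal{T}_q$ that maps any channel $\Phi$ to the part of its process (chi) matrix supported on Paulis that are nontrivial at $q$. Then compose these over $q\in A$, giving $\mathcal{T}_A$.

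The first step is to write each $\Phi_\alpha$ in its chi-matrix form in the Pauli basis, $\Phi_\alpha(\rho)=\sum_{P,Q}\chi^{(\alpha)}_{PQ}P\rho Q^\dagger$. Then $\sum_\alpha \pr_{\mathcal{N}}(\alpha)\chi^{(\alpha)} = \chi^{\mathcal{N}}$, which by uniqueness of the chi matrix is diagonal with entries $\pr(P)$. Taking diagonal entries gives
\[
\pr(P)=\sum_\alpha \pr_{\mathcal{N}}(\alpha)\chi^{(\alpha)}_{PP}.
\]

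The second step uses two facts about each $\Phi_\alpha$. First, $\chi^{(\alpha)}_{PP}\ge 0$, because the chi matrix is positive semidefinite. Second, trace preservation gives $\sum_P \chi^{(\alpha)}_{PP}=1$; this follows from $\Tr\Phi(I/d^n)=1$, since the off-diagonal terms vanish under the trace of $PQ^\dagger$. The third fact is that if $\Phi_\alpha$ is supported on $B=\supp\Phi_\alpha$, then $\chi^{(\alpha)}_{PP}=0$ whenever $\supp P\not\subseteq B$. This holds because $\Phi_\alpha=\Phi'_B\otimes\id_{B^c}$, whose Kraus operators expand only in Paulis supported on $B$.

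Given these facts, I can sum over $P$ with $\supp P\supseteq A$:
\[
\sum_{P:\supp P\supseteq A}\pr(P)=\sum_\alpha \pr_{\mathcal{N}}(\alpha)\sum_{P:\supp P\supseteq A}\chi^{(\alpha)}_{PP}.
\]
The inner sum is zero unless $B_\alpha\supseteq A$, since any contributing $P$ must satisfy $A\subseteq\supp P\subseteq B_\alpha$. When $B_\alpha\supseteq A$, the inner sum is at most $\sum_P\chi^{(\alpha)}_{PP}=1$. Combining these gives Eq.~(\ref{eq:unitary-bound}).

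The main obstacle is the support claim for the chi matrix, together with the care needed for qudit Paulis, where $Q^\dagger$ appears and the basis is orthogonal under the Hilbert–Schmidt inner product. I would handle it by expanding each Kraus operator $K=\sum_P c_P P$. Here $c_P=d^{-n}\Tr(P^\dagger K)$, and this vanishes unless $\supp P\subseteq B$, because $K=K_B\otimes I$ and the trace of a nonidentity Pauli is zero. It then follows that $\chi_{PP}=\sum_K|c_P^K|^2$, which is manifestly nonnegative, supported on $B$, and sums to $d^{-n}\Tr\sum_K K^\dagger K=1$.
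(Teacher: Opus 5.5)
Your proof is correct and is essentially the paper's argument: both expand the Kraus operators of each $\Phi_\alpha$ in the Pauli basis, identify the total diagonal weight on $P$ with $\pr(P)$, use trace preservation of each $\Phi_\alpha$ to bound its total diagonal weight by $1$, and use that Pauli coefficients vanish outside $\supp \Phi_\alpha$. The only difference is that you get $\pr(P)=\sum_\alpha \pr_{\mathcal{N}}(\alpha)\chi^{(\alpha)}_{PP}$ from uniqueness of the process ($\chi$) matrix, whereas the paper gets the same identity from the unitary freedom of Kraus representations ($K=\sum_P u_{K,P}\sqrt{\pr(P)}\,P$ with $\sum_K |u_{K,P}|^2=1$); the twirl $\mathcal{T}_q$ in your first paragraph is never used and can be dropped.
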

\begin{proof}
    First, note that one may choose the Pauli representation $\mathcal{N} = \{\sqrt{\pr(P)} P\}$ such that each $P$ is linearly independent, as $P \rho P^{\dagger} = Q \rho Q^{\dagger}$ if $Q = e^{i\theta} P$. Doing so has no effect on the left hand side of Eq.~(\ref{eq:unitary-bound}), and we will use this representation for the rest of the proof. Since these Pauli operators form a basis for the space of linear operators on $\mathcal{H}$, for any two basis operators $P_{1} \neq P_{2}$,
    \begin{equation}
        \supp \left(\alpha P_{1} + \beta P_{2} \right) = \supp P_{1} \cup \supp P_{2} .
    \end{equation}
    Rewriting the channel $\mathcal{N} (\cdot) = \sum_{K} K (\cdot) K^{\dagger}$,
    \begin{align}
        &\Phi_{\alpha} (\cdot) = \frac{1}{\pr_{\mathcal{N}}(\alpha)} \sum_{K : \alpha} K (\cdot) K^{\dagger} , \nonumber \\
        &\sum_{K : \alpha} K^{\dagger} K = \pr_{\mathcal{N}}(\alpha) I .
    \end{align}
    Since two sets of Kraus operators representing $\mathcal{N}$ can be related by a unitary transformation, we may write $K$ as
    \begin{equation}
        K = \sum_{P \in \mathcal{P}} u_{K,P} \sqrt{\pr(P)} P
    \end{equation}
    where $u_{K,P}$ are elements of a unitary matrix. This means
    \begin{equation}
        \sum_{P \in \mathcal{P}} \vert u_{K,P} \vert^2 = \sum_{K} \vert u_{K,P} \vert^2 = 1 , 
    \end{equation}
    and one may rewrite
    \begin{align}
        &\sum_{K : \alpha} K^{\dagger} K = \sum_{K : \alpha} \sum_{P,Q \in \mathcal{P}} u_{K,P} u_{K,Q}^{*} \sqrt{\pr(P) \pr(Q)} Q^{\dagger} P  \nonumber \\
        &= \pr_{\mathcal{N}}(\alpha) I = \sum_{K : \alpha} \sum_{P \in \mathcal{P}} \vert u_{K,P} \vert^2 \pr(P) I ,
    \end{align}
    where we have used $Q^{\dagger}P = I$ only if $P=Q$ and linear independence of $P$ and $Q$. Also,
    \begin{align}
        &\vert u_{K,P} \vert^2 \pr(P) \delta(A \subseteq \supp P) \nonumber \\
        &\leq \vert u_{K,P} \vert^2 \pr(P) \delta(A \subseteq \supp K)
    \end{align}
    since $\supp P \subseteq \supp K$ if $u_{K,P} \neq 0$ and $\pr(P) \neq 0$. Also, $\delta(A \subseteq \supp K) \leq \delta(A \subseteq \supp \Phi_{\alpha})$ if $K / \sqrt{\pr_{\mathcal{N}}(\alpha)}$ is a Kraus operator of $\Phi_{\alpha}$. Now, write the sum
    \begin{align}
        &\sum_{P \in \mathcal{P} : \supp P \supseteq A} \pr(P) = \sum_{P \in \mathcal{P}} \pr(P) \delta(A \subseteq \supp P) \\
        &= \sum_{P \in \mathcal{P}} \pr(P) \delta(A \subseteq \supp P) \sum_{K} \vert u_{K,P} \vert^2 \nonumber \\
        &= \sum_{\alpha} \sum_{K : \alpha} \sum_{P \in \mathcal{P}} \vert u_{K,P} \vert^2 \pr(P) \delta(A \subseteq \supp P) \nonumber \\
        &\leq \sum_{\alpha} \sum_{K : \alpha} \sum_{P \in \mathcal{P}} \vert u_{K,P} \vert^2 \pr(P) \delta(A \subseteq \supp \Phi_{\alpha}) \nonumber \\
        &= \sum_{\alpha} \pr_{\mathcal{N}}(\alpha) \delta(A \subseteq \supp \Phi_{\alpha}) = \sum_{\alpha : \supp \Phi_{\alpha} \supseteq A} \pr_{\mathcal{N}}(\alpha) , \nonumber 
    \end{align}
    which proves Eq.~(\ref{eq:unitary-bound}) always holds.
\end{proof}

\section{Graph combinatorics} \label{combinatorics}

This section mainly contains technical results which will be used in later appendices. Many of these results have been used previously to study single-shot QEC of qubit codes in Ref.~\cite{bombin2015singleshot}. Here, we provide all details needed for completeness and generalize results needed to study qudit codes when necessary.

Let $G=(V,E)$ be a graph with vertices $V$ and edges $E$, where each edge is a subset of $V$ containing two vertices, $\{v_1,v_2\} \in E$, $v_1,v_2 \in V$. A subset of vertices $A \subseteq V$ is called connected if and only if one can start from an arbitrary vertex $v \in A$ and traverse edges in $E$ to reach every other vertex in $A$. Two subsets of vertices $A_{1},A_{2} \subseteq V$ are called mutually disconnected if they are disjoint, $A_{1} \cap A_{2} = \emptyset$, and if no edges connect the two subsets, specifically if $v_{1} \in A_{1}$ and $v_{2} \in A_{2}$, then $\{v_1,v_2\} \notin E$. For any $A \subseteq V$, we can uniquely decompose $A$ into connected components $A_c$, $A = \bigcup_c A_c$, where each $A_c$ is a maximal connected subset of $A$. A subset $B \subseteq V$ is called a connected cover of $A$ if and only if $A \subseteq B$ and for each connected component $B_c$ of $B = \bigcup_c B_c$, for all $B_c$, $B_c \cap A \neq \emptyset$, $B$ contains $A$ and every connected component of $B$ intersects $A$. The collection of connected covers $B$ of $A$ which contain $\vert B \vert = s \geq \vert A \vert$ total vertices is denoted $C(s,A)$ and is a subset of the power set of $V$, $2^V$. Also, Lemma 5 of Ref.~\cite{aliferis2008} states that if the graph $G$ has maximum vertex degree $z$, then
\begin{equation}
    \label{eq:connected-covers}
    \vert C(s,A) \vert \leq \frac{(e z)^s}{e z^{\vert A \vert}} .
\end{equation}

Consider a graph $G = (V,E)$, a subset of vertices $A \subseteq V$, and some $k \geq 0$. Let $C_{k}(s,A) \subseteq 2^V$, where a subset $W \subseteq V$ is an element of $C_{k}(s,A)$ if there exists some subset $W_0 \subseteq V$ such that
\begin{equation}
    \vert W_0 \vert = s , \quad A \cup W \subseteq W_0 ,
\end{equation}
and every maximal connected component $W_{0,c}$ of $W_0 = \bigcup_c W_{0,c}$ satisfies
\begin{equation}
    A \cap W_{0,c} \neq \emptyset , \quad \vert W \cap W_{0,c} \vert \geq k \vert W_{0,c} \vert .
\end{equation}
In other words, $W_0$ is a connected cover of $A$, as it uniquely decomposes into maximal connected subsets $W_{0,c}$ where $A \cap W_{0,c} \neq \emptyset$ for all $W_{0,c}$, and $W$ is a subset of $W_0$ satisfying the above. Since $\vert W \cap W_{0,c} \vert \leq \vert W_{0,c} \vert$, $W$ intersects a fraction $k \in[0,1]$ of all connected components $W_{0,c}$ of $W_0$. Also, $W_0 \in C(s,A)$, and for $k=1$,
\begin{equation}
    C_{1}(s,A) = C(s,A) .
\end{equation}
This means that for every $W \in C_{k}(s,A)$, there exists at least one $W_0 \in C(s,A)$ such that $W \subseteq W_0$. Thus, one can bound $\vert C_{k}(s,A) \vert$ by observing that since each $W_0 \in C(s,A)$ has $\vert W_0 \vert = s$, there are $2^s$ possible subsets of $W_0$ and thus $\leq 2^s$ possible $W \in C_{k}(s,A)$ where $W \subseteq W_0$,
\begin{equation}
    \vert C_{k}(s,A) \vert \leq 2^s \vert C(s,A) \vert .
\end{equation}
Moreover, if $G$ has maximum degree $z$, Eq.~(\ref{eq:connected-covers}) states
\begin{equation}
    \label{eq:cover-size-bound}
    \vert C(s,A) \vert \leq \frac{(e z)^s}{e z^{\vert A \vert}} \leq (e z)^s  \Rightarrow \vert C_{k}(s,A) \vert \leq (2 e z)^s .
\end{equation}
Lastly, it is useful to define the collection of all connected subsets of size $s$,
\begin{equation}
    C(s) = \bigcup_{v \in V} C(s,\{v\}) .
\end{equation}
We can bound $\vert C(s) \vert$ using Eq.~(\ref{eq:connected-covers}) as
\begin{equation}
    \label{eq:all-connected-subsets}
    \vert C(s) \vert \leq \sum_{v \in V} \vert C(s,\{v\}) \vert \leq \vert V \vert \frac{\left( ez \right)^s}{ez} .
\end{equation}
Using the above, we can prove the following lemma, which generalizes Lemma 10 of Ref.~\cite{bombin2015singleshot}.

\begin{lemma}
    \label{lemma:tau-bounded-proof}
    Consider a graph $G = (V,E)$ with maximum vertex degree $z$ and a constant $k > 0$. Suppose one can associate a value $a_{i} \in \{0,\ldots,d-1\} \cong \mathbb{Z}_{d}$ with each vertex $v_{i} \in V$, and let $T$ be the set of all vectors $a$ with elements $a_{i}$ above. Define the support of $a \in T$ to be the subset of vertices $v_{i}$ such that $a_{i}$ is nonzero, $\supp a = \{v_{i} \in V : a_{i} \neq 0\} \subseteq V$. Let $V$ have two subsets of vertices $V_1 \subseteq V$ and $V_2 \subseteq V$, and let $T_1 \subseteq T$ and $T_2 \subseteq T$ correspond to elements of $T$ with supports only in $V_1$ and only in $V_2$ respectively. Consider a function $f : T_1 \rightarrow T$, such that for all $a \in T_1$ and for every maximal connected component $\left( \supp f(a) \right)_{c}$ of $\supp f(a)$,
    \begin{equation}
        \label{eq:support-condition}
        \vert \supp a \cap \left( \supp f(a) \right)_{c} \vert \geq k \vert \left( \supp f(a) \right)_{c} \vert .
    \end{equation}
    Let $\tau_0 = (2ez)^{-1/k}$. Then, for every $\tau \in (0,\tau_0)$, if the distribution $\pr_1(a)$ is $\tau$-bounded over the supports of vectors $a \in T_1$ with $\supp a \subseteq V_1$,
    \begin{align}
        \label{eq:general-proof-original-bound}
        \sum_{\substack{A^\prime \in V_1 : \\ A^\prime \supseteq A}} \sum_{\substack{a : \\ \supp a = A^\prime}} \pr_1(a) = \sum_{\substack{a : \\ \supp a \supseteq A}} \pr_1(a) \leq \tau^{\vert A \vert} ,
    \end{align}
    then the distribution $\pr_2(B)$ over vertices $B \subseteq V_2$,
    \begin{equation}
        \pr_2(B) = \sum_{\substack{a : a \in T_1 , \\ \left(\supp f(a)\right) \cap V_2 = B}} \pr_1(a) ,
    \end{equation}
    is $\eta$-bounded for $\eta = (\tau / \tau_0)^{k} / \left(1 - (\tau / \tau_0)^{k}\right)$. Namely,
    \begin{align}
        \label{eq:general-proof-eta-bound}
        \sum_{\substack{B^\prime \subseteq V_2 : \\ B^\prime \supseteq B}} \pr_2(B^\prime) &= \sum_{\substack{B^\prime \subseteq V_2 : \\ B^\prime \supseteq B}} \sum_{\substack{a : a \in T_1 , \\ \left(\supp f(a)\right) \cap V_2} = B^\prime} \pr_1(a) \nonumber \\
        &= \sum_{a : \supp f(a) \supseteq B} \pr_1(a) \leq \eta^{\vert B \vert} .
    \end{align}
\end{lemma}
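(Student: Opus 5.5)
Fix $B \subseteq V_2$ with $|B| = b$. The plan is to organize the sum $\sum_{a : \supp f(a) \supseteq B} \pr_1(a)$ according to the size of a connected cover of $B$ built from $\supp f(a)$, and then use the connected-cover counting bound Eq.~(\ref{eq:cover-size-bound}). For each $a \in T_1$ with $\supp f(a) \supseteq B$, let $W_0(a)$ be the union of the maximal connected components of $\supp f(a)$ that intersect $B$. Then $W_0(a) \supseteq B$, and every component of $W_0(a)$ meets $B$, so $W_0(a) \in C(s,B)$ with $s = |W_0(a)| \geq b$. Set $W(a) = \supp a \cap W_0(a)$. By Eq.~(\ref{eq:support-condition}), applied to each component $W_{0,c}$ of $W_0(a)$ (each is a component of $\supp f(a)$), we get $|W(a) \cap W_{0,c}| \geq k|W_{0,c}|$. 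Therefore $W(a) \in C_k(s,B)$ and $|W(a)| \geq ks$.

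Next, bound the sum using $\{a : \supp f(a) \supseteq B\} \subseteq \bigcup_{s \geq b} \bigcup_{W \in C_k(s,B)} \{a : \supp a \supseteq W\}$. A union bound together with $\tau$-boundedness, Eq.~(\ref{eq:general-proof-original-bound}), gives
\begin{equation}
\sum_{a : \supp f(a) \supseteq B} \pr_1(a) \leq \sum_{s \geq b} \sum_{W \in C_k(s,B)} \tau^{|W|} \leq \sum_{s \geq b} |C_k(s,B)|\, \tau^{ks}.
\end{equation}
The $\tau$-boundedness requires $W \subseteq V_1$, and this holds automatically for any $W$ that arises, since $\supp a \subseteq V_1$; we only sum over $W$ realized by some $a$. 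We also use $\tau^{|W|} \leq \tau^{ks}$, which holds because $\tau < 1$ and $|W| \geq ks$ for the realized $W$. Now apply $|C_k(s,B)| \leq (2ez)^s$ from Eq.~(\ref{eq:cover-size-bound}). Writing $x = (2ez)\tau^k = (\tau/\tau_0)^k < 1$, the sum becomes the geometric series $\sum_{s\geq b} x^s = x^b/(1-x)$.

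Finally, we need to compare $x^b/(1-x)$ with $\eta^b$, where $\eta = x/(1-x)$. For $b \geq 1$ we have $x^b/(1-x) \leq x^b/(1-x)^b = \eta^b$. The case $b = 0$ (empty $B$) is trivial, since the sum is at most $1 = \eta^0$.

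The main obstacle is the bookkeeping in the first step: making sure each $a$ is assigned to some pair $(s,W)$ with $W \in C_k(s,B)$, and that $\tau$-boundedness applies to $W$ rather than to all of $\supp a$. Components of $\supp f(a)$ that avoid $B$ must be discarded, which is why the cover is restricted to components meeting $B$. Multiple counting of the same $a$ over different $(s,W)$ only loosens the union bound and does no harm. There is also a subtlety in the intersection with $V_2$: the statement defines $\pr_2$ through $(\supp f(a)) \cap V_2$, but because $B \subseteq V_2$, the condition $(\supp f(a)) \cap V_2 \supseteq B$ is equivalent to $\supp f(a) \supseteq B$, so the argument is unaffected.
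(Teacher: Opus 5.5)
Your proposal is correct and follows essentially the same argument as the paper. You build the same connected cover $W_0$ from the components of $\supp f(a)$ meeting $B$, and take $W = \supp a \cap W_0 \in C_k(s,B)$ with $|W| \geq ks$. You then apply the union bound with $\tau$-boundedness and $|C_k(s,B)| \leq (2ez)^s$, sum the geometric series, and treat $B = \emptyset$ separately. Your restriction to ``realized'' $W$ is unnecessary but harmless. Every $W \in C_k(s,B)$ already has $|W| \geq ks$ by definition, and the $\tau$-bound holds trivially (the sum is zero) when $W \not\subseteq V_1$.
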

\begin{proof}
    Consider a subset of vertices $B \subseteq V_2$ and a vector $a \in T_1$ such that $B \subseteq \supp f(a)$. Decompose $\supp f(a) = \bigcup_{c} \left( \supp f(a) \right)_{c}$, where $\left( \supp f(a) \right)_{c}$ are maximal connected components of $\supp f(a)$. Define $W_0 \subseteq V$ as
    \begin{equation}
        W_0 = \bigcup_{c : \left( \supp f(a) \right)_{c} \cap B \neq \emptyset} \left( \supp f(a) \right)_{c} ,
    \end{equation}
    namely $W_0$ is the union of connected components of $\supp f(a)$ which each intersect $B$. Here, connected components of $W_0$ are $W_{0,c} = \left( \supp f(a) \right)_{c}$, where each $W_{0,c}$ intersects $B$, $W_{0,c} \cap B \neq \emptyset$. For completeness, this means $B \subseteq W_0$. For brevity, call $\supp a = A \subseteq V_1$ and define
    \begin{equation}
        s = \vert W_0 \vert , \quad W = W_0 \cap A ,
    \end{equation}
    so $W \subseteq A$ and $W \subseteq W_0$. Since $B \subseteq W_0$, $B \cup W \subseteq W_0$ as well. Also, for every connected component $W_{0,c}$ of $W_0$,
    \begin{equation}
        \vert W \cap W_{0,c} \vert = \vert W_0 \cap A \cap W_{0,c} \vert \geq k \vert W_{0,c} \vert ,
    \end{equation}
    where the inequality follows from Eq.~(\ref{eq:support-condition}). This implies $W \in C_{k}(s,B)$. We can also lower bound $\vert W \vert \geq ks$, as
    \begin{align}
        W &= W \cap W_0 = \bigcup_{c} W \cap W_{0,c} , \nonumber \\
        \vert W \vert &= \sum_c \vert W \cap W_{0,c} \vert \geq \sum_c k \vert W_{0,c} \vert = k \vert W_0 \vert .
    \end{align}
    Since $\pr_1(a) \geq 0$ for all $a \in T_1$, we can obtain the bound in Eq.~(\ref{eq:general-proof-eta-bound}) by summing over $\pr_1(a)$ for a subset of all possible $a \in T_1$ which contains at least all $a$ with $B \subseteq \supp f(a)$. So far, we have shown that for some fixed $B$ and $f$, we can associate to each $A = \supp a$ some $W$ and $W_0$ defined above such that $W \in C_{k}(s,B)$ where $s \geq \vert B \vert$, thus for all $A \subseteq V_1$, there exists $W \in C_{k}(s,B)$ as described above. Since $W \subseteq A \subseteq V_1$, the set of all vectors $x \in T_1$ with $\supp x = K \subseteq V_1$ such that there exists $W \in C_{k}(s,B)$ where $W \subseteq K$ must necessarily contain the set of all $a$ with $B \subseteq \supp f(a)$ as a subset. Namely,
    \begin{align}
        &\{ a : a \in T_1 , B \subseteq \supp f(a) \} \subseteq \\
        &\left\{ a : a \in T_1 , \exists W \in C_{k}(s,B) : W \subseteq \supp a , s \geq \vert B \vert \right\} \nonumber .
    \end{align}
    We can then upper bound
    \begin{align}
        &\sum_{\substack{B^\prime : B^\prime \supseteq B}} \pr_2(B^\prime) = \sum_{a : \supp f(a) \supseteq B} \pr_1(a) \\
        &\leq \sum_{s : s \geq \vert B \vert} \sum_{\substack{W \subseteq V_1 : \\ W \in C_{k}(s,B)}} \sum_{\substack{a : A \in T_1 , \\ \supp a \supseteq W}} \pr_1(a) . \nonumber
    \end{align}
    Using Eq.~(\ref{eq:general-proof-original-bound}) and $\vert W \vert \geq ks$, we obtain
    \begin{align}
        &\sum_{\substack{B^\prime : B^\prime \supseteq B}} \pr_2(B^\prime) \leq \sum_{s : s \geq \vert B \vert} \sum_{\substack{W \subseteq V_1 : \\ W \in C_{k}(s,B)}} \tau^{\vert W \vert} \\
        &\leq \sum_{s : s \geq \vert B \vert} \sum_{\substack{W \subseteq V_1 : \\ W \in C_{k}(s,B)}} \tau^{ks} \leq \sum_{s : s \geq \vert B \vert} \vert C_{k}(s,B) \vert \tau^{ks} . \nonumber
    \end{align}
    Finally, using Eq.~(\ref{eq:cover-size-bound}), writing $2ez = {\tau_0}^{-k}$, and using $\tau < \tau_0$, we find $\pr_2(B)$ is bounded by
    \begin{align}
        &\sum_{\substack{B^\prime : B^\prime \supseteq B}} \pr_2(B^\prime) \leq \sum_{s : s \geq \vert B \vert} (2ez)^s \tau^{ks} \\
        &= \sum_{s : s \geq \vert B \vert} \left(\frac{\tau}{\tau_0}\right)^{ks} = \left(\frac{\tau}{\tau_0}\right)^{k\vert B \vert} \left( \frac{1}{1 - (\tau / \tau_0)^{k}} \right) \nonumber \\
        &\leq \left( \frac{(\tau / \tau_0)^{k}}{1 - (\tau / \tau_0)^{k}} \right)^{\vert B \vert} = \eta^{\vert B \vert} . \nonumber
    \end{align}
    For $B=\emptyset$, the bound holds as the sum of all probabilities $\pr_1(a)$ is $1 = \eta^{0}$.
\end{proof}

When vectors $a$ are defined over $\mathbb{Z}_{2}$, each $a$ is isomorphic to a subset of $A \subseteq V$ where $a_{i} = 1$ if and only if the corresponding vertex $v_{i} \in A$, thus $\supp a \cong a$. In this case, $f$ induces a function $F$ on $V$,
\begin{equation}
    F(\supp a) = \supp b \Leftrightarrow f(a) = b
\end{equation}
and one recovers Lemma 10 of Ref.~\cite{bombin2015singleshot} as a special case of Lemma~\ref{lemma:tau-bounded-proof}.

\section{Proof of Lemma~\ref{lemma:lc-localnoise}} \label{lc-proof}

\begin{proof}
    Suppose $\Lambda = \mathcal{C}_{\ell} \circ \ldots \circ \mathcal{C}_{1}$, and errors $\mathcal{N}_{i} \in \mathbb{L}_{\tau}$ occur before layer $\mathcal{C}_{i}$. Thus $\Lambda^{\mathcal{N}} = \mathcal{C}_{\ell} \circ \mathcal{N}_{\ell} \circ \ldots \circ \mathcal{C}_{1} \circ \mathcal{N}_{1}$. Each gate in $\Lambda$ and in its reversal channel $\widetilde{\Lambda}$ has range $\leq r$, and we can propagate noise through all $\ell$ layers of $\Lambda$ to obtain an effective noise channel $\mathcal{N}_{\mathrm{eff}} = \Lambda^{\mathcal{N}} \circ \widetilde{\Lambda}$,
    \begin{align}
        &\Lambda^{\mathcal{N}} (\rho_{1}) \approx \Lambda^{\mathcal{N}} \circ \widetilde{\Lambda} \circ \Lambda (\rho_{1}) = \mathcal{N}_{\mathrm{eff}} \circ \Lambda (\rho_{1}) , \\
        &\mathcal{N}_{\mathrm{eff}}
        = \left( \mathcal{C}_{\ell} \circ \mathcal{N}_{\ell} \ldots \circ \left( \mathcal{C}_{1} \circ \mathcal{N}_{1} \circ \widetilde{\mathcal{C}}_{1} \right) \circ \ldots \circ \widetilde{\mathcal{C}}_{\ell} \right) . \nonumber
    \end{align}
    Writing each noise channel
    \begin{equation}
        \mathcal{N}_{i} (\cdot) = \sum_{\alpha_{i}} \pr_{\mathcal{N}_{i}}(\alpha_{i}) \Phi_{\alpha_{i}} (\cdot) ,
    \end{equation}
    we can express $\mathcal{N}_{\mathrm{eff}}$ as a sum over channels corresponding to each specific error pattern $\vec{\alpha}$, weighted by probabilities $\pr_{\mathcal{N}}(\vec{\alpha}) = \prod_{i=1}^{\ell} \pr_{\mathcal{N}_{i}}(\alpha_{i})$,
    \begin{align}
        &\mathcal{N}_{\mathrm{eff}} (\cdot) = \sum_{\vec{\alpha}} \pr_{\mathcal{N}}(\vec{\alpha}) C_{\vec{\alpha}} ( \cdot ) , \\
        C_{\vec{\alpha}} ( \cdot ) &= \mathcal{C}_{\ell} \circ \Phi_{\alpha_{\ell}} \circ \ldots \circ \left[ \mathcal{C}_{1} \circ \Phi_{\alpha_{1}} \circ \widetilde{\mathcal{C}}_{1} \right] \circ \ldots \circ \widetilde{\mathcal{C}}_{\ell} ( \cdot ) , \nonumber
    \end{align}
    where the structure of each $C_{\vec{\alpha}}$ mirrors that of $\mathcal{N}_{\mathrm{eff}}$. We can now recursively replace layers of the channel $C_{\vec{\alpha}}$ with dressed channels, as discussed in Section~\ref{lc-finite-depth}, to obtain a residual channel $\mathcal{J}_{\vec{\alpha}}^{(\ell)}$ which acts approximately the same on all $\Lambda (\rho_{1})$ as $C_{\vec{\alpha}}$ but has bounded support. To do so, first construct the dressed channel $\Phi_{\alpha_{1}}^{\mathcal{C}_{1}}$ given $\Phi_{\alpha_{1}}$ and the LC circuit $\mathcal{C}_{1}$, then the dressed channel $\left(\Phi_{\alpha_{2}} \circ \Phi_{\alpha_{1}}^{\mathcal{C}_{1}}\right)^{\mathcal{C}_{2}}$ given $\Phi_{\alpha_{i+1}} \circ \Phi_{\alpha_{1}}^{\mathcal{C}_{1}}$ and LC circuit $\mathcal{C}_{2}$, and so on. We can write this recursively as
    \begin{equation}
        \mathcal{J}_{\vec{\alpha}}^{(1)} = \Phi_{\alpha_{1}}^{\mathcal{C}_{1}} , \quad
        \mathcal{J}_{\vec{\alpha}}^{(i+1)} = \left(\Phi_{\alpha_{i+1}} \circ \mathcal{J}_{\vec{\alpha}}^{(i)}\right)^{\mathcal{C}_{i+1}} .
    \end{equation}
    By construction, $\mathcal{J}_{\vec{\alpha}}^{(\ell)} \circ \Lambda (\rho_{1}) \approx C_{\vec{\alpha}} \circ \Lambda (\rho_{1})$ for all $\rho_{1} \in \Sigma_{1}$. We may also bound the support of each $\mathcal{J}_{\vec{\alpha}}^{(i)}$. Define
    \begin{equation}
        F_{\alpha,i} = \supp \Phi_{\alpha_{i}} , \quad F_{\vec{\alpha}} = \bigcup_{i \in [1,\ell]} F_{\alpha,i} .
    \end{equation}
    Since each gate in $\mathcal{C}$ has range $\leq r$, the support of $\mathcal{J}_{\vec{\alpha}}^{(i)}$ is bounded by the union of balls of radius $r(i-t+1)$ surrounding each vertex $v \in F_{\alpha,t}$, for all $t \in [1,i]$. This can be seen by observing that there are $(i-t+1)$ layers of $\mathcal{C}$ applied after each channel $\Phi_{\alpha_{t}}$ to construct $\mathcal{J}_{\vec{\alpha}}^{(i)}$. Let $W_{t,v}$ be a ball of radius $rt$ on the system lattice surrounding site $v$. Then, we find
    \begin{equation}
        \supp \mathcal{J}_{\vec{\alpha}}^{(\ell)} \subseteq \bigcup_{t \in [1,\ell]} \bigcup_{v \in F_{\alpha,t}} W_{\ell-t+1,v} \subseteq \bigcup_{v \in F_{\vec{\alpha}}} W_{\ell,v} ,
    \end{equation}
    where we have used $W_{\ell-t+1,v} \subseteq W_{\ell,v}$ to upper bound the support. Since each ball of radius $rt$ has cardinality $\vert W_{t,v} \vert \leq \gamma (rt)^{D}$ where $\gamma$ is a constant dependent on the lattice geometry, we can also bound
    \begin{equation}
        \wt \mathcal{J}_{\vec{\alpha}}^{(\ell)} \leq \left\vert \bigcup_{v \in F_{\vec{\alpha}}} W_{\ell,v} \right\vert  \leq \gamma (r\ell)^{D} \left\vert \bigcup_{i \in [1,\ell]} \supp \Phi_{\alpha_{i}} \right\vert
    \end{equation}
    Note that each ball $W_{\ell,v}$ forms a connected region on the system lattice.
       
    Consider the residual noise channel $\widetilde{\mathcal{N}}^{\Lambda}$, defined as
    \begin{equation}
        \widetilde{\mathcal{N}}^{\Lambda} (\cdot) = \sum_{\vec{\alpha}} \pr_{\mathcal{N}}(\vec{\alpha}) \mathcal{J}_{\vec{\alpha}}^{(\ell)} ( \cdot ) .
    \end{equation}
    By construction, $\widetilde{\mathcal{N}}^{\Lambda}$ acts approximately the same as $\mathcal{N}_{\mathrm{eff}}$ on all states $\Lambda (\rho_{1})$, namely $\widetilde{\mathcal{N}}^{\Lambda} \circ \Lambda (\rho_{1}) \approx \mathcal{N}_{\mathrm{eff}} \circ \Lambda (\rho_{1})$ for all $\rho_{1} \in \Sigma_{1}$. We can now show that $\widetilde{\mathcal{N}}^{\Lambda} \in \mathbb{L}_{\eta}$ by proving that the probability distribution over the supports of $\mathcal{J}_{\vec{\alpha}}^{(\ell)}$ in $\widetilde{\mathcal{N}}^{\Lambda}$ is $\eta$-bounded. Specifically, for all $B \subseteq \mathcal{Q}$,
    \begin{equation}
        \sum_{\vec{\alpha} : \supp \mathcal{J}_{\vec{\alpha}}^{(\ell)} \supseteq B} \pr_{\mathcal{N}}(\vec{\alpha}) \leq \eta^{\vert B \vert} .
    \end{equation}
    To prove this, we must first consider a probability distribution over vertices and all noise channels $\mathcal{N}_{i}$. We already know that all $\mathcal{N}_{i} \in \mathbb{L}_{\tau}$, and consider the probability distribution
    \begin{equation}
        \sum_{\substack{\vec{\alpha} : A = \\ \bigcup_{i} \supp \Phi_{\alpha_{i}}}} \pr_{\mathcal{N}}(\vec{\alpha}) = \sum_{\substack{\vec{\alpha} : A = \\\bigcup_{i} \supp \Phi_{\alpha_{i}}}}\prod_{i=1}^{\ell} \pr_{\mathcal{N}_{i}}(\alpha_{i}) .
    \end{equation}
    We will show that this distribution is $\ell \tau$-bounded inductively. To do so, first, consider the distribution
    \begin{equation}
        \label{eq:lc-induction-basecase}
        \sum_{\substack{\alpha_{1},\alpha_{2} : \\ \supp \Phi_{\alpha_{1}} \cup \supp \Phi_{\alpha_{2}} = A}} \pr_{\mathcal{N}_{1}}(\alpha_{1}) \pr_{\mathcal{N}_{2}}(\alpha_{2}) .
    \end{equation}
    Applying Proposition~\ref{proposition:bounded-composition}, setting $\alpha = \alpha_{1}$, $\beta = \alpha_{2}$, $g(\alpha_{1}) = \supp \Phi_{\alpha_{1}}$, $h(\alpha_{2}) = \supp \Phi_{\alpha_{2}}$, and $\tau_{1} = \tau_{2} = \tau$, we find
    \begin{equation}
        \sum_{\substack{\alpha_{1},\alpha_{2} : \\ \supp \Phi_{\alpha_{1}} \cup \supp \Phi_{\alpha_{2}} \supseteq A}} \pr_{\mathcal{N}_{1}}(\alpha_{1}) \pr_{\mathcal{N}_{2}}(\alpha_{2})  \leq (2\tau)^{\vert A \vert} ,
    \end{equation}
    the distribution Eq.~(\ref{eq:lc-induction-basecase}) is $2\tau$-bounded. Now, suppose
    \begin{equation}
        \sum_{\substack{\alpha_{1},\ldots, \alpha_{k} : \\ \bigcup_{i \leq k} \supp \Phi_{\alpha_{i}} = A}} \prod_{i=1}^{k} \pr_{\mathcal{N}_{i}}(\alpha_{i}) 
    \end{equation}
    is $k\tau$-bounded for arbitrary $k$, and consider
    \begin{align}
        \label{eq:lc-induction-indstep}
        &\sum_{\substack{\alpha_{1},\ldots, \alpha_{k}, \alpha_{k+1} : \\ \left(\bigcup_{i \leq k} \supp \Phi_{\alpha_{i}}\right) \\ \bigcup \supp \Phi_{\alpha_{k+1}} = A}} \left(\prod_{i=1}^{k} \pr_{\mathcal{N}_{i}}(\alpha_{i}) \right) \pr_{\mathcal{N}_{k+1}}(\alpha_{k+1}) \nonumber \\
        &= \sum_{\substack{\alpha_{1},\ldots, \alpha_{k+1} : \\ \bigcup_{i \leq k+1} \supp \Phi_{\alpha_{i}} = A}} \prod_{i=1}^{k+1} \pr_{\mathcal{N}_{i}}(\alpha_{i})
    \end{align}
    Applying Proposition~\ref{proposition:bounded-composition}, setting $\alpha = \{\alpha_{1},\ldots, \alpha_{k}\}$, $\beta = \alpha_{k+1}$, $g(\{\alpha_{1},\ldots, \alpha_{k}\}) = \bigcup_{i \leq k} \supp \Phi_{\alpha_{i}}$, $h(\alpha_{k+1}) = \supp \Phi_{\alpha_{k+1}}$, $\tau_{1} = k\tau$, and $\tau_{2} = \tau$, we find
    \begin{equation}
        \sum_{\substack{\alpha_{1},\ldots, \alpha_{k+1} : \\ \bigcup_{i \leq k+1} \supp \Phi_{\alpha_{i}} = A}} \prod_{i=1}^{k+1} \pr_{\mathcal{N}_{i}}(\alpha_{i}) \leq \left[(k+1)\tau\right]^{\vert A \vert}
    \end{equation}
    the distribution Eq.~(\ref{eq:lc-induction-indstep}) is $(k+1)\tau$-bounded. By induction, we conclude that
    \begin{equation}
        \sum_{\substack{\alpha_{1},\ldots, \alpha_{k} : \\ \bigcup_{i \leq k} \supp \Phi_{\alpha_{i}} = A}} \prod_{i=1}^{k} \pr_{\mathcal{N}_{i}}(\alpha_{i}) 
    \end{equation}
    is $k\tau$-bounded for arbitrary $k$, which means the distribution over $\vec{\alpha}$ and $1 \leq i \leq \ell$ is $\ell \tau$-bounded,
    \begin{align}
        &\sum_{A^\prime : A^\prime \supseteq A} \sum_{\vec{\alpha} : \bigcup_{i} \supp \Phi_{\alpha_{i}} = A^\prime} \pr_{\mathcal{N}}(\vec{\alpha}) \nonumber \\
        &= \sum_{\vec{\alpha} : \bigcup_{i} \supp \Phi_{\alpha_{i}} \supseteq A} \pr_{\mathcal{N}}(\vec{\alpha}) \leq (\ell \tau)^{\vert A \vert} .
    \end{align}
    
    Now, we invoke Lemma~\ref{lemma:tau-bounded-proof}. Let $\Gamma = (V,E)$ be a graph associated with the system lattice, where vertices $V$ are system qudits and edges $E$ link any two geometrically adjacent vertices. This way, $\Gamma$ mirrors the connectivity of the physical system and vertices $V$ have finite maximum degree $z$ dependent on the specific lattice and its spatial dimension $D$. Associate a vector $a$ over $\mathbb{Z}_{2}$ with each subset $A \subseteq V$, where $a_{v} = 1$ if and only if vertex $v \in A$, thus $\supp a = A$. Here, all subsets of vertices $A$ are isomorphic to vectors $a$, and we can use the two interchangeably. Define a function $f$ which maps a vector $a$ to $f(a) = b$, where if $\supp a = A$, then $\supp b = B = \bigcup_{v \in A} W_{\ell,v}$ is the union of balls of radius $r\ell$ surrounding each vertex $v$ in $A$ as discussed earlier. This means for every vertex $v \in A$, $\supp f(a_{v}) = W_{\ell,v}$. Also, decomposing $B = \supp f(a)$ into maximal connected components, $B = \bigcup_{c} B_{c}$, every maximal connected component $B_{c}$ can be written as $B_{c} = \supp f(u)$ for some vector $u$, where $\supp u \subseteq \supp a = A$. Since $v \in W_{\ell,v}$, $\supp u \subseteq B_{c}$. This means, $A \cap B_{c} \supseteq \supp u$, and since $\vert W_{\ell,v} \vert \leq \gamma (r \ell)^{D}$,
    \begin{equation}
        \vert B_{c} \vert \leq \gamma (r \ell)^{D} \vert \supp u \vert \leq \gamma (r \ell)^{D} \vert A \cap B_{c} \vert .
    \end{equation}
    Noting $A = \supp a$, $B = \supp f(a)$, rearranging gives
    \begin{align}
        &\vert \supp a \cap \left( \supp f(a) \right)_{c} \vert \geq k \vert \left( \supp f(a) \right)_{c} \vert , \nonumber \\
        &\Leftrightarrow \vert A \cap B_{c} \vert \geq k \vert B_{c} \vert , \quad k = \frac{1}{\gamma (r \ell)^{D}} 
    \end{align}
    for all $B_{c}$. Applying Lemma~\ref{lemma:tau-bounded-proof} with subsets $V = V_{1} = V_{2}$ and $\tau_{0} = \left(2ez\right)^{-\gamma (r \ell)^{D}}$, since for all $A \subseteq V$,
    \begin{equation}
        \sum_{A^\prime : A^\prime \supseteq A} \sum_{\vec{\alpha} : \bigcup_{i} \supp \Phi_{\alpha_{i}} = A^\prime} \pr_{\mathcal{N}}(\vec{\alpha}) \leq (\ell \tau)^{\vert A \vert} ,
    \end{equation}
    the distribution
    \begin{equation}
        \sum_{A^\prime : f(A^\prime) = B} \left( \sum_{\vec{\alpha} : \bigcup_{i} \supp \Phi_{\alpha_{i}} = A^\prime} \pr_{\mathcal{N}}(\vec{\alpha}) \right)
    \end{equation}
    is $\eta$-bounded with
    \begin{equation}
        \eta = \frac{(\ell \tau / \tau_0)^{k}}{1 - (\ell \tau / \tau_0)^{k}} = \frac{(\ell \tau)^{1 / \gamma (r \ell)^{D}}}{\left(2ez\right)^{-1} - (\ell \tau)^{1 / \gamma (r \ell)^{D}}} .
    \end{equation}
    Specifically, for all $B \subseteq \mathcal{Q}$,
    \begin{equation}
        \sum_{A^\prime : f(A^\prime) \supseteq B} \left( \sum_{\vec{\alpha} : \bigcup_{i} \supp \Phi_{\alpha_{i}} = A^\prime} \pr_{\mathcal{N}}(\vec{\alpha}) \right) \leq \eta^{\vert B \vert} .
    \end{equation}
    Moreover, since $f\left(F_{\vec{\alpha}}\right) = \bigcup_{v \in F_{\vec{\alpha}}} W_{\ell,v} \supseteq \supp \mathcal{J}_{\vec{\alpha}}^{(\ell)}$ and
    $F_{\vec{\alpha}} = \bigcup_{i} \supp \Phi_{\alpha_{i}}$,
    \begin{align}
        &\sum_{\vec{\alpha} : \supp \mathcal{J}_{\vec{\alpha}}^{(\ell)} \supseteq B} \pr_{\mathcal{N}}(\vec{\alpha}) \leq \sum_{\vec{\alpha} : f\left(\bigcup_{i} \supp \Phi_{\alpha_{i}}\right) \supseteq B} \pr_{\mathcal{N}}(\vec{\alpha}) \nonumber \\
        &= \sum_{A^\prime : f(A^\prime) \supseteq B} \left( \sum_{\vec{\alpha} : \bigcup_{i} \supp \Phi_{\alpha_{i}} = A^\prime} \pr_{\mathcal{N}}(\vec{\alpha}) \right) \leq \eta^{\vert B \vert} ,
    \end{align}
    and we can conclude that the distribution associated with the supports of $\mathcal{J}_{\vec{\alpha}}^{(\ell)}$ is $\eta$-bounded and $\widetilde{\mathcal{N}}^{\Lambda} \in \mathbb{L}_{\eta}$, for all $B \subseteq \mathcal{Q}$
    \begin{equation}
        \sum_{\vec{\alpha} : \supp \mathcal{J}_{\vec{\alpha}}^{(\ell)} \supseteq B} \pr_{\mathcal{N}}(\vec{\alpha}) \leq \eta^{\vert B \vert} .
    \end{equation}

    To summarize, we have shown that for $\ell$ layers of range-$r$ locally reversible LC gates, $\Lambda = \mathcal{C}_{\ell} \circ \ldots \circ \mathcal{C}_{1}$, with noise $\mathcal{N}_{i} \in \mathbb{L}_{\tau}$ occurring before layer $\mathcal{C}_{i}$,
    \begin{align}
        \Lambda^{\mathcal{N}} (\rho_{1}) &= \mathcal{C}_{\ell} \circ \mathcal{N}_{\ell} \circ \ldots \circ \mathcal{C}_{1} \circ \mathcal{N}_{1} (\rho_{1}) \nonumber \\
        &\approx \mathcal{N}_{\mathrm{eff}} \circ \Lambda (\rho_{1}) \approx \widetilde{\mathcal{N}}^{\Lambda} \circ \Lambda (\rho_{1})
    \end{align}
    for all $\rho_{1} \in \Sigma_{1}$, where $\widetilde{\mathcal{N}}^{\Lambda} \in \mathbb{L}_{\eta}$ with
    \begin{equation}
        \eta = \frac{(\ell\tau)^{1 / \gamma (r\ell)^{D}}}{\left(2ez\right)^{-1} - (\ell\tau)^{1 / \gamma (r\ell)^{D}}} .
    \end{equation}
\end{proof}

\section{Error correction for general channel noise} \label{qudit-qec}

Error correction for qudit stabilizer codes subject to general channel noise, Eq.~(\ref{eq:qudit-noise}), is similar to that for qubit codes subject to Pauli noise with a few generalizations, and proving fault tolerance qualitatively mirrors the qubit case. Techniques used here are somewhat similar to those presented in Refs.~\cite{bombin2015singleshot,kubica2022}, however we will outline all derivations here for completeness and will point out where generalizations are needed to treat qudits and non-Pauli noise. While we exclusively analyze stabilizer codes here to prove Theorems~\ref{theorem:self-corr-qec} and~\ref{theorem:self-corr-state-prep}, this procedure generalizes to subsystem codes in a rather straightforward manner. For a treatment of fault tolerance in qubit subsystem and stabilizer codes with Pauli noise, we refer readers to the main text of Ref.~\cite{bombin2015singleshot} and the supplementary information of Ref.~\cite{kubica2022}.

Let $\mathcal{S} \subset \mathcal{P}$ be a stabilizer group with code space $\mathcal{V}$. Given a generating set $\mathcal{S}_{0} = \{S_{i}\}$ of $\mathcal{S}$, we define a map
\begin{equation}
    \partial : P \in \mathcal{P} \rightarrow \partial P = \sigma \in \mathbb{Z}_{d}^{\vert \mathcal{S}_{0} \vert} ,
\end{equation}
such that $S_{i} P = \omega^{\sigma_{i}} P S_{i}$. Here, $\partial P = \sigma$ is the syndrome of $P$. As discussed in Section~\ref{qudit-codes}, syndromes satisfy $\partial(P_1 P_2) = \partial P_1 + \partial P_2 \bmod d$. Also, $\partial(P^{\dagger}) = - \partial P \bmod d$. Also as discussed in Section~\ref{qudit-codes}, any Pauli operator may be decomposed as $P = R_{\partial P}^{\dagger} S_{P} L_{P}$ where $R_{\partial P}^{\dagger}$ is a correction operator, $L_{P} \in \mathcal{L}_{\mathcal{S}}$ is a logical operator, and $S_{P} \in \mathcal{S}$ is a stabilizer. By definition, $\partial P = \partial R_{\partial P}^{\dagger}$. Also, a product of $R_{\sigma}$ with zero syndrome yields
\begin{equation}
    \label{eq:correction-composition}
    R_{\sigma} R_{\kappa} R_{\sigma+\kappa}^{\dagger} = L_{\sigma,\kappa} S_{\sigma,\kappa} ,
\end{equation}
where $L_{\sigma,\kappa} \in \mathcal{L}_{\mathcal{S}}$ and $S_{\sigma,\kappa} \in \mathcal{S}$.

Consider QEC implemented by the recovery channel $\mathcal{R} = \left\{ R_{\sigma} \Pi_{\sigma}^{\mathcal{S}} \right\}_{\sigma}$, where $\Pi^{S_{i}}_{\sigma_{i}}$ is the projector onto the $\sigma_{i}$-eigenspace of $S_{i}$ and $\Pi^{\mathcal{S}}_{\sigma} = \prod_{i} \Pi^{S_{i}}_{\sigma_{i}}$ is the projector onto the mutual $\sigma$-eigenspace of $\mathcal{S}_{0}$. One finds that correction operators and projectors satisfy
\begin{align}
    \label{eq:projector-identities}
     &R_{\sigma}^{\dagger} \Pi_{\kappa}^{\mathcal{S}} = \Pi_{\kappa+\sigma}^{\mathcal{S}} R_{\sigma}^{\dagger} , \quad \Pi_{\sigma}^{\mathcal{S}} \Pi_{\kappa}^{\mathcal{S}} = \delta_{\sigma,\kappa} \Pi_{\sigma}^{\mathcal{S}} .
\end{align}
There are many possible choices of correction operators $\{ R_{\sigma} \}$, however in the following sections we will choose operators such that $R_{-\sigma} = R_{\sigma}^{\dagger}$ always. This choice is not necessary to perform QEC, but it simplifies some calculations and makes proofs of fault tolerance easier.

Suppose general channel noise $\mathcal{N}$, given by Eq.~(\ref{eq:qudit-noise}), acts on the code space $\Sigma_{\mathcal{V}}$ of the code $\mathcal{V}$ discussed above. $\mathcal{N}$ takes the form
\begin{align}
    \mathcal{N} (\cdot) &= \sum_{\alpha} \pr_{\mathcal{N}}(\alpha) \Phi_{\alpha} (\cdot) ,
\end{align}
where each $\Phi_{\alpha}$ has Kraus decomposition $\Phi_{\alpha} = \left\{ K_{\alpha} \right\}_{K_{\alpha}}$. Since Pauli operators form a basis for the space of linear operators on $\mathcal{H}$, each Kraus operator $K_{\alpha}$ can be written as a sum over basis Pauli operators,
\begin{align}
    \label{eq:kraus-pauli-decomp}
    \Phi_{\alpha} (\cdot) &= \sum_{K_{\alpha}} K_{\alpha} (\cdot) K_{\alpha}^{\dagger} , \quad  K_{\alpha} = \sum_{P \in \mathcal{P}} c_{P}^{K_{\alpha}} P .
\end{align}
Here we write $\supp \Phi_{\alpha} = \bigcup_{K_{\alpha}} \supp K_{\alpha}$ and $\supp K_{\alpha} = \bigcup_{P : c_{P}^{K_{\alpha}} \neq 0} \supp P$. Each $\Phi_{\alpha}$ acts on a state $\rho$ as
\begin{align}
    \label{eq:noise-pauli-decomp}
    \Phi_{\alpha} (\rho) &= \sum_{K_{\alpha}} \left( \sum_{P \in \mathcal{P}} c_{P}^{K_{\alpha}} P \right) \rho \left( \sum_{Q \in \mathcal{P}} \left(c_{Q}^{K_{\alpha}}\right)^{*} Q^{\dagger} \right) \nonumber \\
    &= \sum_{K_{\alpha}} \sum_{P,Q \in \mathcal{P}} c_{P}^{K_{\alpha}} \left(c_{Q}^{K_{\alpha}}\right)^{*} P \rho Q^{\dagger} .
\end{align}
Suppose $\mathcal{N}$ acts on the state $\rho \in \Sigma_{\mathcal{V}}$ and apply the recovery channel $\mathcal{R} = \left\{ R_{\sigma} \Pi_{\sigma}^{\mathcal{S}} \right\}_{\sigma}$ from above to the system,
\begin{align}
    \mathcal{R} \circ \mathcal{N} (\rho) &= \sum_{\alpha} \pr_{\mathcal{N}}(\alpha) \mathcal{R} \circ \Phi_{\alpha} (\rho) , \\
    \mathcal{R} \circ \Phi_{\alpha} (\rho) &= \sum_{\sigma , K_{\alpha}} \sum_{P,Q \in \mathcal{P}} c_{P}^{K_{\alpha}} \left(c_{Q}^{K_{\alpha}}\right)^{*} R_{\sigma} \Pi_{\sigma}^{\mathcal{S}} P \rho Q^{\dagger} \Pi_{\sigma}^{\mathcal{S}} R_{\sigma}^{\dagger} . \nonumber
\end{align}
Since $\rho \in \Sigma_{\mathcal{V}}$, $\Pi^{S}_{\sigma} \rho = \delta_{\sigma,0} \rho$, and commuting $\Pi_{\sigma}^{\mathcal{S}} P = P \Pi_{\sigma - \partial P}^{\mathcal{S}}$ using Eq.~(\ref{eq:projector-identities}),
\begin{align}
    \label{eq:corrected-noise-pauli}
    &\Pi_{\sigma}^{\mathcal{S}} P \rho Q^{\dagger} \Pi_{\sigma}^{\mathcal{S}} = \delta_{\sigma,\partial P} \delta_{\sigma,\partial Q} \left(R_{\partial P}^{\dagger} L_{P} S_{P}\right) \rho \left(S_{Q}^{\dagger} L_{Q}^{\dagger} R_{\partial Q}\right) , \nonumber \\
    &\mathcal{R} \circ \Phi_{\alpha} (\rho) = \sum_{\sigma , K_{\alpha}} \sum_{P,Q \in \mathcal{P}} c_{P}^{K_{\alpha}} \left(c_{Q}^{K_{\alpha}}\right)^{*} \delta_{\sigma,\partial P} \delta_{\sigma,\partial Q} L_{P} \rho L_{Q}^{\dagger} \nonumber \\
    &= \sum_{K_{\alpha}} \sum_{P,Q \in \mathcal{P}} c_{P}^{K_{\alpha}} \left(c_{Q}^{K_{\alpha}}\right)^{*} \delta_{\partial P , \partial Q} L_{P} \rho L_{Q}^{\dagger} ,
\end{align}
where we write $ \mathcal{R} \circ \Phi_{\alpha} (\rho)$ for $\rho \in \Sigma_{\mathcal{V}}$
in terms of logical Pauli operators.
Equivalently, defining effective logical Kraus operators
\begin{equation}
    \label{eq:effective-kraus}
    \widetilde{K}_{\alpha,\sigma} = \sum_{P \in \mathcal{P}} \left(c_{P}^{K_{\alpha}} \delta_{\sigma,\partial P}\right) L_{P} S_{P} ,
\end{equation}
$ \mathcal{R} \circ \Phi_{\alpha} (\rho)$ may be written as
\begin{equation}
    \label{eq:corrected-noise-kraus}
    \mathcal{R} \circ \Phi_{\alpha} (\rho) = \sum_{\sigma,K_{\alpha}} \widetilde{K}_{\alpha,\sigma} \rho \widetilde{K}_{\alpha,\sigma}^{\dagger} .
\end{equation}
As an aside, to see that $\widetilde{K}_{\alpha,\sigma}$ are in fact Kraus operators, first observe that $\sum_{K_{\alpha}} K_{\alpha}^{\dagger} K_{\alpha} = I$ implies
\begin{align}
    &\sum_{K_{\alpha}} \sum_{P,Q \in \mathcal{P}} c_{P}^{K_{\alpha}} \left(c_{Q}^{K_{\alpha}}\right)^{*} Q^{\dagger} P = I \\
    &= \sum_{K_{\alpha}} \sum_{P} \left\vert c_{P}^{K_{\alpha}} \right\vert^{2} I + \sum_{K_{\alpha}} \sum_{P \neq Q} c_{P}^{K_{\alpha}} \left(c_{Q}^{K_{\alpha}}\right)^{*} Q^{\dagger} P . \nonumber
\end{align}
Since $Q^{\dagger} P \neq I$ if $P \neq Q$, and Pauli operators $T \in \mathcal{P}$ are linearly independent,
\begin{align}
    &\sum_{K_{\alpha},P} \left\vert c_{P}^{K_{\alpha}} \right\vert^{2} = 1 , \quad \sum_{K_{\alpha}} \sum_{P \neq Q} c_{P}^{K_{\alpha}} \left(c_{Q}^{K_{\alpha}}\right)^{*} Q^{\dagger} P = 0 , \nonumber \\
    &\Rightarrow \sum_{\substack{T \in \mathcal{P} : \\
    T \neq I}} \sum_{K_{\alpha}} \sum_{\substack{P \neq Q : \\ Q^{\dagger} P \propto T}} c_{P}^{K_{\alpha}} \left(c_{Q}^{K_{\alpha}}\right)^{*} Q^{\dagger} P = \sum_{\substack{T \in \mathcal{P} : \\
    T \neq I}} \omega_{T} T = 0 . \nonumber
\end{align}
This means the sums over $K_{\alpha}$, $P$, and $Q$ yield $\omega_{T} T = 0$ independent of $T$. However, this also means
\begin{align}
    &\sum_{\sigma,K_{\alpha}} \widetilde{K}_{\alpha,\sigma}^{\dagger} \widetilde{K}_{\alpha,\sigma} \\
    &= \sum_{\sigma,K_{\alpha}} \sum_{P,Q \in \mathcal{P}} c_{P}^{K_{\alpha}} \left(c_{Q}^{K_{\alpha}}\right)^{*}\delta_{\sigma,\partial Q} \delta_{\sigma,\partial P} S_{Q}^{\dagger} L_{Q}^{\dagger} L_{P} S_{P} \nonumber \\
    &= \sum_{K_{\alpha}} \sum_{P,Q \in \mathcal{P}} c_{P}^{K_{\alpha}} \left(c_{Q}^{K_{\alpha}}\right)^{*} \delta_{\partial Q,\partial P} S_{Q}^{\dagger} L_{Q}^{\dagger} R_{\partial Q} R_{\partial P}^{\dagger} L_{P} S_{P} \nonumber \\
    &= \sum_{K_{\alpha},P} \left\vert c_{P}^{K_{\alpha}} \right\vert^{2} I + \sum_{K_{\alpha}} \sum_{P \neq Q} c_{P}^{K_{\alpha}} \left(c_{Q}^{K_{\alpha}}\right)^{*} \delta_{\partial Q,\partial P} Q^{\dagger} P \nonumber \\
    &= I + \sum_{\substack{T \in \mathcal{P} : \\
    T \neq I,\partial T = 0}} \sum_{K_{\alpha}} \sum_{\substack{P \neq Q : \\ Q^{\dagger} P \propto T}} c_{P}^{K_{\alpha}} \left(c_{Q}^{K_{\alpha}}\right)^{*} Q^{\dagger} P = I , \nonumber
\end{align}
where we have used $\partial T = \partial P - \partial Q$ for $Q^{\dagger} P \propto T$ and $\omega_{T} T = 0$ from above. This confirms $\widetilde{K}_{\alpha,\sigma}$ are valid Kraus operators. We can then use either representation of $\mathcal{R} \circ \mathcal{N}$ above to upper bound the trace distance $\norm{\rho - \mathcal{R} \circ \mathcal{N}(\rho)}_{1}$ independent of $\rho \in \Sigma_{\mathcal{V}}$. Using the triangle inequality,
\begin{equation}
    \norm{\rho - \mathcal{R} \circ \mathcal{N}(\rho)}_{1} \leq \sum_{\alpha} \pr_{\mathcal{N}}(\alpha) \norm{\rho - \mathcal{R} \circ \Phi_{\alpha}(\rho)}_{1} .
\end{equation}
Now, define the quantity
\begin{equation}
    F_{\mathcal{R}} (\Phi_{\alpha}) \coloneqq \begin{cases}
        1 & \exists K_{\alpha} , L_{P} \neq I : c_{P}^{K_{\alpha}} \neq 0 \\
        0 & \mathrm{else}
    \end{cases} 
\end{equation}
which indicates whether $\mathcal{R} \circ \Phi_{\alpha}$ acts nontrivially on the code space $\Sigma_{\mathcal{V}}$ in any way, namely whether any Kraus operator $K_{\alpha}$ of $\Phi_{\alpha}$ contains at least one $P = R_{\partial P}^{\dagger} L_{P} S_{P}$ with $L_{P} \neq I$ and nonzero coefficient $c_{P}^{K_{\alpha}}$. Also, define the failure rate of a channel $\mathcal{N}$ given $\mathcal{R}$,
\begin{equation}
    \label{eq:fail-prob}
    \mathrm{fail}\left( \mathcal{N} \right) \coloneqq \sum_{\alpha} \pr_{\mathcal{N}}(\alpha) F_{\mathcal{R}} (\Phi_{\alpha}) .
\end{equation}
Note that $\norm{\rho - \mathcal{R} \circ \Phi_{\alpha}(\rho)}_{1} \leq 2 F_{\mathcal{R}} (\Phi_{\alpha})$ independent of $\rho \in \Sigma_{\mathcal{V}}$,
therefore
\begin{equation}
    \label{eq:fail-bound}
    \sup_{\left\{\rho \in \Sigma_{\mathcal{V}} \right\}} \norm{\rho - \mathcal{R} \circ \mathcal{N}(\rho)}_{1} \leq 2 \cdot \mathrm{fail}\left( \mathcal{N} \right) .
\end{equation}
When discussing noisy recovery and fault tolerance, it will be useful to compute the probability distribution of syndromes $\sigma$ for a channel $\Phi_{\alpha}$ and $\mathcal{R}$, which we define as
\begin{equation}
    \label{eq:syndrome-dist}
    \pr_{\Phi_{\alpha}}(\sigma) \coloneqq \sum_{K_{\alpha}} \sum_{P \in \mathcal{P} : \partial P = \sigma} \left\vert c_{P}^{K_{\alpha}} \right\vert^{2} .
\end{equation}
One may interpret $\pr_{\Phi_{\alpha}}(\sigma)$ as capturing the total content of each Kraus operator with a given syndrome distribution, summed over all Kraus operators. $\pr_{\Phi_{\alpha}}(\sigma)$ induces a Pauli channel $\overline{\Phi}_{\alpha}$ which has the same syndrome distribution as $\Phi_{\alpha}$, $\pr_{\Phi_{\alpha}}(\sigma) = \pr_{\overline{\Phi}_{\alpha}}(\sigma)$, but has $\norm{\rho - \mathcal{R} \circ \overline{\Phi}_{\alpha}(\rho)}_{1} = 0$ for all $\rho \in \Sigma_{\mathcal{V}}$,
\begin{equation}
    \overline{\Phi}_{\alpha} (\cdot) = \sum_{\sigma} \pr_{\Phi_{\alpha}}(\sigma) R_{\sigma}^{\dagger} (\cdot) R_{\sigma} .
\end{equation}
Since $\partial P = \partial R_{\partial P}^{\dagger}$, $\overline{\Phi}_{\alpha}$ has the same syndrome distribution as $\Phi_{\alpha}$ but applies only correctable, Pauli errors. Given $\overline{\Phi}_{\alpha}$, we can also construct a conjugate channel $\overline{\Phi}_{\alpha}^{*}$ which has $\norm{\rho - \mathcal{R} \circ \overline{\Phi}_{\alpha}^{*}(\rho)}_{1} = 0$ for all $\rho \in \Sigma_{\mathcal{V}}$ and is related to $\overline{\Phi}_{\alpha}$ by conjugating $R_{\sigma}^{\dagger} \rightarrow R_{\sigma}$,
\begin{equation}
    \overline{\Phi}_{\alpha}^{*} (\cdot) = \sum_{\sigma} \pr_{\Phi_{\alpha}}(\sigma) R_{\sigma} (\cdot) R_{\sigma}^{\dagger} .
\end{equation}
$\overline{\Phi}_{\alpha}$ and $\overline{\Phi}_{\alpha}^{*}$ have related properties as $R_{\sigma} = R_{-\sigma}^{\dagger}$ and $\supp \sigma = \supp (-\sigma)$. Importantly, probability distributions over the supports of errors and over the supports of syndromes are equal,
\begin{align}
    \sum_{R_{\sigma}^{\dagger} : A = \supp R_{\sigma}^{\dagger}} \pr_{\overline{\Phi}_{\alpha}}(\sigma) &= \sum_{R_{\sigma} : A = \supp R_{\sigma}} \pr_{\overline{\Phi}_{\alpha}}(\sigma) , \nonumber \\
    \sum_{\sigma : B = \supp \sigma} \pr_{\overline{\Phi}_{\alpha}}(\sigma) &= \sum_{\sigma : B = \supp \sigma} \pr_{\overline{\Phi}_{\alpha}}(-\sigma) .
\end{align}
One may then compute syndrome distributions and induced channels $\overline{\mathcal{N}}$ and $\overline{\mathcal{N}}^{*}$ for any general channel noise $\mathcal{N}$ by linearity, with
\begin{align}
    \pr_{\mathcal{N}}(\sigma) &= \sum_{\alpha} \pr_{\mathcal{N}}(\alpha) \pr_{\Phi_{\alpha}}(\sigma) , \\
    \overline{\mathcal{N}} (\cdot) &= \sum_{\alpha} \pr_{\mathcal{N}}(\alpha) \overline{\Phi}_{\alpha} (\cdot) =
    \sum_{\sigma} \pr_{\mathcal{N}}(\sigma) R_{\sigma}^{\dagger} (\cdot) R_{\sigma} , \nonumber \\
    \overline{\mathcal{N}}^{*} (\cdot) &= \sum_{\alpha} \pr_{\mathcal{N}}(\alpha) \overline{\Phi}_{\alpha}^{*} (\cdot) = \sum_{\sigma} \pr_{\mathcal{N}}(\sigma) R_{\sigma} (\cdot) R_{\sigma}^{\dagger} . \nonumber
\end{align}

When composing two channels $\mathcal{N}$ and $\mathcal{M}$, one obtains
\begin{align}
    &\mathcal{N} \circ \mathcal{M} (\cdot) = \sum_{\alpha,\beta} \pr_{\mathcal{N}}(\alpha) \pr_{\mathcal{M}}(\beta) \Phi_{\alpha} \circ \Phi_{\beta} (\cdot) , \\
    &\Phi_{\alpha} \circ \Phi_{\beta} (\cdot) = \sum_{K_{\alpha}, J_{\beta}} K_{\alpha} J_{\beta} (\cdot) J_{\beta}^{\dagger} K_{\alpha}^{\dagger} . \nonumber
    \\ &= \sum_{K_{\alpha}, J_{\beta}} \sum_{P,Q,E,F \in \mathcal{P}} c_{P}^{K_{\alpha}} c_{E}^{J_{\beta}} \left(c_{Q}^{K_{\alpha}} c_{F}^{J_{\beta}}\right)^{*} P E (\cdot) F^{\dagger} Q^{\dagger} . \nonumber
\end{align}
We will now compute $\mathcal{R} \circ \Phi_{\alpha} \circ \Phi_{\beta} (\rho)$ for $\rho \in \Sigma_{\mathcal{V}}$. From Eq.~(\ref{eq:correction-composition}), one finds
\begin{equation}
    R_{\partial (PE)} P E = e^{i\phi(P,E)} L_{\partial P,\partial E} L_{P} L_{E} S_{\partial P,\partial E} S_{P} S_{E}
\end{equation}
where $\phi(P,E)$ is a phase. Computing $\mathcal{R} \circ \Phi_{\alpha} \circ \Phi_{\beta} (\rho)$ for $\rho \in \Sigma_{\mathcal{V}}$,
\begin{align}
    &\mathcal{R} \circ \Phi_{\alpha} \circ \Phi_{\beta} (\rho) = \\
    &\sum_{\substack{\sigma , K_{\alpha}, J_{\beta}, \\ P,Q,E,F \in \mathcal{P}}} c_{P}^{K_{\alpha}} c_{E}^{J_{\beta}} \left(c_{Q}^{K_{\alpha}} c_{F}^{J_{\beta}}\right)^{*} R_{\sigma} \Pi_{\sigma}^{\mathcal{S}} P E \rho F^{\dagger} Q^{\dagger} \Pi_{\sigma}^{\mathcal{S}} R_{\sigma}^{\dagger} \nonumber \\
    &= \sum_{\substack{K_{\alpha}, J_{\beta}, \\ P,Q,E,F \in \mathcal{P}}} \Big( c_{P}^{K_{\alpha}} c_{E}^{J_{\beta}} \left(c_{Q}^{K_{\alpha}} c_{F}^{J_{\beta}}\right)^{*} \delta_{\partial (PE) , \partial (QF)} \nonumber \\
    &\quad \quad \times e^{i\phi(P,E)-i\phi(Q,F)} L_{\partial P,\partial E} L_{P} L_{E} \rho L_{F}^{\dagger} L_{Q}^{\dagger} L_{\partial Q,\partial F}^{\dagger} \Big) . \nonumber
\end{align}
Similarly, computing $\mathcal{R} \circ \overline{\Phi}_{\alpha} \circ \overline{\Phi}_{\beta}$,
\begin{align}
    &\mathcal{R} \circ \overline{\Phi}_{\alpha} \circ \overline{\Phi}_{\beta} (\rho) = \\
    &\sum_{\xi} R_{\xi} \Pi_{\xi}^{\mathcal{S}} \left( \sum_{\sigma,\kappa} \pr_{\Phi_{\alpha}}(\sigma) \pr_{\Phi_{\beta}}(\kappa) R_{\sigma}^{\dagger} R_{\kappa}^{\dagger} (\cdot) R_{\kappa} R_{\sigma} \right) \Pi_{\xi}^{\mathcal{S}} R_{\xi}^{\dagger} \nonumber \\
    &= \sum_{\xi} \sum_{\sigma,\kappa : \sigma + \kappa = \xi} \pr_{\Phi_{\alpha}}(\sigma) \pr_{\Phi_{\beta}}(\kappa) L_{\sigma,\kappa} \rho L_{\sigma,\kappa}^{\dagger}  \nonumber \\
    &= \sum_{\xi} \sum_{\sigma,\kappa : \sigma + \kappa = \xi} \sum_{\substack{K_{\alpha}, J_{\beta},P,E : \\ \partial P = \sigma , \partial E = \kappa}}  \left\vert c_{P}^{K_{\alpha}} c_{E}^{J_{\beta}} \right\vert^{2} L_{\sigma,\kappa} \rho L_{\sigma,\kappa}^{\dagger} \nonumber \\
    &= \sum_{\xi} \sum_{\substack{K_{\alpha}, J_{\beta},P,E : \\ \partial (PE) = \xi}}  \left\vert c_{P}^{K_{\alpha}} c_{E}^{J_{\beta}} \right\vert^{2} L_{\partial P,\partial E} \rho L_{\partial P,\partial E}^{\dagger} \nonumber \\
    &= \sum_{\substack{K_{\alpha}, J_{\beta}, \\ P,E \in \mathcal{P}}}  \left\vert c_{P}^{K_{\alpha}} c_{E}^{J_{\beta}} \right\vert^{2} L_{\partial P,\partial E} \rho L_{\partial P,\partial E}^{\dagger} . \nonumber
\end{align}
Observe that $F_{\mathcal{R}}$ for the above channels $\Phi_{\alpha} \circ \Phi_{\beta}$ and $\overline{\Phi}_{\alpha} \circ \overline{\Phi}_{\beta}$ can then be written as
\begin{align}
    F_{\mathcal{R}} (\Phi_{\alpha} \circ \Phi_{\beta}) &= \begin{cases}
        1 & \exists K_{\alpha}, J_{\beta} , L_{\partial P,\partial E} L_{P} L_{E} \neq I : \\ & c_{P}^{K_{\alpha}} c_{E}^{J_{\beta}} \neq 0 \\
        0 & \mathrm{else}
    \end{cases}
    , \nonumber \\
    F_{\mathcal{R}} (\overline{\Phi}_{\alpha} \circ \overline{\Phi}_{\beta}) &= \begin{cases}
        1 & \exists K_{\alpha}, J_{\beta} , L_{\partial P,\partial E} \neq I : c_{P}^{K_{\alpha}} c_{E}^{J_{\beta}} \neq 0 \\
        0 & \mathrm{else}
    \end{cases} 
    . \nonumber
\end{align}
Since $L_{\partial P,\partial E} L_{P} L_{E} \neq I$ only if $L_{P} \neq I$ or $L_{E} \neq I$ or $L_{\partial P,\partial E} \neq I$, we can bound
\begin{equation}
    F_{\mathcal{R}} (\Phi_{\alpha} \circ \Phi_{\beta}) \leq F_{\mathcal{R}} (\Phi_{\alpha}) + F_{\mathcal{R}} (\Phi_{\beta}) + F_{\mathcal{R}} (\overline{\Phi}_{\alpha} \circ \overline{\Phi}_{\beta}) .
\end{equation}
Rewriting $L_{\partial P,\partial E} = \left(L_{\partial P,\partial E} L_{P} L_{E} \right) L_{E}^{\dagger} L_{P}^{\dagger}$, we also find $L_{\partial P,\partial E} \neq I$ only if $L_{P} \neq I$ or $L_{E} \neq I$ or $L_{\partial P,\partial E} L_{P} L_{E} \neq I$, meaning we can bound
\begin{equation}
    F_{\mathcal{R}} (\overline{\Phi}_{\alpha} \circ \overline{\Phi}_{\beta}) \leq F_{\mathcal{R}} (\Phi_{\alpha}) + F_{\mathcal{R}} (\Phi_{\beta}) + F_{\mathcal{R}} (\Phi_{\alpha} \circ \Phi_{\beta})
\end{equation}
as well. These results extend by linearity to the failure probabilities of general channel noise $\mathcal{N}$ and $\mathcal{M}$ and their composition. The above observations can all be encapsulated in the following lemma, which may be viewed as a generalization of parts of Lemma 1 of Ref.~\cite{bombin2015singleshot} and Lemma 4 of Ref.~\cite{kubica2022} to general channel noise.
\begin{lemma}
    \label{lemma:pauli-composition}
    For any general channel noise $\mathcal{N}$ and $\mathcal{M}$,
    \begin{align}
        &\mathrm{fail}\left( \mathcal{N} \circ \mathcal{M} \right) \leq \mathrm{fail}\left( \mathcal{N} \right) + \mathrm{fail}\left( \mathcal{M} \right) + \mathrm{fail}\left( \overline{\mathcal{N}} \circ \overline{\mathcal{M}} \right) , \nonumber \\
        &\mathrm{fail}\left( \overline{\mathcal{N}} \circ \overline{\mathcal{M}} \right) \leq \mathrm{fail}\left( \mathcal{N} \right) + \mathrm{fail}\left( \mathcal{M} \right) + \mathrm{fail}\left( \mathcal{N} \circ \mathcal{M} \right) .
    \end{align}
\end{lemma}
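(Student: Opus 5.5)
The plan is to reduce both inequalities to the pointwise bounds on $F_{\mathcal{R}}$ derived immediately before the lemma, and then extend them by linearity over the convex decompositions. Write $\mathcal{N} = \sum_{\alpha} \pr_{\mathcal{N}}(\alpha)\Phi_{\alpha}$ and $\mathcal{M} = \sum_{\beta}\pr_{\mathcal{M}}(\beta)\Phi_{\beta}$. Then take the composite decompositions to be
\begin{equation}
\mathcal{N}\circ\mathcal{M} = \sum_{\alpha,\beta}\pr_{\mathcal{N}}(\alpha)\pr_{\mathcal{M}}(\beta)\,\Phi_{\alpha}\circ\Phi_{\beta}, \qquad \overline{\mathcal{N}}\circ\overline{\mathcal{M}} = \sum_{\alpha,\beta}\pr_{\mathcal{N}}(\alpha)\pr_{\mathcal{M}}(\beta)\,\overline{\Phi}_{\alpha}\circ\overline{\Phi}_{\beta}.
\end{equation}
The second expression follows from the linear definition $\overline{\mathcal{N}}=\sum_\alpha \pr_{\mathcal{N}}(\alpha)\overline{\Phi}_\alpha$. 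With these decompositions, $\mathrm{fail}$ of each composite is the weighted average of $F_{\mathcal{R}}$ over the pairs $(\alpha,\beta)$.

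Next, I would restate the two pointwise inequalities carefully:
\begin{equation}
F_{\mathcal{R}}(\Phi_\alpha\circ\Phi_\beta)\le F_{\mathcal{R}}(\Phi_\alpha)+F_{\mathcal{R}}(\Phi_\beta)+F_{\mathcal{R}}(\overline{\Phi}_\alpha\circ\overline{\Phi}_\beta),
\end{equation}
and the same inequality with the two composite terms interchanged. Both come from the logical-operator identity $R_{\partial(PE)}PE \propto L_{\partial P,\partial E}L_PL_E S$, together with the observation that a product of three logical Paulis is nontrivial only if at least one factor is. I would double-check one point here: the Kraus–Pauli expansion of $\Phi_\alpha\circ\Phi_\beta$ has coefficients $\sum c_P^{K_\alpha}c_E^{J_\beta}$ over the pairs with $PE\propto T$. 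Cancellations could occur, but cancellation can only make $F_{\mathcal{R}}$ of the composite smaller, so the upper bound is safe. For the reverse inequality I need $F_{\mathcal{R}}(\overline{\Phi}_\alpha\circ\overline{\Phi}_\beta)$ to be expressible over the pairs with $c_P c_E\ne 0$. This holds because $\pr_{\Phi}(\sigma)$ is positive exactly on the syndromes of such Paulis, and $\overline{\Phi}_\alpha\circ\overline{\Phi}_\beta$ has Kraus operators $R_\sigma^\dagger R_\kappa^\dagger$ with $\mathcal{R}$ mapping them to $L_{\sigma,\kappa}$. Cancellation could again hide terms in $F_{\mathcal{R}}(\Phi_\alpha\circ\Phi_\beta)$ when it is interpreted via the composite's own expansion. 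To avoid this, I would \emph{define} $F_{\mathcal{R}}$ for the composite through its product Kraus operators $K_\alpha J_\beta$ expanded as $\sum c_P c_E\, PE$, which is the form used in the displayed case analysis. This is the main subtlety I expect to have to address.

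Finally, I multiply each pointwise inequality by $\pr_{\mathcal{N}}(\alpha)\pr_{\mathcal{M}}(\beta)$ and sum over $(\alpha,\beta)$. Since $\sum_\beta \pr_{\mathcal{M}}(\beta)=1$, the term $F_{\mathcal{R}}(\Phi_\alpha)$ sums to $\mathrm{fail}(\mathcal{N})$, and likewise $F_{\mathcal{R}}(\Phi_\beta)$ sums to $\mathrm{fail}(\mathcal{M})$. The composite terms sum to $\mathrm{fail}(\mathcal{N}\circ\mathcal{M})$ and $\mathrm{fail}(\overline{\mathcal{N}}\circ\overline{\mathcal{M}})$ respectively, which yields both inequalities of Lemma~\ref{lemma:pauli-composition}.
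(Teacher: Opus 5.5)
Your proposal is correct and follows the paper's argument. You derive the pointwise bounds on $F_{\mathcal{R}}$ from $R_{\partial(PE)}PE \propto L_{\partial P,\partial E}L_PL_E S$ and its rearrangement $L_{\partial P,\partial E} = (L_{\partial P,\partial E}L_PL_E)L_E^{\dagger}L_P^{\dagger}$, and then average them over the product decomposition with weights $\pr_{\mathcal{N}}(\alpha)\pr_{\mathcal{M}}(\beta)$, exactly as the paper does.

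Your pairwise convention for $F_{\mathcal{R}}(\Phi_\alpha\circ\Phi_\beta)$, ranging over pairs $(P,E)$ with $c_P^{K_\alpha}c_E^{J_\beta}\neq 0$, is the characterization the paper displays just before the lemma. Your caution about cancellations is warranted, because without that convention the second inequality can fail. For example, take $\Phi_\alpha=\Phi_\beta=U(\cdot)U^{\dagger}$ with $U=(A+B)/\sqrt{2}$, where $A,B$ are anticommuting Hermitian Paulis with $L_A=L_B=I$ but $L_{\partial A,\partial B}\neq I$. Then $U^2=I$, so the operator-level $F_{\mathcal{R}}(\Phi_\alpha\circ\Phi_\beta)$ vanishes, while $F_{\mathcal{R}}(\overline{\Phi}_\alpha\circ\overline{\Phi}_\beta)=1$.
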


\subsection{Fault tolerant error correction} \label{noisy-qudit-qec}

To study fault tolerant QEC, we will use noise classes $\mathbb{N}_{\tau,\epsilon}$ which depend on the specific recovery channel $\mathcal{R}$ and set of recovery operators $\{ R_{\sigma}^{\dagger} \}$ used. Here, $\tau$ characterizes noise strength and $\epsilon$ bounds $\mathrm{fail}\left( \mathcal{N} \right)$ for all channels $\mathcal{N} \in \mathbb{N}_{\tau,\epsilon}$ with respect to $\mathcal{R}$.
\begin{definition}
    \label{def:recovery-dep-class}
    A recovery-dependent class of noise $\mathbb{N}_{\tau,\epsilon}$ is a parametrized class of noise with strength $\tau$, as outlined in Definition~\ref{def:noise-class}, which depends on the recovery channel $\mathcal{R}$ and satisfies the following additional properties.
    \begin{enumerate}
        \item $\mathcal{N} \in \mathbb{N}_{\tau,\epsilon}$ if and only if $\mathrm{fail}\left( \mathcal{N} \right) \leq \epsilon$ and $\overline{\mathcal{N}} \in \mathbb{N}_{\tau,0}$.
        \item For any $\tau > 0$, there exists $\lambda > 0$ and $\epsilon = f(\lambda ; n)$ such that $\mathbb{L}_{\lambda} \subseteq \mathbb{N}_{\tau,\epsilon}$.
        \item For any $\mathbb{N}_{\tau_1,\epsilon_1}$ and $\mathbb{N}_{\tau_2,\epsilon_2}$, there exists $\delta$ such that $\mathbb{N}_{\tau_1,\epsilon_1} \circ \mathbb{N}_{\tau_2,\epsilon_2} \subseteq \mathbb{N}_{\tau_1 + \tau_2,\epsilon_1 + \epsilon_2 + \delta}$.
    \end{enumerate}
\end{definition}

Consider QEC in the presence of measurement errors. Since the recovery channel $\mathcal{R}$ is an ideal channel, we will denote a potentially noisy implementation of $\mathcal{R}$ as the QL circuit $\Lambda$, where if measurement errors $\mu$ occur with probability $\pr_{\Lambda}(\mu)$, a noisy realization of $\Lambda$ is
\begin{equation}
    \label{eq:noisy-qec-channel}
     \Lambda^{K} (\cdot) = \sum_{\mu} \pr_{\Lambda}(\mu) \Lambda_{\mu} (\cdot) , \; \Lambda_{\mu} = \left\{ R_{\sigma + \mu} \Pi^{\mathcal{S}}_{\sigma} \right\}_{\sigma} .
\end{equation}
Here, we choose recovery operators $R_{\sigma + \mu}$ such that measurement errors which produce invalid syndromes $\mu$ lead to corrections with valid syndromes, using the method outlined in Section~\ref{qudit-codes}. Explicitly, $R_{\sigma + \mu} = R_{\sigma + \mu + \hat{\mu}}$, where $\mu + \hat{\mu}$ is a valid syndrome and $\hat{\mu} = \hat{\mu}(\mu)$ is independent of $\sigma$. From this choice, $\partial R_{\sigma + \mu}^{\dagger} = \sigma + \mu + \hat{\mu}$. Noisy error correction for $\rho \in \Sigma_{\mathcal{V}}$ then takes the form
\begin{align}
    &\Lambda^{K} \circ \mathcal{N} (\rho) = \sum_{\alpha} \pr_{\mathcal{N}}(\alpha) \sum_{\mu} \pr_{\Lambda}(\mu) \Lambda_{\mu} \circ \Phi_{\alpha} (\rho) , \nonumber \\
    &\Lambda_{\mu} \circ \Phi_{\alpha} (\rho) = \sum_{\substack{\sigma , K_{\alpha}, \\ P,Q \in \mathcal{P}}} c_{P}^{K_{\alpha}} \left(c_{Q}^{K_{\alpha}}\right)^{*} R_{\sigma + \mu} \Pi_{\sigma}^{\mathcal{S}} P \rho Q^{\dagger} \Pi_{\sigma}^{\mathcal{S}} R_{\sigma + \mu}^{\dagger} \nonumber \\
    &= \sum_{\sigma , K_{\alpha},P,Q} c_{P}^{K_{\alpha}} \left(c_{Q}^{K_{\alpha}}\right)^{*} \delta_{\sigma,\partial Q} \delta_{\sigma,\partial Q} R_{\sigma + \mu} P \rho Q R_{\sigma + \mu}^{\dagger} \nonumber \\
    &= \sum_{\sigma , K_{\alpha}} \widetilde{K}_{\alpha ; \mu,\sigma} \rho \widetilde{K}_{\alpha ; \mu,\sigma}^{\dagger} = \mathcal{J}_{\alpha ; \mu} (\rho) ,
\end{align}
where effective Kraus operators of $\mathcal{J}_{\alpha ; \mu}$ are
\begin{align}
    \widetilde{K}_{\alpha ; \mu,\sigma} &= R_{\mu} L_{\mu,\sigma} S_{\mu,\sigma} \sum_{P \in \mathcal{P}} \left(c_{P}^{K_{\alpha}} \delta_{\sigma,\partial P}\right) L_{P} S_{P} \nonumber \\
    &= R_{\mu} L_{\mu,\sigma} S_{\mu,\sigma} \widetilde{K}_{\alpha,\sigma} ,
\end{align}
where $\widetilde{K}_{\alpha,\sigma}$ are the effective Kraus operators given in Eq.~(\ref{eq:effective-kraus}). One may verify $\widetilde{K}_{\alpha ; \mu,\sigma}$ are Kraus operators in the same way as $\widetilde{K}_{\alpha,\sigma}$. We can then construct an effective channel $\mathcal{F}$, where $\mathcal{F} (\rho) = \Lambda^{K} \circ \mathcal{N} (\rho)$ for all $\rho \in \Sigma_{\mathcal{V}}$, as
\begin{equation}
    \mathcal{F} (\cdot) = \sum_{\alpha,\mu} \pr_{\mathcal{N}}(\alpha) \pr_{\Lambda}(\mu) \mathcal{J}_{\alpha ; \mu} (\cdot) .
\end{equation}
Recovery-dependent channels for $\mathcal{J}_{\alpha ; \mu}$ and $\mathcal{F}$ are
\begin{align}
    \overline{\mathcal{J}}_{\alpha ; \mu} (\cdot) = R_{\mu} (\cdot) R_{\mu}^{\dagger} , &\quad \overline{\mathcal{F}} (\cdot) = \sum_{\mu} \pr_{\Lambda}(\mu) \overline{\mathcal{J}}_{\alpha ; \mu} (\cdot) , \\
    \overline{\mathcal{J}}_{\alpha ; \mu}^{*} (\cdot) = R_{\mu}^{\dagger} (\cdot) R_{\mu} , &\quad \overline{\mathcal{F}}^{*} (\cdot) = \sum_{\mu} \pr_{\Lambda}(\mu) \overline{\mathcal{J}}_{\alpha ; \mu}^{*} (\cdot) . \nonumber
\end{align}
Composing $\overline{\mathcal{F}}^{*}$ and $\mathcal{N}$ yields
\begin{align}
    &\overline{\mathcal{F}}^{*} \circ \mathcal{N} (\rho) = \sum_{\alpha,\mu} \pr_{\mathcal{N}}(\alpha) \pr_{\Lambda}(\mu) \overline{\mathcal{J}}_{\alpha ; \mu}^{*} \circ \Phi_{\alpha} (\rho) , \\
    &\overline{\mathcal{J}}_{\alpha ; \mu}^{*} \circ \Phi_{\alpha} (\rho) = \sum_{K_{\alpha},P,Q} c_{P}^{K_{\alpha}} \left(c_{Q}^{K_{\alpha}}\right)^{*} R_{\mu}^{\dagger} P \rho Q^{\dagger} R_{\mu} . \nonumber
\end{align}
Since $R_{\mu}^{\dagger} P = R_{\mu + \partial P}^{\dagger} L_{\mu,\partial P} S_{\mu,\partial P} L_{P} S_{P}$, one finds
\begin{equation}
    F_{\mathcal{R}} (\mathcal{J}_{\alpha ; \mu}) = F_{\mathcal{R}} \left(\overline{\mathcal{J}}_{\alpha ; \mu}^{*} \circ \Phi_{\alpha}\right) .
\end{equation}
Furthermore, this implies $\mathrm{fail}\left( \mathcal{F} \right) = \mathrm{fail}\left( \overline{\mathcal{F}}^{*} \circ \mathcal{N} \right)$. Suppose $\mathcal{N} \in \mathbb{N}_{\tau,\epsilon}$ and $\overline{\mathcal{F}} , \overline{\mathcal{F}}^{*} \in \mathbb{N}_{\eta,0}$. Definition~\ref{def:noise-class} implies
\begin{equation}
    \overline{\mathcal{F}}^{*} \circ \mathcal{N} \in \mathbb{N}_{\eta,0} \circ \mathbb{N}_{\tau,\epsilon} \subseteq \mathbb{N}_{\tau + \eta,\epsilon + \delta}
\end{equation}
for some $\delta$, meaning $\mathrm{fail}\left( \mathcal{F} \right) = \mathrm{fail}\left( \overline{\mathcal{F}}^{*} \circ \mathcal{N} \right) \leq \epsilon + \delta$. Therefore, $\mathcal{F} \in \mathbb{N}_{\eta,\epsilon + \delta}$.

To summarize, if $\Lambda^{K} \circ \mathcal{N} (\rho) = \mathcal{F} (\rho)$ for all $\rho \in \Sigma_{\mathcal{V}}$ and if $\mathcal{N} \in \mathbb{N}_{\tau,\epsilon}$ and $\overline{\mathcal{F}} , \overline{\mathcal{F}}^{*} \in \mathbb{N}_{\eta,0}$, then $\mathcal{F} \in \mathbb{N}_{\eta,\epsilon + \delta}$. This means the noise strength $\eta$ and maximum failure rate $\epsilon + \delta$ of the residual noise are known and can potentially be controlled. If $\eta$ is continuous function which depends on the strength of measurement noise and $\epsilon + \delta = 1/\Omega(\poly(n))$, the QL circuit $\Lambda$ is fault tolerant.

\subsection{Proof of Lemma~\ref{lemma:ldpc-stability}} \label{gen-stochastic-stability}

Here, we prove Lemma~\ref{lemma:ldpc-stability} and show that the code space $\Sigma_{\mathcal{V}}$ of any LDPC qudit stabilizer code with distance $d_{\mathcal{V}} = \Omega(n^{a})$ for some $a > 0$ is stable. More specifically, we show there exists $\lambda^{\star} > 0$ such that for all $\lambda < \lambda^{\star}$, $\mathbb{L}_{\lambda}$ acting on $\Sigma_{\mathcal{V}}$ is $\mathcal{R}$-recoverable, where $\mathcal{R}$ is an error correction channel which applies the minimum weight Pauli correction for every observed syndrome. Much of this section follows closely the proof of Theorem 3 in Ref.~\cite{gottesman2014}, however we include all details for completeness.

\begin{proof}[Proof of Lemma~\ref{lemma:ldpc-stability}]
    Consider a qudit stabilizer code $\mathcal{V}$ with stabilizer group $\mathcal{S}$ and LDPC generating set $\mathcal{S}_{0} = \{ S_{i} \}$, where each generator $S_{i}$ is supported on $\wt S_{i} \leq v_{q}$ qudits and each qudit is itself in the support of $\leq v_{s}$ generators $S_{i}$. Suppose the recovery channel $\mathcal{R} = \left\{ R_{\sigma} \Pi_{\sigma}^{\mathcal{S}} \right\}_{\sigma}$ measures generators in $\mathcal{S}_{0}$, then applies operators $R_{\sigma}$ which have minimum weight among all Pauli operators with syndrome $-\sigma$, $\wt R_{\sigma} \leq \wt P$ for all $P \in \mathcal{P}$ with $\partial P = -\sigma$. Define a graph $\Gamma = (V,E)$, where each vertex $v \in V$ corresponds to a system qudit, $V \cong \mathcal{Q}$, and two vertices are connected by an edge if and only if a generator in $\mathcal{S}_{0}$ acts nontrivially on both qudits, $\{q_{1},q_{2}\} \in E$ if and only if there exists $S_{i} \in \mathcal{S}_{0}$ such that $q_{1},q_{2} \in \supp S_{i}$. This graph has maximum degree $z \leq (v_{q} - 1) v_{s}$.
    
    Consider a noise channel $\mathcal{N} \in \mathbb{L}_{\lambda}$. From Eq.~(\ref{eq:fail-bound}), to show $\mathcal{N}$ is $\mathcal{R}$-recoverable it suffices to bound $\mathrm{fail}\left( \mathcal{N} \right)$ by bounding when any $\Phi_{\alpha}$ may have $F_{\mathcal{R}} (\Phi_{\alpha}) = 1$ based on the support of $\Phi_{\alpha}$. The remainder of this section follows almost exactly the proof of Theorem 3 in Ref.~\cite{gottesman2014}, but we provide all steps for completeness. Suppose a Kraus operator of $\Phi_{\alpha}$ contains the Pauli operator $P$ with nonzero coefficient $c_{P}^{K_{\alpha}} \neq 0$. Noting $R_{\partial P} P = L_{P} S_{P}$ is the remaining Pauli operator after correction, each connected component of $\supp L_{P} S_{P}$ on the graph $\Gamma$ must either itself implement a logical operator or a stabilizer. This is because no two maximal connected components $\left(\supp L_{P} S_{P}\right)_{c}$ and $\left(\supp L_{P} S_{P}\right)_{d}$ of $\supp L_{P} S_{P}$ on $\Gamma$ can intersect the same generators $S_{i}$ by construction, namely
    \begin{align}
        \neg\exists S_{i} \in \mathcal{S}_{0} :& \supp S_{i} \cap \left(\supp L_{P} S_{P}\right)_{c} \neq \emptyset \nonumber \\
        &\text{and }
        \supp S_{i} \cap \left(\supp L_{P} S_{P}\right)_{d} \neq \emptyset 
    \end{align}
    for $d \neq c$.
    Therefore, $L_{P} S_{P}$ restricted to $\left(\supp L_{P} S_{P}\right)_{c}$ ,
    \begin{equation}
        \left(L_{P} S_{P}\right)_{c} \coloneqq \Tr_{V \setminus \left(\supp L_{P} S_{P}\right)_{c}} L_{P} S_{P} ,
    \end{equation}
    must have trivial syndrome $\partial \left(L_{P} S_{P}\right)_{c} = 0$. This means any $\left(L_{P} S_{P}\right)_{c} \notin \mathcal{S}$ is a logical operator and must have
    \begin{equation}
        \left\vert\left(\supp L_{P} S_{P} \right)_{c}\right\vert = \left\vert\left(\supp R_{\partial P} P \right)_{c}\right\vert \coloneqq w \geq d_{\mathcal{V}} .
    \end{equation}
    Since $\left(\supp R_{\partial P} P\right)_{c} \subseteq \left(\supp R_{\partial P}\right)_{c} \cup \left(\supp P\right)_{c}$,
    \begin{equation}
        w \leq \left\vert \left(\supp R_{\partial P}\right)_{c} \right\vert + \left\vert \left(\supp P\right)_{c} \right\vert .
    \end{equation}
    However, since $R_{\partial P}$ is minimum weight,
    \begin{equation}
        \left\vert \left(\supp R_{\partial P}\right)_{c} \right\vert \leq \left\vert \left(\supp P^{\dagger}\right)_{c} \right\vert =  \left\vert \left(\supp P\right)_{c} \right\vert ,
    \end{equation}
    as otherwise one could replace $\left(R_{\partial P}\right)_{c}$ with $\left(P^{\dagger}\right)_{c}$
    on $\left(\supp L_{P} S_{P}\right)_{c}$ to obtain a lower weight correction. Thus,
    \begin{equation}
        w \leq 2\left\vert \left(\supp P\right)_{c} \right\vert \Rightarrow \left\vert \left(\supp P\right)_{c} \right\vert \geq w/2 .
    \end{equation}
    This means any connected $\left(\supp L_{P} S_{P}\right)_{c}$ which supports a nontrivial logical operator $\left(L_{P} S_{P}\right)_{c} \notin \mathcal{S}$ must contain $w$ vertices, and $P$ must act on $\geq w/2$ of them. Since $\supp P \subseteq \supp K_{\alpha} \subseteq \supp \Phi_{\alpha}$, if $F_{\mathcal{R}} (\Phi_{\alpha}) = 1$, then $\Phi_{\alpha}$ must be supported on $\geq w/2$ vertices within a connected subset $A$ with $\vert A \vert = w \geq d_{\mathcal{V}}$. We can then bound $\mathrm{fail}\left( \mathcal{N} \right)$ by the probability that any $\Phi_{\alpha}$ is supported on $\geq \vert A \vert/2$ sites in any connected subset $A \subseteq V$ with $\vert A \vert \geq d_{\mathcal{V}}$. From Appendix~\ref{combinatorics}, the collection of all connected subsets $A \subseteq V$ with size $\vert A \vert = w$ is $C(w)$. Since $\pr_{\mathcal{N}}(\alpha)$ is $\lambda$-bounded, and $\vert C(w) \vert$ is bounded as discussed in Eq.~(\ref{eq:all-connected-subsets}),
    \begin{align}
        &\mathrm{fail}\left( \mathcal{N} \right) \leq \sum_{w \geq d_{\mathcal{V}}} \sum_{A \in C(w)} \sum_{\substack{B \subseteq A : \\ \vert B \vert \geq \vert A \vert / 2}} \sum_{\substack{\alpha : \\ \supp \Phi_{\alpha} \supseteq B}} \pr_{\mathcal{N}}(\alpha) \nonumber \\
        &\leq \sum_{w \geq d_{\mathcal{V}}} \sum_{A \in C(w)} \sum_{\substack{B \subseteq A : \\ \vert B \vert \geq w/ 2}} \lambda^{\vert B \vert} \leq  \sum_{w \geq d_{\mathcal{V}}} \sum_{A \in C(w)} 2^{w} \lambda^{w/2} \nonumber \\
        &\leq \sum_{w \geq d_{\mathcal{V}}} \vert C(w) \vert 2^{w} \lambda^{w/2} \leq \sum_{w \geq d_{\mathcal{V}}} n (ez)^{w-1} 2^{w} \lambda^{w/2} \nonumber \\
        &=\frac{n}{ez} \frac{1}{1-2ez\sqrt{\lambda}} \left(2ez\sqrt{\lambda}\right)^{d_{\mathcal{V}}} .
    \end{align}
    Rearranging, this gives
    \begin{equation}
        \mathrm{fail}\left( \mathcal{N} \right) \leq
        \frac{n/ez}{1-2ez\sqrt{\lambda}} \left(\frac{\lambda}{(2ez)^{-2}}\right)^{d_{\mathcal{V}} / 2} ,
    \end{equation}
    and since $d_{\mathcal{V}} = \Omega(n^{a})$, $\mathrm{fail}\left( \mathcal{N} \right) < 1/\Omega(\poly(n))$ for any $\lambda < (2ez)^{-2}$, where $z = (v_{q} - 1) v_{s}$. Therefore, every $\mathcal{N} \in \mathbb{L}_{\lambda}$ is $\mathcal{R}$-recoverable according to Definition~\ref{def:recoverable} and $\Sigma_{\mathcal{V}}$ is stable according to Definition~\ref{def:stable-phase}.
\end{proof}

\section{Fault tolerance for self-correcting qudit codes} \label{self-corr-all}

In this section, we prove Theorems~\ref{theorem:self-corr-qec} and~\ref{theorem:self-corr-state-prep}. We will start by elaborating on the properties discussed in Section~\ref{self-correction-setup} which a qudit stabilizer code must satisfy for these theorems to apply, then discuss the class of noise used, $\mathbb{N}_{\tau,\epsilon}^{\mathrm{exc}}$, and explicitly prove Theorems~\ref{theorem:self-corr-qec} and~\ref{theorem:self-corr-state-prep}. Similar to Appendix~\ref{qudit-qec}, many techniques used here are generalizations of those presented in Refs.~\cite{bombin2015singleshot,kubica2022}. There are a few subtleties which arise when dealing with non-Pauli, qudit noise, but proof techniques are all qualitatively similar to those developed in Ref.~\cite{bombin2015singleshot}.

\subsection{Properties of self-correcting codes} \label{self-corr-conditions}

As discussed in Section~\ref{self-correction-setup}, consider a qudit stabilizer code $\mathcal{V}$ which has a $(v_{q},v_{s})$ geometrically local generating set of its stabilizer $\mathcal{S}_{0}$ according to Definition~\ref{def:geom-local}, where both $v_{q},v_{s} = O(1)$. We say a generator $S_{i} \in \mathcal{S}_{0}$ is excited if the corresponding syndrome element $\sigma_{i} \neq 0$. To study properties of syndromes, we can construct an excitation graph $\Gamma = (V_{\Gamma},E_{\Gamma})$, whose vertices $V_{\Gamma}$ are associated with generators $S_{i} \in \mathcal{S}_{0}$. Edges $E_{\Gamma}$ may depend on the specific code $\mathcal{V}$, but we require that each vertex in $V_{\Gamma}$ is part of at least one edge, and an edge only connects two vertices if the generators they correspond to are supported a distance $O(1)$ apart on the system lattice. Note that $\Gamma$ then has bounded maximum degree, $z_{\Gamma}$.

We define the support of a syndrome $\sigma$ on $\Gamma$ to be the subset of vertices corresponding to excited generators,
\begin{equation}
    \supp \sigma \cong \left\{ S_{i} \in \mathcal{S}_{0} : \sigma_{i} \neq 0 \right\} .
\end{equation}
$\supp \sigma$ can be decomposed uniquely into maximal connected components on $\Gamma$, $\supp \sigma = \bigsqcup_{c} (\supp \sigma)_{c}$, where subsets $(\supp \sigma)_{c}$ are all mutually disconnected on $\Gamma$, as discussed in Appendix~\ref{combinatorics}. We can then define $\sigma_{c}$ be the elements of $\sigma$ restricted to the subset $(\supp \sigma)_{c}$,
\begin{equation}
    \label{eq:connected-synd-comp}
    \sigma_{c} : \sigma_{c,i} =
    \begin{cases}
        \sigma_{i} , \quad S_{i} \in (\supp \sigma)_c
        \\
        0 , \quad S_{i} \notin (\supp \sigma)_c
    \end{cases} .
\end{equation}
Therefore, $\supp \sigma_{c} = (\supp \sigma)_{c}$ and $\sigma$ can then be rewritten as a sum over its connected components, $\sigma = \sum_{c} \sigma_{c}$.

$\Gamma$ induces a second graph $\Theta = (V_{\Theta},E_{\Theta})$, whose vertices $V_{\Theta}$ correspond to qudits $\mathcal{Q}$, and for all pairs of qudits $q_1,q_2 \in V_{\Theta}$, there is an edge $\{q_1,q_2\} \in E_{\Theta}$ if and only if there exists an edge $\{S_1,S_2\} \in E_{\Gamma}$ such that $q_1,q_2 \in \supp S_1 \cup \supp S_2$. A direct consequence of the construction of $\Theta$ is the following proposition.
\begin{proposition}
    \label{proposition:mutually-disconnected}
    Suppose the supports of two Pauli operators $P_{1}$ and $P_{2}$ are mutually disconnected on $\Theta$. Then, the supports of their syndromes $\partial P_{1}$ and $\partial P_{2}$ are mutually disconnected on $\Gamma$.
\end{proposition}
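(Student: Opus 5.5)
We prove the contrapositive at the level of individual excited generators. The plan is to show two things: no generator $S_i \in \mathcal{S}_{0}$ is excited by both $P_{1}$ and $P_{2}$, and no edge of $\Gamma$ joins a generator excited by $P_{1}$ to one excited by $P_{2}$. Together these are exactly the two conditions for $\supp \partial P_{1}$ and $\supp \partial P_{2}$ to be mutually disconnected on $\Gamma$, as defined in Appendix~\ref{combinatorics}.

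The first step is a locality observation. If $(\partial P)_i \neq 0$, then $S_i$ fails to commute with $P$. Since Pauli operators acting on disjoint qudits commute, this forces $\supp S_i \cap \supp P \neq \emptyset$. So every vertex of $\supp \partial P_{j}$ corresponds to a generator whose support meets $\supp P_{j}$.

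The second step rules out edges between the two syndromes. Suppose $\{S_1,S_2\} \in E_{\Gamma}$ with $S_1$ excited by $P_{1}$ and $S_2$ excited by $P_{2}$. By the first step, we can pick $q_1 \in \supp S_1 \cap \supp P_{1}$ and $q_2 \in \supp S_2 \cap \supp P_{2}$. Both lie in $\supp S_1 \cup \supp S_2$, so by the construction of $\Theta$ either $q_1 = q_2$ or $\{q_1,q_2\} \in E_{\Theta}$. The first option contradicts disjointness of $\supp P_{1}$ and $\supp P_{2}$, and the second contradicts the absence of edges between them.

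The third step rules out a shared generator, i.e.\ the case $S_1 = S_2$ excited by both. This is where the only real subtlety lies. The paper requires every vertex of $\Gamma$ to belong to at least one edge, so we may pick an edge $\{S_1,S'\} \in E_{\Gamma}$. Then $\supp S_1 \cup \supp S'$ contains both $q_1 \in \supp S_1 \cap \supp P_{1}$ and $q_2 \in \supp S_1 \cap \supp P_{2}$. Again either $q_1 = q_2$ or $\{q_1,q_2\} \in E_{\Theta}$, which contradicts mutual disconnectedness on $\Theta$. Hence the two syndrome supports are disjoint and have no connecting edges, so they are mutually disconnected on $\Gamma$.
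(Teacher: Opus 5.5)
Your proposal is correct and follows essentially the same argument as the paper: the same two cases (a shared excited generator, handled via the requirement that every vertex of $\Gamma$ lies on an edge, and an edge of $\Gamma$ joining the two syndrome supports), each reduced to a contradiction with mutual disconnectedness on $\Theta$. The only differences are that you treat the cases in the opposite order and state explicitly the locality fact that an excited generator must overlap the operator's support, which the paper uses implicitly.
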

\begin{proof}
    Since the supports of $P_{1}$ and $P_{2}$ are mutually disconnected, $\supp P_{1} \cap \supp P_{2} = \emptyset$ and no edges in $E_{\Theta}$ connect $\supp P_{1}$ and $\supp P_{2}$, namely $\{q_1,q_2\} \notin E_{\Theta}$ for any pair of qudits $q_1 \in \supp P_{1}$ and $q_2 \in \supp P_{2}$.

    To show $\partial P_{1}$ and $\partial P_{2}$ are mutually disconnected on $\Gamma$, first suppose $\supp \partial P_{1} \cap \supp \partial P_{2} \neq \emptyset$. This means there exists at least one generator $S_a$ in the support of both syndromes, $S_a \in \supp \partial P_{1}$ and $S_a \in \supp \partial P_{2}$. Since $S_a$ is part of at least one edge in $E_{\Gamma}$, the edge $\{q_i,q_j\} \in E_{\Theta}$ for all pairs of qudits $q_i,q_j \in \supp S_{a}$. Since $\supp P_{1} \cap \supp S_a \neq \emptyset$ and $\supp P_{2} \cap \supp S_a \neq \emptyset$, either $P_{1}$ and $P_{2}$ share a qudit, $q_i \in \supp P_{1} \cap \supp P_{2}$, or there exists a pair of qudits $q_i \in \supp P_{1} \cap \supp S_a$ and $q_j \in \supp P_{2} \cap \supp S_a$ such that $\{q_i,q_j\} \in E_{\Theta}$. Since neither are true, $\supp \partial P_{1} \cap \supp \partial P_{2} = \emptyset$ by contradiction.

    Second, suppose there exists $S_1 \in \supp \partial P_{1}$ and $S_2 \in \supp \partial P_{2}$ such that $\{S_1,S_2\} \in E_{\Gamma}$. This means for all pairs of qudits $q_1,q_2 \in \supp S_1 \cup \supp S_2$, the edge $\{q_1,q_2\} \in E_{\Theta}$. However, since $\supp P_{1} \cap \supp S_1 \neq \emptyset$ and $\supp P_{2} \cap \supp S_2 \neq \emptyset$, there exists $ q_1 \in \supp P_{1} \cap \supp S_1$ and $q_2 \in \supp P_{2} \cap \supp S_2$ such that $\{q_1,q_2\} \in E_{\Theta}$. This is not true, so by contradiction $\{S_1,S_2\} \notin E_{\Gamma}$ for any pair of generators $S_1 \in \supp \partial P_{1}$ and $S_2 \in \supp \partial P_{2}$.

    This means the syndromes $\partial P_{1}$ and $\partial P_{2}$ have support on mutually disconnected sets of stabilizers in $\Gamma$, namely $\supp \partial P_{1} \cap \supp \partial P_{2} = \emptyset$ and $\{S_1,S_2\} \notin E_{\Gamma}$ for all pairs of generators $S_1 \in \supp \partial P_{1}$ and $S_2 \in \supp \partial P_{2}$.
\end{proof}

The graphs $\Gamma$ and $\Theta$ discussed above may be constructed for any qudit stabilizer code $\mathcal{V}$ and the properties described above apply generally. To prove Theorems~\ref{theorem:self-corr-qec} and~\ref{theorem:self-corr-state-prep}, we must assume $\mathcal{V}$, $\mathcal{S}_{0}$, $\Gamma$, $\Theta$, and the set of chosen correction operators $\{R_{\sigma}\}$ satisfy the following additional properties.
\begin{enumerate}[itemsep=0em,start=1]
    \item Each maximal connected component $\sigma_{c}$ of a valid syndrome $\sigma$ is itself a valid syndrome.
    \item Correction operators $R_{\sigma}$ for valid syndromes $\sigma$ are chosen such that each maximal connected component $\sigma_{c}$ of $\sigma$ is corrected separately, $R_{\sigma} = \prod_{c} R_{\sigma_{c}}$ where $R_{\sigma_{c}}$ is the correction for the syndrome $\sigma_{c}$.
    \item There exist constants $b,s > 0$ such that if a Pauli operator $P \in \mathcal{P}$ has support on a single connected component of $\Theta$ with $\wt P \leq bn^s$, then $R_{\partial P} P \in \mathcal{S}$.
    \item There exist constants $c,t > 0$ such that given two syndromes $\sigma$ and $\kappa$, if every connected component of $\supp \sigma \cup \supp \kappa$, has $\leq cn^t$ elements, then $R_{\sigma + \kappa} R_{\sigma}^{\dagger} R_{\kappa}^{\dagger} \in \mathcal{S}$.
\end{enumerate}
From Proposition~\ref{proposition:mutually-disconnected}, property 3 generalizes to any Pauli operator whose maximal connected components each have weight $\leq bn^s$. Also, property 2 is a choice of correction operators which is possible if property 1 holds. Often it is useful to choose $R_{\sigma_{c}}$ as geometrically local as possible, namely as the Pauli operator with syndrome $-\sigma_{c}$ which is supported entirely within the smallest possible connected region on $\Theta$. Defining $\Xi(P) \supseteq \supp P$ as the smallest connected region containing $\supp P$, $R_{\sigma_{c}}$ then satisfies $\vert \Xi(R_{\sigma_{c}}) \vert \leq \vert \Xi(P) \vert$ for all $P \in \mathcal{P}$ such that $\partial P = -\sigma_{c}$. This does not necessarily mean that $R_{\sigma_{c}}$ has minimum weight, $\wt R_{\sigma_{c}} \leq \wt P$ for all $P \in \mathcal{P}$ such that $\partial P = -\sigma_{c}$.
Choosing $R_{\sigma_{c}}$ this way may ensure property 4 holds if properties 1-3 all hold.
\begin{proposition}
    \label{proposition:syndrome-supports}
    Suppose $\mathcal{V}$ is defined on a regular lattice tiling a manifold in $D$ spatial dimensions and $\mathcal{S}_{0}$, $\Gamma$, $\Theta$, and $\{R_{\sigma}\}$ satisfy properties 1, 2, and 3 described above. Suppose $R_{\sigma_{c}}$ are chosen as geometrically local as possible. Suppose there exist $\alpha,\beta > 0$ such that $\alpha n^\beta \leq b n^s$ and if all generators $S_{i} \in \supp \sigma_{c}$ corresponding to a connected syndrome component $\sigma_{c}$ are contained within a hypercubic region of the lattice $\Upsilon$ containing $G_{\Upsilon} \leq \alpha n^\beta$ qudits, then there exists a Pauli error $P_{\Upsilon}$ supported within $\Upsilon$ such that $\partial P_{\Upsilon} = \sigma_{c}$. This means property 4 holds with $c = \alpha^{1/D} / (h_{c} v_{q})$ and $t = \beta / D$, where $h_{c} = O(1)$ and depends on the specific lattice structure.
\end{proposition}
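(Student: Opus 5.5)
Fix syndromes $\sigma$ and $\kappa$ such that every maximal connected component of $\supp \sigma \cup \supp \kappa$ on $\Gamma$ has at most $c n^{t}$ elements. The plan is to reduce property 4 to property 3 by showing that $Q \coloneqq R_{\sigma+\kappa} R_{\sigma}^{\dagger} R_{\kappa}^{\dagger}$ is a product of Pauli operators, each supported on a single connected region of $\Theta$ of weight at most $b n^{s}$. Note first that $\partial Q = -(\sigma+\kappa) + \sigma + \kappa = 0$. Hence $R_{\partial Q} = I$ (up to a phase), and property 3, extended via Proposition~\ref{proposition:mutually-disconnected} to operators whose components are each small, would give $Q = R_{\partial Q} Q \in \mathcal{S}$.

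\emph{Step 1 (decomposing into clusters).} Let $U_{j}$ be the maximal connected components of $\supp\sigma \cup \supp\kappa$ on $\Gamma$. Every connected component of $\sigma$, of $\kappa$, and of $\sigma+\kappa$ lies inside exactly one $U_{j}$, because $\supp(\sigma+\kappa) \subseteq \supp\sigma\cup\supp\kappa$ and connected sets stay inside one component. Property 1 says each component is itself a valid syndrome, so the restrictions $\sigma|_{U_j}$, $\kappa|_{U_j}$, and $(\sigma+\kappa)|_{U_j}$ are valid. Property 2 then factorizes all three corrections, giving
\begin{equation}
Q = \prod_{j} Q_{j}, \qquad Q_{j} = R_{(\sigma+\kappa)|_{U_j}} R_{\sigma|_{U_j}}^{\dagger} R_{\kappa|_{U_j}}^{\dagger} ,
\end{equation}
up to phases from reordering Paulis. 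These phases are irrelevant because $Q$ is a Pauli with trivial syndrome: $Q \in \mathcal{S}$ holds up to phase, and the phase is fixed as in the paper's convention.

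\emph{Step 2 (localizing each cluster).} Since $U_{j}$ is connected on $\Gamma$, has $|U_j| \le c n^t$, and $\Gamma$-edges join generators at distance $O(1)$, the generators in $U_{j}$ lie within a lattice ball of radius $O(v_q\, c n^{t})$. That ball sits inside a hypercube $\Upsilon_j$ of side $h_c v_q c n^{t}$, which contains $G_{\Upsilon_j} \le (h_c v_q c n^t)^D = \alpha n^{\beta}$ qudits; the values $c=\alpha^{1/D}/(h_c v_q)$ and $t=\beta/D$ are chosen exactly so that this works out. By hypothesis, every connected component inside $\Upsilon_j$ then admits a Pauli error supported in $\Upsilon_j$ with that syndrome. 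Because the corrections are chosen as geometrically local as possible, each $R_{\omega_c}$ for a component $\omega_c$ inside $U_j$ is supported in a connected region $\Xi$ no larger than $|\Upsilon_j|$.

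\emph{Step 3 (the main obstacle).} The hard part is bounding the support of $Q_j$, not just the size of each factor. Minimal $\Xi$ only bounds the size of the region, and a connected set of size $\le \alpha n^\beta$ meeting the generators in $U_j$ could wander outside $\Upsilon_j$. I would handle this by taking $\Upsilon_j'$ to be a constant-factor enlargement of $\Upsilon_j$, absorbed into $h_c$. Since each $\Xi$ intersects the supports of generators in $U_j$ and has size at most $\alpha n^\beta$, every factor stays inside a region of size $O(\alpha n^\beta) \le b n^{s}$ after adjusting constants. Hence $\supp Q_j$ lies in a region of weight at most $b n^s$, using $\alpha n^\beta \le b n^s$. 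Components of $\supp Q_j$ on $\Theta$ are then each small; distinct $Q_j$ may overlap, but the product $Q$ restricted to any connected component of its support is still covered by a union of boundedly small regions. If the $Q_j$ overlap and merge components, I would instead apply property 3 to each $Q_j$ separately. Each $Q_j$ has $\partial Q_j = 0$, and its components individually have weight at most $b n^s$, so each $Q_j \in \mathcal{S}$, and therefore $Q=\prod_j Q_j \in \mathcal{S}$. This establishes property 4.
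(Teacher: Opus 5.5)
Your cluster decomposition and the factorization $Q=\prod_j Q_j$ via properties 1 and 2 match the paper, but Step 3 has a genuine gap. You are right that minimality of $|\Xi(R_{\omega})|$ does not by itself place $R_{\omega}$ inside $\Upsilon_j$. Your repair fails for a related reason: a bound on the \emph{number} of qudits in a connected region does not bound its diameter.

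A connected set of at most $|\Upsilon_j|=\ell^{D}$ qudits that touches $\Upsilon_j$ can reach a distance of order $\ell^{D}$ from it. So it need not lie in any constant-factor enlargement of $\Upsilon_j$, which has side only $O(\ell)$.

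Even granting that each factor sits in its own region of at most $\alpha n^{\beta}$ qudits, the argument still breaks. $Q_j$ is a product of up to $O(|U_j|)=O(cn^{t})$ corrections, one for each component of $\sigma$, $\kappa$, and $\sigma+\kappa$ in $U_j$. Their regions can chain into a single $\Theta$-component with of order $cn^{t}\,\alpha n^{\beta}$ qudits. The hypothesis $\alpha n^{\beta}\le bn^{s}$ leaves no room for that extra factor. So ``$\supp Q_j$ lies in a region of weight at most $bn^{s}$'' does not follow, and ``adjusting constants'' would abandon the asserted values $c=\alpha^{1/D}/(h_c v_q)$ and $t=\beta/D$.

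The paper does not argue this step. It simply asserts that, since a correction supported in $\Upsilon$ exists and corrections are chosen as geometrically local as possible, each $R_{\sigma_a}^{\dagger}$ and $R_{\kappa_b}^{\dagger}$ lies in $\Upsilon$, so all factors share one hypercube. It then applies property 3 to $\prod_a R_{\sigma_a}^{\dagger}\prod_b R_{\kappa_b}^{\dagger}$, whose correction is $\prod_d R_{(\sigma+\kappa)_d}$ by property 2. That way only two families of corrections need localizing, whereas your zero-syndrome $Q_j$ needs all three.

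You can close the gap, keeping the stated constants and without using the locality choice of the corrections at all:
\begin{enumerate}
\item For each component $\omega$ (of $\sigma$, $\kappa$, or $\sigma+\kappa$) inside $U_j$, the hypothesis gives a Pauli $P_{\omega}$ supported in $\Upsilon_j$ with $\partial P_{\omega}=\omega$.
\item $\Upsilon_j$ contains at most $\alpha n^{\beta}\le bn^{s}$ qudits. So property 3, in its extended form for operators whose components are each small, gives $R_{\omega}P_{\omega}=S_{\omega}\in\mathcal{S}$, i.e.\ $R_{\omega}^{\dagger}=P_{\omega}S_{\omega}^{\dagger}$.
\item Substituting into $Q_j$, it equals, up to stabilizers and a phase, $M_j=\prod_d P_{(\sigma+\kappa)_d}^{\dagger}\prod_a P_{\sigma_a}\prod_b P_{\kappa_b}$.
\item $M_j$ is supported in the single hypercube $\Upsilon_j$ and has $\partial M_j=-(\sigma+\kappa)|_{U_j}+\sigma|_{U_j}+\kappa|_{U_j}=0$.
\item Property 3 with $R_0=I$ (the empty product under property 2) gives $M_j\in\mathcal{S}$, hence $Q_j\in\mathcal{S}$ and $Q=\prod_j Q_j\in\mathcal{S}$.
\end{enumerate}
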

\begin{proof}
    Consider two syndromes $\sigma$ and $\kappa$. Decomposing the support of each syndrome into the union of maximal connected components, $\supp \sigma = \bigcup_{a} \supp \sigma_{a}$, $\supp \kappa = \bigcup_{b} \supp \kappa_{b}$, where $\sigma_{a}$ and $\kappa_{b}$ are constructed as in Eq.~(\ref{eq:connected-synd-comp}) and are themselves syndromes. Each maximal connected component of $\supp \sigma \cup \supp \kappa$ can then be decomposed into the union of supports of $\sigma_{a}$ and $\kappa_{b}$,
    \begin{align}
        \label{eq:conn-comp-decomp}
        &\supp \sigma \cup \supp \kappa = \bigcup_{c} \left(\supp \sigma \cup \supp \kappa \right)_{c} , \\
        &\left(\supp \sigma \cup \supp \kappa \right)_{c} = \left( \bigcup_{a:c} \supp \sigma_{a} \right) \cup \left( \bigcup_{b:c} \supp \kappa_{b} \right) . \nonumber
    \end{align}
    Also, since $(\sigma + \kappa)_{i} \neq 0$ only if $\sigma_{i} \neq 0$ or $\kappa_{i} \neq 0$, one finds $\supp (\sigma + \kappa) \subseteq \supp \sigma \cup \supp \kappa$, thus each connected component of $\supp (\sigma + \kappa)$ must be contained within a connected component of $\supp \sigma \cup \supp \kappa$. We can group components $\supp (\sigma + \kappa)_{d}$ of $\supp (\sigma + \kappa)$ as
    \begin{align}
        \label{eq:sum-comp-decomp}
        &\supp (\sigma + \kappa) = \bigcup_{c} \left[ \bigcup_{d:c} \supp (\sigma + \kappa)_{d} \right] , \nonumber \\
        &\bigcup_{d:c} \supp (\sigma + \kappa)_{d} \subseteq \left(\supp \sigma \cup \supp \kappa \right)_{c} .
    \end{align}
    Since $\sigma_{a}$, $\kappa_{b}$, and $(\sigma + \kappa)_{d}$ are all valid syndromes, the above decompositions impose relations between $\sigma$, $\kappa$, and $\sigma+\kappa$. Namely, since all maximal connected components $\left(\supp \sigma \cup \supp \kappa \right)_{c}$ are mutually disconnected,
    \begin{equation}
        \label{eq:comp-syndrome-constraint}
        \sum_{a:c} \sigma_{a} + \sum_{b:c} \kappa_{b} = \sum_{d:c} (\sigma + \kappa)_{d} \quad \forall c .
    \end{equation}
    Therefore, we can decompose correction operators for $\sigma$, $\kappa$, and $\sigma+\kappa$ into products of correction operators for each maximal connected component, $R_{\sigma} = \prod_{a} R_{\sigma_{a}}$, $R_{\kappa} = \prod_{b} R_{\kappa_{b}}$, $R_{\sigma + \kappa} = \prod_{d} R_{(\sigma + \kappa)_{d}}$. Using Eqs.~(\ref{eq:conn-comp-decomp}-\ref{eq:comp-syndrome-constraint}),
    \begin{align}
        &R_{\sigma}^{\dagger} R_{\kappa}^{\dagger} = \prod_{c} \left[ \prod_{a:c} R_{\sigma_{a}}^{\dagger} \prod_{b:c} R_{\kappa_{b}}^{\dagger} \right] , \nonumber \\
        &R_{\sigma + \kappa} = \prod_{c} \left[ \prod_{d:c} R_{(\sigma + \kappa)_{d}} \right] , \\
        &\partial \left( \prod_{d:c} R_{(\sigma + \kappa)_{d}} \prod_{a:c} R_{\sigma_{a}}^{\dagger} \prod_{b:c} R_{\kappa_{b}}^{\dagger} \right) = 0 \quad \forall c . \nonumber
    \end{align}
    We can then rewrite $R_{\sigma + \kappa} R_{\sigma}^{\dagger} R_{\kappa}^{\dagger}$ by grouping operators according to $c$
    \begin{equation}
        R_{\sigma + \kappa} R_{\sigma}^{\dagger} R_{\kappa}^{\dagger} = \theta \prod_{c} \left[ \prod_{d:c} R_{(\sigma + \kappa)_{d}} \prod_{a:c} R_{\sigma_{a}}^{\dagger} \prod_{b:c} R_{\kappa_{b}}^{\dagger} \right] ,
    \end{equation}
    where $\theta = e^{i\alpha}$ is a phase from reordering Pauli operators, and we can analyze each $c$ separately.
    
    Suppose each connected component $\left(\supp \sigma \cup \supp \kappa \right)_{c}$ has weight $\leq w$. Since each stabilizer generator is supported on $\leq v_{q}$ qudits, the generators excited by any subset of $\left(\supp \sigma \cup \supp \kappa \right)_{c}$ are supported on $\leq v_{q} w$ total qudits. Moreover, since the system lattice is regular, each $\left(\supp \sigma \cup \supp \kappa \right)_{c}$ excites stabilizers supported entirely within a hypercube $\Upsilon$ of side length $h_{c} v_{q} w$ containing $\leq (h_{c} v_{q} w)^{D}$ qudits, where $h_{c} = O(1)$ depends on the lattice structure. This means each connected component $\supp \sigma_{a}$ or $\supp \kappa_{b}$ in Eq.~(\ref{eq:conn-comp-decomp}) is also contained within $\Upsilon$. Suppose $(h_{c} v_{q} w)^{D} \leq \alpha n^\beta$. Rearranging, this means $w \leq c n^t$, where $c = \alpha^{1/D} / (h_{c} v_{q})$ and $t = \beta / D$. Then, for each $\sigma_{a}$ or $\kappa_{b}$, there exists a Pauli operator $P_{a}$ or $P_{b}$ supported within $\Upsilon$ with syndrome $\partial P_{a} = \sigma_{a}$ or $\partial P_{b} = \kappa_{b}$. However, this means corrections $R_{\sigma_{a}}^{\dagger}$ and $R_{\kappa_{b}}^{\dagger}$ must all be contained within $\Upsilon$ as well since they are applied as geometrically local as possible. Therefore, $\prod_{a:c} R_{\sigma_{a}}^{\dagger} \prod_{b:c} R_{\kappa_{b}}^{\dagger}$ resides within $\Upsilon$ and is supported on $\leq (h_{c} v_{q} w)^{D}$ qudits. Since $(h_{c} v_{q} w)^{D} \leq\alpha n^\beta \leq b n^s$, applying the correction
    \begin{equation}
        R_{\sum_{a:c} \sigma_{a} + \sum_{b:c} \kappa_{b}} = R_{\sum_{d:c} (\sigma + \kappa)_{d}} = \prod_{d:c} R_{(\sigma + \kappa)_{d}}
    \end{equation}
    to $\prod_{a:c} R_{\sigma_{a}}^{\dagger} \prod_{b:c} R_{\kappa_{b}}^{\dagger}$ yields a stabilizer,
    \begin{equation}
        \prod_{d:c} R_{(\sigma + \kappa)_{d}} \prod_{a:c} R_{\sigma_{a}}^{\dagger} \prod_{b:c} R_{\kappa_{b}}^{\dagger} = S_{c} \in \mathcal{S} \quad \forall c.
    \end{equation}
    Therefore, $R_{\sigma + \kappa} R_{\sigma}^{\dagger} R_{\kappa}^{\dagger} = \theta \prod_{c} S_{c} \in \mathcal{S}$.
\end{proof}

Lastly, the circuit we will use to perform single-shot QEC and state preparation is the QL circuit $\Lambda_{\mathrm{EC}}$, discussed in Section~\ref{self-correction-setup}, which first measures each $S_{i} \in \mathcal{S}_{0}$, corrects the noisy observed syndrome $\sigma + \mu$ by applying a minimum weight syndrome correction $\hat{\mu}(\mu)$, then applies $R_{\sigma + \mu + \hat{\mu}} \in \{ R_{\sigma} \}$ to the system. If measurement errors $\mu$ occur with probability $\pr_{\Lambda}(\mu)$, then a noisy implementation of $\Lambda_{\mathrm{EC}}$ is
\begin{equation}
    \label{eq:noisy-correction-channel}
    \Lambda_{\mathrm{EC}}^{K} = \left\{ \sqrt{\pr_{\Lambda}(\mu)} R_{\sigma + \mu + \hat{\mu}} \Pi^{\mathcal{S}}_{\sigma} \right\}_{\sigma,\mu} .
\end{equation}
We will also use a noiseless version of $\Lambda_{\mathrm{EC}}$ as a recovery channel for $\Sigma_{\mathcal{V}}$ throughout the following sections, namely the recovery channel $\mathcal{R} = \Lambda_{\mathrm{EC}} = \left\{ R_{\sigma} \Pi^{\mathcal{S}}_{\sigma} \right\}_{\sigma}$.

\subsection{Classes of excitation noise} \label{exc-noise-properties}

The class of noise we consider here is a generalization of local stochastic noise. First, since a generic operator $W$ can be written as a linear combination of Pauli operators $P \in \mathcal{P}$, define the syndrome support of $W$ on the excitation graph $\Gamma$, given $\mathcal{S}_{0}$ and the recovery channel $\mathcal{R} = \Lambda_{\mathrm{EC}}$ for the code space $\Sigma_{\mathcal{V}}$, as the union of the supports of syndromes of Pauli operators which $W$ may be constructed from,
\begin{equation}
    W = \sum_{P \in \mathcal{P}} w_{P} P  : \quad \supp \partial W \coloneqq 
    \bigcup_{P : w_{P} \neq 0} \supp \partial P .
\end{equation}
Given a Kraus decomposition $\Phi_{\alpha} = \left\{ K_{\alpha} \right\}_{K_{\alpha}}$ of a channel $\Phi_{\alpha}$, we define the syndrome support of $\Phi_{\alpha}$ to be
\begin{equation}
    \label{eq:syndrome-indicator}
    \supp \partial \Phi_{\alpha} \coloneqq 
    \bigcup_{K_{\alpha}} \supp \partial K_{\alpha} .
\end{equation}
Now consider general channel noise $\mathcal{N}$ . Define the syndrome support distribution of $\mathcal{N}$ with respect to the decomposition Eq.~(\ref{eq:qudit-noise}) to be
\begin{equation}
    \label{eq:syndrome-supp-dist}
    \pr_{\mathrm{syn}\mathcal{N}}(A) \coloneqq \sum_{\alpha : \supp \partial \Phi_{\alpha} = A} \pr_{\mathcal{N}}(\alpha) .
\end{equation}

Consider a qudit stabilizer code $\mathcal{V}$ and recovery channel $\mathcal{R} = \Lambda_{\mathrm{EC}}$ which satisfy the properties of Appendix~\ref{self-corr-conditions}. We define the recovery-dependent class of noise $\mathbb{N}_{\tau,\epsilon}^{\mathrm{exc}}$, dependent on recovery channel $\mathcal{R}$, such that a channel $\mathcal{N} \in \mathbb{N}_{\tau,\epsilon}^{\mathrm{exc}}$ if and only if $\mathrm{fail}\left( \mathcal{N} \right) \leq \epsilon$ and the syndrome support distribution $\pr_{\mathrm{syn}\mathcal{N}}(A)$ is $\tau$-bounded, for all $A \subseteq V_{\Gamma}$
\begin{equation}
    \sum_{\alpha : \supp \partial \Phi_{\alpha} \supseteq A} \pr_{\mathcal{N}}(\alpha) \leq \tau^{\vert A \vert} .
\end{equation}
We will now demonstrate that $\mathbb{N}_{\tau,\epsilon}^{\mathrm{exc}}$ satisfies the properties of Definitions~\ref{def:noise-class} and~\ref{def:recovery-dep-class}. Note that the second and third properties of Definition~\ref{def:noise-class} are implied by those of Definition~\ref{def:recovery-dep-class}, and $\mathbb{N}_{\tau,\epsilon}^{\mathrm{exc}}$ trivially satisfies the first property of Definition~\ref{def:noise-class}, therefore we just need to show the three properties of Definition~\ref{def:recovery-dep-class} are satisfied. First, $\pr_{\mathrm{syn}\mathcal{N}}(A) = \pr_{\mathrm{syn}\overline{\mathcal{N}}}(A) = \pr_{\mathrm{syn}\overline{\mathcal{N}}^{*}}(A)$, which means $\pr_{\mathrm{syn}\mathcal{N}}(A)$ is $\tau$-bounded if and only if $\overline{\mathcal{N}} \in \mathbb{N}_{\tau,0}^{\mathrm{exc}}$, and $\overline{\mathcal{N}} \in \mathbb{N}_{\tau,0}^{\mathrm{exc}}$ if and only if $\overline{\mathcal{N}}^{*} \in \mathbb{N}_{\tau,0}^{\mathrm{exc}}$ as well. Therefore, $\mathcal{N} \in \mathbb{N}_{\tau,\epsilon}^{\mathrm{exc}}$ if and only if $\mathrm{fail}\left( \mathcal{N} \right) \leq \epsilon$ and $\overline{\mathcal{N}} \in \mathbb{N}_{\tau,0}^{\mathrm{exc}}$, and $\mathbb{N}_{\tau,\epsilon}^{\mathrm{exc}}$ satisfies the first property of Definition~\ref{def:recovery-dep-class}. 

To show $\mathbb{N}_{\tau_1,\epsilon_1}^{\mathrm{exc}} \circ \mathbb{N}_{\tau_2,\epsilon_2}^{\mathrm{exc}} \subseteq \mathbb{N}_{\tau_1 + \tau_2,\epsilon_1 + \epsilon_2 + \delta}^{\mathrm{exc}}$, consider $\mathcal{N} \in \mathbb{N}_{\tau_1,\epsilon_1}^{\mathrm{exc}}$ and $\mathcal{M} \in \mathbb{N}_{\tau_2,\epsilon_2}^{\mathrm{exc}}$. First, Eq.~(\ref{eq:syndrome-indicator}) yields
\begin{align}
    &\supp \partial (\Phi_{\alpha} \circ \Phi_{\beta}) = 
    \bigcup_{K_{\alpha},J_{\beta}} \bigcup_{\substack{P,E : c_{P}^{K_{\alpha}} c_{E}^{J_{\beta}} \neq 0}} \supp \partial (PE) \nonumber \\
    &\subseteq \bigcup_{K_{\alpha},J_{\beta}} \bigcup_{\substack{P,E : c_{P}^{K_{\alpha}} c_{E}^{J_{\beta}} \neq 0}} \supp \partial P \cup \supp \partial E  \nonumber \\
    &\subseteq \supp \partial \Phi_{\alpha} \cup \supp \partial \Phi_{\beta} ,
\end{align}
where we have used $\supp (\sigma + \kappa ) \subseteq \supp \sigma \cup \supp \kappa$ and $c_{P}^{K_{\alpha}} c_{E}^{J_{\beta}} \neq 0$ only if $c_{P}^{K_{\alpha}} \neq 0$ and $c_{E}^{J_{\beta}} \neq 0$. We then find $\pr_{\mathrm{syn}\mathcal{N} \circ \mathcal{M}}(A)$ is $\left(\tau_1 + \tau_2\right)$-bounded, as
\begin{align}
    &\sum_{\alpha,\beta : \supp \partial (\Phi_{\alpha} \circ \Phi_{\beta}) \supseteq A} \pr_{\mathcal{N}}(\alpha) \pr_{\mathcal{M}}(\beta) \leq \\
    &\sum_{\alpha,\beta : \supp \partial \Phi_{\alpha} \cup \supp \partial \Phi_{\beta} \supseteq A} \pr_{\mathcal{N}}(\alpha) \pr_{\mathcal{M}}(\beta) \leq \left(\tau_1 + \tau_2\right)^{\vert A \vert} , \nonumber
\end{align}
where we set $g(\nu) = h(\nu) = \supp \partial \Phi_{\nu}$ and use Proposition~\ref{proposition:bounded-composition}. Since $\pr_{\mathrm{syn}\mathcal{N} \circ \mathcal{M}}(A) = \pr_{\mathrm{syn}\overline{\mathcal{N} \circ \mathcal{M}}}(A)$, the syndrome support distribution for $\overline{\mathcal{N} \circ \mathcal{M}}$ is also $\left(\tau_1 + \tau_2\right)$-bounded, thus $\overline{\mathcal{N} \circ \mathcal{M}} \in \mathbb{N}_{\tau_1 + \tau_2,0}^{\mathrm{exc}}$. We can then bound $\mathrm{fail}\left( \mathcal{N} \circ \mathcal{M} \right)$ using Lemma~\ref{lemma:pauli-composition}. We know $\mathrm{fail}\left( \mathcal{N} \right) \leq \epsilon_1$ and $\mathrm{fail}\left( \mathcal{M} \right) \leq \epsilon_2$, and can bound $\mathrm{fail}\left( \overline{\mathcal{N}} \circ \overline{\mathcal{M}} \right)$ by summing over necessary conditions for $F_{\mathcal{R}} (\overline{\Phi}_{\alpha} \circ \overline{\Phi}_{\beta}) = 1$. Here, $L_{\partial P,\partial E} \neq I$ means $R_{\partial (PE)} R_{\partial P}^{\dagger} R_{\partial PE}^{\dagger} \notin \mathcal{S}$. However, property 4 in Appendix~\ref{self-corr-conditions} states that if $R_{\sigma + \kappa} R_{\sigma}^{\dagger} R_{\kappa}^{\dagger} \notin \mathcal{S}$, then $\supp \sigma \cup \supp \kappa$ has at least one connected component with $\geq cn^t$ elements. Thus, $\supp \sigma \cup \supp \kappa \subseteq V_{\Gamma}$ must contain at least one element of $C(cn^t)$, where $\vert C(cn^t) \vert$ is given by Eq.~(\ref{eq:all-connected-subsets}) with $V=\mathcal{S}_{0}$. This means if $F_{\mathcal{R}} (\overline{\Phi}_{\alpha} \circ \overline{\Phi}_{\beta}) = 1$, then there exists $A \in C(cn^t)$, $K_{\alpha}$, $J_{\beta}$, and $P,E \in \mathcal{P}$ with $c_{P}^{K_{\alpha}} c_{E}^{J_{\beta}} \neq 0$ such that $A \subseteq \supp \partial P \cup \supp \partial E$, which also means $A \subseteq \supp \partial (\Phi_{\alpha} \circ \Phi_{\beta})$. We then bound $\mathrm{fail}\left( \overline{\mathcal{N}} \circ \overline{\mathcal{M}} \right)$ as
\begin{align}
    \label{eq:fail-composition-bound}
    &\mathrm{fail}\left( \overline{\mathcal{N}} \circ \overline{\mathcal{M}} \right) \leq \sum_{\substack{\alpha,\beta : \exists A \in C(cn^t) : \\ \supp \partial (\Phi_{\alpha} \circ \Phi_{\beta}) \supseteq A}} \pr_{\mathcal{N}}(\alpha) \pr_{\mathcal{M}}(\beta) \nonumber \\
    &= \sum_{A \in C(cn^t)} \sum_{\substack{\alpha,\beta : \supp \partial (\Phi_{\alpha} \circ \Phi_{\beta}) \supseteq A}} \pr_{\mathcal{N}}(\alpha) \pr_{\mathcal{M}}(\beta) \nonumber \\
    &\leq \sum_{A \in C(cn^t)} \left(\tau_1 + \tau_2\right)^{\vert A \vert} \leq \vert C(cn^t) \vert \left(\tau_1 + \tau_2\right)^{cn^t} \nonumber \\
    &\leq \frac{\vert \mathcal{S}_{0} \vert}{ez_{\Gamma}} \left( \frac{\tau_1 + \tau_2}{(ez_{\Gamma})^{-1}} \right)^{cn^t} = \delta ,
\end{align}
where we use Proposition~\ref{proposition:bounded-composition} in the third line, setting $g(\nu) = h(\nu) = \supp \partial \Phi_{\nu}$, and $\Gamma$ has maximum vertex degree $z_{\Gamma}$. Applying Lemma~\ref{lemma:pauli-composition} yields $\mathrm{fail}\left( \mathcal{N} \circ \mathcal{M} \right) \leq \epsilon_1 + \epsilon_2 + \delta$, therefore $\mathcal{N} \circ \mathcal{M} \in \mathbb{N}_{\tau_1 + \tau_2,\epsilon_1 + \epsilon_2 + \delta}^{\mathrm{exc}}$ and $\mathbb{N}_{\tau_1,\epsilon_1}^{\mathrm{exc}} \circ \mathbb{N}_{\tau_2,\epsilon_2}^{\mathrm{exc}} \subseteq \mathbb{N}_{\tau_1 + \tau_2,\epsilon_1 + \epsilon_2 + \delta}^{\mathrm{exc}}$ and $\mathbb{N}_{\tau,\epsilon}^{\mathrm{exc}}$ satisfies the second property of Definition~\ref{def:recovery-dep-class}.

To show that for any $\tau > 0$, $\exists \lambda > 0$ and $\epsilon = f(\lambda ; n)$ such that $\mathbb{L}_{\lambda} \subseteq \mathbb{N}_{\tau,\epsilon}^{\mathrm{exc}}$, we must show that any $\mathcal{N} \in \mathbb{L}_{\lambda}$ admits a decomposition given by Eq.~(\ref{eq:qudit-noise}) where for any subset of generators $B$,
\begin{equation}
    \label{eq:Nexc-stochastic-bound}
    \sum_{\alpha : \supp \partial \Phi_{\alpha} \supseteq B} \pr_{\mathcal{N}}(\alpha) \leq \tau^{\vert B \vert} , \; \mathrm{fail}\left( \mathcal{N} \right) \leq f(\lambda ; n) .
\end{equation}
Recall from Definition~\ref{def:geom-local} that each generator $S_{i} \in \mathcal{S}_{0}$ has $\leq v_{q}$ qudits in its support and each qudit $q_j \in \mathcal{Q}$ is in the support of $\leq v_{s}$ generators. To obtain the bound in Eq.~(\ref{eq:Nexc-stochastic-bound}), consider some subset of generators $B \subseteq V_{\Gamma} \cong \mathcal{S}_{0}$ and channel $\Phi_{\alpha}$ with support on qudits $\supp \Phi_{\alpha} \subseteq \mathcal{Q}$ such that $B \subseteq \supp \partial \Phi_{\alpha}$. $\Phi_{\alpha}$ must be supported on at least one qudit $q_j \in \supp S_{i}$ for all vertices corresponding to generators $S_{i} \in B$, namely $\supp S_{i} \cap \supp \Phi_{\alpha} \neq \emptyset$ for all $S_{i} \in B$, as otherwise $S_{i}$ commutes with all Kraus operators $K_{\alpha}$ of $\Phi_{\alpha}$ and therefore all Pauli operators $P$ with $c_{P}^{K_{\alpha}} \neq 0$. If $B \subseteq \supp \partial \Phi_{\alpha}$, there exists at least one set of qudits $\{q \} = \bigcup_{i} q_{i} \subseteq \supp \Phi_{\alpha}$ such that each $q_{i} \in \supp S_{i}$ for all $S_{i} \in B$. We can sum over all possible $\{q \} = \bigcup_{i} q_{i}$ such that $q_{i} \in \supp S_{i}$ to bound
\begin{align}
    &\sum_{\alpha : \supp \partial \Phi_{\alpha} \supseteq B} \pr_{\mathcal{N}}(\alpha) \nonumber \\
    &\leq \sum_{\substack{ \{q \} = \bigcup_{i} q_{i}: \\ q_{i} \in \supp S_{i} \forall S_{i} \in B}} \sum_{\alpha : \supp \Phi_{\alpha} \supseteq \{q \}} \pr_{\mathcal{N}}(\alpha) .
\end{align}
To constrain the sets $\{q \}$, first observe $\vert \{q \} \vert \leq \vert B \vert$ as each $S_{i} \in B$ corresponds to one $q_{i} \in \{q \}$. Second, since $\supp S_{i} \cap \{q \} \neq \emptyset$ for all $S_{i}$, $\vert B \vert$ is bounded by the maximum number of generators with support on any single qudit, $v_{s}$, times the total number of qudits in $\{q \}$, $\vert B \vert \leq v_{s} \vert \{q \} \vert$. Third, we can bound the number of distinct subsets $\{q \}$ by noting that each generator $S_{i}$ is supported on $\leq v_{q}$ different qudits, so for $\vert B \vert$ distinct generators $S_{i}$, there are $\leq v_{q}^{\vert B \vert}$ possible distinct subsets $\{q \} = \bigcup_{i} q_{i}$. Using these constraints and the fact that $\pr_{\mathcal{N}}(\alpha)$ is $\lambda$-bounded over the supports of $\Phi_{\alpha}$ for all $\mathcal{N} \in \mathbb{L}_{\lambda}$, $\lambda \in [0,1)$, we obtain
\begin{align}
    &\sum_{\alpha : \supp \partial \Phi_{\alpha} \supseteq B} \pr_{\mathcal{N}}(\alpha) \nonumber \\
    &\leq \sum_{\substack{ \{q \} = \bigcup_{i} q_{i}: \\ q_{i} \in \supp S_{i} \forall S_{i} \in B}} \sum_{\alpha : \supp \Phi_{\alpha} \supseteq \{q \}} \pr_{\mathcal{N}}(\alpha) \nonumber \\
    &\leq \sum_{\substack{ \{q \} = \bigcup_{i} q_{i}: \\ q_{i} \in \supp S_{i} \forall S_{i} \in B}} \lambda^{\vert \{q \} \vert} \leq \sum_{\substack{ \{q \} = \bigcup_{i} q_{i}: \\ q_{i} \in \supp S_{i} \forall S_{i} \in B}} \lambda^{\vert B \vert / v_{s}} \nonumber \\
    &\leq v_{q}^{\vert B \vert} \lambda^{\vert B \vert / v_{s}} = \left(v_{q} \lambda^{1 / v_{s}}\right)^{\vert B \vert} = \tau^{\vert B \vert} .
\end{align}
Thus, $\pr_{\mathrm{syn}\mathcal{N}}(A)$ is $\tau$-bounded for $\tau = v_{q} \lambda^{1 / v_{s}}$. To show
\begin{equation}
    \mathrm{fail}\left( \mathcal{N} \right) = \sum_{\alpha} \pr_{\mathcal{N}}(\alpha) F_{\mathcal{R}} (\Phi_{\alpha}) \leq f(\lambda ; n) ,
\end{equation}
recall that property 3 in Appendix~\ref{self-corr-conditions} states that if $R_{\partial P} P = L_{P} S_{P} \notin \mathcal{S}$, then $P$ has at least one connected component with $\geq bn^s$ elements. In this case $\supp P \subseteq V_{\Theta}$ must contain at least one element of $C(bn^s)$, where $\vert C(bn^s) \vert$ is given by Eq.~(\ref{eq:all-connected-subsets}) with $V = \mathcal{Q}$. This means if $F_{\mathcal{R}} (\Phi_{\alpha}) = 1$, then there exists $A \in C(bn^s)$, $K_{\alpha}$, and $P \in \mathcal{P}$ with $c_{P}^{K_{\alpha}} \neq 0$ such that $A \subseteq \supp P$, which also means $A \subseteq \supp \Phi_{\alpha}$. We can then bound $\mathrm{fail}\left( \mathcal{N} \right)$ as
\begin{align}
    &\mathrm{fail}\left( \mathcal{N} \right) \leq \sum_{\substack{\alpha : \exists A \in C(bn^s) : \supp \Phi_{\alpha} \supseteq A}} \pr_{\mathcal{N}}(\alpha) \nonumber \\
    &= \sum_{A \in C(bn^s)} \sum_{\substack{\alpha : \supp \Phi_{\alpha} \supseteq A}} \pr_{\mathcal{N}}(\alpha) \leq \sum_{A \in C(bn^s)} \lambda^{\vert A \vert} \nonumber \\
    &= \vert C(bn^s) \vert \lambda^{bn^s} \leq \frac{n}{ez_{\Theta}} \left( \frac{\lambda}{(ez_{\Theta})^{-1}} \right)^{bn^s} = f(\lambda ; n) .
\end{align}
Here, we note that $\Theta$ has maximum vertex degree $z_{\Theta}$.
Thus, $\mathbb{L}_{\lambda} \subseteq \mathbb{N}_{\tau,\epsilon}^{\mathrm{exc}}$ for $\tau = v_{q} \lambda^{1 / v_{s}}$ and $\epsilon = f(\lambda ; n)$ given above, and we conclude that $\mathbb{N}_{\tau,\epsilon}^{\mathrm{exc}}$ satisfies the third property of Definition~\ref{def:recovery-dep-class}. Therefore, $\mathbb{N}_{\tau,\epsilon}^{\mathrm{exc}}$ satisfies all properties of Definitions~\ref{def:noise-class} and~\ref{def:recovery-dep-class}.

\subsection{Proof of Theorem~\ref{theorem:self-corr-qec}} \label{self-corr-proof}

In this section, we will prove Theorem~\ref{theorem:self-corr-qec}. Specifically, we will prove that if a code $\mathcal{V}$ and a circuit $\Lambda_{\mathrm{EC}}$ satisfy the properties of Section~\ref{self-correction-setup} and Appendix~\ref{self-corr-conditions}, then for all $\zeta < (2ez_{\Gamma})^{-2}$, $\Lambda_{\mathrm{EC}}^{K} \in \Lambda_{\mathrm{EC}}^{\mathbb{M}_{\zeta}}$, and $\mathcal{N} \in \mathbb{N}_{\tau,\epsilon}^{\mathrm{exc}}$,
\begin{align}
    \Lambda_{\mathrm{EC}}^{K} \circ \mathcal{N} (\rho) = \widetilde{\mathcal{N}} (\rho)  \quad \forall \rho \in \Sigma_{\mathcal{V}}
\end{align}
where $\widetilde{\mathcal{N}} \in \mathbb{N}_{\eta,\epsilon + \delta}^{\mathrm{exc}}$ and
\begin{equation*}
    \eta = \frac{\left(\zeta / (2ez_{\Gamma})^{-2}\right)^{1/2}}{1 - \left(\zeta / (2ez_{\Gamma})^{-2}\right)^{1/2}} , \; \delta = \frac{\vert \mathcal{S}_{0} \vert}{ez_{\Gamma}} \left( \frac{\tau + \eta}{(e z_{\Gamma})^{-1}} \right)^{cn^t} .
\end{equation*}
This means if $\epsilon = 1/\Omega(\poly(n))$, then $\Lambda_{\mathrm{EC}} : \Sigma_{\mathcal{V}} \rightarrow \Sigma_{\mathcal{V}}$ is fault tolerant against the class $(\mathbb{N}_{\tau,\epsilon}^{\mathrm{exc}},\mathbb{M}_{\zeta})$ with critical noise strengths $\zeta^{\star} = (2ez_{\Gamma})^{-2}$ and $\tau^{\star}(\zeta) = (ez_{\Gamma})^{-1} - \eta$.

Suppose the code $\mathcal{V}$ and circuit $\Lambda_{\mathrm{EC}}$ satisfy the properties described in Section~\ref{self-correction-setup} and Appendix~\ref{self-corr-conditions}. Suppose qudit noise $\mathcal{N} \in \mathbb{N}_{\tau,\epsilon}^{\mathrm{exc}}$ acts on the state $\rho \in \Sigma_{\mathcal{V}}$ and local stochastic measurement noise in $\mathbb{M}_{\zeta}$ acts on the error correction circuit $\Lambda_{\mathrm{EC}}$. This is equivalent to subjecting the circuit $\Lambda_{\mathrm{EC}} : \Sigma_{\mathcal{V}} \rightarrow \Sigma_{\mathcal{V}}$ to noise in the class $(\mathbb{N}_{\tau,\epsilon}^{\mathrm{exc}},\mathbb{M}_{\zeta})$, as $\Lambda_{\mathrm{EC}}^{(\mathbb{N}_{\tau,\epsilon}^{\mathrm{exc}},\mathbb{M}_{\zeta})} = \Lambda_{\mathrm{EC}}^{\mathbb{M}_{\zeta}} \circ \mathbb{N}_{\tau,\epsilon}^{\mathrm{exc}}$. Each specific instance of $\Lambda_{\mathrm{EC}}^{K} \in \Lambda_{\mathrm{EC}}^{\mathbb{M}_{\zeta}}$ expressed in Eq.~(\ref{eq:noisy-qec-channel}) can be rewritten after correcting the noisy syndrome. Since $\sigma + \mu \rightarrow \sigma + \mu + \hat{\mu}$ and $\hat{\mu} = \hat{\mu}(\mu)$, $R_{\sigma + \mu} = R_{\sigma + (\mu + \hat{\mu})}$ and
\begin{align}
    &\Lambda_{\mathrm{EC}}^{K} = \left\{ \sqrt{\pr_{\Lambda}(\omega)} R_{\sigma + \omega} \Pi^{\mathcal{S}}_{\sigma} \right\}_{\sigma,\omega} , \nonumber \\
    &\pr_{\Lambda}(\omega) = \sum_{\mu : \mu + \hat{\mu} = \omega} \pr_{\Lambda}(\mu) ,
\end{align}
where $\omega = \omega(\mu) = \mu + \hat{\mu}$ is a valid syndrome. As discussed in Section~\ref{noisy-qudit-qec}, noisy error correction induces an effective noise channel $\mathcal{F}$, where $\Lambda_{\mathrm{EC}}^{K} \circ \mathcal{N} (\rho) = \mathcal{F} (\rho)$ for all $\rho \in \Sigma_{\mathcal{V}}$. We may also rewrite $\overline{\mathcal{F}}$ and $\overline{\mathcal{F}}^{*}$ as
\begin{equation}
    \label{eq:ftqec-effective-noise}
    \overline{\mathcal{F}} = \left\{ \sqrt{\pr_{\Lambda}(\omega)} R_{\omega} \right\}_{\omega} , \overline{\mathcal{F}}^{*} = \left\{ \sqrt{\pr_{\Lambda}(\omega)} R_{\omega}^{\dagger} \right\}_{\omega} .
\end{equation}
If $\mathcal{N} \in \mathbb{N}_{\tau,\epsilon}^{\mathrm{exc}}$ and $\overline{\mathcal{F}} , \overline{\mathcal{F}}^{*} \in \mathbb{N}_{\eta,0}^{\mathrm{exc}}$, then $\mathcal{F} \in \mathbb{N}_{\eta,\epsilon + \delta}^{\mathrm{exc}}$ where $\delta$ is given in Eq.~(\ref{eq:fail-composition-bound}). This means to prove $\Lambda_{\mathrm{EC}}$ is fault tolerant against $(\mathbb{N}_{\tau,\epsilon}^{\mathrm{exc}},\mathbb{M}_{\zeta})$ and enables single-shot, fault tolerant QEC, we just need to prove that for all $\Lambda_{\mathrm{EC}}^{K} \in \Lambda_{\mathrm{EC}}^{\mathbb{M}_{\zeta}}$ and $\mathcal{N} \in \mathbb{N}_{\tau,\epsilon}^{\mathrm{exc}}$, $\overline{\mathcal{F}} , \overline{\mathcal{F}}^{*} \in \mathbb{N}_{\eta,0}^{\mathrm{exc}}$. This is because when $\mathcal{F} \in \mathbb{N}_{\eta,\epsilon + \delta}^{\mathrm{exc}}$ and when $\tau$, $\zeta$, and $\eta$ are small enough, $\delta$ decays superpolynomially with $n$ as shown in Eq.~(\ref{eq:fail-composition-bound}), therefore when $\epsilon$ decays at least polynomially with $n$, the above guarantees both properties of Definition~\ref{def:ft-channel} are satisfied. Since $\overline{\mathcal{F}} \in \mathbb{N}_{\tau,0}^{\mathrm{exc}}$ if and only if $\overline{\mathcal{F}}^{*} \in \mathbb{N}_{\tau,0}^{\mathrm{exc}}$, we just need to find $\eta$ such that for all $B \subseteq V_{\Gamma}$
\begin{equation}
    \sum_{\omega : B \subseteq \supp \omega} \pr_{\Lambda}(\omega) \leq \eta^{\vert B \vert} ,
\end{equation}
for all $\Lambda_{\mathrm{EC}}^{K} \in \Lambda_{\mathrm{EC}}^{\mathbb{M}_{\zeta}}$. To do so, consider some $\Lambda_{\mathrm{EC}}^{K} \in \Lambda_{\mathrm{EC}}^{\mathbb{M}_{\zeta}}$ where $\pr_{\Lambda}(\mu)$ is $\zeta$-bounded, which induces $\overline{\mathcal{F}}$, $\overline{\mathcal{F}}^{*}$, and $\pr_{\Lambda}(\omega)$ as described above. To find a bound on $\pr_{\Lambda}(\omega)$, we will analyze $\supp \omega$. First, decompose $\supp \omega$ into maximal connected components $\supp \omega = \bigcup_c (\supp \omega)_c = \bigcup_c \supp \omega_c$, where $\omega_{c}$ are constructed as in Eq.~(\ref{eq:connected-synd-comp}) and $\supp \omega_c = (\supp \omega)_c$. Since $\omega = \mu + \hat{\mu}$, $\omega_{c,i} = 0$ if and only if $\mu_{i} + \hat{\mu}_{i} = 0$. This lets us rewrite
\begin{align}
    \supp \omega_c &= \supp \omega \cap \supp \omega_c = \supp (\mu + \hat{\mu}) \cap \supp \omega_c \nonumber \\
    &= \left(\supp \mu \cup \supp \hat{\mu}\right) \cap \supp \omega_c ,
\end{align}
from which we can compute
\begin{align}
    \label{eq:synd-comp-weight}
    \vert \supp \omega_c \vert =& \vert \supp \mu \cap \supp \omega_c \vert + \vert \supp \hat{\mu} \cap \supp \omega_c \vert \nonumber \\
    &- \vert \supp \mu \cap \supp \hat{\mu} \cap \supp \omega_c \vert .
\end{align}
Since $\omega_c$ is a maximal connected component of $\omega$, it is itself a valid syndrome due to property 1 in Appendix~\ref{self-corr-conditions}. Therefore, $\omega- \omega_c$ is a valid syndrome as well, corresponding to applying the correction $\hat{\mu} - \omega_c$ instead of $\hat{\mu}$. Since $\hat{\mu}$ has minimum weight among all corrections which yield a valid syndrome, $\vert \supp \hat{\mu} \vert \leq \vert \supp (\hat{\mu} - \omega_c) \vert$. We can decompose $\supp (\hat{\mu} - \omega_c)$ as
\begin{align}
    &\supp (\hat{\mu} - \omega_c) \nonumber \\
    &= \left[\supp \hat{\mu} \cup \supp \omega_c\right] \setminus \supp ( \omega_c : \omega_{c,i} = \hat{\mu}_{i}) ,
\end{align}
where $\omega_{c,i} = \hat{\mu}_{i}$ means $\mu_{i} = 0$ but $\hat{\mu}_{i} \neq 0$. This gives
\begin{align}
    &\supp ( \omega_c : \omega_{c,i} = \hat{\mu}_{i}) = \supp ( \omega_c : \mu_{i} = 0) \nonumber \\
    &= \supp \omega_c \setminus \left(\supp \mu \cap \supp \omega_c\right) ,
\end{align}
and we obtain the expression
\begin{align}
    &\vert \supp (\hat{\mu} - \omega_c) \vert \\
    &= \vert \supp \hat{\mu} \cup \supp \omega_c \vert - \vert \supp \omega_c \setminus \left(\supp \mu \cap \supp \omega_c\right) \vert \nonumber \\
    &= \vert \supp \hat{\mu} \vert - \vert \supp \hat{\mu} \cap \supp \omega_c \vert + \vert \supp \mu \cap \supp \omega_c \vert . \nonumber
\end{align}
Since $\hat{\mu}$ is minimum weight, we find
\begin{align}
    &\vert \supp \hat{\mu} \vert \leq \vert \supp (\hat{\mu} - \omega_c) \vert \nonumber \\
    &= \vert \supp \hat{\mu} \vert - \vert \supp \hat{\mu} \cap \supp \omega_c \vert + \vert \supp \mu \cap \supp \omega_c \vert \nonumber \\
    &\Leftrightarrow \vert \supp \hat{\mu} \cap \supp \omega_c \vert \leq \vert \supp \mu \cap \supp \omega_c \vert .
\end{align}
Substituting into Eq.~(\ref{eq:synd-comp-weight}) gives
\begin{align}
    &\vert \supp \omega_c \vert = \vert \supp \mu \cap \supp \omega_c \vert + \vert \supp \hat{\mu} \cap \supp \omega_c \vert \nonumber \\
    &\quad \quad - \vert \supp \mu \cap \supp \hat{\mu} \cap \supp \omega_c \vert \nonumber \\
    &\leq 2\vert \supp \mu \cap \supp \omega_c \vert - \vert \supp \mu \cap \supp \hat{\mu} \cap \supp \omega_c \vert \nonumber \\
    &\leq 2\vert \supp \mu \cap \supp \omega_c \vert ,
\end{align}
and rearranging yields the inequality
\begin{equation}
    \vert \supp \mu \cap (\supp \omega)_c \vert \geq \frac{1}{2} \vert (\supp \omega)_c \vert .
\end{equation}
Now, invoke Lemma~\ref{lemma:tau-bounded-proof}, with $G = \Gamma$, subsets $V = V_1 = V_2 = V_{\Gamma}$, the function $f(\mu) = \mu + \hat{\mu} = \omega$, and $k = 1/2$. Since $\pr_{\Lambda}(\mu)$ is $\zeta$-bounded, and the above inequality is satisfied for any connected component $(\supp \omega)_c$ of $\supp \omega = \bigcup_c (\supp \omega)_c$, Lemma~\ref{lemma:tau-bounded-proof} implies that the distribution over the supports of $\omega$, 
\begin{equation}
    \sum_{\substack{\mu : \supp (\mu + \hat{\mu}) \supseteq B}} \pr_{\Lambda}(\mu) = \sum_{\omega : \supp \omega \supseteq B} \pr_{\Lambda}(\omega) \leq \eta^{\vert B \vert} ,
\end{equation}
is $\eta$-bounded with
\begin{equation}
    \eta = \frac{(\zeta / \zeta_0)^{1/2}}{1 - (\zeta / \zeta_0)^{1/2}} , \quad \zeta_0 = (2ez_{\Gamma})^{-2} .
\end{equation}
This means $\overline{\mathcal{F}}, \overline{\mathcal{F}}^{*} \in \mathbb{N}_{\eta,0}^{\mathrm{exc}}$, which proves $\mathcal{F} \in \mathbb{N}_{\eta,\epsilon + \delta}^{\mathrm{exc}}$. $\Lambda_{\mathrm{EC}}$ is then fault tolerant as both properties of Definition~\ref{def:ft-channel} are satisfied if $\zeta < (2ez_{\Gamma})^{-2}$, $\tau + \eta < (ez_{\Gamma})^{-1}$, and $\epsilon = 1/\Omega(\poly(n))$.

We comment that the above derivations strictly generalize the treatment of Pauli noise, and one can follow an identical procedure to prove fault tolerance against $\mathbb{N}_{\tau,\epsilon}^{\mathrm{exc}}$ when $\Phi_{\alpha} (\cdot) = P (\cdot) P^{\dagger}$ are restricted to be Pauli operators. Additionally, the above also applies to CSS stabilizer codes when noise is further restricted to consist of either only Pauli $Z$ errors or Pauli $X$ errors, $\Phi_{\alpha} (\cdot) = P (\cdot) P^{\dagger}$ where $P = \bigotimes_{j} Z_{j}^{a_{j}}$ or $P = \bigotimes_{j} X_{j}^{a_{j}}$. Specifically, when channels with only $Z$-type Pauli errors $\mathcal{N}^{Z} \in \mathbb{N}_{\tau,\epsilon}^{\mathrm{exc},Z} \subset \mathbb{N}_{\tau,\epsilon}^{\mathrm{exc}}$ are applied and the circuit $\Lambda_{\mathrm{EC}}^{X}$ consists of measuring $X$-type generators only, or when channels with only $X$-type Pauli errors $\mathcal{N}^{X} \in \mathbb{N}_{\tau,\epsilon}^{\mathrm{exc},X} \subset \mathbb{N}_{\tau,\epsilon}^{\mathrm{exc}}$ are applied and the circuit $\Lambda_{\mathrm{EC}}^{Z}$ consists of measuring $Z$-type generators only. Here $\mathcal{S}_{0} = \mathcal{S}^{X}_{0} \oplus \mathcal{S}^{Z}_{0}$ decomposes into $X$-type and $Z$-type stabilizer generators, $\mathcal{S}^{X}_{0} = \left\{ S^{X} \right\}$ and $\mathcal{S}^{Z}_{0} = \left\{ S^{Z} \right\}$ with $S^{X}=\bigotimes_{j}X_{j}^{a_{j}}$ and $S^{Z}=\bigotimes_{j}Z_{j}^{a_{j}}$. The pair $\mathcal{S}^{X}_{0}$ and $\Lambda_{\mathrm{EC}}^{X}$ and the pair $\mathcal{S}^{Z}_{0}$ and $\Lambda_{\mathrm{EC}}^{Z}$ must each individually satisfy the properties of Section~\ref{self-correction-setup} and Appendix~\ref{self-corr-conditions} for $Z$-type and $X$-type errors respectively. Following the same derivations above for $\mathbb{N}_{\tau,\epsilon}^{\mathrm{exc},Z}$ or $\mathbb{N}_{\tau,\epsilon}^{\mathrm{exc},X}$ rather than for $\mathbb{N}_{\tau,\epsilon}^{\mathrm{exc}}$, one finds in this case,
\begin{align}
    \label{eq:css-single-shot}
    \forall \Lambda_{\mathrm{EC}}^{X,K} &\in \Lambda_{\mathrm{EC}}^{X,\mathbb{M}_{\zeta}}, \mathcal{N}^{Z} \in \mathbb{N}_{\tau,\epsilon}^{\mathrm{exc},Z}, \rho \in \Sigma_{\mathcal{V}} : \nonumber \\
    &\Lambda_{\mathrm{EC}}^{X,K} \circ \mathcal{N}^{Z} (\rho) = \mathcal{F}^{Z} (\rho) , \quad \mathcal{F}^{Z} \in \mathbb{N}_{\eta,\epsilon + \delta}^{\mathrm{exc},Z} , \nonumber \\
    \forall \Lambda_{\mathrm{EC}}^{Z,K} &\in \Lambda_{\mathrm{EC}}^{Z,\mathbb{M}_{\zeta}}, \mathcal{N}^{X} \in \mathbb{N}_{\tau,\epsilon}^{\mathrm{exc},X}, \rho \in \Sigma_{\mathcal{V}} : \nonumber \\
    &\Lambda_{\mathrm{EC}}^{Z,K} \circ \mathcal{N}^{X} (\rho) = \mathcal{F}^{X} (\rho) , \quad \mathcal{F}^{X} \in \mathbb{N}_{\eta,\epsilon + \delta}^{\mathrm{exc},X} . \nonumber
\end{align}

\subsection{Proof of Theorem~\ref{theorem:self-corr-state-prep}} \label{self-corr-prep-proof}

In this section, we will show that one can perform single-shot state preparation if $\mathcal{V}$ is a CSS code which supports single-shot QEC according to Theorem~\ref{theorem:self-corr-qec} and as discussed in Appendix~\ref{self-corr-proof} for (1) $Z$-type Pauli channels with the circuit $\Lambda_{\mathrm{EC}}^{X}$ or (2) $X$-type Pauli channels with the circuit $\Lambda_{\mathrm{EC}}^{Z}$. Specifically, we will show that one can prepare (1) $\dyad{\overline{0}}{\overline{0}}^{\otimes k}$ from $\dyad{0}{0}^{\otimes n}$ or (2) $\dyad{\overline{+}}{\overline{+}}^{\otimes k}$ from $\dyad{+}{+}^{\otimes n}$ fault tolerantly against $(\mathbb{N}_{\tau,\epsilon}^{\mathrm{exc}},\mathbb{M}_{\zeta})$ by applying (1) $\Lambda_{\mathrm{EC}}^{X}$ or (2) $\Lambda_{\mathrm{EC}}^{Z}$ to a product state. Explicitly, we use results in Appendix~\ref{self-corr-proof} to show
\begin{align}
    &(1) : \forall \Lambda_{\mathrm{EC}}^{X,K} \in \Lambda_{\mathrm{EC}}^{X,\mathbb{M}_{\zeta}}, \mathcal{N} \in \mathbb{N}_{\tau,\epsilon}^{\mathrm{exc}} : \\
    &\Lambda_{\mathrm{EC}}^{X,K} \circ \mathcal{N} \left(\dyad{0}{0}^{\otimes n}\right) = \widetilde{\mathcal{N}} \left(\dyad{\overline{0}}{\overline{0}}^{\otimes k}\right) , \; \widetilde{\mathcal{N}} \in \mathbb{N}_{\tau+\eta,\epsilon}^{\mathrm{exc}} , \nonumber \\
    &(2) : \forall \Lambda_{\mathrm{EC}}^{Z,K} \in \Lambda_{\mathrm{EC}}^{Z,\mathbb{M}_{\zeta}}, \mathcal{N} \in \mathbb{N}_{\tau,\epsilon}^{\mathrm{exc}} : \\
    &\Lambda_{\mathrm{EC}}^{Z,K} \circ \mathcal{N} \left(\dyad{+}{+}^{\otimes n}\right) = \widetilde{\mathcal{N}} \left(\dyad{\overline{+}}{\overline{+}}^{\otimes k}\right) , \; \widetilde{\mathcal{N}} \in \mathbb{N}_{\tau+\eta,\epsilon}^{\mathrm{exc}} . \nonumber
\end{align}

For a CSS code $\mathcal{V}$, syndromes $\sigma$ split into components $\sigma = \sigma^{X} \oplus \sigma^{Z}$, and projectors $\Pi^{\mathcal{S}}_{\sigma}$ decompose as
\begin{equation}
    \Pi^{\mathcal{S}}_{\sigma} = \Pi^{X}_{\sigma^{X}} \Pi^{Z}_{\sigma^{Z}} .
\end{equation}
QEC circuits for $Z$-type and $X$-type errors are
\begin{equation}
    \Lambda_{\mathrm{EC}}^{X} = \left\{ R^{Z}_{\sigma^{X}} \Pi^{X}_{\sigma^{X}} \right\}_{\sigma^{X}} , \quad
    \Lambda_{\mathrm{EC}}^{Z} = \left\{ R^{X}_{\sigma^{Z}} \Pi^{Z}_{\sigma^{Z}} \right\}_{\sigma^{Z}} .
\end{equation}
Not only do $\Lambda_{\mathrm{EC}}^{X}$ and $\Lambda_{\mathrm{EC}}^{Z}$ correct $Z$-type and $X$-type errors, but can also be used to prepare the encoded states
\begin{align}
    \Lambda_{\mathrm{EC}}^{X} \left( \dyad{0}{0}^{\otimes n} \right) &= \dyad{\overline{0}}{\overline{0}}^{\otimes k} , \quad \ket{\overline{0}}^{\otimes k} = \sqrt{d^{r_X}} \Pi^{X}_{0} \ket{0}^{\otimes n} , \nonumber \\
    \Lambda_{\mathrm{EC}}^{Z} \left( \dyad{+}{+}^{\otimes n} \right) &= \dyad{\overline{+}}{\overline{+}}^{\otimes k} , \quad \ket{\overline{+}}^{\otimes k} = \sqrt{d^{r_Z}} \Pi^{Z}_{0} \ket{+}^{\otimes n} , \nonumber
\end{align}
where $r_X$ and $r_Z$ are the number of independent $X$-type and $Z$-type stabilizer generators respectively. This is because $\dyad{0}{0}^{\otimes n} = \Pi_{\mathcal{V}_{Z}} \in \mathcal{V}_{Z}$ is a stabilizer state with stabilizer given by all single-qudit $Z_j$, $\mathcal{S}_{\mathcal{V}_{Z}} = \left\langle Z_j \right\rangle$, and $\dyad{+}{+}^{\otimes n} = \Pi_{\mathcal{V}_{X}} \in \mathcal{V}_{X}$ is a stabilizer state with stabilizer given by all single-qudit $X_j$, $\mathcal{S}_{\mathcal{V}_{X}} = \left\langle X_j \right\rangle$. All groups of $Z$-type operators form subgroups of $\mathcal{S}_{\mathcal{V}_{Z}}$ and all groups of $X$-type operators form subgroups of $\mathcal{S}_{\mathcal{V}_{X}}$, thus $\dyad{0}{0}^{\otimes n}$ is already in the $+1$ eigenstate of all $Z$-type stabilizer generators and $\dyad{+}{+}^{\otimes n}$ is already in the $+1$ eigenstate of all $X$-type stabilizer generators. This means projectors satisfy the identities $\Pi^{Z}_{0} \Pi_{\mathcal{V}_{Z}} = \Pi_{\mathcal{V}_{Z}}$ and $\Pi^{X}_{0} \Pi_{\mathcal{V}_{X}} = \Pi_{\mathcal{V}_{X}}$.

If $\mathcal{S}^{X}_{0}$ and $\Lambda_{\mathrm{EC}}^{X}$ admit single-shot QEC as outlined in Section~\ref{self-corr-proof}, then one can fault tolerantly prepare $\ket{\overline{0}}^{\otimes k}$, and if $\mathcal{S}^{Z}_{0}$ and $\Lambda_{\mathrm{EC}}^{Z}$ admit single-shot QEC, then one can fault tolerantly prepare $\ket{\overline{+}}^{\otimes k}$. Here we will focus on applying $\Lambda_{\mathrm{EC}}^{X}$ to prepare $\ket{\overline{0}}^{\otimes k}$ from $\ket{0}^{\otimes n}$, however preparation of $\ket{\overline{+}}^{\otimes k}$ follows from the exact same steps but swapping $X \leftrightarrow Z$ everywhere. In the noiseless case,
\begin{align}
    \Lambda_{\mathrm{EC}}^{X} \left( \dyad{0}{0}^{\otimes n} \right) &= \Lambda_{\mathrm{EC}}^{X} \circ \Pi_{\mathcal{V}_{Z}} = \left\{ R^{Z}_{\sigma^{X}} \Pi^{X}_{\sigma^{X}} \Pi_{\mathcal{V}_{Z}} \right\}_{\sigma^{X}} \nonumber \\
    &= \left\{ \Pi^{X}_{0} R^{Z}_{\sigma^{X}} \Pi_{\mathcal{V}_{Z}} \right\}_{\sigma^{X}} = \left\{ \Pi^{X}_{0} \Pi_{\mathcal{V}_{Z}} \right\}_{\sigma^{X}} \nonumber \\
    &= d^{r_X} \Pi^{X}_{0} \Pi_{\mathcal{V}_{Z}} \Pi^{X}_{0} = \dyad{\overline{0}}{\overline{0}}^{\otimes k} .
\end{align}
Here, $d^{r_X}$ comes from summing over all valid $\sigma^{X}$. Identities in Eqs.~(\ref{eq:correction-composition}-\ref{eq:projector-identities}) all apply here,
\begin{align}
    \label{eq:x-proj-identities}
    &R^{Z}_{\sigma} \Pi^{X}_{\kappa} = \Pi^{X}_{\kappa-\sigma} R^{Z}_{\sigma} , \quad \Pi^{X}_{\sigma} \Pi^{X}_{\kappa} = \delta_{\sigma , \kappa} \Pi^{X}_{\sigma} , \nonumber \\
    &R^{Z}_{\sigma + \kappa} R^{Z \dagger}_{\sigma} R^{Z \dagger}_{\kappa} = L^{Z}_{\sigma,\kappa} S^{Z}_{\sigma,\kappa} ,
\end{align}
where $L^{Z}_{\sigma,\kappa}$ is a $Z$-type logical operator and $S^{Z}_{\sigma,\kappa}$ is a $Z$-type stabilizer operator. Also, all $Z$-type operators stabilize $\Pi_{\mathcal{V}_{Z}}$, $Z_{j} \Pi_{\mathcal{V}_{Z}} = \Pi_{\mathcal{V}_{Z}}$ for all $Z_{j}$, therefore so do all $R^{Z}_{\sigma^{X}}$, $L^{Z}$, and $S^{Z}$. For noisy state preparation, suppose noise in the class $\Lambda_{\mathrm{EC}}^{X , (\mathbb{N}_{\tau,\epsilon}^{\mathrm{exc}},\mathbb{M}_{\zeta})} = \Lambda_{\mathrm{EC}}^{X,\mathbb{M}_{\zeta}} \circ \mathbb{N}_{\tau,\epsilon}^{\mathrm{exc}}$ acts on $\Lambda_{\mathrm{EC}}^{X}$. For completeness, define $\mathbb{N}_{\tau,\epsilon}^{\mathrm{exc}}$ with respect to the overall recovery channel $\mathcal{R} = \Lambda_{\mathrm{EC}}^{X} \circ \Lambda_{\mathrm{EC}}^{Z}$. Each specific instance $\Lambda_{\mathrm{EC}}^{X,K} \in \Lambda_{\mathrm{EC}}^{X,\mathbb{M}_{\zeta}}$ takes the form
\begin{equation}
    \Lambda_{\mathrm{EC}}^{X,K} = \left\{ \sqrt{\pr_{\Lambda_{\mathrm{EC}}^{X}}(\mu)} R^{Z}_{\sigma^{X} + \mu} \Pi^{X}_{\sigma^{X}} \right\}_{\sigma^{X},\mu} .
\end{equation}
Given some channel $\mathcal{N} \in \mathbb{N}_{\tau,\epsilon}^{\mathrm{exc}}$ and partitioning each Pauli basis operator $P \in \mathcal{P}$ into components $P = X_{P} Z_{P}$,
\begin{align}
    &\Lambda_{\mathrm{EC}}^{X,K} \circ \mathcal{N} (\cdot) = \sum_{\alpha} \pr_{\mathcal{N}}(\alpha) \sum_{\mu} \pr_{\Lambda}(\mu) \Lambda_{\mathrm{EC},\mu}^{X} \circ \Phi_{\alpha} (\cdot) , \nonumber \\
    &\Lambda_{\mathrm{EC},\mu}^{X} \circ \Phi_{\alpha} \left(\dyad{0}{0}^{\otimes n}\right) = \nonumber \\
    &\sum_{\substack{\sigma^{X} , K_{\alpha}, \\ P,Q \in \mathcal{P}}} c_{P}^{K_{\alpha}} \left(c_{Q}^{K_{\alpha}}\right)^{*} R^{Z}_{\sigma^{X} + \mu} \Pi^{X}_{\sigma^{X}} P \dyad{0}{0}^{\otimes n} Q^{\dagger} \Pi^{X}_{\sigma^{X}} R^{Z \dagger}_{\sigma^{X} + \mu} \nonumber \\
    &= \sum_{\sigma^{X} , K_{\alpha}} R^{Z}_{\sigma^{X} + \mu} \left( \sum_{P \in \mathcal{P}} c_{P}^{K_{\alpha}} X_{P} \right) \Pi^{X}_{\sigma^{X}} \dyad{0}{0}^{\otimes n} \Pi^{X}_{\sigma^{X}} \nonumber \\
    &\times \left( \sum_{Q \in \mathcal{P}} c_{Q}^{K_{\alpha}} X_{Q} \right)^{\dagger} R^{Z \dagger}_{\sigma^{X} + \mu} .
\end{align}
Here, we used $Z_{P} \ket{0}^{\otimes n} = \ket{0}^{\otimes n}$ and $\Pi^{X}_{\sigma^{X}} X_{P} = X_{P} \Pi^{X}_{\sigma^{X}}$. One may rewrite
\begin{equation}
    R^{Z}_{\sigma^{X} + \mu} = R^{Z}_{\mu} L^{Z}_{\sigma^{X},\mu} S^{Z}_{\sigma^{X},\mu} R^{Z}_{\sigma^{X}} ,
\end{equation}
where implicitly $\partial R^{Z \dagger}_{\sigma^{X} + \mu} = \sigma^{X} + \mu + \hat{\mu}$ and $\hat{\mu} = \hat{\mu}(\mu)$. Now, define the conjugated operator
\begin{align}
    \widetilde{K}_{\mu;\sigma^{X}} &= R^{Z \dagger}_{\mu} R^{Z}_{\sigma^{X} + \mu} \left( \sum_{P \in \mathcal{P}} c_{P}^{K_{\alpha}} X_{P} \right) R^{Z \dagger}_{\sigma^{X} + \mu} R^{Z}_{\mu} \nonumber \\
    &= \left( \sum_{P \in \mathcal{P}} c_{P}^{K_{\alpha}} e^{i \phi(P,\mu,\sigma^{X})} X_{P} \right) ,
\end{align}
and since $R^{Z \dagger}_{\mu} R^{Z}_{\sigma^{X} + \mu} \Pi^{X}_{\sigma^{X}} \ket{0}^{\otimes n} = \ket{\overline{0}}^{\otimes k} / \sqrt{d^{r_X}}$,
\begin{align}
    &\Lambda_{\mathrm{EC},\mu}^{X} \circ \Phi_{\alpha} \left(\dyad{0}{0}^{\otimes n}\right) = \\
    &\sum_{\sigma^{X} , K_{\alpha}} \frac{1}{d^{r_X}} R^{Z}_{\mu} \widetilde{K}_{\mu;\sigma^{X}}   \dyad{\overline{0}}{\overline{0}}^{\otimes k} \widetilde{K}_{\mu;\sigma^{X}}^{\dagger} R^{Z \dagger}_{\mu} . \nonumber
\end{align}
Notice that $\Lambda_{\mathrm{EC},\mu}^{X} \circ \Phi_{\alpha} \left(\dyad{0}{0}^{\otimes n}\right) = \mathcal{G}_{\alpha ; \mu} \left(\dyad{\overline{0}}{\overline{0}}^{\otimes k}\right)$ for some channel $\mathcal{G}_{\alpha ; \mu}$. One such $\mathcal{G}_{\alpha ; \mu}$ may be defined by
\begin{align}
    \mathcal{G}_{\alpha ; \mu} (\cdot)& = \sum_{\sigma^{X} , K_{\alpha}} F_{\sigma,K_{\alpha}} (\cdot) F_{\sigma,K_{\alpha}}^{\dagger} + \left(I-\Pi^{\mathcal{S}}_{0}\right) (\cdot) \left(I-\Pi^{\mathcal{S}}_{0}\right) , \nonumber \\
    F_{\sigma,K_{\alpha}} &= \frac{1}{\sqrt{d^{r_X}}} R^{Z}_{\mu} \widetilde{K}_{\mu;\sigma^{X}}  \Pi^{\mathcal{S}}_{0} .
\end{align}
Therefore, the system state after noisy state preparation may be mimicked by the channel $\mathcal{\widetilde{N}}$,
\begin{align}
    &\Lambda_{\mathrm{EC}}^{X,K} \circ \mathcal{N} \left(\dyad{0}{0}^{\otimes n}\right) = \mathcal{\widetilde{N}} \left(\dyad{\overline{0}}{\overline{0}}^{\otimes k}\right) , \\
    &\mathcal{\widetilde{N}} (\cdot) = \sum_{\alpha,\mu} \pr_{\mathcal{N}}(\alpha) \pr_{\Lambda}(\mu) \mathcal{G}_{\alpha ; \mu} (\cdot) . \nonumber
\end{align}
Observe that $F_{\mathcal{R}} (\mathcal{G}_{\alpha ; \mu}) \leq F_{\mathcal{R}} (\Phi_{\alpha})$, as $F_{\mathcal{R}} (\mathcal{G}_{\alpha ; \mu}) = 1$ if some $L^{X}_{P} \neq I$ but $F_{\mathcal{R}} (\Phi_{\alpha}) = 1$ if $L^{X}_{P} \neq I$ or $L^{Z}_{P} \neq I$, therefore $\mathrm{fail}\left( \mathcal{\widetilde{N}} \right) \leq \mathrm{fail}\left( \mathcal{N} \right) \leq \epsilon$. Furthermore,
\begin{equation}
    \supp \partial \mathcal{G}_{\alpha ; \mu} = \supp (\mu + \hat{\mu}) \cup \supp \partial_{X} \Phi_{\alpha}
\end{equation}
where $\partial_{X}$ denotes only the syndrome of $X$-type components of an operator, $\partial_{X} P = \partial X_{P}$. This means $\supp \partial_{X} \Phi_{\alpha} \subseteq \supp \partial \Phi_{\alpha}$. Noting that
\begin{equation}
    \pr_{\mathrm{syn}\mathcal{\widetilde{N}}}(A) = \sum_{\alpha,\mu : \supp \partial \mathcal{G}_{\alpha ; \mu} = A} \pr_{\mathcal{N}}(\alpha) \pr_{\Lambda}(\mu) ,
\end{equation}
we find
\begin{align}
    \label{eq:noisy-prep-bound}
    &\sum_{\alpha,\mu : \supp \partial \mathcal{G}_{\alpha ; \mu} \supseteq A} \pr_{\mathcal{N}}(\alpha) \pr_{\Lambda}(\mu) \leq \nonumber \\
    &\sum_{\alpha,\mu : \supp \partial \Phi_{\alpha} \cup \supp (\mu + \hat{\mu}) \supseteq A} \pr_{\mathcal{N}}(\alpha) \pr_{\Lambda}(\mu) .
\end{align}
Since $\mathcal{N} \in \mathbb{N}_{\tau,\epsilon}^{\mathrm{exc}}$, $\pr_{\mathrm{syn}\mathcal{N}}(A)$ is $\tau$-bounded
by definition. Additionally, if $\mathcal{S}^{X}_{0}$ and $\Lambda_{\mathrm{EC}}^{X}$ admit single-shot error correction as described in Section~\ref{self-corr-proof}, then
\begin{equation}
    \sum_{\mu : \supp (\mu + \hat{\mu}) \supseteq A} \pr_{\Lambda}(\mu) \leq \eta^{\vert A \vert}
\end{equation}
for $\eta$ given in the previous section. Using Proposition~\ref{proposition:bounded-composition}, setting $g(\alpha) = \supp \partial \Phi_{\alpha}$ and $h(\mu) = \supp (\mu + \hat{\mu})$, we obtain from Eq.~(\ref{eq:noisy-prep-bound})
\begin{align}
    &\sum_{\alpha,\mu : \supp \partial \mathcal{G}_{\alpha ; \mu} \supseteq A} \pr_{\mathcal{N}}(\alpha) \pr_{\Lambda}(\mu) \leq \left(\tau + \eta\right)^{\vert A \vert} .
\end{align}
Therefore $\pr_{\mathrm{syn}\mathcal{\widetilde{N}}}(A)$ is $\left(\tau + \eta\right)$-bounded and we conclude that $\mathcal{\widetilde{N}} \in \mathbb{N}_{\tau+\eta,\epsilon}^{\mathrm{exc}}$. This means for all $\Lambda_{\mathrm{EC}}^{X,K} \in \Lambda_{\mathrm{EC}}^{X,\mathbb{M}_{\zeta}}$ and $\mathcal{N} \in \mathbb{N}_{\tau,\epsilon}^{\mathrm{exc}}$,
\begin{equation}
    \Lambda_{\mathrm{EC}}^{X,K} \circ \mathcal{N} \left(\dyad{0}{0}^{\otimes n}\right) = \mathcal{\widetilde{N}} \left(\dyad{\overline{0}}{\overline{0}}^{\otimes k}\right)
\end{equation}
where $\mathcal{\widetilde{N}} \in \mathbb{N}_{\tau+\eta,\epsilon}^{\mathrm{exc}}$. $\Lambda_{\mathrm{EC}}^{X}$ is then fault tolerant as both properties of Definition~\ref{def:ft-channel} are satisfied if $\zeta < (2ez_{\Gamma})^{-2}$, $\tau + \eta < 1$, and $\epsilon = 1/\Omega(\poly(n))$. This shows that single-shot state preparation of $\ket{\overline{0}}^{\otimes k}$ from $\ket{0}^{\otimes n}$ can be done if $\mathcal{S}^{X}_{0}$ and $\Lambda_{\mathrm{EC}}^{X}$ admit single-shot error correction. By applying the same arguments in this section but swapping $X \leftrightarrow Z$ everywhere, one can also show single-shot state preparation of $\ket{\overline{+}}^{\otimes k}$ from $\ket{+}^{\otimes n}$ can be done if $\mathcal{S}^{Z}_{0}$ and $\Lambda_{\mathrm{EC}}^{Z}$ admit single-shot error correction, namely for all $\Lambda_{\mathrm{EC}}^{Z,K} \in \Lambda_{\mathrm{EC}}^{Z,\mathbb{M}_{\zeta}}$ and $\mathcal{N} \in \mathbb{N}_{\tau,\epsilon}^{\mathrm{exc}}$,
\begin{equation}
    \Lambda_{\mathrm{EC}}^{Z,K} \circ \mathcal{N} \left(\dyad{+}{+}^{\otimes n}\right) = \mathcal{\widetilde{N}} \left(\dyad{\overline{+}}{\overline{+}}^{\otimes k}\right)
\end{equation}
where $\mathcal{\widetilde{N}} \in \mathbb{N}_{\tau+\eta,\epsilon}^{\mathrm{exc}}$.

\subsection{Proof of Corollaries~\ref{cor:tc-single-shot} and~\ref{cor:ghz-single-shot}} \label{tc-single-shot-proof}

\begin{proof}[Proof of Corollary~\ref{cor:tc-single-shot}]
For the $(D,q)$ $\mathbb{Z}_{d}$ toric codes discussed in Section~\ref{q-form-tc}, the Pauli group can be generated by operators defined by $q$-chains $\Psi \in \Omega_{q}$ on the primal lattice. For any
\begin{equation}
    \Psi = \sum_{\varphi \in \Delta_p} k_\varphi \varphi , \quad k_\varphi \in \{0,\pm 1\} ,
\end{equation}
define membrane operators
\begin{equation}
    X \left[ \Psi \right] = \prod_{k\varphi \in \Psi} X_{\varphi}^{k}  , \quad
    Z \left[ \Psi \right] = \prod_{k\varphi \in \Psi} Z_{\varphi}^{k} .
\end{equation}
These operators do not commute with all stabilizers, as
\begin{align}
    \label{eq:tc-membrane-commutation}
    A_{s} X \left[ \Psi \right] &= X \left[ \Psi \right] A_{s} \omega^{\langle s , \partial_{q} \Psi \rangle_{q-1}} , \\
    B_{m} Z \left[ \Psi \right] &= Z \left[ \Psi \right] B_{m} \omega^{-\langle \partial^{\dagger}_{q+1} \Psi , m \rangle_{q+1}} . \nonumber
\end{align}
Any Pauli operator can be written as a product of membrane operators $X \left[ \Psi \right]$ and $Z \left[ \Psi \right]$ for various $\Psi$. For the membrane operator $X \left[ \Psi \right]$, $\Psi$ can be viewed as a $q$-dimensional object on the primal lattice, and the $(q-1)$-dimensional boundary $\partial_{q} \Psi$ correspond to the syndrome of $X \left[ \Psi \right]$. Similarly, for $Z \left[ \Psi \right]$, $\Psi$ can be viewed as a $(D-q)$-dimensional object on the dual lattice, and the $(D-q-1)$-dimensional dual boundary $\partial^{\dagger}_{q+1} \Psi$ corresponds to the syndrome of $Z \left[ \Psi \right]$. Moreover, $B_{m} = X \left[ \Psi_{m} \right]$ and $A_{s} = Z \left[ \Psi_{s} \right]$ are membrane operators where $\Psi_{m}$ and $\Psi_{s}$ have no boundary or dual boundary and can be viewed as closed manifolds on the primal and dual lattice respectively. Similarly, logical operators can be constructed from products of $X \left[ \Psi \right]$ and $Z \left[ \Psi \right]$, where all such $\Psi$ can be viewed as non-contractible objects which have no boundary or dual boundary, respectively. The structure of such $\Psi$ depends on the topology of the manifold which the lattice $\mathcal{J}$ tiles. However, since $\mathcal{J}$ has minimum linear size $L = \Omega(n^{1/D})$ in any direction, all nontrivial logical operators have weight $\Omega(L)$.

We will now show that Theorems~\ref{theorem:self-corr-qec} and~\ref{theorem:self-corr-state-prep} apply to $(D,q)$ $\mathbb{Z}_{d}$ toric codes
$\mathcal{V}$ defined on such a lattice $\mathcal{J}$ for $D \geq 4$ and $q \in [2,D-2]$, therefore these codes exhibit both single-shot QEC and single-shot state preparation. Since  $(D,q)$ $\mathbb{Z}_{d}$ toric codes are CSS codes, one can correct $X$ and $Z$ errors independently and construct an excitation graph $\Gamma = \Gamma^{Z} \sqcup \Gamma^{X}$, where $\Gamma^{Z}$ corresponds to $A_{s}$ generators and $\Gamma^{X}$ corresponds to $B_{m}$ generators. To construct $\Gamma^{Z}$, associate vertices with $(q-1)$-cells $s \in \Delta_{q-1}$, and connect two vertices with an edge if the two cells share a $(q-2)$-cell in their boundaries. To construct $\Gamma^{X}$, associate vertices with $(q+1)$-cells $m \in \Delta_{q+1}$, and connect two vertices with an edge if the two cells are both in the boundary of a $(q+2)$-cell.

To verify the conditions in Appendix~\ref{self-corr-conditions}, first note that since $\mathcal{J}$ is a regular lattice which tiles a $D$-dimensional manifold, the sets $\mathcal{S}^{Z}_{0} = \{A_{s}\}$ and $\mathcal{S}^{X}_{0} = \{B_{m}\}$ are both $(v_{q},v_{s})$ geometrically local with finite $v_{q}$ and $v_{s}$, therefore both $\Gamma^{Z}$ and $\Gamma^{X}$ have bounded maximum degree. Qudit graphs, $\Theta^{Z}$ and $\Theta^{X}$ can also be constructed as described in Appendix~\ref{self-corr-conditions}. The connectivity of $\Gamma^{Z}$ and $\Gamma^{X}$ mirror that of closed $(q-1)$-dimensional extended objects on the primal lattice and $(D-q-1)$-dimensional extended objects on the dual lattice. Assuming the stabilizer group has no global constraints when defined on the lattice $\mathcal{J}$, this means each maximal connected component of a syndrome on either $\Gamma^{Z}$ and $\Gamma^{X}$ is itself a valid syndrome, so property 1 of Appendix~\ref{self-corr-conditions} holds. Moreover, property 2 holds from the construction of $\Lambda_{\mathrm{EC}}$.

In the presence of global constraints, the above holds if each connected component of a syndrome is topologically trivial. As discussed in Section IV of Ref.~\cite{bombin2015singleshot}, connected components of syndromes may not always be valid if they are topologically nontrivial due to global constraints, however these syndromes will have weight $\Omega(L)$. In practice, such syndromes indicate the presence of either a large measurement error or a large qudit error and can be dealt with on a case-by-case basis, as outlined in Ref.~\cite{bombin2015singleshot}, where one can either still infer a correction to a measurement error or in the worst case one just repeats all stabilizer measurements. Because of this and for simplicity of analysis, we will just assume each maximal connected component of a syndrome is itself a valid syndrome for the rest of this section. All arguments extend to the more complicated setting discussed in this paragraph in practice.

As discussed above, all nontrivial logical operators have weight $\Omega(L)$. To see property 3 holds, consider an arbitrary error $P$ with a single connected component which extends $\leq L / 2$ in any direction. Choosing the correction $R_{\partial P}$ to be as geometrically local as possible, $R_{\partial P} P \in \mathcal{S}$, as otherwise $R_{\partial P}$ must extend $\geq L / 2$ in some direction. Lastly, to see property 4 holds, note that any closed $(q-1)$-dimensional extended objects on the primal lattice ($(D-q-1)$-dimensional extended objects on the dual lattice) supported entirely within a contractible hypercubic region $\Upsilon$ of linear size $O\left(n^{1/D}\right)$ form the boundaries of $q$-dimensional extended objects on the primal lattice ($(D-q)$-dimensional extended objects on the dual lattice) contained within $\Upsilon$. If one chooses correction operators for connected components $\sigma_{c}$ and $\kappa_{c}$ of syndromes $\sigma$ and $\kappa$ supported on such $(q-1)$ or $(D-q-1)$-dimensional objects to be contained within $\Upsilon$, then $R_{\sigma + \kappa} R_{\sigma}^{\dagger} R_{\kappa}^{\dagger} \in \mathcal{S}$, as proven in Proposition~\ref{proposition:syndrome-supports}.

Since all conditions in Appendix~\ref{self-corr-conditions} are satisfied for both $X$ and $Z$ stabilizers, excitation graphs, qudit graphs, and correction circuits separately, we conclude that Theorems~\ref{theorem:self-corr-qec} and~\ref{theorem:self-corr-state-prep} apply to $(D,q)$ $\mathbb{Z}_{d}$ toric codes for $q \in [2,D-2]$ and $D \geq 4$. Therefore, these codes exhibit single-shot, finite-depth, fault tolerant error correction for both Pauli $X$ and $Z$ errors independently and permit finite-depth, fault tolerant state preparation of $\dyad{\overline{0}}{\overline{0}}^{\otimes k}$ and $\dyad{\overline{+}}{\overline{+}}^{\otimes k}$ from product states $\dyad{0}{0}^{\otimes n}$ and $\dyad{+}{+}^{\otimes n}$.
\end{proof}

\begin{proof}[Proof of Corollary~\ref{cor:ghz-single-shot}]
If only Pauli $X$ errors occur, the arguments in the above proof apply to $(D,q)$ $\mathbb{Z}_{d}$ toric codes with $2 \leq q$, $D \geq 2$, and if only Pauli $Z$ errors occur, arguments in the above proof apply to $(D,q)$ $\mathbb{Z}_{d}$ toric codes with $q \leq D-2$, $D \geq 2$.
\end{proof}

\bibliography{bibliography}

\end{document}